\documentclass[11pt]{article}
\usepackage{amsmath,amssymb}
\usepackage{hyperref}
 \usepackage[left=1in,right=1in,top=1.1in,bottom=1in]{geometry}
\usepackage{bm}
\usepackage{float}
\usepackage{expl3}
\usepackage{makecell}
\usepackage{tikz}
\usepackage{appendix}
\usepackage[english]{babel}
\usepackage{bbm}
\usepackage{graphicx}
\usepackage[justification=centering]{caption}
\usepackage{amsfonts,amssymb,amsmath,latexsym,amsthm}
\usepackage{multirow}
\usepackage{enumerate}
\usepackage{geometry}
\usepackage{diagbox}
\usepackage[usenames,dvipsnames]{pstricks}
\usepackage{epsfig}
\usepackage{pst-grad} 
\usepackage{pst-plot} 
\usepackage[space]{grffile} 
\usepackage{enumitem}
\usepackage{etoolbox} 
\makeatletter 
\patchcmd\Gread@eps{\@inputcheck#1 }{\@inputcheck"#1"\relax}{}{}

\newtheorem{thm}{Theorem}[section]
\newtheorem{cor}[thm]{Corollary}
\newtheorem{lem}[thm]{Lemma}

\newtheorem{quest}[thm]{Question}

\newtheorem{claim}[thm]{Claim}

\newtheorem{defn}[thm]{Definition}

\newcommand{\genlang}{\Lambda}
\newcommand{\termlang}{\Lambda'}

\newcommand{\mX}{{\mathcal{X}}}

\newcommand{\mC}{{\mathfrak{C}}}

\newcommand{\mG}{{\mathcal{G}}}
\newcommand{\mA}{{\mathcal{A}}}
\newcommand{\mU}{{\mathcal{U}}}
\newcommand{\mR}{{\mathcal{R}}}

\newcommand{\mT}{{\mathfrak{T}}}

\newcommand{\Acc}{\mathcal{A}_{\text{acc}}}

\newcommand{\mS}{{\mathcal{S}}}

\newcommand{\disc}{\text{Disc}}
\newcommand{\Vol}{\text{Vol}}

\newcommand{\lan}{\mathsf{L}}

\newcommand{\lang}[1]{\mathsf{L}_{#1}}

\newcommand{\res}[2]{{#1}_{#2}}

\newtheoremstyle{plainupright}
  {\topsep}   
  {\topsep}   
  {\upshape}  
  {}          
  {\bfseries} 
  {.}         
  { }         
  {}          
\theoremstyle{plainupright}
\newtheorem{example}{Example}
\newtheorem{dfn}{Definition}[section]

\usepackage{changepage} 

\newtheoremstyle{indentedstyle}
  {\topsep}{\topsep}{\upshape}{0pt}{\scshape}{.}{.2em}{}

\theoremstyle{indentedstyle}
\newtheorem{innerremark}{Remark}

\newenvironment{remark}
  {\begin{adjustwidth}{1.5em}{0pt} 
   \begin{innerremark}}
  {\end{innerremark}
   \end{adjustwidth}
   \addvspace{1.5em}} 

\let\svthefootnote\thefootnote
\newcommand\blankfootnote[1]{%
	\let\thefootnote\relax\footnotetext{#1}%
	\let\thefootnote\svthefootnote%
}

\newcommand{\omt}[1]{}

\newcommand\supproof[1]{}

\def\coll{{\mathcal{X}}}
\def\seen{S}
\def\fint{t^*}
\def\zt{t^+}

\def\trueL{{K}}

\def\out{o}

\def\lbanach{\delta_B}

\DeclareMathOperator{\LCA}{LCA}

\ExplSyntaxOn

\ExplSyntaxOff

\begin{document}

\pagenumbering{gobble}

\title{Validity, Sparse Holes, and Breadth in Language Generation: Banach Density, Topology, and Geometry}

 \author{Jon Kleinberg\thanks{Department of Computer Science and Information Science, Cornell University, Ithaca NY 14853 USA.  Supported in part by a Vannevar Bush Faculty Fellowship, AFOSR grant FA9550-23-1-0410, a Simons Collaboration grant, and a grant from the MacArthur Foundation.} \and Fan Wei\thanks{Department of Mathemaics, Duke University, 120 Science Drive, Durham, NC 27710, USA. Research supported by NSF grant DMS-2401414. } }

\maketitle

\begin{abstract}
Language generation in the limit, rooted in the classical work of Gold
\cite{gold1967language} and Angluin
\cite{angluin1979finding, angluin1980inductive, angluin1983inductive}
and recently revived by Kleinberg and Mullainathan
\cite{kleinberg2024limit}, gives a clean model for studying language
generation with minimal assumptions. An adversary reveals strings from an
unknown target language \(K\), and the algorithm must eventually generate
new unseen strings from \(K\). A central open question raised by Kleinberg
and Mullainathan is the tension between validity and breadth: can a
generator avoid hallucination in the limit while still covering the hidden
language broadly, rather than collapsing onto a small part of it?

In our previous work \cite{kleinberg2025density, kleinberg2026density}, we proved that if breadth is measured by lower asymptotic density, which looks at prefixes of a fixed ordering of the target language, then every countable language
class admits the optimal \(1/2\) guarantee. In this sense, validity and
breadth are always compatible for an averaged, prefix-based measure of
density.

This paper asks what happens when breadth is measured locally. In many
settings, strings are represented in an embedding space, and a broad
generator should not miss an entire large region of related strings. We
capture this using lower Banach density, which checks all long intervals
or boxes rather than only prefixes of one ordering. We show that the
positive averaged picture does not survive this local test: for some
countable language collections, every eventually valid generator must
leave arbitrarily large sparse holes in the true language. Thus a
worst-case local form of mode collapse is unavoidable: sparse holes are
not always just a flaw of a poorly designed generator; in the worst case,
they are forced by the requirement of valid generation.

The main result is that lower Banach density is not just a stricter
version of lower asymptotic density. It changes what the breadth problem
can see: prefix asymptotic density averages away topological ambiguity,
while Banach density can turn that ambiguity into sparse holes. 
This density notion changes the nature of the problem: it turns breadth into topology and combinatorics in one dimension. In dimensions \(d\ge 2\), the
theory also becomes geometric: Ramsey/discrepancy phenomena can force
zero ordinary Banach density even for a singleton language class. We
introduce a filtered lower Banach density that removes this geometric
obstruction and prove a \(1/2-\epsilon\) guarantee for finite-rank
classes. We also extend the one-dimensional theory to \(f\)-window
densities, interpolating between lower asymptotic and lower Banach
density.

\end{abstract}

\newpage

\tableofcontents

\newpage

\pagenumbering{arabic}
\setcounter{page}{1}

\section{Introduction}\label{sec:intro}

The growing power of large language models (LLMs) has
led to a range of theoretical investigations of their
properties and performance.
These formalisms operate at different levels of
abstraction, ranging from concrete models of the
transformer architectures that power current LLMs
\cite{chen2025transformer,peng-transformer,sanford-transformer,strobl2024formal}
to more abstract, architecture-independent models that
analyze properties of the language generation problem itself
\cite{kalai2023calibrated,kleinberg2024limit}.

In this paper, we work with a model in the latter style
called {\em language generation in the limit}. It takes a \textbf{basic, assumption-free} perspective on language generation, rooted in the foundational work of Gold and Angluin \cite{angluin1979finding,angluin1980inductive,gold1967language}.
In this model, we consider language generation as a game
played between an adversary and an algorithm as follows.
First, the adversary chooses a secret language $K$ known only to
come from a countable list of candidate languages
$L_1, L_2, L_3, \ldots$, where each language $L_i$
is an infinite set of finite strings.
The adversary then enumerates the strings of the language $K$,
one string per time step $t = 1, 2, 3, \ldots$.
In each time step $t$, the algorithm can see the set of strings
$S_t \subseteq K$ that the adversary has enumerated so far
(in all the steps up to $t$), and the algorithm must output
a string $o_t$ with the goal that $o_t \in K - S_t$: 
that is, $o_t$ should be a string
that is in the language $K$ but that has not been seen before.
The algorithm wins the game, achieving generation in the limit, 
if there is some time step $t^*$ such that $o_t \in K - S_t$ for all
$t \geq t^*$.

This model was proposed in \cite{kleinberg2024limit} as an 
adaptation of a classical model of language learning due to
Gold and Angluin
\cite{angluin1979finding,angluin1980inductive,gold1967language};
in their earlier model the algorithm's goal was {\em identification
in the limit}: rather than outputting a string $a_t$, it produced
an index $i_t$ in each time step $t$, with the goal that
for all $t \geq t^*$, we have $L_{i_t} = K$.
That is, the algorithm's goal in this case was to exactly identify
the language.
The results for identification in the limit are largely negative,
whereas the surprising result of \cite{kleinberg2024limit} is
that there is an algorithm that achieves language generation in the limit
for {\em all} collections $L_1, L_2, L_3, \ldots$.
In other words, generation in the limit is a much more tractable
problem than identification in the limit.
Since this result, a number of papers have considered different
aspects of the problem 
\cite{kalavasis-stoc25, kalavasis2024breadth, charikar-pabbarju, bai2025noise, raman2025noisy, li2025learning, hanneke2025union},
several of which we'll discuss in more detail below.

\subsection{Central Question: Tension between Breadth (Density) and Validity}

A language generator is useful only if it balances  two basic goals:
\begin{itemize}
    \item {\em validity}:
that it should eventually only generate strings in the true language $K$ with no hallucination, and
\item {\em breadth}: that it should generate many well-representative strings from $K$.
\end{itemize}

These two goals pull in opposite directions. If the algorithm is too
aggressive, it risks outputting strings outside \(K\), which is the limiting
analogue of hallucination. If it is too conservative, it may remain valid but
cover only a narrow part of \(K\), which is the limiting analogue of mode
collapse.

This paper studies this  \textbf{tension between validity and breadth} in the model of
language generation in the limit, a central question raised by Kleinberg and Mullainathan\cite{kleinberg2024limit}. Explicitly, 
\begin{quote} 
$(\star) \qquad$
 How broadly
can a valid generator cover the hidden target language? Does guarenteeing validity means breadth has to be sacrificed?  
\end{quote}

In our previous work \cite{kleinberg2026density}, we gave a strong positive answer
 for a prefix averaged
notion of breadth: if breadth is measured by lower asymptotic density, then
every countable language class admits the optimal \(1/2\) guarantee. In that
sense, validity and essentially perfect breadth are always compatible when breadth is measured on
prefixes of a fixed ordering.

Our question is what happens when breadth is local rather than averaged. If
strings are represented in an embedding space, then a generator should not miss
an entire large region of related strings. Such a missing region is a {\it sparse
hole}. We show that, for this local notion of breadth, the positive averaged
picture breaks: for some countable language collections, every eventually valid
generator must leave arbitrarily large sparse holes in the true language. \textbf{In the worst case, a local form of mode collapse is forced by the
requirement of valid generation. } Thus the validity--breadth tension $(\star)$ raised
by Kleinberg and Mullainathan \cite{kleinberg2024limit} is realized mathematically for local breadth.

This is a special case of our more general result. The \textbf{mathematical message} is that lower Banach density is not just a stricter
density notion. It changes what breadth can see. Prefix asymptotic density
averages away topological ambiguity; Banach density can turn that ambiguity
into sparse holes. In one dimension, the nature of the problem is governed by
 \textbf{the topology of the language collection}. In higher dimensions,
 \textbf{Ramsey/discrepancy geometry} also enters the problem.

\

We now explain this in more details.
The notion of breadth has been formulated in several different ways
in recent work \cite{charikar-pabbarju,kalavasis-stoc25,kleinberg2025density}, and our starting point is the density-based formulation of breadth introduced in our previous work \cite{kleinberg2025density}.

Informally, if $A \subseteq B$ are both countable sets
(with the elements of $B$ listed in some order as
$B = \{b_1, b_2, b_3, \ldots\}$), then the
density of $A$ in $B$ is determined by looking at the fraction
of the first $n$ elements of $B$ that are contained in $A$, and 
taking the limit of this fraction as $n \rightarrow \infty$.
These fractions might not converge to a limit as $n$ grows, and so 
we must look at the lim inf separately from the lim sup.
Formally, we say that the {\em lower asymptotic density} of $A$ in $B$ is
\begin{equation}
  \underline{d}(A,B) = \liminf_{n \to \infty} 
  \frac{|A \cap \{b_1, b_2, b_3, \ldots, b_n\}}{n}. 
\label{def:intro-asymptotic}
\end{equation}
(There is an analogous notion of {\em upper asymptotic density}
based on using the $\limsup_{n \to \infty}$ in place of
$\liminf_{n \to \infty}$ in Equation (\ref{def:intro-asymptotic}),
In this paper, we focus primarily on lower density. Because the upper density is naturally bounded below by the lower density, this approach yields a more robust mathematical guarantee.) As a notational point, when we write $\underline{d}(A)$ without a second parameter, we will take this to be the lower asymptotic density of $A$ in the natural numbers, $\underline{d}(A,\mathbb{N})$.

In \cite{kleinberg2025density,kleinberg2026density}, 
we used lower asymptotic density as a notion of breadth for a language generation algorithm:
if $O$ is the set of all strings in the true language $K$ that
the algorithm ever generates, then 
the lower asymptotic density $\underline{d}(O,K)$
of $O$ in $K$ serves as a kind of breadth measure for how much of
$K$ is covered by the generated set $O$.
The original algorithm for language generation in the limit 
from \cite{kleinberg2024limit} achieves lower asymptotic density zero
on some instances (implying a kind of ``vanishing breadth''), 
we proved in
\cite{kleinberg2025density,kleinberg2026density}
that an algorithm can achieve lower asymptotic density $1/2$
on all instances, which is the best constant possible.

This is the baseline for our work. Lower asymptotic density gives a universal
averaged law: every countable language class admits the optimal \(1/2\)
guarantee. The word ``averaged'' is important. Since the density is measured
on prefixes of one ordering, a sparse region far from the origin can be
compensated for by generated strings elsewhere.

The sharper question is

\begin{quote}What happens when breadth is measured on all large windows?
\end{quote}

\paragraph{Main Message:} This is the point at which the choice of density becomes decisive.
As we observed in \cite{kleinberg2025density}, identification in the
limit is intrinsically topological: finite evidence determines which
hypotheses remain possible. A guiding message of this paper is that the
 breadth question can become topological as well, but only for
the right density notion. 
We show that
the local requirement on all  intervals changes the problem:  it makes the universal breadth guarantee remember the topology of the language collection. In higher
dimensions, the same viewpoint also reveals a separate geometric barrier.

\subsection{Local breadth-- sparse holes and Banach density; the hidden structure of breadth}\label{subsec:introbanach}

There are two different reasons to use density as a measure of breadth. The
first is ranking-based: if the strings of \(K\) are ordered by importance or
frequency, then it is natural to ask how many of the first \(n\) strings are
generated. This leads to lower asymptotic density.

The second is local and embedding-based. A generator may have high density on
prefixes and still miss an entire large region of related strings. If strings
are represented in a coordinate space, this failure has a concrete witness: a
large interval or box \(R\) such that $R$ contains many target strings
but  contains few generated outputs. Such an \(R\) is a sparse
hole. Lower Banach density is the density notion that controls sparse holes,
because it tests all large windows rather than only prefixes.

The point of this paper is that the second rationale is not only
practically motivated by embeddings; it is also where the mathematical
structure of breadth becomes visible.

\paragraph{1. The logic of ranking: asymptotic lower density and the universal averaged law.}
The lower asymptotic density is based on a distinguished ordering of the
elements of \(K\), say \(K=\{w_1,w_2,w_3,\ldots\}\). It asks how many of
the first \(n\) elements of this ordering are contained in the generated
set \(O\). This is a natural question when the ordering represents
priority, importance, or frequency. For example, in the case of human languages,
the ordering might correspond to a ranking of the words by frequency
\cite{newman2005power}.

This is the logic of ranking. It is very important how many of
\(w_1,\ldots,w_{100}\) are generated. It is less important, under this
measure, whether the generator misses a block such as
\(w_{10001},\ldots,w_{10100}\), provided this loss is compensated by
enough generated strings earlier in the ordering.
Viewed this way, the lower asymptotic density is essentially asking:
what fraction of the $n$ most important strings in $K$ are
in the generated set $O$?

In \cite{kleinberg2026density},  we proved that this averaged notion of breadth has a
striking universal law: every countable language class admits the optimal
lower asymptotic density guarantee \(1/2\). This theorem is the baseline
for our work. It shows that averaged breadth is always possible.

\paragraph{2. The logic of spatial embedding: Lower Banach Density and the hidden structure of breadth.}

This second rationale has both practical and mathematical force, and it is the one developed in
this paper. A genuine failure of breadth need not be merely that the
generator misses many strings in aggregate in an interval from the beginning. It can have a local
certificate: an arbitrarily large finite region of the target language
containing few or no generated outputs.
The lower asymptotic density, which is a prefix-based density, can average away such a certificate. If a long
interval far from the origin is missed, lower asymptotic density may
still be large because the generator produced many strings earlier in
the ordering. But if breadth is meant to rule out large sparse regions
wherever they occur, then density must be measured over all large
windows, not only over prefixes, as all large boxes are comparably important, because
they simply represent different parts of the embedding space that 
need to be covered. 

This is precisely lower Banach density.

The mathematical question is:
for every large coordinate window \(R\), what fraction of \(K\cap R\)
is generated?

In the discrete coordinate model justified below, this is precisely a
question about axis-parallel rectangles in \(\mathbb Z^d\). We therefore
define breadth using lower Banach density.
\begin{defn}\label{def:banach-intro}
   Given two infinite subsets $A \subseteq B \subseteq \mathbb{Z}^d$,
we define the {\em lower Banach density} of $A$ in $B$ as
\[\lbanach(A,B) = \liminf_{|R \cap B| \to \infty}   \frac{|A \cap B \cap R|}{|B \cap R|}.
\] 
where $R$ ranges over all axis-parallel boxes in $\mathbb{Z}^d$. 
\end{defn}
 (Here, we use the term ``box" in a loose sense, allowing side lengths to differ. 
 We also address the case where one restricts to boxes with equal side lengths.)
When specialized to dimension $d = 1$,
i.e. the one-dimensional case where the underlying ground set is 
$\mathbb{Z}$, the axis-parallel boxes just become intervals.
 Thus a guarantee
\(\delta_B(O,K)\ge \alpha\) says that every sufficiently large window of
the target language contains an \(\alpha-o(1)\) fraction of generated
outputs.

This is why lower Banach density is the correct notion for the local question. The
question is not only whether the generated set is large on average; it is
whether every large local region of the target language is represented. Once
the density test is local in this sense, finite evidence and topological
ambiguity can be amplified into sparse holes. This is the mechanism behind
our one-dimensional dichotomy. In higher dimensions, the same local test also
sees a second obstruction: rectangular Ramsey/discrepancy geometry.

\paragraph{Interpretation: embedded underrepresentation.}
The same window-based rationale also has an interpretation in the context of current research in large language models.
This interpretation begins from the
observation that language models often represent strings by points in a
\(d\)-dimensional embedding space, with related strings mapped to nearby
or similarly located points
\cite{mikolov2013word2vec,pennington2014glove,pereira1993distributional}. 
When LLMs are evaluated, this evaluation proceeds not just at the level of the embedding of individual words, but through {\em sentence embeddings} that map arbitrarily long pieces of output text to a fixed-dimensional space, again preserving the principle that similar pieces of text are mapped to nearby points \cite{cer2018universal,pillutla2021mauve,reimders2019sentence}.  In this way, it becomes possible to view all the outputs of an LLM in the same fixed-dimensional space, with regions of this space representing similar pieces of text, as we do in our model here.

When an LLM's output is viewed in terms of these sentence embeddings, an important goal is to understand 
whether large coordinate regions of the target
language are well-represented among the outputs or poorly represented.
Indeed, this goal has also been a central issue in the development of deployed language models since at least the era of GPT-2 and continuing through to the present \cite{holtzman2020degeneration,pillutla2021mauve,shypula2025diversity,zhang2025verbalized}.
The issue in this line of work, in part, is to determine whether a large language model is systematically underrepresenting some region of its embedding space, and to try mitigating this effect, thereby restoring diversity of outputs and working against the possibility of mode collapse.
These considerations thus form an important part of the underlying motivation for our study of large sparse regions in a generated set of strings given by their embedded representations in a fixed dimension.

The LLM community tends to abstract sentence embeddings as mappings into \(\mathbb R^d\), but given that these models operate in quantized or fixed-precision form, they are in fact mapping into a discrete lattice that is well-modeled as the lattice \(\mathbb Z^d\).
This use of \(\mathbb Z^d\) also accords with our goals in studying breadth via density.
Since the universe of strings is countable, the
set of embedded strings that can ever appear in the candidate languages
is countable. Our density notion is also a counting density: inside a
window, we count what fraction of the target strings have been generated.
Thus the relevant question is not Lebesgue measure in the ambient
\(\mathbb R^d\), but counting strings inside coordinate windows.

For axis-parallel windows, the information used by the density is exactly
coordinate-wise order information. A window is determined by choosing a
lower and upper cutoff in each coordinate, and then counting which
strings lie between those cutoffs. Therefore, for this question, the
absolute real-valued coordinates are not the primitive object. The
primitive object is the relative coordinate order of the embedded
strings.

Thus, for discrete embedded strings, measuring counting density over
axis-parallel windows in \(\mathbb R^d\) is exactly the same as measuring
counting density over axis-parallel rectangles in \(\mathbb Z^d\). This
is why we formulate the model on \(\mathbb Z^d\). The lattice is thus a natural modeling choice even beyond the principle from applications that embeddings use fixed precision; more generally, 
it is the
normal form of the discrete coordinate-order information used by
window density.
This is also why we use axis-parallel boxes rather than balls: boxes are
the coordinate windows determined by cutoff values in each coordinate,
whereas balls depend on additional metric scale information that is not
used by the counting-density question studied here.

The mathematical force of the lower Banach density, as will be clear later in the paper, is that this density notion changes the nature of the problem: it turns breadth into topology and combinatorics in one dimension, and into topology plus Ramsey/discrepancy geometry in higher dimensions.

\paragraph{Lower Banach density is less forgiving than lower asymptotic density.}

In one dimension, this transition is already visible. We always have
\[
\delta_B(A,B)\leq \underline d(A,B),
\]
and there are sets with \(\delta_B(A,B)=0\) but \(\underline d(A,B)=1\).
For example, take \(B=\mathbb N\), and let \(A\) be obtained from
\(\mathbb N\) by deleting the intervals \([2^k,2^k+k]\) for all \(k\).
Prefix density averages these holes away; Banach density records them
as certificates of failure.

More qualitatively, 
lower Banach density $\lbanach(A,B)$ provides a translation-invariant notion of uniform largeness that is unavailable for asymptotic density and imposes strictly stronger structural constraints. Unlike asymptotic density, which evaluates frequency relative to a distinguished origin, lower Banach density captures a quantitative lower bound on the density of a set across all arbitrarily long intervals. The distinction between these metrics is substantial in additive combinatorics (see  e.g. \cite{Ruzsa2009}). Within the broader study of amenable groups, Banach densities have become the standard framework for formulating quantitative sumset estimates, such as Plünnecke-type inequalities \cite{Bjorklund2013PlunneckeIF}.

As in the original language-generation-in-the-limit model, our results
are information-theoretic. The algorithms may use the countable
description of the language class and are not intended as efficient
finite-sample or polynomial-time procedures. The goal is to understand
when validity and window-uniform breadth are compatible in principle;
computational and statistical efficiency are left for future work.

\section{Main Results Overview: Lower Banach Density, Topology, and Geometry}

Thus, after this order-index normalization discussed above, our candidate languages
\(L_1,L_2,L_3,\ldots\) are countable subsets of \(\mathbb Z^d\). A
natural breadth goal is then the following:

\begin{quest}
    Can we achieve such a representativeness
guarantee (i.e., no sparse holes in the generation) for all countable collections of languages
$L_1, L_2, L_3, \ldots$, each equipped with an embedding into $\mathbb{Z}^d$?
\end{quest}
A sparse window means that the generator systematically underrepresents some
coordinate region of the target language. The practical stake is that validity
alone may force this failure: our lower bounds construct language classes for
which every eventually valid generator has arbitrarily large windows of the
true language with vanishing output density. In this worst-case local sense,
mode collapse is unavoidable. The positive results then shows a large class of language collections 
that avoid this failure and recover the optimal \(1/2\) benchmark.

The \textbf{sharp contrast} is:
\begin{enumerate}
\item {Lower asymptotic density gives a \textbf{topology-independent} universal \(1/2\) law. \cite{kleinberg2026density}} 
\item  Lower Banach density is the first place where \textbf{topology} enters the breadth guarantee.
\item {in dimensions \(d\ge2\), \textbf{ geometry / Ramsey} enters as well for lower Banach density.}
\end{enumerate}

\subsection{Main results at a glance}

The results can be summarized as follows. The table separates three
issues: the windows tested by the density notion, the universal guarantee
we prove, and the source that governs failure.

For definitions, see Definition~\ref{def:banach-intro} for lower Banach density,
Section~\ref{sec:topo} for Cantor--Bendixson rank, Definition ~\ref{def:filBanach}
for filtered lower Banach density, and Definition~\ref{def:fdensity} for
\(f\)-window density.

\begin{center}
\begin{tabular}{|p{0.18\linewidth}|p{0.18\linewidth}|p{0.31\linewidth}|p{0.23\linewidth}|}
\hline
Breadth notion & Windows tested & Universal guarantee & What can force failure \\
\hline
Lower asymptotic density
& Intervals starting from beginning (prefixes), or axis-parallel squares centered at zero in $\mathbb{Z}^d$
& Universal optimal \(1/2\) guarantee for all countable classes \cite{kleinberg2026density}
& No topology or geometry obstruction at the level of countable classes \\
\hline
Lower Banach density, \(d=1\)
& All long intervals
& Finite Cantor--Bendixson rank gives optimal \(1/2\); infinite-rank examples can force zero density
& Topology: infinite Cantor--Bendixson rank can force failure; finite rank rules it out \\
\hline
Ordinary lower Banach density, \(d\ge2\)
& All large axis-parallel rectangles
& Zero density can be forced even for a singleton language class
& Geometry: rectangular Ramsey/discrepancy barrier, independent of learning the target \\
\hline
Filtered lower Banach density, \(d\ge2\)
& Nondegenerate axis-parallel rectangles
& After removing the purely geometric barrier, finite rank gives \(1/2-\varepsilon\); infinite-rank examples can still force zero
& Topology after geometric degeneracy is filtered out \\
\hline
\(f\)-window density (Def \ref{def:fdensity}), \(d=1\)
& Intervals \([m,m+k-1]\) with \(1\le m\le f(k)\)
& Bounded \(f\): universal \(1/2\); unbounded \(f\): finite rank gives \(1/2\), while infinite-rank examples can force zero
& Topology appears exactly in the unbounded-window regime \\
\hline
\end{tabular}
\end{center}

Technically, the finite-rank one-dimensional theorem replaces the dynamic critical-chain analysis of prior lower-asymptotic-density work with a static tree extracted from the Cantor–Bendixson topology. And after some careful set-up,  an interval combinatorial game will imply the optimal density bound. In higher dimensions, the interval game is replaced by a size-sensitive discrepancy lemma for infinite point sets, which supplies the analogue of one-dimensional alternation on nondegenerate rectangles.

\subsection{Main Results in More Details}
\subsubsection{First set of results: Lower Banach density of generated sets in one dimension}
It is useful to start with the one-dimensional case before moving
to higher dimensions, since it is cleaner and contains much 
(though far from all) of the complexity.
In the one-dimensional case, we have a countable collection
of languages $\mX$ where each candidate language $L_i \in \mX$
is a subset of $\mathbb{Z}$ (Without loss of generality, we could assume the countable list of underlying strings is $\mathbb{Z}$).

Our \textbf{first result} is a negative one:

\begin{thm}\label{thm:intro-lowerdensity0fBanach}
There exists a countable collection of languages $\mX$ such that any algorithm that generates in the limit will produce a generated set $O$
with lower Banach density $\lbanach(O,K) = 0$. 
\end{thm}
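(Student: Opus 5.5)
The plan is to build a collection $\mX$ whose limit structure has \emph{infinite} Cantor--Bendixson rank, together with an adaptive adversary. Against any fixed algorithm that generates in the limit, the adversary will produce an enumeration of some $K\in\mX$ such that, for every $k$, there is an interval $I_k\subseteq\mathbb{Z}$ with $|K\cap I_k|\ge k$ and $|O\cap I_k|\le |K\cap I_k|/k$. This gives $\lbanach(O,K)=0$. The mechanism is the one sketched in the introduction: the algorithm must be valid for every \emph{possible} target consistent with the finite evidence. So whenever a region might be absent from $K$, the algorithm must eventually stop generating there. We exploit this repeatedly at ever larger scales, and for that each "hole hypothesis" must sit inside a further, nested family of hypotheses.

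\emph{Construction.} Fix a rapidly increasing sequence of disjoint, far-apart blocks of $\mathbb{N}$. I would use the following family of languages. For a finite set $F$ of "opened" blocks and a threshold $a$, let $L_{F,a}$ consist of a fixed infinite sparse backbone $E$, all of $[0,a]$, and the blocks in $F$. Add the limits of these as $a\to\infty$ or $F$ grows, up to $K_\infty=\mathbb{N}$. The point is that each language with hidden blocks is a limit of languages with fewer hidden blocks. The resulting family is countable. Its closure in $\{0,1\}^{\mathbb{N}}$ should have rank $\omega$, which is unavoidable since the finite-rank theorem later in the paper rules out such examples. The adversary's target will be $K=\mathbb{N}$ (or a limit language of this type), enumerated in stages. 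The backbone $E$ is chosen sparse enough that $E\cap I_k$ is negligible in $K\cap I_k$.

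\emph{Adversary, stage $k$.} Suppose that up to time $t_{k-1}$ the adversary has enumerated exactly $[0,a_{k-1}]\cup E_{\le t_{k-1}}$. This finite evidence is consistent with $L_{\emptyset,a_{k-1}}$. The adversary continues enumerating only $E$, i.e.\ it behaves as if $K=L_{\emptyset,a_{k-1}}$. Since the algorithm generates in the limit on that enumeration, after some finite time $s_k$ it stops outputting anything outside $L_{\emptyset,a_{k-1}}$. Let $F_k$ be the finite set of outputs it produced above $a_{k-1}$ and off $E$ before time $s_k$. Now choose a block $I_k$ beyond $\max F_k$, of length $\gg k$, and then reveal strings so that $I_k$ becomes part of $K$.

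\emph{Main obstacle.} The difficult step is controlling outputs \emph{while} $I_k$ is being revealed. If the adversary simply lists $I_k$ one element per step, the algorithm can grab one unseen element of $I_k$ per step and keep density about $1/2$. To prevent this, I would not reveal $I_k$ directly. Instead I would reveal it through a chain of intermediate languages of rank $k$: first enumerate the elements of $I_k$ as far as a sub-threshold, holding the current hypothesis that everything above it is absent. Each time, wait for the algorithm to converge to validity for that hypothesis, and only then extend the threshold by a short amount. During each waiting phase the algorithm produces no outputs in the still-hidden part of $I_k$. Between phases it can only hit elements inside the newly opened short piece, and those are seen within a bounded number of steps. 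Choosing the piece lengths and the number of nested levels as functions of $k$ should make $|O\cap I_k|\le |I_k|/k$; this nesting is exactly where rank $\ge k$ at scale $k$ is used.

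After stage $k$ the adversary returns to the base pattern, so every string eventually appears and $K=\mathbb{N}$ is genuinely enumerated. The algorithm, being a generator in the limit for $K$, is valid, while each $I_k$ is a sparse hole of relative density at most $1/k$. Hence $\lbanach(O,K)=0$.
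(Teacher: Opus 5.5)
There is a genuine gap at the step you flag as the main obstacle. You assert that between phases the algorithm ``can only hit elements inside the newly opened short piece,'' but nothing supports this. The piece exists only in the adversary's plan. As soon as you reveal the first element above the old threshold, the algorithm may answer with an unseen element further ahead in $I_k$, e.g.\ the one just past your occupied run.

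Such blocking outputs are valid for $\mathbb{N}$ and for every later hypothesis in your chain, so nothing confined to a single interval penalizes them. An algorithm that blocks once per extension therefore keeps density about $1/2$ in $I_k$. With pieces longer than one element it gets about half of each piece anyway, so you are forced to extend one element at a time. Waiting for convergence does not help either: the adversary cannot observe convergence, and the damage occurs in the very step that answers the reveal.

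The family you write down also cannot carry the needed fix. Each $L_{F,a}$ has only finitely many proper subsets in the family, hence is isolated, and no language in it contains a proper initial segment of infinitely many blocks. Read with the natural limits, it has finite rank, so Theorem~\ref{thm:finiterank} rules it out. Note that the relevant rank is that of $\mX$ in the paper's topology, not of its closure in $\{0,1\}^{\mathbb{N}}$; see the remark after Theorem~\ref{thm:finiterank}. Your ``rank-$k$ intermediate languages'' are never specified, and specifying them is the whole difficulty.

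The missing idea is to make blocking observable and to punish it through generation in the limit over infinitely many restarts. The paper keeps many candidate intervals at once. It extends the adversary's occupied runs one element at a time while keeping each frontier ``edgy,'' meaning nothing to its right in the interval has been used. Level $q$ is declared bad as soon as the algorithm outputs at offset $\ge q$ in the intervals, i.e.\ in $P+[q,\infty)$. The adversary then abandons these candidates and restarts levels $1,\dots,q-1$ on fresh intervals farther out.

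All languages pretended across these restarts are contained in $L_{l,q-1,0,\emptyset}=[l]\cup(P+[0,q-1])$ and form an infinite perfect tower converging to it. With sparse refills, the adversary's sequence is therefore an enumeration of that language, and the blocking outputs are invalid for it. Hence each level is bad only finitely often, and induction on $q$ yields an interval occupied by the adversary alone.

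This is why the family must contain, for every depth $a$, the language cutting every interval at depth $a$, together with approximants that open the first $h$ intervals one step deeper. That is exactly where rank growing with $a$, hence infinite rank, enters. Many simultaneous candidates are also needed, since the algorithm may legally take the next frontier element of some run first; the bound $|S'|\ge 1+(|S|-1)/2$ keeps at least one alive.
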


This is the first main answer to the validity--breadth tension question for local
breadth. For lower asymptotic density, prior work shows that every countable
language class admits the optimal \(1/2\) guarantee. As a \textbf{strong contrast}, for lower Banach density,
this is no longer true: there are countable language collections for which
validity forces sparse holes. Equivalently, for every algorithm that eventually
generates only valid strings, there are arbitrarily large intervals of the
true language \(K\) that the algorithm intersects arbitrarily sparsely.

Theorem~\ref{thm:intro-lowerdensity0fBanach} already reveals the rich structure that arises when working with lower Banach density compared to lower asymptotic density.

Our \textbf{second result} shows that finite Cantor--Bendixson rank rules out the
topology-driven zero-density phenomenon. In every finite-rank class, the
optimal \(1/2\) lower Banach-density guarantee can be recovered.
Conversely, Theorem 1.1 shows that infinite rank is a real source of
hardness: there are infinite-rank classes for which every valid generator
has lower Banach density zero.

To define the complexity precisely,  we first need some topological definitions.
In \cite{kleinberg2025density,kleinberg2026density}, we defined a topological 
space on a collection of languages $\mX$
in which the points of the space are the languages $L_i \in \mX$
and the open sets are generated by the basis of sets
\[
U_{\lan,F} \;=\; \{\, \lan' \in \mX \mid F \subseteq \lan' \subseteq \lan \,\}.
\]
where $\lan$ ranges over all languages in $\mX$ and 
$F$ ranges over all finite subsets of $\mathbb{Z}$.
Convergent sequences in this topology have a natural meaning
(they correspond to the {\em infinite perfect towers} from
\cite{kleinberg2025density}), and from the limit points of 
this space we can define the {\em Cantor-Bendixson rank}
in the standard way:
if $d(\mX)$ denotes the set of all limit points of $\mX$, and
we define $\mX^{(0)}=\mX$, and $\mX^{(1)}=d(\mX)$,
and in general 
$\mX^{(i+1)} = d\bigl(\mX^{(i)}\bigr)$,
then the Cantor-Bendixson rank is the least ordinal $\alpha$ such that
$\mX^{(\alpha)}=\emptyset$.
If no such $\alpha$ exists, its Cantor-Bendixson rank is $\infty$.

In our previous work~\cite{kleinberg2025density, kleinberg2026density}, we established that topology plays a central role in characterizing language identification in both the the original model of language generation in the limit and its partial-enumeration variant (where the adversary need only reveal an infinite subset of the target language), as well as in determining when the algorithm can guess the correct
language index infinitely often. While this topology appears to be the appropriate one for these identification tasks, it is largely irrelevant for proving element-based breadth when breadth is measured via lower asymptotic density, since optimal breadth can always be attained under that measure. 

Our next result identifies the connection between generation breadth and topology. 
The Banach-density notion is aligned with the finite-window nature of
the difficulty. The topology records how candidate languages can remain
indistinguishable under finite evidence, while Banach density asks whether this ambiguity can be amplified into large low-density intervals far from the prefix. This
is the sense in which Banach density reveals structure that prefix
density averages away.

We can now state the positive result precisely. Finite Cantor–Bendixson rank rules out the zero-density failure exhibited above. Equivalently, any language class that forces every valid generator to have lower Banach density smaller than $1/2$ (the optimal benchmark) must lie beyond finite Cantor–Bendixson rank. Precisely, we prove:

\begin{thm}\label{thm:intro-finite-cb-rank}
There is an algorithm with the property that for any collection
of languages $\mX$ with finite Cantor-Bendixson rank, the
algorithm achieves generation in the limit with an output set $O$
that has lower Banach density at least $1/2$ in $K$, which is the optimal. 
\end{thm}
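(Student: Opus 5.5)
The plan has three parts: extract a static tree of candidate languages from the Cantor--Bendixson structure, stabilize on the right node of that tree, and then generate by a fixed alternation rule whose density is controlled by a one-dimensional interval game.

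\textbf{Static tree and stabilization.} Since $\mX$ has finite rank, say rank $r$, every language $L\in\mX$ has a well-defined rank $\rho(L)<r$, namely the largest $i$ with $L\in\mX^{(i)}$. Also, $L$ is isolated in $\mX^{(\rho(L))}$. So there is a finite set $F_L\subseteq L$ such that no other language of rank at least $\rho(L)$ satisfies $F_L\subseteq L'\subseteq L$. Using these witnesses we build a tree, or a finite-height forest, over $\mX$. The children of $L$ are lower-rank languages $L'\subseteq L$ that lie in small neighborhoods of $L$, and the height is at most $r$. The algorithm maintains, at each time $t$, the deepest consistent node $C_t$. This is a language containing $S_t$ and $F_{C_t}$, found by descending from the top rank and choosing the lowest-index consistent candidate at each rank.

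I would first prove a stabilization lemma: $C_t$ converges to a fixed language $C\subseteq K$, with $K$ a descendant of $C$ or equal to it. The argument is an induction on rank. At the top rank, only finitely many candidates with index at most that of $K$ can be chosen infinitely often. Isolation then rules out oscillation, because once $F_K\subseteq S_t$ every rival of the same rank that contains $K$ is excluded. Because the height is finite, this induction terminates. This is exactly where infinite rank fails, consistent with Theorem~\ref{thm:intro-lowerdensity0fBanach}. It also replaces the dynamic critical-chain argument from the asymptotic-density work.

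\textbf{Output rule.} Enumerate $\mathbb{Z}$ by increasing $|z|$. At time $t$, output the least unseen element $z$ of $C_t\setminus S_t$ that comes after the previous outputs and is not yet "claimed". Alternation is the point: each element of $C_t$ is either revealed by the adversary or is eligible for output, and we output roughly every other unseen element in each stretch. Validity holds eventually because $C_t\subseteq K$ once $C_t$ stabilizes inside $K$. This requires that stabilization land on a language contained in $K$, which the tree structure gives: the nodes consistent with $S_t$ that are deepest are eventually subsets of $K$.

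\textbf{Interval game and density.} To bound the lower Banach density, I would formalize a combinatorial game on $\mathbb{Z}$. The adversary reveals points of $K$ and the generator claims unrevealed points. After stabilization the generator plays the greedy alternation on $C$. I then need a lemma: in any interval $I$ with $|K\cap I|\to\infty$, the set $O\cup S$ covers all of $K\cap I$ up to $o(|K\cap I|)$, and $|O\cap I|\ge \tfrac12|(K\setminus S)\cap I|-O(1)$. With the convention that revealed strings count toward the generated set, or that the adversary must reveal with density below $1$, this yields density at least $1/2$. Optimality, meaning that $1/2$ cannot be improved, comes from the lower-asymptotic-density lower bound in \cite{kleinberg2026density}, since $\lbanach\le\underline d$.

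\textbf{Main obstacle.} I expect the hardest step to be the gap between $C$ and $K$ when $K$ is a strict descendant of $C$. Elements of $K\setminus C$ in far-away intervals must also be covered at density $1/2$, but the algorithm only knows $C$. The fix I would try is to use the whole finite chain of ancestors. Output in round-robin over the levels of the chain, each time taking elements valid for the current deepest consistent node. Then, using finite height, show that every long interval of $K$ is eventually processed with alternation at one of boundedly many levels. The losses from the finitely many pre-stabilization steps are confined to a finite set and vanish in the limit.
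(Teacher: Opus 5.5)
There are two genuine gaps, and they sit exactly where the paper does its real work.

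\textbf{The stabilization lemma is false, and the proposal relies on it for both validity and density.} Since $C_t\supseteq S_t$ for every $t$, convergence of $C_t$ to a fixed $C\subseteq K$ forces $C\supseteq\bigcup_t S_t=K$. That would mean you identify $K$ in the limit, and finite rank does not give this. Take $\mX=\{\mathbb{N}\}\cup\{\mathbb{N}\setminus\{n\}:n\ge 1\}$. It has Cantor--Bendixson rank $2$: each $\mathbb{N}\setminus\{n\}$ is isolated, and $\mathbb{N}$ is the only limit point. With $K=\mathbb{N}$, your deepest consistent node is always some $\mathbb{N}\setminus\{n\}$ with $n\notin S_t$, and it changes forever. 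Isolation of $K$ in $\mX^{(\rho(K))}$ only controls languages of rank at least $\rho(K)$. It says nothing about the lower-rank subsets of $K$, infinitely many of which are consistent at every time. The statement is also self-contradictory as written: children are subsets, so ``$C\subseteq K$'' and ``$K$ a strict descendant of $C$'' cannot both hold.

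The paper never stabilizes anything. It runs $\Acc$, which is correct only infinitely often, and proves a per-window bound instead. For a fixed interval $I$, the least common ancestor in the static tree $\mT$ of all guesses since the adversary first entered $I$ is monotone and changes at most $r$ times (Lemma~\ref{lem:rchange}). Validity comes from a specific parent rule: the parent is the smallest-index \emph{earlier} language of higher rank that is a limit of a sequence containing $L$. That rule is what makes Claim~\ref{claim:parent} and Lemma~\ref{lem:PBsubsetK} work. Your tree leaves unspecified the parent of a language lying near several higher-rank languages. Your round-robin fix for the $C$ versus $K$ gap is not an argument.

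\textbf{Your output rule is a monotone scan from the origin, which cannot give positive Banach density even for the singleton class $\{\mathbb{N}\}$.} The adversary repeatedly picks $M$ exceeding everything used so far by more than $k+2$, and reveals $[M,M+k]$ in $k+1$ consecutive steps. There are at least $k+1$ available elements ahead of your pointer and below $M$. So all your outputs in that stretch land below $M$, and the pointer later skips $[M,M+k]$ entirely. Interleaving refills gives arbitrarily long output-free intervals.

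Alternation must be enforced where the adversary plays. The paper adds local pods around each revealed $w_t$ inside $PB(t)$, with diameters growing in both $t$ and the position of the inputs. It then outputs the available element nearest to $w_t$ in the relative order of $L_{i_t}\cup P_t$. Inside any interval $I$, every adversarial stone is then answered inside $I$ except for leaks. The doubling argument of Lemma~\ref{lem:comb} bounds the leaks by $O(\log\ell)$ in each of at most $r$ phases.

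Your density lemma is also the wrong target. $K\setminus S$ is eventually empty, so $|O\cap I|\ge\frac12|(K\setminus S)\cap I|-O(1)$ is vacuous. Conventions such as counting revealed strings as generated change the quantity $\lbanach(O,K)$ that the theorem bounds.
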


Together, Theorems \ref{thm:intro-lowerdensity0fBanach} and \ref{thm:intro-finite-cb-rank} show why lower Banach density is
mathematically different from the universal lower-asymptotic-density
theory. Lower asymptotic density gives the same optimal \(1/2\) guarantee
for all countable classes. Lower Banach density makes the topology of the
language collection relevant: finite rank recovers the optimal guarantee,
and infinite rank is rich enough to support genuine zero-density examples. 
It remains an interesting open question to determine  which infinite-rank classes nevertheless admit positive Banach-density generation, and what is the density generated. 

In summary, the lower Banach density thus reveals a deeper topological structure in the language generation problem, one that is invisible to formulations based on lower asymptotic density, and highlights an {\textbf{intrinsic connection between lower Banach density and the underlying topology.}}

\subsubsection{Second set of results: Lower Banach density of generated sets in  higher dimensions}
We now return to the higher-dimensional case, in which we have
a collection of candidate languages $\mX$, where each $L_i \in \mX$
corresponds to an infinite set of points in $\mathbb{Z}^d$,
and where the lower Banach density is defined in terms
of axis-aligned $d$-dimensional boxes.

First, the negative result in
Theorem \ref{thm:intro-lowerdensity0fBanach} carries
over directly to any dimension $d \geq 1$, since we can 
place the one-dimensional construction used for this theorem
along a coordinate axis of $\mathbb{Z}^d$ for any $d$.

We can also prove a positive result in higher dimensions
for collections with finite Cantor-Bendixson rank, but we
need to refine our definitions slightly to avoid a 
genuine combinatorial obstacle that only appears in dimensions $d > 1$.

In particular, we show how to construct a collection $\mX$
consisting of a single language $L_1$ in $\mathbb{Z}^d$ 
(for any $d \geq 2$)
such that the best lower Banach density achievable by any algorithm is zero.

The higher-dimensional case shows that \textbf{Banach lower density exposes not only
topology but also geometry}. In one dimension, if the target language is
known, the generator can alternate with the adversary and obtain density
\(1/2\) on every long interval. In dimensions \(d\ge2\), no such
alternation balances all axis-parallel rectangles in the most adversarial set of languages. Thus ordinary \textbf{Banach
density detects a second intrinsic barrier: the geometry itself}.

\begin{thm}\label{intro:strictbanachHD}
Suppose \(d\ge2\). There exists an infinite language
\(K\subseteq \mathbb Z^d\) such that, for the singleton class
\(\mathcal X=\{K\}\), every generator can be forced by some adversarial
enumeration of \(K\) to produce an output set \(O\) satisfying
\[
\liminf_{\substack{|R\cap K|\to\infty\\ R\text{ axis-parallel rectangle}}}
\frac{|O\cap K\cap R|}{|K\cap R|}=0.
\]
\end{thm}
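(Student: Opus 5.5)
The plan is to make $K$ a tree-indexed point set in $\mathbb Z^2$ (embedding into $\mathbb Z^d$ via extra coordinates equal to $0$). Its shape should let the adversary, playing online, grow arbitrarily long monotone chains of points it enumerates itself, such that the axis-parallel bounding box of each chain meets $K$ \emph{only} in the chain. Such a box $R$ satisfies $|R\cap K|=$ chain length and $O\cap K\cap R=\emptyset$, because enumerated points can never be generated afterwards. This gives $\liminf=0$ directly. Note that a pure Ramsey/discrepancy argument on a fixed finite point set cannot work, since a random half of $K$ would balance all large rectangles. The adversary's control of the enumeration order is therefore essential, and the geometry only has to make that control translate into empty rectangles.

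\textbf{Construction.} Index the points of $K$ by finite sequences $v\in\mathbb N^{<\omega}$, i.e.\ an infinitely branching tree. Place the root at the origin and place the point $p_v=(x_v,y_v)$ recursively. The subtree of the $i$-th child $vi$ occupies a region $Q_{vi}$ strictly up-right of $p_v$. The regions for different children are ``staircased'' so that the $x$-range of $Q_{vi}$ is below the $x$-range of every $Q_{vj}$ with $j>i$, while its $y$-range is above theirs. Take, for instance, $Q_{vi}=(x_v+a_i,\,x_v+a_{i+1})\times(y_v+b_{i+1},\,y_v+b_i)$ with rapidly growing gaps, applied recursively inside a bounded piece of each region; integer placement only needs enough room at every level. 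The key geometric claim to verify is the following: for any root-to-node path $v_0,v_1,\dots,v_k$, the box spanned by $p_{v_0},\dots,p_{v_k}$, which is the box with corners $p_{v_0}$ and $p_{v_k}$, contains no point of $K$ other than the path points. The induction is on depth. A point of a sibling subtree $Q_{vj}$ with $j\ne i$ lies either entirely to the right of, or entirely above, everything in $Q_{vi}$, so it falls outside the box. Descendants of $v_k$ lie up-right of $p_{v_k}$ and so are outside the box as well.

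\textbf{Adversary.} Fix any enumeration $u_1,u_2,\dots$ of $K$ and proceed in phases $n=1,2,\dots$. In phase $n$, the adversary first enumerates $u_n$ if it is not yet seen. It then builds a new chain of length $n$. Starting from some node with infinitely many fresh children, at each step it picks a child $w$ such that no point of the subtree of $w$ has yet been generated or enumerated. Such a $w$ exists because only finitely many points have been touched so far and there are infinitely many children, each with a disjoint subtree; this is the step where infinite branching defeats the generator's ability to ``pre-block''. The adversary enumerates $p_w$ and continues from $w$. To make chains of distinct phases disjoint, it suffices that the starting node of each phase has a subtree disjoint from everything touched so far. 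Every $u_n$ is eventually enumerated, so this is a legitimate enumeration of $K$.

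\textbf{Conclusion and main obstacle.} For the chain of phase $n$ with bounding box $R_n$, we have $K\cap R_n=$ the $n$ chain points, all enumerated by the adversary. Hence none of them is ever in $O$, and $|O\cap K\cap R_n|/|K\cap R_n|=0$ while $|K\cap R_n|=n\to\infty$. This holds for every generator, including ones that know $K$, so the barrier is purely geometric. The step I expect to be the main obstacle is the careful integer placement that makes the ``bounding box contains only the path'' property hold at all depths simultaneously. Infinitely many children must fit in nested staircase regions, so either the coordinates grow very quickly or one passes to a finite but unboundedly large branching factor (growing with depth or phase) placed far apart. In the latter case I would instead let the chain start fresh in a far-away untouched finite tree block of branching factor exceeding the number of points touched so far.
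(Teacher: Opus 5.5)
Your headline construction, with infinitely many children per node, cannot be realized in $\mathbb Z^2$ (or in $\mathbb Z^d$), however fast the coordinates grow. Suppose $w\neq w'$ are children of $v$ with $p_{w'}\le p_w$ coordinatewise. Then $p_{w'}$ lies in the box spanned by $p_v$ and $p_w$. So the children of $v$ must form an antichain in the quadrant $\{(x,y)\in\mathbb Z^2: x>x_v,\ y>y_v\}$, and by Dickson's lemma this quadrant has no infinite antichain. Concretely, your regions $Q_{vi}$ would need $b_1>b_2>b_3>\cdots\ge 0$ in $\mathbb Z$.

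So the finite-block variant you mention at the end is not optional, and the proof has to be written with it. It does work:
\begin{itemize}
\item Let $K$ be the union of blocks $B_m$. Each $B_m$ is a complete staircase tree of depth $m$ and branching factor $m$, and the blocks have pairwise disjoint bounding rectangles (e.g.\ each strictly up-right of the previous one).
\item In phase $n$, after the refill move for $u_n$, choose an untouched block $B_m$ with $m\ge n$, enumerate its root, and descend $n-1$ times.
\item Since the block was untouched when chosen, before the $j$-th descent at most $j\le n-1$ of its points have been generated. Each lies in at most one child subtree of the current node, so branching factor $n$ already suffices. You do not need it to exceed the total number of points touched so far.
\item Your geometric argument then applies. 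For $v_{m+1}=v_m i$, the regions $Q_{v_m j}$ with $j\neq i$ lie entirely to the right of or above $Q_{v_m i}\ni p_{v_k}$. Descendants of $v_k$ lie up-right of $p_{v_k}$. Other blocks are excluded by the disjoint bounding rectangles.
\item Hence $R_n\cap K=\{p_{v_0},\dots,p_{v_{n-1}}\}$, and all of these were enumerated before being generated. This gives $|R_n\cap K|=n$ and $O\cap K\cap R_n=\emptyset$.
\end{itemize}

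With that repair your proof is correct, and it is genuinely different from the paper's. The paper uses the Chen--Pach--Szegedy--Tardos sets, in which every $2$-colouring has a monochromatic axis-parallel rectangle containing at least $c$ points. It places two far-apart copies $P_k,P_k'$ of each such set and runs a copy strategy: each output in one copy is answered by enumerating its twin. This makes the final adversary/generator colouring of $P_k'$ the complement of that of $P_k$, so of the two twin monochromatic rectangles, one is entirely adversarial.

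You replace the static Ramsey theorem and the colour-fixing copy trick with adaptive steering into untouched subtrees. This buys several things:
\begin{itemize}
\item The argument is elementary and self-contained.
\item Every $K$-point of the witness box is adversarial by construction.
\item A phase costs only $n+1$ adversary moves and no block has to be exhausted, so interleaving the full enumeration of $K$ is trivial.
\item The degeneracy of the witness boxes becomes explicit. In the natural recursive placement the block widths grow like $m^m$, so $\log \mathrm{Vol}(R_n)\gtrsim n\log n\gg |R_n\cap K|$. This is consistent with the filtered density discarding such boxes.
\end{itemize}
The paper's route, in exchange, ties the obstruction to a known rectangular Ramsey phenomenon, which is the conceptual message it wants to convey.

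Finally, your side remark is false for the CPST sets. You say a Ramsey argument on a fixed finite set cannot work because a random half of $K$ would balance all large rectangles. But on those sets \emph{every} colouring has a monochromatic $c$-point rectangle. The only difficulty is that this rectangle might carry the generator's colour, and that is exactly what the copying resolves.
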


It is clear that the identity of the language is not in doubt
for this collection,
since $L_1$ is the only option, but as the adversary and the algorithm
alternate turns generating string from $L_1$, we show how
to design an adversary that uses a subtle strategy based on
a Ramsey-theoretic result of
Chen, Pach, Szegedy, and Tardos
\cite{Chen2008DelaunayGO} in order to create larger and larger
homogeneous regions that an algorithm will miss completely.

We will show that this pathology can be handled by imposing 
the following mild nondegeneracy condition on the
$d$-dimensional boxes we consider.
Ordinarily we'd define the lower Banach density by
consider the $\liminf$ over all sequences of axis-aligned boxes
$R_1, R_2, R_3, \ldots$ where $|R_n \cap K| \rightarrow \infty$.
But our Theorem \ref{intro:strictbanachHD} tells us that isn't
going to work as is; on the other hand, it is sufficient to 
rule out the phenomena they identify by focusing on 
axis-aligned boxes that have a lightweight guarantee on the number of 
elements of $K$ they contain.
Specifically, we say that a 
sequence of axis-aligned boxes 
$R_1, R_2, R_3, \ldots$ 
is {\em nondegenerate} with respect to the true language $K$ 
if there exists a function $f(x)$ going to infinity with $x$
such that 
$$\frac{|R_n \cap K|}{f(\Vol(R_n)) \log(\Vol(R_n))} \rightarrow \infty.$$
For example, if $f(x) = \log(x)^\epsilon$, we retain only those boxes $R_n$ satisfying
$|R_n \cap K| \gg \log^{1+\epsilon}(\Vol(R_n)).$
This is a very weak requirement: a box $R_n$ contains at most $\Vol(R_n)$ elements, and its maximum side length is at most $\Vol(R_n)^{1/d}$, yet we only require the intersection size to exceed $\log \Vol(R_n)$, which is even smaller than the logarithm of its largest side length.

We then define the 
{\em filtered lower Banach density} of the generated set $O$
in the true language $K$ to be the $\liminf$ over all
nondegenerate sequences:
\begin{equation}
    \liminf_{nondegenerate ~ R_1, R_2, R_3, ...} \frac{|R_n \cap K \cap O|}{|R_n \cap K|}. \label{def:filterBD}
\end{equation}

We can then prove the following.

\begin{thm}\label{thm:intro-finite-cb-rank-d}
For any $\epsilon > 0$, there is an algorithm with the property that for any collection $\mX$
of languages in $\mathbb{Z}^d$ with finite Cantor-Bendixson rank, the
algorithm achieves generation in the limit with an output set $O$
that has filtered lower Banach density (defined in (\ref{def:filterBD})) at least $1/2 - \epsilon$ in $K$, which is essentially the optimal bound.
\end{thm}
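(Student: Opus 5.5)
Our plan is to follow the two-layer structure used for Theorem~\ref{thm:intro-finite-cb-rank}: a topological layer identifying which languages remain plausible, and a combinatorial layer in which the generator competes with the adversary inside the current hypothesis. The one-dimensional interval alternation is replaced by a discrepancy lemma for nondegenerate boxes.

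\textbf{Step 1: topological layer.} Since $\mX$ has finite Cantor--Bendixson rank $r$, we extract, as in the one-dimensional proof, a static finite-depth tree of hypotheses. At time $t$ the algorithm selects the candidate of highest rank consistent with $S_t$ and with the largest possible ancestors. Finite rank means that along the true enumeration this choice changes only finitely often at each level. After a finite time, the chosen hypothesis $L$ satisfies $K\subseteq L$, and every language still competing with $L$ differs from it only in a way the tree controls. Generating inside $L$ is then eventually valid. If $L\neq K$, the candidates of lower rank below $L$ that contain $S_t$ witness the discrepancy, so we generate inside the intersection of the chain of candidates, as in the lower-asymptotic-density algorithm. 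This step should transfer essentially verbatim from dimension one, because the topology does not see the geometry of $\mathbb{Z}^d$.

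\textbf{Step 2: balancing lemma.} We need a size-sensitive discrepancy statement for an infinite point set $K\subseteq\mathbb{Z}^d$, revealed online, with adversarial insertions. The target is a rule for choosing outputs from $K\setminus S_t$ so that every box $R$ satisfies
\[
\bigl|\,|O\cap R| - |(S\setminus O)\cap R|\,\bigr| \le C_d \log \Vol(R)\cdot g(\Vol(R)),
\]
where $g$ grows as slowly as needed, and where the bound holds for the eventual output set. The plan is to process $K$ in dyadic scales and blocks. Within each finite block we apply a two-colouring of points against boxes with discrepancy $O(\log n)$, as in Tusn\'ady-type or Bansal-type results. The colour classes become reserved outputs versus points the adversary may take, and the generator outputs reserved points from the current block in a fixed order.

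The adversary may enumerate reserved points. We handle this by recolouring: a point the adversary takes is charged to the next available reserved point of the opposite class in the same block, and a potential-function argument should bound the extra discrepancy by a factor of $(1+\epsilon)$ on each box. Nondegeneracy, namely $|R\cap K|\gg f(\Vol R)\log\Vol R$, then makes the error $o(|R\cap K|)$. This gives density at least $(1-\epsilon)/2$ on the adversary-complement part, which after the usual doubling argument yields $1/2-\epsilon$.

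\textbf{Step 3: combining the layers.} We run the Step 2 balancer on the intersection of the current chain from Step 1. At each hypothesis change we restart the balancer on a fresh scale. Only finitely many changes occur, so the finitely many early errors affect only boxes of bounded $|R\cap K|$, and these are irrelevant to the $\liminf$.

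\textbf{Main obstacle.} The hardest step is making the balancing lemma of Step 2 robust to adaptive adversarial deletions while also ensuring that no output is ever wasted outside $K$. This is exactly the point where Theorem~\ref{intro:strictbanachHD} shows that naive strategies fail. The first thing we would try is a hierarchical (quadtree-like) ordering of $K$ combined with online discrepancy bounds for boxes, absorbing the loss into the slack $\epsilon$ and into the factor $f$.
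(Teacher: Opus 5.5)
Your two-layer architecture (topology, then a balancing device) matches the paper's, but the way you combine the layers fails, and your balancing target is false.

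\textbf{Steps~1 and~3.} You restart the balancer at each hypothesis change and rely on ``only finitely many changes occur''. Finite Cantor--Bendixson rank gives no such stabilization. Take $\mX=\{\mathbb{N}\}\cup\{\mathbb{N}\setminus\{i\}:i\ge1\}$, listed with $\mathbb{N}$ first. It has rank $2$ and is not identifiable in the limit. For $K=\mathbb{N}$, the bottom of the critical chain is $\mathbb{N}\setminus\{i\}$ for the least unseen $i$, so it changes infinitely often. In classes like Example~\ref{example:1}, the guesses oscillate forever between languages that differ on long stretches. Restarting on each change is exactly what lets the adversary carve sparse holes. Two further problems: generating inside a hypothesis $L\supsetneq K$ is not eventually valid, and confining outputs to the bottom of the chain is the Kleinberg--Mullainathan rule, which can have density zero.

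The paper bounds hypothesis changes only locally in space. It uses the static tree $\mT$: the parent of $L$ is the earliest-indexed language preceding $L$ that has higher rank and is a limit of a sequence containing $L$. It pulls back to the common ancestor of consecutive $\Acc$-guesses, and adds to the pod all of $PB(t)$ lying in the $\ell_\infty$-shells $\mR_0,\dots,\mR_{k_t}$. After a finite time, the pod restricted to a newly entered shell is a lowest common ancestor of guesses in a tree of depth at most $r$. So it changes at most $r$ times and stabilizes at $\mR_k\cap K$ (Lemma~\ref{lem:changer}). The shells are spaced via a concave minorant of $f$, so a box of side $\ell$ meets at most $f(\ell)+2$ of them (Lemma~\ref{lem:small-number-parts}). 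This is where the factor $f$ in the filtration is spent: a box sees $O(r\,f(\ell))$ local restarts, each costing $O_{d,\epsilon}(\log\ell)$.

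\textbf{Step~2.} An additive bound $C_d\log\Vol(R)\,g(\Vol(R))$ for every box, with $g$ arbitrarily slow, is impossible for $d\ge3$, even offline for a known singleton $K$. Place far-apart copies of Matou\v{s}ek--Nikolov extremal $n$-point sets on grids $[n]^d$. Every $2$-colouring then has a box of discrepancy $\Omega(\log^{d-1}n)$ inside the $n$-th copy, while $\log\Vol\le d\log n$. Moreover, Tusn\'ady-type colourings of finite blocks have errors polylogarithmic in the block size, not controlled by $|R\cap K|$. A thin box crossing many blocks accumulates those errors.

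What is needed is a \emph{relative} bound, which Theorem~\ref{thm:multi-discrepancy} supplies. It uses the asymmetric local lemma with weights indexed by the size $m$ and the sparsity tier $j$, then a compactness passage to infinite $P$. The result is a colouring with $|\chi(R\cap P)|\le\epsilon m+\beta\sqrt{mj}+\gamma(m)\sqrt m$. The $\epsilon m$ term is intrinsic, and it is why the theorem reads $1/2-\epsilon$.

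Your robustness step (``a potential-function argument should bound\dots'') is the crux, and you leave it unargued. Recolouring would void any static discrepancy guarantee. Outputting reserved points in a fixed order is non-local, so the adversary can exhaust a distant box first. In the paper, colours of pod points are frozen once assigned, and only newly added pod points in a shell get a fresh colouring. The generator answers each input $w_t$ with an available opposite-coloured point in the same shell. Then $B\cap\mR_k$ splits into at most $2d$ rectangles. The non-sparse ones have discrepancy at most $3\epsilon$ times their size, and the sparse ones hold $O_\epsilon(\log\ell)$ points.
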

This bound $1/2-\epsilon$ matches, up to $\epsilon$, the learning-free singleton benchmark after the geometric obstruction in Theorem \ref{intro:strictbanachHD} has been filtered out.

Thus the higher-dimensional theorem separates two obstructions. Ordinary Banach density mixes the learning/topological difficulty with a rectangular Ramsey/discrepancy obstruction that already appears when the target language is known. Filtered Banach density removes this singleton obstruction, and finite rank then essentially recovers the $1/2$ benchmark.

Viewed in light of the work on underrepresentation and mode collapse in language models \cite{holtzman2020degeneration,shypula2025diversity,zhang2025verbalized}, our results here provide a way of articulating these issues in the framework of language generation in the limit, and raising the question of whether this might provide any insights into the underrepresentation phenomena seen in practice. 

\subsection{Proof ideas}

The mathematical guiding point of the proof is that the Banach-density breadth
problem is not merely a harder density problem; it is intrinsically topological
and combinatorial. In one dimension, the right object is a
carefully designed containment-refined topology on the language
collection: finite Cantor--Bendixson rank guides the positive algorithm,
while infinite rank supplies the obstruction. The proof then turns this
topology into density guarantees using combinatorial tools: static trees,
local pods, and an interval game. In dimensions \(d\ge2\), a second mathematical layer appears:  Ramsey/discrepancy phenomena create an obstruction that is independent of learning the target language. This obstruction motivates the filtered  notion, and the positive theorem requires a size-sensitive discrepancy lemma for infinite point sets.

The lower Banach density is a local
validity--breadth problem. If the algorithm tries to fill every possible
window too aggressively, it risks outputting strings that are invalid for some
still-consistent target language. If it remains valid, the adversary may be
able to force a sparse window.

We now give a proof overview. The main results have four parts:
a one-dimensional infinite-rank density 0, a one-dimensional finite-rank
positive theorem, a higher-dimensional geometric obstruction, and a
higher-dimensional filtered positive theorem.

The starting key ingredient of all these proofs is a carefully designed containment-restricted topology on the set of languages. This topology is finer than the topology we used in \cite{kleinberg2026density} for language identification. Its Cantor–Bendixson rank gives a hierarchy of the candidate languages, which the positive algorithm turns into a static pullback tree.

\paragraph{Infinite-rank can force 0 density}
We show how infinite Cantor--Bendixson rank can force
lower Banach density zero. The adversary's goal is not merely to make the
algorithm miss infinitely many isolated strings. Banach density requires
a stronger failure: arbitrarily long intervals of the true language must
contain essentially no algorithmic outputs.

The construction builds a countable language family with an infinite
perfect-tower structure. Roughly speaking, during a mega step, the adversary tries to
occupy one long interval level by level. At level \(q\), the adversary
attempts to extend an already occupied block by one more point. If the
algorithm blocks this by outputting too far ahead, the adversary moves
to a higher-rank approximating language. If such blocking continued
forever, these approximating languages would converge to a limit language
on which the far-ahead outputs would eventually be invalid, contradicting
generation in the limit. Hence, at some level, the adversary must succeed
in extending the occupied interval. Repeating this at larger and larger
scales produces arbitrarily long missed intervals, forcing lower Banach
density zero.

The perfect-tower structure is what keeps this diagonal process inside a
fixed countable family. A fully adaptive decision tree would naturally
create uncountably many possible limit languages; the towers encode the
long gaps while preserving countability.

\paragraph{Finite-rank positive theorem in one dimension.}
For finite Cantor--Bendixson rank, we prove that the optimal
\(1/2\) lower Banach-density guarantee can be recovered. The algorithm starts from the accurate-guessing algorithm we introduced in 
\cite{kleinberg2025density}: at each time \(t\), it guesses a candidate
language, and this guess is correct infinitely often
(see Theorem~\ref{thm:acc}). This is also the starting point of our
lower asymptotic density algorithms in \cite{kleinberg2026density}.

The new difficulty is that accurate guesses alone do not control arbitrary
intervals. In the prefix-density setting, missed late strings can be
charged to generated strings earlier in the prefix. For Banach density,
that charge is invalid: a sparse interval must be repaired inside that
interval. The algorithm therefore needs both (1) a global structure that
controls how hypotheses change and (2) a local mechanism that protects the
windows where the adversary is acting.

The global structure is a carefully defined static tree extracted from the
Cantor--Bendixson topology, and a delicate rule of how the guesses move. Roughly, when the current hypothesis moves
laterally in this tree, the algorithm pulls back to a common ancestor.
Each such pullback moves upward in Cantor--Bendixson rank, and finite
rank can bound how many such moves can affect any fixed interval.

The local mechanism is a pod construction. Whenever the adversary reveals
a string, the algorithm adds a growing neighborhood around that string
inside the appropriate pullback language. These pods are local rather
than prefix-based: they protect the region where the adversary is acting,
instead of only prioritizing early strings in a fixed ordering. We can show that, because of our careful design, after a finite time, these pullbacks and
pods remain compatible with the true language. The algorithm always outputs a string near the adversarial input inside
the relevant pod, and the validity analysis shows that these outputs
eventually lie in the true language.

After proper but involved set-up, the density analysis of the finite-rank algorithm is reduced to a finite
game on an interval \(I\). Adversarial inputs in \(I\) are black stones,
and algorithmic outputs in \(I\) are white stones. The main way the
algorithm loses density inside \(I\) is through leaks: times when an
adversarial input in \(I\) causes the algorithm to output outside \(I\).

At each time, the algorithm can only place a white stone in the currently
visible part of the interval, which is determined by the current pullback
language and the pods. The static Cantor--Bendixson tree implies that
this visibility structure changes only a bounded number of times. Between
two such changes, repeated leaks must move geometrically farther away in
the relative order of the current visible set. Hence each phase has only
\(O(\log |I|)\) leaks. Since the number of phases is bounded by the
finite rank, the total leakage is \(o(|I|)\), giving density
\(1/2-o(1)\) on every long interval.

The actual language-generation process is more complicated than the game:
many events can occur between two adversary moves involving \(I\). 
The point of the setup is that the game abstraction is robust to these
intervening events; they can change the visibility structure only through
the bounded number of pullbacks controlled by the Cantor--Bendixson rank.

\paragraph{Higher-dimensional geometric obstruction.}
In dimensions \(d\ge2\), ordinary lower Banach density has a new
obstruction that is not caused by learning the target language. We
construct a singleton language class for which every generator can be
forced to have ordinary lower Banach density zero. Since the language is
known, the obstruction cannot come from identification. It comes from
rectangular geometry: axis-parallel rectangles in dimensions at least two
cannot always be balanced by a one-dimensional alternation strategy. A
Ramsey/discrepancy construction produces large rectangles that are
homogeneous with respect to the adversary/algorithm coloring, forcing
zero ordinary Banach density.

\paragraph{Filtered positive theorem in higher dimensions.}
The higher-dimensional obstruction shows that ordinary Banach density is
too sensitive to geometrically degenerate boxes. We therefore introduce
filtered lower Banach density, which ignores boxes whose intersection
with the target language is too sparse relative to the discrete volume
of the box.

After filtering, the finite-rank positive theorem can be recovered in
higher dimensions. The proof again uses the static Cantor--Bendixson-tree
strategy, but the one-dimensional interval game is replaced by a
discrepancy argument for rectangles. The key new ingredient is a
size-sensitive discrepancy lemma for infinite point sets. Classical
discrepancy bounds are usually finite and depend on a global parameter
\(n\). Here the target language is infinite, and the density guarantee
must hold locally over boxes of many different sizes and locations. The
lemma gives a coloring whose error in a box is negligible compared with
\(|K\cap R|\) whenever \(R\) is nondegenerate. This supplies the
higher-dimensional replacement for one-dimensional alternation and gives
the \(1/2-\varepsilon\) filtered Banach-density guarantee.

\subsection{Proof novelty and comparison with prior work}

A key is that here mathematics changes compared to prior work \cite{kleinberg2026density} studying asymptotic lower density. The one-dimensional theory is topology plus combinatorics:
the containment-refined topology organizes the learning obstruction, and
the interval game turns it into a density bound. The higher-dimensional
theory adds Ramsey/discrepancy geometry, which both creates the ordinary
Banach-density obstruction and motivates the filtered positive theorem.

We highlight the main ways in which the present paper departs from our previous works on lower asymptotic density
\cite{kleinberg2025density,kleinberg2026density}.

As in \cite{kleinberg2025density,kleinberg2026density}, the positive algorithms in this paper start from the index-guessing algorithm of we introduced in \cite{kleinberg2025density}. This algorithm maintains, at each time step,
a candidate language that it treats as the current guess for the true
language, and this guess is correct infinitely often; see
Theorem~\ref{thm:acc} for the precise statement.

The rest of the design and analysis is different. The reason is that
lower Banach density is not an amortized prefix quantity. In the
lower-asymptotic-density setting, a missed late string can be charged to
generated strings earlier in the prefix
\cite{kleinberg2025density,kleinberg2026density}. Such charging is no
longer valid for Banach density: if the adversary creates a sparse
interval or box, outputs elsewhere do not help. The algorithm must
control every large window locally. In particular, it must sometimes
generate far from the origin, near the region where the adversary is
revealing points, while still maintaining validity in the limit. This is
where the present algorithms begin to depart from the earlier ones.
\paragraph{Topology.}
A first major novelty is that the topology needed for Banach-density
breadth is not simply the finite-evidence topology used for
identification in \cite{kleinberg2026density}. In identification in the limit, finite evidence records
which candidate languages remain possible: after the adversary has shown
a finite set \(F\), every language containing \(F\) is still consistent.
This finite-evidence topology is valuable for identification, but it is
too coarse for generation density analysis. In particular, it does not
separate a language from its supersets, and therefore it does not record
the containment refinements along which the algorithm must pull back.

This issue does not arise for lower asymptotic density. Our universal
\(1/2\) lower-asymptotic-density algorithm of
\cite{kleinberg2026density} does not need to use the full topological
structure of the language collection; its analysis is based on prefix
charging and inclusion-chain bookkeeping. For lower Banach density, this
amortized charging is no longer valid. A sparse interval cannot be
repaired by outputs elsewhere, so the algorithm must control how
hypotheses move inside a fixed window.

For this reason, the present paper uses a containment-refined
finite-evidence topology. Its basic neighborhoods are
\[
U_{L,F}=\{L'\in\mathcal X:F\subseteq L'\subseteq L\}.
\]
Here \(F\) records the finite evidence already seen, while the condition
\(L'\subseteq L\) records which refinements remain inside the current
candidate language \(L\). This containment refinement is one of the main
new ingredients of the paper. It is what allows us to build the static
Cantor--Bendixson tree and to prove that, when the rank is finite, only
boundedly many pullbacks can affect any fixed interval.

\paragraph{Static topological tree and guessing rule.}
The positive finite-rank theorem uses a static tree extracted from the
Cantor--Bendixson topology. This is a major departure from the
lower-asymptotic-density algorithms, which rely on dynamic chains of
critical languages. For Banach density, a dynamic chain is not enough:
to bound the damage inside an arbitrary interval, we need a fixed
hierarchy in which lateral movement forces a pullback, and finite rank
bounds the number of such pullbacks.

The guessing rule also changes. Rather than following
only the current dynamic critical chain, the algorithm uses the tree to carefully 
decide when to pull back to a common ancestor. This is what gives the
analysis a uniform bound on the number of times the algorithm can move
upward across the structure relevant to a fixed interval.

\paragraph{Local pods rather than prefixes.}
Both the previous papers and the present paper use a notion of ``pod'':
a set of strings added to a priority list so that they have priority in
future generations. However, the role of pods is different here. In the
prefix-density setting, pods can preferentially support early strings in
the ordering. For Banach density, this is insufficient, since distant
intervals matter just as much as early ones.

Our algorithm therefore builds local pods around adversarially revealed
strings inside the appropriate pullback language. This
locality is essential: the algorithm must place mass near the windows
where sparse holes could otherwise form. The parameter controlling the number of new elements to be absorbed into a pod grows with
both time and location due to some technical subtlety, rather than only with time as in previous works.

\paragraph{The interval game.}
In the current paper, the density analysis reduces to a novel finite game on an interval. The
adversary places black stones, the algorithm places white stones, and
the visibility set changes only a bounded number of times, controlled
by the Cantor--Bendixson rank. Within each phase, leaking moves force
distances to grow geometrically, so there are only \(O(\log \ell)\)
leaks on an interval of length \(\ell\). This yields \(1/2-o(1)\)
density on every long interval.

\paragraph{Higher-dimensional geometry and size-sensitive discrepancy lemma.}
In dimensions \(d\ge2\), the topological story remains, but Banach
density also exposes a second barrier. Even for a singleton language
class, rectangular ranges cannot always be balanced. A Ramsey/discrepancy
construction forces large monochromatic rectangles, giving zero ordinary
Banach density. The filtered density removes this purely geometric
barrier, and the positive theorem then combines the static topological
tree with a new size-sensitive discrepancy lemma for infinite point sets. This ingredient has no analogue in our earlier one-dimensional algorithms \cite{kleinberg2025density, kleinberg2026density},
where intervals can be balanced by alternation and no rectangular
discrepancy issue arises.

\subsection{Extensions: Generalized notion of density and generation speed}

We also study a one-dimensional family of densities interpolating between
the asymptotic lower density (universal prefix regime) and the Banach regime. 
\begin{defn}\label{def:fwindow}
   For a nondecreasing
function \(f:\mathbb N\to\mathbb N\cup\{\infty\}\), define \emph{\(f\)-window lower density} as
\[
\delta_f(A)=\liminf_{k\to\infty}
\min_{1\le m\le f(k)}
\frac{|A\cap [m,m+k-1]|}{k}.
\] 
\end{defn}
Here \(f(k)\) is the maximum allowed starting position of a window of
length \(k\). Thus \(f(k)=1\) recovers lower asymptotic density, while
\(f(k)=\infty\) recovers lower Banach density. Another example is the lower P\'olya density, which is the limit of the lower $f$-window densities
for $f(k) = ck$ as $c \rightarrow \infty$.

From a structural perspective, the distinctions among these notions
are governed by the extent to which the viewing window may be
translated relative to its length.  The function \(f(k)\) determines the maximum allowed starting position of a window of length \(k\), thereby restricting its translational freedom.

The $f$-window densities form a hierarchy due to the following natural
monotonicity property that is easily verified:
if $f$ and $g$ are both non-decreasing functions from 
$\mathbb{N}$ to $\mathbb{N} \cup \{\infty\}$,
and if $f \leq g$, then 
$\delta_f(A) \geq \delta_g(A)$
for any set $A \subseteq \mathbb{N}$.

\subsubsection{Third set of results: dichotomy for lower $f$-window densities in one dimension}

We show the same transition appears in this family of densities. If \(f\) is uniformly
bounded, the density remains in the prefix regime and the universal
optimal \(1/2\) guarantee applies to all countable classes. If
\(f(k)\to\infty\), the density becomes window-uniform enough to see the
topological layer: finite Cantor--Bendixson rank gives \(1/2\), while
our infinite-rank construction forces density zero.

We summarize this pair of dichotomies in Table \ref{tab:main_table},
where ``rank of $\mX$'' refers to the Cantor-Bendixson rank of
the collection of languages $\mX$.

\begin{table}[h]
    \centering
    \caption{Dichotomy of the output density depending on the Cantor-Bendixson Rank and the different $f$-window density}
    \label{tab:main_table}
    \begin{tabular}{|p{4cm}|p{5cm}|p{5cm}|}
        \hline
        \diagbox{$f$}{Rank of $\mX$} & $\mX$ infinite Cantor-Bendixson Rank & $\mX$ finite Cantor-Bendixson Rank \\
        \hline
       \makecell{$f(k) \to \infty$ \\
       (including Banach lower \\ density, P\'olya lower \\ density)} & Exists $\mX$ where the lower $f$-window density is always zero  (Theorem \ref{thm:lowerdensity0f}, Corollary \ref{thm:lowerdensity0fBanach})   & The algorithm can always output the optimal lower $f$-window density  $1/2$ (Theorem \ref{thm:finiterank}, Corollary \ref{thm:finiterankf}) \\
        \hline
        \makecell{$f(k)$ uniformly  \\ bounded  above  \\ (include asymptotic \\ lower density)} & The algorithm can always output the optimal lower $f$-window density  $1/2$ (Corollary of the results in \cite{kleinberg2026density}) & The algorithm can always output the optimal lower $f$-window density  $1/2$ (Corollary of the results in \cite{kleinberg2026density}) \\
        \hline
    \end{tabular}
\end{table}

This sharp phase transition phenomenon above on $f$ indicates that the
asymptotic lower density (the main subject in our previous papers studying
breadth of generation \cite{kleinberg2026density,
kleinberg2025density}) is indeed fundamentally different
and more tractable to achieve than the other
densities. And as noted earlier, when $f$ is not
constant (i.e, not asymptotic lower density), we have to utilize the
topology much more carefully, in contrast to the 
asymptotic lower density, where the arguments ultimately
only require the partially ordered set of the languages
with respect to set inclusion.

Finally, the same arguments extend to the model where generators  may output up to $C \geq 1$ strings per adversarial input. In the bounded-window/asymptotic regime, the optimal benchmark becomes $C/(C+1)$; in the unbounded-window regime, the infinite-rank zero-density examples persist. We state the formal version later.

\section{Review of basic definitions for density and topology}

Before proceeding to the results, we review and formalize two sets of basic definitions: those concerning the density of integer sequences and the topological notions underlying the basic dichotomy. Although introduced in Section~\ref{sec:intro}, we present them here more formally and record some basic properties, preparing for the first main results in the next section.

\subsection{Banach density in one dimension}

In the study of integer sequences, various notions of density are utilized to capture the structural sparseness or richness of a set $A \subseteq \mathbb{N}$. The  most prominent lower densities in the literature are the lower asymptotic density and the lower Banach density. 

As in the introduction, for one-dimension, we can either consider the density of
a set of natural numbers $A$ in another  set $B \subset \mathbb{N}$
or we can simply consider the density of $A$ in the natural numbers
themselves.
From a definitionally point of view, the considerations are
essentially equivalent, by havng the indices of
the elements of $B = \{b_1, b_2, b_3, \ldots\}$ play
the role of the natural numbers, and so we will go back and forth
between these as needed.

We recall several definitions from the Introduction. In our previous work \cite{kleinberg2025density, kleinberg2026density}, we used \textit{lower asymptotic density} to measure breadth.  It measures the sparseness of the set anchored to the origin. To be more precise, for a subset $A \subset \mathbb{N}$,  the lower asymptotic density in $\mathbb{N}$ is defined as
\begin{equation}
    \underline{d}(A, \mathbb{N}) = \liminf_{n \to \infty} \frac{|A \cap [1, n]|}{n}. \label{def:asymptotic}
\end{equation}

However, because the interval is anchored at $1$, lower asymptotic density does not account for the behavior of $A$ when the measuring window begins at later points. 

\subsubsection{Lower Banach density}

The \emph{lower Banach density} captures the absolute worst-case global sparseness by allowing an interval of length $k$ to translate arbitrarily along the natural numbers. 
\begin{defn}
  The \emph{lower Banach density} of a set $A \subseteq \mathbb{N}$ is the limit of the minimum relative density of $A$ in any interval of length $n$, as $n \to \infty$, 
\[\delta_B(A, \mathbb{N}) = \liminf_{|I| \to \infty} \frac{|A \cap I|}{|I|}.\] 
Or equivalently, 
\begin{equation}
    \delta_B(A, \mathbb{N}) = \liminf_{k \to \infty} \inf_{m \geq1} \frac{|A \cap [m, m+k-1]|}{k}. \label{def:banach}
\end{equation}
\end{defn}

One can similarly define the upper Banach density $\delta^B(A, \mathbb{N})$. 
We can also write  $\underline{d}(A, \mathbb{N})$  and $\bar{d}(A, \mathbb{N})$ for the lower and upper asymptotic densities, respectively. 

In the one-dimensional case, we could assume the true language $K$ is a subset of  $\mathbb{N}$, so the set of algorithm output $A$ in $K$ always satisfy 
$$\delta_B(A, K) \leq \underline{d}(A,K ) \leq \bar{d}(A, K ) \leq \delta^B(A, K). $$

Thus, obtaining lower bounds on the lower Banach density is the most difficult, as the averaging window can slide arbitrarily; it is therefore the most stringent notion of density (also see Claim \ref{claim:banachsmall}). It is possible to have a set with a lower asymptotic density of 1 (and thus upper Banach density of 1) but a lower Banach density of 0. For example: Consider a set $A$ constructed by removing arbitrarily long intervals that appear very sparsely (e.g., removing an interval of length $k$ at position $2^k$). The sparseness ensures the asymptotic density remains 1, but the existence of arbitrarily large gaps forces the lower Banach density to 0.

\subsection{More general densities:  $f$-window density.}
One may define a hierarchy of lower densities by varying the restrictions imposed on the left endpoint of the sliding window. Between the lower asymptotic density and the lower Banach density, several intermediate notions have been introduced to capture the macroscopic tail behavior of a set \(A\); a prominent example is the \emph{lower P\'olya density}. By restricting the observation window to intervals of the form \([\theta n, n]\) for a fixed proportion \(\theta \in (0,1)\), this density is defined by
\begin{equation}
    \underline{d}_P(A, \mathbb{N})
    \;=\;
    \lim_{\theta \to 1^-}
    \liminf_{n \to \infty}
    \frac{|A \cap [\theta n, n]|}{(1-\theta)n}.
\end{equation}
These classical densities satisfy the well-known hierarchy
$    \delta_B(A, \mathbb{N}) \leq \underline{d}_P(A, \mathbb{N}) \leq  \underline{d}(A, \mathbb{N}).$

From a structural perspective, the distinctions among these notions are governed by the extent to which the viewing window may be translated relative to its length. To formalize and interpolate this relationship, we introduce a unifying framework, which we term the \emph{\(f\)-window lower density}.

\begin{defn}\label{def:fdensity}
Let $f: \mathbb{N} \to \mathbb{N} \cup \{\infty\}$ be a non-decreasing function. For any set $A \subseteq \mathbb{N}$, the lower $f$-window density of $A$ is defined as
\begin{equation}
    \delta_f(A, \mathbb{N}) = \liminf_{k \to \infty} \min_{1 \leq m \leq f(k)} \frac{|A \cap [m, m+k-1]|}{k}.
\end{equation}
\end{defn}

The function \(f(k)\) determines the maximum allowed starting position of a window of length \(k\), thereby restricting its translational freedom. Larger values of \(f(k)\) impose stricter constraints on admissible windows, making the density \(\delta_f(A)\) less forgiving and typically smaller. The following monotonicity property is immediate from the definition.

\begin{claim}[Monotonicity]\label{claim:monotone}
 Suppose $f \geq g$ and both $f, g : \mathbb{N} \to \mathbb{N} \cup \{\infty\}$ are non-decreasing functions. Then $\delta_f(A, \mathbb{N}) \leq \delta_g(A, \mathbb{N})$.
\end{claim}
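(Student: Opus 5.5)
The plan is to compare, for each fixed window length $k$, the quantities inside the $\liminf$, and then pass to the limit. Fix $A\subseteq\mathbb{N}$ and define, for $h\in\{f,g\}$,
\[
a_h(k)=\min_{1\le m\le h(k)} \frac{|A\cap[m,m+k-1]|}{k},
\]
where for $h(k)=\infty$ the minimum ranges over all $m\ge 1$. This minimum is attained: the ratio takes values in the finite set $\{0,1/k,\ldots,1\}$, so even the infimum over infinitely many $m$ is a minimum. By Definition~\ref{def:fdensity}, $\delta_h(A,\mathbb{N})=\liminf_{k\to\infty}a_h(k)$.

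The key step is the pointwise inequality $a_f(k)\le a_g(k)$ for every $k$. Since $f(k)\ge g(k)$ (reading $\infty\ge n$ for every $n\in\mathbb{N}$), the index set $\{m: 1\le m\le g(k)\}$ is contained in $\{m:1\le m\le f(k)\}$. A minimum over a larger set of values is no larger than the minimum over a subset, so $a_f(k)\le a_g(k)$.

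Finally, $\liminf$ is monotone: if $a_f(k)\le a_g(k)$ for all $k$, then $\inf_{k\ge n}a_f(k)\le\inf_{k\ge n}a_g(k)$ for every $n$. Letting $n\to\infty$ gives $\delta_f(A,\mathbb{N})\le\delta_g(A,\mathbb{N})$. The nondecreasing hypothesis on $f,g$ is not needed for this argument; it only ensures the densities are well-behaved as defined. The only point requiring any care is the case where $f(k)$ or $g(k)$ equals $\infty$, which the convention above handles, so I expect no genuine obstacle.
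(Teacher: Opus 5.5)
Your proposal is correct and is exactly the argument the paper has in mind; the paper gives no separate proof and simply states that the claim is immediate from the definition. Your steps are the right ones: the index set for $g(k)$ is contained in the one for $f(k)$, so $a_f(k)\le a_g(k)$ for every $k$, and $\liminf$ preserves this inequality. Your remark that the nondecreasing hypothesis is not needed is also accurate.
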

Under this framework, $\delta_f(A, \mathbb{N})$ reveals a continuous spectrum of intermediate densities. Because a faster-growing $f(k)$ strictly expands the search domain for sparse intervals, $\delta_f(A, \mathbb{N})$ is monotonically decreasing with respect to the growth rate of $f$.

By choosing the growth rate of $f$, we strictly recover the three classical lower densities:

\begin{itemize}
    \item {Asymptotic Density ($f(k) = 1$):} The window is permanently pinned to $m=1$, granting it zero translational freedom. 
    
    \item {Banach Density ($f(k) = \infty$):} The window possesses infinite translational freedom, allowing it to seek out the globally sparsest interval of length $k$. 
    
    \item {P\'olya Density ($f(k) = ck$):} The standard P\'olya interval $[\theta n, n]$ has length $k = (1-\theta)n$ and starting position $m = \theta n$. Substituting $c = \frac{\theta}{1-\theta}$ forces $m = ck$, granting the window \textit{linear} translational freedom. 

    \begin{lem}
Let $\delta_f(A, \mathbb{N})$ denote the $f$-window lower density of a set $A \subseteq \mathbb{N}$. For linear growth $f(k) = ck$ with $c > 0$, we have
\begin{equation}
    \lim_{c \to \infty} \delta_{f(k) = ck}(A, \mathbb{N}) = \underline{d}_P(A, \mathbb{N}).
\end{equation}
\end{lem}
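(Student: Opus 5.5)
The plan is to prove two inequalities. The $f$-window densities are monotone in $c$ by Claim~\ref{claim:monotone}: $c\mapsto \delta_{ck}(A,\mathbb N)$ is nonincreasing, so the limit on the left exists. Write $P(\theta)=\liminf_{n\to\infty}\frac{|A\cap[\theta n,n]|}{(1-\theta)n}$, so that $\underline d_P(A,\mathbb N)=\lim_{\theta\to1^-}P(\theta)$ (the limit exists by the definition of Pólya density). The dictionary between the two families is $c=\theta/(1-\theta)$: a Pólya window $[\theta n,n]$ has length $k\approx(1-\theta)n$ and left endpoint $m\approx ck$.

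\textbf{Upper bound} $\lim_c\delta_{ck}\le \underline d_P$. Fix $\theta\in(0,1)$ and $c'>\theta/(1-\theta)$. Choose $n_j\to\infty$ realizing the liminf $P(\theta)$. The integer window $[\lceil\theta n_j\rceil,n_j]$ has length $k_j=(1-\theta)n_j+O(1)$ and left endpoint at most $c'k_j$ once $j$ is large; the slack $c'-\theta/(1-\theta)$ absorbs the rounding. Hence $\delta_{c'k}(A)\le P(\theta)$. Since $c\mapsto\delta_{ck}$ is nonincreasing, $\lim_{c\to\infty}\delta_{ck}\le\delta_{c'k}\le P(\theta)$ for every $\theta$. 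Letting $\theta\to1^-$ gives the bound.

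\textbf{Lower bound} $\delta_{ck}\ge\underline d_P$ for every fixed $c$. This is the main step.
\begin{enumerate}[label=(\roman*)]
\item Fix $c$, $\varepsilon>0$, and $\theta$ close to $1$. Take $N_\theta$ such that $|A\cap[\theta x,x]|\ge(P(\theta)-\varepsilon)(1-\theta)x$ for all $x\ge N_\theta$.
\item Consider a window $[m,n]$ with $n=m+k-1$ and $m\le ck$, so that $n\le(c+1)k$.
\item Tile it from the right by geometric Pólya blocks $[\theta n_{j+1},n_j]$ with $n_0=n$ and $n_{j+1}=\lfloor\theta n_j\rfloor$. Stop when the next block would cross $m$ or when $n_j<N_\theta$.
\item Each full block with $n_j\ge N_\theta$ contributes density at least $P(\theta)-\varepsilon$.
\item The uncovered leftover at the left end has length at most $(1-\theta)n+N_\theta\le(1-\theta)(c+1)k+N_\theta$.
\item Rounding errors at block boundaries cost $O(1)$ per block. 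There are $O(\log(c+1)/(1-\theta))$ blocks, since we only descend from $n$ to $m\ge0$ before the $N_\theta$ cutoff and $k\ge n/(c+1)$. This is a bounded number for fixed $c,\theta$.
\item Dividing by $k$ and letting $k\to\infty$ gives
\[
\delta_{ck}(A)\ \ge\ P(\theta)-\varepsilon-(c+1)(1-\theta).
\]
\item Now let $\theta\to1^-$ with $c$ fixed, then $\varepsilon\to0$, to obtain $\delta_{ck}(A)\ge\underline d_P(A)$. Letting $c\to\infty$ preserves this bound.
\end{enumerate}

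Combining the two bounds yields equality. I expect the only real difficulty to be the bookkeeping in the lower bound. The point is that the order of limits must be chosen so that the leftover term $(c+1)(1-\theta)$ vanishes: $\theta\to1$ with $c$ fixed. The rounding is controlled because the number of blocks is bounded independently of $k$. The case of small $m$ (windows near the origin) is harmless because $k\ge n/(c+1)$, so the short blocks below $N_\theta$ are negligible.
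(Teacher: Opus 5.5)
The paper says this proof is in the appendix, but no proof of this lemma appears there. So there is nothing to compare against, and I checked your argument on its own terms. The two-inequality structure is correct, and the upper bound is fine as written. For $c'>\theta/(1-\theta)$, the integer Pólya window $[\lceil\theta n_j\rceil,n_j]$ is an admissible window for $f(k)=c'k$ once $n_j$ is large. Monotonicity in $c$ then finishes that direction.

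In the lower bound, step (vi) contains a false claim: the number of blocks is \emph{not} bounded independently of $k$. Take $m$ small compared with $n$, say $m=1$ and $k=n$. Then the tiling descends from $n$ all the way to $N_\theta$, which takes about $\log(n/N_\theta)/\log(1/\theta)$ blocks, and this grows with $k$. Your bound $O(\log(c+1)/(1-\theta))$ holds only when $m$ is comparable to $ck$.

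The repair is immediate. In general there are $O\bigl(1+\log((c+1)k)/(1-\theta)\bigr)$ blocks, so the $O(1)$-per-block rounding loss totals $O(\log k)=o(k)$ for fixed $c,\theta$. It therefore still vanishes after dividing by $k$. Alternatively, stop the tiling once $n_j<\eta n$. This caps the block count at $O(\log(1/\eta)/(1-\theta))$, at the cost of an extra $\eta(c+1)$ in the error, and you then let $\eta\to0$.

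Also, the blocks should be $[n_{j+1},n_j]$ (or $[\theta n_j,n_j]$), not $[\theta n_{j+1},n_j]$. As written, consecutive blocks overlap on an entire Pólya window. With these corrections, the bound in (vii) holds and the proof goes through.

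The existence of $\lim_{\theta\to1^-}P(\theta)$ is Pólya's theorem, not part of the definition, but you do not need to assume it. Your bounds give $\limsup_{\theta\to1^-}P(\theta)\le\delta_{ck}(A)$ for every $c$, and $\lim_{c\to\infty}\delta_{ck}(A)\le\inf_{0<\theta<1}P(\theta)$. Together these show the Pólya limit exists, equals $\inf_\theta P(\theta)$, and equals $\lim_c\delta_{ck}(A)$. You have in fact proved the stronger statement $\delta_{ck}(A)\ge\underline{d}_P(A)$ for each fixed $c$.
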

The proof of this lemma is in the Appendix. 
\end{itemize}

Since $f(k) = \infty$ is the ``largest" possible function, we have the following:
\begin{claim}\label{claim:banachsmall}
   The lower Banach density is the smallest among all lower $f$-window densities. 
\end{claim}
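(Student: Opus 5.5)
The plan is to deduce the claim from the Monotonicity claim (Claim~\ref{claim:monotone}), applied with the constant function $f \equiv \infty$. First I will check that $f \equiv \infty$ is an admissible choice in Definition~\ref{def:fdensity}: it is a map $\mathbb{N} \to \mathbb{N}\cup\{\infty\}$, and it is trivially non-decreasing.

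Next I will verify that $\delta_{\infty}(A,\mathbb{N}) = \delta_B(A,\mathbb{N})$. When $f(k)=\infty$, the constraint $1 \le m \le f(k)$ becomes $m \ge 1$. One small point needs care. The $f$-window definition is written with a $\min$, while (\ref{def:banach}) uses an $\inf$. The quantities $|A\cap[m,m+k-1]|/k$ take values in the finite set $\{0,1/k,\dots,1\}$, so the infimum over $m$ is attained and the two expressions agree for every $k$. Taking $\liminf_{k\to\infty}$ of both sides then gives $\delta_\infty(A,\mathbb{N})=\delta_B(A,\mathbb{N})$, by the equivalent form (\ref{def:banach}).

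Now let $g$ be any non-decreasing $g:\mathbb{N}\to\mathbb{N}\cup\{\infty\}$. Then $\infty \ge g(k)$ for every $k$, so Claim~\ref{claim:monotone} gives
$$\delta_B(A,\mathbb{N})=\delta_\infty(A,\mathbb{N})\le \delta_g(A,\mathbb{N}).$$
Because $\delta_B$ is itself the lower $f$-window density for $f\equiv\infty$, it is both a member of the family and a lower bound for it, so it is the minimum.

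If I want the argument to be self-contained rather than relying on Claim~\ref{claim:monotone}, I can prove the inequality directly. For each $k$ the set $\{m : 1\le m\le g(k)\}$ is contained in $\{m: m\ge 1\}$, so the minimum over the larger set is at most the minimum over the smaller one, pointwise in $k$. Taking $\liminf$ preserves this inequality. There is no real obstacle in this proof; the only point needing care is the $\min$ versus $\inf$ issue, which the finiteness of the value set resolves.
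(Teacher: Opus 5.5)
Your proof is correct and follows the paper's own reasoning: the paper deduces the claim from the Monotonicity Claim~\ref{claim:monotone}, observing that $f\equiv\infty$ is the largest admissible function and that it recovers $\delta_B$. Your remark that $\min$ and $\inf$ agree, because the ratios lie in the finite set $\{0,1/k,\dots,1\}$, fills in a detail the paper leaves implicit.
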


\subsection{Underlying topology on language collections}\label{sec:topo}

This paper shows that, for stringent notions of lower density such as the lower Banach density and the generalized densities introduced later, an analysis of the topological structure of $(\mX,\mT)$, in particular its Cantor--Bendixson rank, is essential. In this regime, the breadth of language generation is tightly coupled to topological complexity, in contrast to asymptotic lower density, which depends primarily on the inclusion poset and is largely insensitive to finer topological structure.

We will use the topological space we defined in 
\cite{kleinberg2025density,kleinberg2026density}, and we will
see that it is exactly what is needed to express the dichotomy
in achievable lower Banach density.
We will describe it here in self-contained form for the sake of completeness.

The definition of the topology is motivated by a natural 
type of combinatorial limit point in the collection of languages $\mX$,
and so we define this first.

\begin{dfn}[infinite perfect tower]\cite{kleinberg2025density}\label{def:perfecttower-intro}
We say that a sequence of languages 
$\genlang_1, \genlang_2, \genlang_3, \ldots $, each from $\coll$,
forms an {\em infinite perfect tower} with respect to a language
$\termlang \in \coll$ if 
\begin{itemize}
\item[(i)] $\genlang_j \subsetneq \termlang$ for all $j \geq 1$.
\item[(ii)] $\genlang_j$ fixes at least one string of $\termlang$, 
for all $j \geq 1$.
\item[(iii)] Every string of $\termlang$ is fixed by some 
language $\genlang_j$ in the sequence.
\end{itemize}
\end{dfn}
We call $\termlang$ the {\it terminal language} of this infinite perfect tower $(\genlang_1, \genlang_2, \dots)$. 
Intuitively, as we iterate through the languages 
$\genlang_1, \genlang_2, \genlang_3, \ldots $, every
string of $\termlang$ eventually appears and after some point stays forever;
and the sequence is non-redundant in that each language $\genlang_j$
represents the moment of permanent appearance for some string of $\termlang$.

We may assume without loss of generality that every language in the countable collection $\coll$ is a subset of $\mathbb{N}$, since any countable discrete set of strings admits an effective enumeration by natural numbers. Let $\mX$ denote this collection of languages. We define a topology $\mathcal{T}$ on $\mX$ by specifying a basis as follows. For each language $\lan \in \mX$ and each finite set $F \subset \mathbb{N}$, define
\[
U_{\lan,F} \;=\; \{\, \lan' \in \mX \mid F \subseteq \lan' \subseteq \lan \,\}.
\]
The family of all such sets $U_{\lan,F}$, ranging over $\lan \in \mX$ and finite $F \subset \mathbb{N}$, forms a basis and hence induces a topology $\mathcal{T}$ on $\mX$.

The resulting space $(\mX,\mathcal{T})$ is Hausdorff and first-countable. Its central role for us stems from the fact that its topological notion of accumulation precisely captures the combinatorial structure of infinite perfect towers.

\begin{claim}\label{claim:iptequiv_rephrased} \cite{kleinberg2025density}
A language $\lan' \in \mX$ is a limit point of $(\mX,\mathcal{T})$ if and only if $\lan'$ arises as the terminal language of an infinite perfect tower.
\end{claim}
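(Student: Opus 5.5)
We will prove both directions directly from the basis $\{U_{\lan,F}\}$, using the meaning of ``fixes'' from \cite{kleinberg2025density}. In a sequence $\genlang_1,\genlang_2,\ldots$, the language $\genlang_j$ \emph{fixes} a string $w$ if $w\in\genlang_i$ for all $i\ge j$ and $w\notin\genlang_{j-1}$, with the convention $\genlang_0=\emptyset$. So $j$ is the moment of permanent appearance of $w$. Throughout, enumerate $\termlang=\{w_1,w_2,w_3,\ldots\}$; this is possible since every language is an infinite subset of $\mathbb{N}$.

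\textbf{Tower $\Rightarrow$ limit point.} Suppose $\genlang_1,\genlang_2,\ldots$ is an infinite perfect tower with terminal language $\termlang$. Take any basic open set $U_{\lan,F}$ containing $\termlang$, so $F\subseteq\termlang\subseteq\lan$.
\begin{itemize}
\item By (iii), each of the finitely many strings of $F$ is fixed by some $\genlang_{j}$. Let $J$ be the maximum of these indices.
\item For every $j\ge J$ we then have $F\subseteq\genlang_j$.
\item By (i), $\genlang_j\subsetneq\termlang\subseteq\lan$.
\end{itemize}
Hence $\genlang_j\in U_{\lan,F}$ and $\genlang_j\neq\termlang$. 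Every neighborhood of $\termlang$ therefore meets $\mX\setminus\{\termlang\}$, so $\termlang$ is a limit point.

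\textbf{Limit point $\Rightarrow$ tower.} Let $\termlang$ be a limit point and set $F_n=\{w_1,\ldots,w_n\}$. The set $U_{\termlang,F_n}$ is an open neighborhood of $\termlang$, so it contains some $M_n\in\mX$ with $M_n\neq\termlang$. By the definition of $U_{\termlang,F_n}$, this gives $F_n\subseteq M_n\subsetneq\termlang$. Two properties follow:
\begin{itemize}
\item every $w_k$ lies in $M_n$ for all $n\ge k$;
\item every $M_n$ is a proper subset of $\termlang$.
\end{itemize}

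We extract a subsequence $\genlang_i=M_{n_i}$ that forces property (ii).
\begin{itemize}
\item Set $n_1=1$.
\item Given $n_i$, since $M_{n_i}\subsetneq\termlang$, pick some $w_k\in\termlang\setminus M_{n_i}$.
\item Choose $n_{i+1}>\max(n_i,k)$.
\end{itemize}
Then $w_k\notin\genlang_i$. Also $w_k\in\genlang_{i'}$ for every $i'\ge i+1$, because all later indices exceed $k$. So $\genlang_{i+1}$ fixes $w_k$. In addition, $\genlang_1$ fixes $w_1$, since $w_1\in M_n$ for all $n$ and $\genlang_0=\emptyset$. This verifies (ii).

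Property (i) is immediate because every $M_n\subsetneq\termlang$. For (iii), take any $w_k\in\termlang$. Since $n_i\to\infty$, $w_k\in\genlang_i$ for all sufficiently large $i$. Let $j$ be one more than the last index $i$ with $w_k\notin\genlang_i$, or $j=1$ if there is no such index. Then $\genlang_j$ fixes $w_k$.

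The only subtle point is the fixing condition (ii). An arbitrary sequence of approximants $M_n$ converging to $\termlang$ can repeat languages or add no new permanent string. The diagonal choice of $n_{i+1}$ beyond the index of a string missing from $M_{n_i}$ resolves this, and it relies on every approximant being a \emph{proper} subset of $\termlang$. That properness comes from the containment clause $\lan'\subseteq\lan$ in the basis of $\mathcal{T}$, which is why this topology, rather than the plain finite-evidence one, matches perfect towers.
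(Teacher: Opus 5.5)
Your proof is correct. This paper gives no proof of the claim and only imports it from \cite{kleinberg2025density}, so there is no in-text argument to compare against. Your argument is the natural self-contained verification:
\begin{enumerate}
\item For the forward direction, reduce an arbitrary basic neighborhood to one of the form $U_{\Lambda',F}$ and use conditions (i) and (iii).
\item For the converse, take witnesses $M_n\in U_{\Lambda',\{w_1,\dots,w_n\}}\setminus\{\Lambda'\}$. Then thin them diagonally so that each retained language permanently acquires a string missing from its predecessor, which forces (ii).
\end{enumerate}
You correctly locate the role of the containment clause in the basis. It is what makes every witness a \emph{proper} subset of $\Lambda'$, and that properness drives both (i) and the diagonal step.

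Three small points.
\begin{itemize}
\item This paper never formally defines ``fixes.'' Your reading (in all $\Lambda_i$ for $i\ge j$, and absent from $\Lambda_{j-1}$) matches its gloss ``the moment of permanent appearance.'' If one drops the condition $w\notin\Lambda_{j-1}$, your proof still works verbatim: your construction satisfies the stronger condition, and the forward direction uses only permanence.
\item In the forward direction, the case $F=\emptyset$ needs $J=1$, since the maximum over an empty set is undefined.
\item In the extraction, any $w_k\notin M_{n_i}$ automatically has $k>n_i$, because $F_{n_i}\subseteq M_{n_i}$. So choosing $n_{i+1}>k$ already suffices.
\end{itemize}
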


To study the structure of the topological space $(\mX,\mathcal{T})$, we use the \emph{Cantor--Bendixson rank} to be defined below. The \emph{derived set} $d(\mX)$ is defined as the collection of all limit points of $\mX$. Using this operator, one defines the \emph{Cantor--Bendixson hierarchy} (see, e.g., \cite{settheory}) by transfinite recursion. Set $\mX^{(0)}=\mX$, let $\mX^{(1)}=d(\mX)$, and for each integer $i\geq 1$ define
\[
\mX^{(i+1)} \;=\; d\bigl(\mX^{(i)}\bigr),
\]
the set of limit points of $\mX^{(i)}$ relative to $\mX$.

\begin{defn}\label{def:CBrank}
    The sequence $\mX^{(i)}$ is monotone decreasing. The least ordinal $\alpha$ such that $\mX^{(\alpha)}=\emptyset$ is called the \emph{Cantor--Bendixson rank} of $(\mX,\mathcal{T})$. If no such $\alpha$ exists, its Cantor-Bendixson rank is $\infty$. 
\end{defn}

In \cite{kleinberg2025density}, we used the Cantor–Bendixson rank to show that when the rank is small, the algorithm can always generate outputs in the true language with large asymptotic lower density. However, it turns out that for the study of asymptotic lower density alone, the Cantor–Bendixson rank, and more generally the topology on $(\mX, \mathcal{T})$, is not required (see \cite{kleinberg2026density}). Instead, the only structure needed is the poset structure on the set of languages, where the ordering is given by set inclusion.

\begin{remark}
The topology is chosen carefully to meet the requirements of our analysis. In \cite{kleinberg2026density}, we considered an  alternative topology, in which a basis of open sets is given by
\[
U_{\lan,F} \;=\; \{\, \lan' \in \mX \mid F \subseteq \lan' \},
\]
without imposing the condition that \(\lan' \subset \lan\). In the present work, however, we require the topology on \((\mX,\mathcal{T})\) to satisfy standard separation properties, in particular the Hausdorff condition. This necessitates the additional restriction \(\lan' \subset \lan\) in the definition of basic open sets. Absent this restriction, every language \(\lan'\) with \(\lan' \subset \lan\) would belong to the closure of \(\lan\), a degeneracy that is incompatible with our intended analysis.
\end{remark}

\omt{
\section{Measurement of breath with Banach density}
\subsection{Banach density, review of topology}
In the study of integer sequences, various notions of density are utilized to capture the structural sparseness or richness of a set $A \subseteq \mathbb{N}$. The  most prominent lower densities in the literature are the lower asymptotic density and the lower Banach density. 

Without loss of generality, we can assume the underlying set of all possible strings is $\mathbb{N}$. 

In the previous work \cite{kleinberg2025density, kleinberg2026density}, the notion of density used to study breadth is the \textit{lower asymptotic density}.  It measures the sparseness of the set anchored to the origin. To be more precise, for a subset $A \subset \mathbb{N}$,  the lower asymptotic density in $\mathbb{N}$ is defined as
\begin{equation}
    \underline{d}(O) = \liminf_{n \to \infty} \frac{|A \cap [1, n]|}{n}. \label{def:asymptotic}
\end{equation}

However, because the interval is anchored at $1$, lower asymptotic density does not account for the behavior of $A$ when the measuring window begins at later points. 

\subsubsection{Lower Banach density}

The \emph{lower Banach density} captures the absolute worst-case global sparseness by allowing an interval of length $k$ to translate arbitrarily along the natural numbers. 
\begin{defn}
  The \emph{lower Banach density} of a set $A \subseteq \mathbb{N}$ is the limit of the maximum relative density of $A$ in any interval of length $n$, as $n \to \infty$, 
\[\delta_B(A) = \liminf_{|I| \to \infty} \frac{|A \cap I|}{|I|}.\] 
Or equivalently, 
\begin{equation}
    \delta_B(A) = \liminf_{k \to \infty} \inf_{m \geq1} \frac{|A \cap [m, m+k-1]|}{k}. \label{def:banach}
\end{equation}
\end{defn}

One can similarly define the upper Banach density $\delta^B(A)$. 
We have been using lower and upper (asymptotic) density, denoted as $\underline{d}(A)$  and $\bar{d}(A)$ (lower and upper resp.)

We always have 
$$\delta_B(A) \leq \underline{d}(A) \leq \bar{d}(A) \leq \delta^B(A). $$

It is possible to have a set with a lower asymptotic density of 1 (and thus upper Banach density of 1) but a lower Banach density of 0. For example: Consider a set $A$ constructed by removing arbitrarily long intervals that appear very sparsely (e.g., removing an interval of length $k$ at position $2^{2^k}$). The sparseness ensures the asymptotic density remains 1, but the existence of arbitrarily large gaps forces the lower Banach density to 0.

Lower Banach density $d_B(A)$ provides a translation-invariant notion of uniform largeness that is unavailable for asymptotic density and imposes strictly stronger structural constraints than upper Banach density. Unlike asymptotic density, which evaluates frequency relative to a distinguished origin, lower Banach density captures a quantitative lower bound on the density of a set across all arbitrarily long intervals. The distinction between these metrics is substantial in additive combinatorics (see  e.g. \cite{Ruzsa2009}). Within the broader study of amenable groups, Banach densities have become the standard framework for formulating quantitative sumset estimates, such as Plünnecke-type inequalities \cite{Bjorklund2013PlunneckeIF}.

In this paper, we will mainly work lower Banach density (and its generalization). While previous bounds  have focused on lower asymptotic density \cite{kleinberg2025density, kleinberg2026density} as a measurement of breadth in language generation, we establish here a lower bound on the lower Banach density. This distinction is significant: whereas asymptotic density characterizes the average frequency of a set relative to the origin, a bound on lower Banach density guarantees a uniform level of ``largeness" that persists across all sufficiently long intervals.

\subsubsection{More general densities:  $f$-window density.}
One may define a hierarchy of lower densities by varying the restrictions imposed on the left endpoint of the sliding window. Between the lower asymptotic density and the lower Banach density, several intermediate notions have been introduced to capture the macroscopic tail behavior of a set \(A\); a prominent example is the \emph{lower P\'olya density}. By restricting the observation window to intervals of the form \([\theta n, n]\) for a fixed proportion \(\theta \in (0,1)\), this density is defined by
\begin{equation}
    \underline{d}_P(A)
    \;=\;
    \lim_{\theta \to 1^-}
    \liminf_{n \to \infty}
    \frac{|A \cap [\theta n, n]|}{(1-\theta)n}.
\end{equation}
These classical densities satisfy the well-known hierarchy
\[
    \delta_B(A) \;\le\; \underline{d}_P(A) \;\le\; \underline{d}(A).
\]

From a structural perspective, the distinctions among these notions are governed by the extent to which the viewing window may be translated relative to its length. To formalize and interpolate this relationship, we introduce a unifying framework, which we term the \emph{\(f\)-window lower density}.

\begin{defn}\label{def:fdensity}
Let $f: \mathbb{N} \to \mathbb{N} \cup \{\infty\}$ be a non-decreasing function. For any set $A \subseteq \mathbb{N}$, the lower $f$-window density of $A$ is defined as
\begin{equation}
    \delta_f(A) = \liminf_{k \to \infty} \min_{1 \leq m \leq f(k)} \frac{|A \cap [m, m+k-1]|}{k}.
\end{equation}
\end{defn}

The function \(f(k)\) determines the minimal allowed starting position of a window of length \(k\), thereby restricting its translational freedom. Larger values of \(f(k)\) impose stricter constraints on admissible windows, making the density \(\delta_f(A)\) less forgiving and typically smaller. The following monotonicity property is immediate from the definition.

\begin{claim}[Monotonicity]\label{claim:monotone}
   Suppose $f \geq g$ and both $f, g: \mathbb{N} \to \mathbb{N} \cup \{\infty\}$ aree a non-decreasing functions. Then $\delta_f(A) \leq \delta_g(A)$.  
\end{claim}
Under this framework, $\delta_f(A)$ reveals a continuous spectrum of intermediate densities. Because a faster-growing $f(k)$ strictly expands the search domain for sparse intervals, $\delta_f(A)$ is monotonically decreasing with respect to the growth rate of $f$.

By modulating the growth rate of $f$, we strictly recover the three classical lower densities:

\begin{itemize}
    \item {Asymptotic Density ($f(k) = 1$):} The window is permanently pinned to $m=1$, granting it zero translational freedom. 
    
    \item {Banach Density ($f(k) = \infty$):} The window possesses infinite translational freedom, allowing it to seek out the globally sparsest interval of length $k$. 
    
    \item {P\'olya Density ($f(k) = ck$):} The standard P\'olya interval $[\theta n, n]$ has length $k = (1-\theta)n$ and starting position $m = \theta n$. Substituting $c = \frac{\theta}{1-\theta}$ forces $m = ck$, granting the window \textit{linear} translational freedom. 

    \begin{lem}
Let $\delta_f(A)$ denote the $f$-window lower density of a set $A \subseteq \mathbb{N}$. For linear growth $f(k) = ck$ with $c > 0$, we have
\begin{equation}
    \lim_{c \to \infty} \delta_{ck}(A) = \underline{d}_P(A).
\end{equation}
\end{lem}
The proof of this lemma is in the Appendix. 
\end{itemize}

Since $f(k) = \infty$ is the ``largest" possible function, we have the following:
\begin{claim}
   The lower Banach density is the smallest among all lower $f$-window density. 
\end{claim}

\begin{remark}
Most generalizations of density modify either the underlying measure or the geometry of the averaging sets, for example by introducing decaying weights or by replacing intervals with more general F{\o}lner sets. In contrast, the \(f\)-window density \(\delta_f(A)\) preserves both the standard counting measure and the interval structure, and instead generalizes density by restricting the allowable translations of the window. This purely combinatorial constraint makes \(\delta_f\) well suited to settings where the precise placement of sparse regions is more informative than weighted averaging.
\end{remark}

\subsubsection{Summary of Results}
In the table below, we summarize the main results in this Section. We show a {\textbf{sharp dichotomy}} of the algorithm output density depending on (1) the Cantor-Bendixson Rank (Definition \ref{def:CBrank}) of the countable collection of languages $\mX$ under the topology defined in Subsubsection \ref{sec:topo}, and (2) the different functions $f$ in the lower $f$-window density. 

``Rank of $\mX$" below means the Cantor-Bendixson Rank. The function ``$f(x)$" below means the the function $f$ in the lower $f$-window density. 

\begin{table}[t]
    \centering
    \caption{Dichotomy of the output density depending on the Cantor-Bendixson Rank and the different $f$-window density}
    \label{tab:main_table}
    \begin{tabular}{|p{4cm}|p{5cm}|p{5cm}|}
        \hline
        \diagbox{$f$}{Rank of $\mX$} & $\mX$ infinite Cantor-Bendixson Rank & $\mX$ finite Cantor-Bendixson Rank \\
        \hline
       \makecell{$f(x) \to \infty$ \\
       (including Banach lower \\ density, P\'olya lower \\ density)} & Exists $\mX$ where the lower $f$-window density is always zero  (Theorem \ref{thm:lowerdensity0f}, Corollary \ref{thm:lowerdensity0fBanach})   & The algorithm can always output the optimal lower $f$-window density  $1/2$ (Theorem \ref{thm:finiterank}, Corollary \ref{thm:finiterankf}) \\
        \hline
        \makecell{$f(x) \leq C$ for some $C$ \\ (include asymptotic \\ lower density)} & The algorithm can always output the optimal lower $f$-window density  $1/2$ (Corollary of the results in \cite{kleinberg2026density}) & The algorithm can always output the optimal lower $f$-window density  $1/2$ (Corollary of the results in \cite{kleinberg2026density}) \\
        \hline
    \end{tabular}
\end{table}

Note that this theorem essentially says  there is a \textbf{sharp dichotomy} in both the \textbf{rank} and the \textbf{function} $f$.  

This sharp phase transition phenomenon above on $f$ indicates that the asymptotic lower density (the main subject in previous papers studying breadth of generation \cite{kleinberg2026density, kleinberg2025density}) is indeed {\textbf{deeply different}}  and {\textbf{much easier}} to achieve higher density than the other densities. And as we have commented on earlier, when $f$ is not constant (i.e, not asymptotic lower density), we have to utilize the topology a lot more carefully; whereas the asymptotic lower  density essentially only concerns the poset under inclusion.

\begin{enumerate}
    \item When $f(n) \to \infty$, Theorem \ref{thm:lowerdensity0f} says that no positive $f$-window lower density can ever be guaranteed, whereas 
    \item In the complementary case where $f(n)$ is uniformly upper bounded, such as asymptotic density for which $f(n)=1$, it was shown in \cite{kleinberg2026density} that the optimal asymptotic density $1/2$ can always be achieved. 
Note that assuming a universal bound $f(n)\leq c$ is equivalent to the case $f(n)=1$; thus, uniformly upper-bounded $f $ corresponds precisely to the asymptotic (lower) density setting.

 \item When $f$ is not uniformly bounded from above, the pessimistic construction will still remain valid even if the adversary is allowed to go faster than the algorithm. Say the algorithm is allowed to output up to $C$ strings, while the adversary inputs one string, the construction still shows that the lower $f$-window density would be $0$. 

    On the contrary, when $f$ is uniformly bounded from above, the lower asymptotic density can always achieve the largest possible, which is $C/(1+C)$, which as $C \to \infty$, this can be arbitrarily close to $1$. 
\end{enumerate}

    We also establish a dichotomy between infinite rank and finite rank. 
The topological rank of $\mX$ is the deep reason guaranteeing the lower Banach  density being optimal. 
\begin{enumerate}
   \item 
    
    When $f$ is not a constant function (i.e, not asymptotic lower density), 
    the pessimistic constructions (where the  $f$-window lower density has to be 0) requires the Cantor-Bendixson rank of the countable collection of languages to infinite. 

   \item  However, when the Cantor-Bendixson rank is finite, we show that the optimal window $f$-lower density, which is $1/2$, can always be achieved, regardless of the function $f$. 
\end{enumerate}

\subsubsection{The topology}\label{sec:topo}
This paper shows that, for stringent notions of lower density such as the lower Banach density and the generalized densities introduced later, an analysis of the topological structure of $(\mX,\mT)$, in particular its Cantor--Bendixson rank, is essential. In this regime, the breadth of language generation is tightly coupled to topological complexity, in contrast to asymptotic lower density, which depends primarily on the inclusion poset and is largely insensitive to finer topological structure.

We next recall the topological framework used throughout this paper, following \cite{kleinberg2025density,kleinberg2026density} but presented here in a self-contained form.
We  first give a combinatorial meaning of the limit points in the set of languages $\mX$.

\begin{dfn}[infinite perfect tower]\cite{kleinberg2025density}\label{def:perfecttower-intro}
We say that a sequence of languages 
$\genlang_1, \genlang_2, \genlang_3, \ldots $, each from $\coll$,
forms an {\em infinite perfect tower} with respect to a language
$\termlang \in \coll$ if 
\begin{itemize}
\item[(i)] $\genlang_j \subsetneq \termlang$ for all $j \geq 1$.
\item[(ii)] $\genlang_j$ fixes at least one string of $\termlang$, 
for all $j \geq 1$.
\item[(iii)] Every string of $\termlang$ is fixed by some 
language $\genlang_j$ in the sequence.
\end{itemize}
\end{dfn}
We call $\termlang$ the {\it terminal language} of this infinite perfect tower $(\genlang_1, \genlang_2, \dots)$. 
Intuitively, as we iterate through the languages 
$\genlang_1, \genlang_2, \genlang_3, \ldots $, every
string of $\termlang$ eventually appears and after some point stays forever;
and the sequence is non-redundant in that each language $\genlang_j$
represents the moment of permanent appearance for some string of $\termlang$.

We may assume without loss of generality that every language in the countable collection $\coll$ is a subset of $\mathbb{N}$, since any countable discrete set of strings admits an effective enumeration by natural numbers. Let $\mX$ denote this collection of languages. We define a topology $\mathcal{T}$ on $\mX$ by specifying a basis as follows. For each language $\lan \in \mX$ and each finite set $F \subset \mathbb{N}$, define
\[
U_{\lan,F} \;=\; \{\, \lan' \in \mX \mid F \subseteq \lan' \subseteq \lan \,\}.
\]
The family of all such sets $U_{\lan,F}$, ranging over $\lan \in \mX$ and finite $F \subset \mathbb{N}$, forms a basis and hence induces a topology $\mathcal{T}$ on $\mX$.

The resulting space $(\mX,\mathcal{T})$ is Hausdorff and first-countable. Its central role for us stems from the fact that its topological notion of accumulation precisely captures the combinatorial structure of infinite perfect towers.

\begin{claim}\label{claim:iptequiv_rephrased} \cite{kleinberg2025density}
A language $\lan' \in \mX$ is a limit point of $(\mX,\mathcal{T})$ if and only if $\lan'$ arises as the terminal language of an infinite perfect tower.
\end{claim}

To study the structure of the topological space $(\mX,\mathcal{T})$, we use the \emph{Cantor--Bendixson rank} to be defined below. The \emph{derived set} $d(\mX)$ is defined as the collection of all limit points of $\mX$. Using this operator, one defines the \emph{Cantor--Bendixson hierarchy} (see, e.g., \cite{settheory}) by transfinite recursion. Set $\mX^{(0)}=\mX$, let $\mX^{(1)}=d(\mX)$, and for each integer $i\geq 1$ define
\[
\mX^{(i+1)} \;=\; d\bigl(\mX^{(i)}\bigr),
\]
the set of limit points of $\mX^{(i)}$ relative to $\mX$.

\begin{defn}\label{def:CBrank}
    The sequence $\mX^{(i)}$ is monotone decreasing. The least ordinal $\alpha$ such that $\mX^{(\alpha)}=\emptyset$ is called the \emph{Cantor--Bendixson rank} of $(\mX,\mathcal{T})$. If no such $\alpha$ exists, its Cantor-Bendixson rank is $\infty$. 
\end{defn}

In \cite{kleinberg2025density}, the Cantor–Bendixson rank was used to show that when the rank is small, the algorithm can always generate outputs in the true language with large asymptotic lower density. However, it turns out that for the study of asymptotic lower density alone, the Cantor–Bendixson rank, and more generally the topology on $(\mX, \mathcal{T})$, is not required (see \cite{kleinberg2026density}). Instead, the only structure needed is the poset structure on the set of languages, where the ordering is given by set inclusion.

\begin{remark}
The topology is chosen carefully to meet the requirements of our analysis. In \cite{kleinberg2026density}, an alternative topology is considered, in which a basis of open sets is given by
\[
U_{\lan,F} \;=\; \{\, \lan' \in \mX \mid F \subseteq \lan' \},
\]
without imposing the condition that \(\lan' \subset \lan\). In the present work, however, we require the topology on \((\mX,\mathcal{T})\) to satisfy standard separation properties, in particular the Hausdorff condition. This necessitates the additional restriction \(\lan' \subset \lan\) in the definition of basic open sets. Absent this restriction, every language \(\lan'\) with \(\lan' \subset \lan\) would belong to the closure of \(\lan\), a degeneracy that is incompatible with our intended analysis.
\end{remark}

}

\section{Density results in one dimension}

In this section, we cover the results in one dimension, i.e., the embedding space has dimension one.  First, in Section \ref{subsec:inf-cb-rank}, we 
give the construction of a language collection
with infinite Cantor-Bendixson rank in which 
no algorithm can achieve a positive lower Banach 
density for its generated set.
The construction will necessarily use a sequence 
of infinite perfect towers, and
the requirement that the algorithm 
must achieve generation in the limit,
to iteratively build long 
intervals that the algorithm must completely miss.

Next, in Section \ref{subsec:finite-cb-rank}, 
we give the algorithm that achieves
positive lower Banach density when
the language collection has
finite Cantor-Bendixson rank.

A natural baseline for this problem is the index-based algorithm $\Acc$ we introduced in \cite{kleinberg2025density}, which is guaranteed to be accurate infinitely often. More explicitly, 
at every time $t$,
$\Acc$ maintains a current guess at an index $i_t$
in $\mX$ for the true language,
and in infinitely many time steps this
index is {\em accurate} in the sense that $L_{i_t} = K$.
This was the starting point for the
positive results on lower asymptotic density in
\cite{kleinberg2025density,kleinberg2026density}
as well.
But it is difficult to use this algorithm
directly to get a lower bound on the lower Banach density, since the key benefit of
an algorithm that is accurate infinitely often
is that it can ``pull back'' in a way that
ensures dense coverage of the early elements
in the language.
This works well for the asymptotic lower density,
where early elements are prioritized by definition.
But how does it help when we need to achieve
positive lower Banach density, requiring 
that we have high coverage in {\em all} long intervals?

In some sense, it can't help directly on its own,
as know from the construction in
Section \ref{subsec:inf-cb-rank} of a set
with infinite Cantor-Bendixson rank where
positive lower Banach density in unachievable.
So instead, we must use the finite Cantor-Bendixson
rank in an essential way, and we do this 
by extracting a hierarchical structure on the
languages where the finite rank 
helps us avoid the kind of ``infinite regress''
that leads to zero density.

After defining the algorithm and proving that
it achieves generation in the limit in 
Sections \ref{subsubsec:define-alg} and
\ref{subsubsec:validity}, we turn to the
analysis of the lower Banach density in 
Section \ref{subsubsec:alg-properties}.
After a rather involved setup, the analysis can be abstracted as an interaction between the adversary and the algorithm. The adversary implicitly attempts to enumerate most elements of a long interval, while the algorithm seeks to ensure that every interval is populated sufficiently densely.
It turns out to be useful to abstract this 
interaction in the form of a combinatorial 
game played between two players
(representing the adversary and algorithm),
and we define this game in 
Section \ref{subsubsec:comb-game} in a way
that is independent of the rest of the
language-generation setting.

\subsection{Infinite rank: constructions with zero lower Banach density }
\label{subsec:inf-cb-rank}

We recall our result from \cite{kleinberg2026density}, which shows that for asymptotic lower density, an algorithm can always generate outputs in the true language with the optimal density \(1/2\). In this section, we demonstrate a sharp contrast for the lower Banach density: there exists a countable collection of languages \(\mX\) for which no algorithm can achieve a positive lower Banach density. Equivalently, for any algorithm, there are arbitrarily large intervals of the true language \(K\) that the algorithm intersects only sparsely.

The lower bound is diagonal in spirit, but the Banach-density objective
imposes a stronger requirement. It
is not enough to force the generator to miss infinitely many individual
strings. To force lower Banach
density zero, the adversary must create arbitrarily long intervals
containing essentially no algorithmic outputs at all, while still ensuring that the target remains
a member of a fixed countable family. A fully adaptive decision tree would naturally create uncountably many
possible limit languages; the construction below instead encodes the
required long gaps inside a fixed countable infinite-rank family.

\begin{thm}\label{thm:lowerdensity0f}
Let $f: \mathbb{N} \to \mathbb{N} \cup \{\infty\}$ be a non-decreasing function satisfies that $\limsup_{n \to \infty} f(n) = \infty$. Then
    there exists a countable collection of languages $\mathcal{C}$ such that any algorithm that generates in the limit will only have $f$-window  lower density $0$. 
\end{thm}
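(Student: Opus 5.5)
The plan is to build a single countable family $\mathcal{C}$ of subsets of $\mathbb{N}$ together with an adaptive adversary, organized in \emph{mega steps} $s=1,2,\dots$. In mega step $s$ the adversary forces a block of $\ell_s$ consecutive elements of the final target $K$ (consecutive in $K$'s enumeration order, hence an interval in index coordinates) into which the algorithm never outputs anything. Here $\ell_s\to\infty$, and the block's starting index $m_s$ satisfies $m_s\le f(\ell_s)$. Since $\limsup f=\infty$ and $f$ is non-decreasing, $f(k)\to\infty$, so we may first choose $\ell_s$ and then place the block early enough. In fact we reverse the order: we fix in advance a starting region and choose $\ell_s$ large enough that $f(\ell_s)$ exceeds everything already committed. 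Each block gives a window $[m_s,m_s+\ell_s-1]$ of density $0$ (or $o(1)$) in $O$ relative to $K$, so $\delta_f(O,K)=0$.

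\textbf{Construction of the family.} I would encode all possible adversary histories inside a countable tree of finite choices. At each node, the current committed finite set $F$ is extended by occupying a new interval one point at a time. At level $q$ of a mega step, the adversary holds an occupied block $J$ of length $q$ and tries to add the next point $x$ to $K$ and reveal it. The candidate languages are: (a) a language $\Lambda'$ in which the block grows indefinitely (the ``limit'' language), and (b) a perfect tower $\Lambda_1\subsetneq\Lambda_2\subsetneq\cdots\subsetneq\Lambda'$, where $\Lambda_j$ omits the points far ahead of $J$ beyond a cutoff $j$. Because the family is built from finite data at each node (finite prefixes plus a countable branching index), it is countable. Arranging towers of towers gives infinite Cantor--Bendixson rank, consistent with Theorem~\ref{thm:intro-finite-cb-rank}, which shows this is necessary.

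\textbf{Adversary strategy and the key diagonal step.} Within a mega step, at level $q$ the adversary enumerates strings consistent with the current candidate and waits. If the algorithm outputs a string in the region just beyond $J$ that the adversary wants to keep empty (``blocking''), the adversary switches the target to a tower language $\Lambda_j$ that excludes that output, making the output invalid, and proceeds. If blocking happens infinitely often, the switched-to languages form an infinite perfect tower whose terminal language $\Lambda'$ is consistent with everything revealed. By Claim~\ref{claim:iptequiv_rephrased}, $\Lambda'$ is a limit point, and we take $\Lambda'$ as the true target with an enumeration that includes all revealed strings. The algorithm then either made infinitely many invalid outputs, contradicting generation in the limit, or must eventually stop blocking. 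Hence at some finite stage the algorithm stops outputting near $J$, the adversary adds $x$ to $J$, and level $q+1$ begins. After $\ell_s$ successful levels the block is complete with no outputs inside it (outputs there would have counted as blocks), and the adversary finishes enumerating the earlier part of $K$ before starting mega step $s+1$ farther out. Every string of the final $K$ is eventually enumerated, by dovetailing a fixed enumeration of the committed part.

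\textbf{Main obstacle.} The delicate part is making the final target a single member of the fixed countable family $\mathcal{C}$ while the adversary adapts to an arbitrary algorithm. Each mega step is finite, but there are infinitely many of them, and the overall limit $K$ must itself be one of the languages. I would handle this by making each language determined by a finite or eventually periodic code: finitely many completed mega-step blocks followed by a fixed ``default'' continuation, together with one top-level limiting language for each finite prefix code. The infinite sequence of mega steps would then itself be realized as a perfect tower converging to a limit language that is in $\mathcal{C}$ because it is specified by a countable recursive scheme. If this fails, I would instead restrict the adversary so that the target is a language of the form ``the default language minus finitely many gaps per scale,'' where gaps are indexed by integer sequences that become eventually default. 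I would show that a single well-chosen mega step already forces a sparse window of each given length, and obtain density $0$ by replaying the game at the scale where the algorithm first fails. The second point to verify is that the window positions stay below $f(\ell_s)$; this is ensured by choosing $\ell_s$ after $m_s$ is fixed.
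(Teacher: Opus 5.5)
Your overall architecture matches the paper's: mega steps, levels, switching among lower-rank pretended languages, and invoking generation in the limit on a tower's terminal language. But the key diagonal step is set up backwards and yields no contradiction.

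\textbf{The diagonal step.} You call an output ``blocking'' if it lands in the region beyond $J$ that you want to fill, and you take as terminal language $\Lambda'$ a language in which the block grows indefinitely. Then every blocking output lies in $\Lambda'$. Excluding it from the current $\Lambda_j$ only makes it invalid for an abandoned hypothesis; since a perfect tower fixes every string of its terminal language, it is valid for $\Lambda'$. An algorithm that blocks infinitely often is therefore perfectly consistent with generating $\Lambda'$ in the limit.

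For the forcing to work, the interfering outputs must lie \emph{outside} the terminal language. The paper arranges exactly this. At level $q$ the terminal language is $L_{l,q-1,0,\emptyset}=[l]\cup(P+[0,q-1])$: every candidate block is extended by exactly one more position. It is approached by pretended languages in which only finitely many blocks are extended. The ``bad'' outputs are those in $P+[q,\infty)$, which are invalid in the limit.

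Outputs at the next position itself are valid in the limit and cannot be forbidden. The paper handles them by letting the adversary move first in each round and racing over many candidate blocks, so at least one candidate survives each level. Your proposal skips this when it asserts that such outputs ``would have counted as blocks''.

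\textbf{Countability and the final target.} This is the gap you flag yourself, and your proposed fixes do not close it. You keep $J$ consecutive in $K$-coordinates by deleting the algorithm's outputs from $K$. That makes the final target encode infinitely many algorithm-dependent choices, one batch per mega step. There are uncountably many possible limits of such sequences. Neither ``finite or eventually periodic codes'' nor ``a countable recursive scheme'' can place the limit of infinitely many adaptive mega steps into a family fixed before the algorithm.

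The paper's resolution is that the final target does not record the history at all:
\begin{itemize}
\item Every language carries a prefix $[l]$, and $l$ is pushed past everything used at the start of each mega step.
\item Hence the pretended languages converge to $\mathbb{N}\in\mathcal{C}$.
\item The sparse windows are intervals of $\mathbb{N}=K$ entirely occupied by the adversary.
\item Adaptivity only ever affects finitely many integer parameters of transient hypotheses.
\end{itemize}

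\textbf{Window placement.} An interrupted attempt must be restarted at a fresh location, so the start of the successful block is not known in advance. Choosing $\ell_s$ after $m_s$ is fixed is therefore unavailable. The paper instead uses the niceness of $f$ to pre-select all candidate intervals $[a_i^{(s)},a_i^{(s)}+k_i^{(s)}]$ with $a_i^{(s)}\le f(k_i^{(s)})$. Whichever one ends up filled is then an admissible window.
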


\begin{cor}\label{thm:lowerdensity0fBanach}
There exists a countable collection of languages $\mathcal{C}$ such that any algorithm that generates in the limit only has lower Banach density $0$.
\end{cor}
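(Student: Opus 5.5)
The plan is to derive this directly from Theorem~\ref{thm:lowerdensity0f}, specialized to the constant function $f \equiv \infty$. That function is non-decreasing and satisfies $\limsup_{n\to\infty} f(n) = \infty$, so the hypotheses of Theorem~\ref{thm:lowerdensity0f} hold.

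By Definition~\ref{def:fdensity}, taking $f(k)=\infty$ for every $k$ turns the inner minimum over $1 \le m \le f(k)$ into an infimum over all $m \ge 1$. The resulting expression is exactly the formula \eqref{def:banach} for the lower Banach density. We apply this with the target language playing the role of the ambient set: list $K=\{k_1<k_2<\cdots\}$ and measure the output set through the indices of $K$, as the paper does throughout. Under this identification, $\delta_f(O,K)=\delta_B(O,K)$.

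The collection $\mathcal{C}$ supplied by Theorem~\ref{thm:lowerdensity0f} for this $f$ then has the property that every algorithm that generates in the limit has $\delta_B(O,K)=0$, which is the claim.

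As a sanity check, Claim~\ref{claim:banachsmall} gives the same conclusion from any unbounded $f$: $\delta_B \le \delta_f = 0$. The only step needing care is confirming that Theorem~\ref{thm:lowerdensity0f} permits $f$ to take the value $\infty$. Its statement explicitly allows codomain $\mathbb{N}\cup\{\infty\}$, so this is fine. The substantive work, the infinite-rank perfect-tower construction, lives entirely in the proof of Theorem~\ref{thm:lowerdensity0f}.
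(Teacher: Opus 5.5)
Your proposal is correct and matches the paper: the corollary is stated without separate proof, as the $f\equiv\infty$ case of Theorem~\ref{thm:lowerdensity0f} (equivalently, via Claim~\ref{claim:banachsmall}, $\delta_B\le\delta_f=0$ for any unbounded $f$). Of your two routes, prefer the monotonicity one with a finite-valued unbounded $f$ such as $f(k)=k$. The proof of Theorem~\ref{thm:lowerdensity0f} uses $f^{-1}$ and the strict inequality $N_2>f(k^{(1)}_{m_1})$, which only make sense for finite-valued $f$ (for $f\equiv\infty$ one would instead have to pick the interval lengths increasing directly).
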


We say that an algorithm has \emph{speed} $C$ if, at each time $t$, whenever the adversary reveals one string, the algorithm may output up to $C$ strings. 
A similar construction yields the following result, showing that increased speed does not provide any advantage.

\begin{cor}\label{cor:lowerdensity0f}
    There exists a countable collection of languages $\mathcal{C}$ such that any algorithm that generates in the limit with speed $C$ will only have Banach   lower density $0$. 
\end{cor}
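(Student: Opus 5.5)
The plan is to rerun the construction behind Theorem~\ref{thm:lowerdensity0f} in the Banach case $f\equiv\infty$, i.e.\ Corollary~\ref{thm:lowerdensity0fBanach}. The collection $\mathcal{C}$ stays the same: a countable infinite-rank family built from nested infinite perfect towers. Only the quantitative bookkeeping of the adversary changes, so that it accounts for $C$ outputs per step instead of one. The adversary works in mega steps $s=1,2,\dots$. In mega step $s$ it aims to produce an interval $J_s$ of the true language $K$ with $|K\cap J_s|\ge s$ and fewer than $|K\cap J_s|/s$ algorithmic outputs in $J_s$, reported at the end of the run. Since $s\to\infty$, this forces $\delta_B(O,K)=0$ directly.

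Inside a mega step the adversary proceeds level by level, as in Theorem~\ref{thm:lowerdensity0f}. At level $q$ it has an occupied block that it has enumerated itself. It tries to extend the block by one more point of the current approximating language $\genlang$, and only does so if no output lies in the candidate extension region. The change from speed $1$ is this. When the algorithm emits up to $C$ strings in a step, every one of them that lands ahead of the block counts as a potential block. The adversary reacts to the farthest blocking output. It moves to a higher-rank approximating language in the tower at which that output, and all earlier far-ahead outputs, are not in $\genlang$, and from then on it enumerates only strings of $\genlang$. A single adversarial step still triggers at most one language switch: the towers fix every string of the terminal language, so some tower member excludes any given finite set of far-ahead strings. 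The $C$ outputs therefore cost only a larger jump in the tower index, not extra levels.

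Validity gives the contradiction as before. Suppose blocking succeeded forever within a mega step. Then the approximating languages converge in $(\mathcal{C},\mathcal{T})$ to a terminal language $\termlang$, by Claim~\ref{claim:iptequiv_rephrased}, and the adversary's enumeration becomes an enumeration of $\termlang$. Infinitely many of the blocking outputs lie outside $\termlang$; this is exactly the property the tower was built to have. That contradicts generation in the limit, and the speed of the generator plays no role here. So each level eventually extends, and the block grows to length $\ell_s$. To keep the outputs inside $J_s$ sparse, I would choose $\ell_s$ and the gaps in the construction so that the outputs which land inside the block before it is sealed number at most $C$ times the number of tower switches. That count is bounded by the levels used, about $\ell_s$ per level but fixed in advance. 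I would then pad $J_s$ with adversarially enumerated points, taking $\ell_s \gg C\cdot s\cdot(\#\text{switches})$, so the output fraction is below $1/s$.

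The main obstacle is this counting step. An output strictly inside the already occupied block is harmless only if it is a repeat, because the adversary has already revealed those strings. Outputs in the not-yet-occupied part of $J_s$ must be charged to switches. With speed $C$ each step can place $C$ such outputs, so the charging has to show that every step placing an output in the future part of $J_s$ forces a switch that shifts $J_s$ past that output. The shift is possible because the tower languages differ from $\termlang$ in arbitrarily long stretches. I would first try to make the construction's gaps grow like $C^{q}$ at level $q$, so that the block is always re-placed beyond all outputs produced so far.
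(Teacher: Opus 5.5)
\textbf{There is a genuine gap, and it sits exactly where the speed $C$ has to be paid for.}

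\emph{Immediate blocks.} You grow a single block, count every output ahead of it as a blocking move, and assert that infinitely many blocking outputs lie outside $\termlang$. That fails for the cheapest block. Right after you reveal the current front $x+q-2$, the algorithm simply outputs the next point $x+q-1$.
\begin{itemize}
\item In the family you keep (the paper's), the tower used at level $q$ converges to $L_{l,q-1,0,\emptyset}\supseteq P+[0,q-1]$, which contains $x+q-1$.
\item So such outputs are valid in the limit, and the validity argument gives no contradiction. A single block can be stopped forever at no cost to the algorithm.
\item Your switching rule also points the wrong way. Moving further along a perfect tower only adds strings of $\termlang$, so it cannot make an output invalid. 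Strings outside $\termlang$ are excluded by every member already.
\end{itemize}

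\emph{How the paper handles this.} It separates the two kinds of blocking.
\begin{itemize}
\item Only outputs in $P+[q,\infty)$, i.e.\ at least two steps ahead and outside $L_{l,q-1,0,\emptyset}$, are ``bad''. Those are eliminated by the tower/limit argument.
\item Immediate blocks are beaten by parallelism. The adversary works on all left endpoints $a_1^{(s)},\dots,a_{m_s}^{(s)}$ of the intervals in $\mathcal I_s$ at once, and keeps a set $S_q$ of edgy survivors.
\item Because the adversary moves first in each level, it retains $|S_{q+1}|\ge 1+(|S_q|-1)/2$ at speed $1$.
\item At speed $C$, absent bad outputs, the algorithm can kill at most $C$ candidates per adversarial move. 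This gives $|S_{q+1}|\ge 1+(|S_q|-1)/(C+1)\ge 1$.
\end{itemize}
Replacing $1/2$ by $1/(C+1)$ in this survivor count is essentially the whole content of the corollary. Your plan has no analogue of it.

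\emph{The counting step.} It cannot be carried out as stated. The number of tower switches in a mega step is finite, but the algorithm chooses it adaptively, so it is not fixed in advance. You therefore cannot pick $\ell_s\gg C\,s\,(\#\text{switches})$ beforehand. Padding does not rescue this either: padded points count only if the adversary reveals them before the algorithm outputs them, which is the original problem.

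The paper needs no charging at all. After each bad event, the adversary abandons the current attempt and restarts beyond every string used so far, with a new threshold $N_s$ and a new family $\mathcal I_s$. It stays consistent with what it has already revealed by moving to a larger index in the same tower. Outputs from earlier attempts therefore never touch the final interval. That interval ends up entirely occupied by the adversary with zero algorithmic outputs, which gives lower Banach density $0$ directly.
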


Our construction relies on the following property of a function $f$. For clarity, we state this property explicitly before presenting the construction.

Let $f: \mathbb{N} \to \mathbb{N} \cup \{\infty\}$ be an increasing function. We say $f$ is {\it nice} if the following condition holds. For any $T > 0$, there exists an $m \geq 2$  and sequences $(k_i)_{i=1}^m  \in \mathbb{N}^m$ and $(a_i)_{i=1}^m \in \mathbb{N}^m$ satisfying:
\begin{enumerate}
    \item $k_1 \leq k_2 \leq \dots \leq k_m$
    \item $T < a_1 < a_2 < \dots < a_m$
    \item $a_{i+1} \geq a_i + k_i + 1$
    \item $a_i \leq f(k_i)$
\end{enumerate}

The following claim is a simple fact. 
\begin{claim}\label{claim:nicef}
An increasing function $f: \mathbb{N} \to \mathbb{N}$ is nice if and only if it is not universally bounded above.
\end{claim}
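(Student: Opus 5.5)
The claim has two directions, and I would prove them separately. Throughout, I would treat $f$ as increasing and $\mathbb{N}$-valued, as in the claim's hypothesis.

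\emph{Nice implies unbounded.} I would argue the contrapositive. Suppose $f(k)\le M$ for all $k$, and take $T=M$ in the definition of niceness. Condition 2 then requires $a_1>T=M$. Condition 4 requires $a_1\le f(k_1)\le M$. These are incompatible, so no witnessing sequences exist (already $m\ge 1$ is impossible), and $f$ is not nice.

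\emph{Unbounded implies nice.} Suppose $f$ is not bounded above. Given $T>0$, I would construct the sequences greedily with $m=2$; any finite $m$ could be produced the same way by iterating.

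For the first term, choose $k_1$ with $f(k_1)\ge T+1$, which exists by unboundedness, and set $a_1=T+1$. Then $T<a_1\le f(k_1)$.

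For the second term, set $a_2=a_1+k_1+1$, which gives conditions 2 and 3. Then choose $k_2\ge k_1$ large enough that $f(k_2)\ge a_2$. Such a $k_2$ exists because $f$ is unbounded and nondecreasing: take any $k$ with $f(k)\ge a_2$ and let $k_2=\max(k,k_1)$, so monotonicity gives $f(k_2)\ge f(k)\ge a_2$. This yields conditions 1 and 4.

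In general, at step $i+1$ one sets $a_{i+1}=a_i+k_i+1$ and picks $k_{i+1}\ge k_i$ with $f(k_{i+1})\ge a_{i+1}$. All four conditions hold at every step.

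\emph{Main obstacle.} The only point needing care is producing $k_{i+1}\ge k_i$ while keeping condition 1. Monotonicity of $f$ handles this, since raising the index never decreases $f$. Otherwise the argument is routine.
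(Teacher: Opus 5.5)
Your proof is correct and follows essentially the same route as the paper: the contrapositive with $T=M$ for one direction, and a greedy $m=2$ construction with $a_2=a_1+k_1+1$ for the other. The only differences are cosmetic. The paper takes $a_1=f(k_1)$ where you take $a_1=T+1$; yours should be $\lfloor T\rfloor+1$ if $T$ is not an integer, so that $a_1\in\mathbb{N}$. You also spell out the monotonicity step used to choose $k_2\ge k_1$.
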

The proof of the Claim is in Appendix.

\begin{proof}[Proof of Theorem \ref{thm:lowerdensity0f}]
By Claim \ref{claim:nicef}, $f(x)$ is nice. 

We first define a sequence $N_1, N_2, \dots$ with $N_i \to \infty$, to be specified recursively below.

Let $N_1 = 1$. By Claim~\ref{claim:nicef}, there exist integers $m_1$ and sequences $\{k_i^{(1)}\}_{i=1}^{m_1}$ and $\{a_i^{(1)}\}_{i=1}^{m_1}$ satisfying the required conditions, since $f$ is nice.
Define the intervals
\[
[a_1^{(1)}, a_1^{(1)} + k_1^{(1)}],\; [a_2^{(1)}, a_2^{(1)} + k_2^{(1)}],\; \dots,\; [a_{m_1}^{(1)}, a_{m_1}^{(1)} + k_{m_1}^{(1)}].
\]
These intervals are pairwise disjoint by item~3. Let $\mathcal{I}_1$ denote this collection of intervals.

Let $N_2$ be sufficiently large such that it exceeds the largest endpoint of the intervals in $\mathcal{I}_1$, and moreover satisfies $f^{-1}(N_2 - 1) > k_{m_1}^{(1)}$. 
Applying Claim~\ref{claim:nicef} again, there exist integers $m_2$ and sequences $\{k_i^{(2)}\}_{i=1}^{m_2}$ and $\{a_i^{(2)}\}_{i=1}^{m_2}$ satisfying the required conditions, since $f$ is nice.
Define the collection $\mathcal{I}_2$ to consist of the intervals
\[
[a_1^{(2)}, a_1^{(2)} + k_1^{(2)}],\; [a_2^{(2)}, a_2^{(2)} + k_2^{(2)}],\; \dots,\; [a_{m_2}^{(2)}, a_{m_2}^{(2)} + k_{m_2}^{(2)}].
\]
These intervals are pairwise disjoint by item~3.
Furthermore, we have $k_1^{(2)} > k_{m_1}^{(1)}$. Indeed, since
$f(k_1^{(2)}) \geq a_1^{(2)} \geq N_2 > f\big(k_{m_1}^{(1)}\big),$
and $f$ is non-decreasing, the claim follows.

Repeating this process, we construct an increasing sequence $N_1 < N_2 < \cdots$ and a collection of pairwise disjoint intervals $\mathcal{I}_1 \cup \mathcal{I}_2 \cup \cdots$ such that, for each $s \ge 1$:

\begin{enumerate}
    \item $\mathcal{I}_s$ consists of $m_s$ disjoint intervals
    \[
    [a_1^{(s)}, a_1^{(s)} + k_1^{(s)}],\; [a_2^{(s)}, a_2^{(s)} + k_2^{(s)}],\; \dots,\; [a_{m_s}^{(s)}, a_{m_s}^{(s)} + k_{m_s}^{(s)}];
    \]
    \item the interval lengths satisfy
$    k_1^{(s)} \le k_2^{(s)} \le \cdots \le k_{m_s}^{(s)};$  
    \item the endpoints satisfy   
$    a_i^{(s)} \le f\big(k_i^{(s)}\big) < N_{s+1} \le a_1^{(s+1)} \quad \text{for all } i \in [m_s];$
    
    \item every interval in $\mathcal{I}_s$ is strictly shorter than every interval in $\mathcal{I}_{s+1}$.
\end{enumerate}

Let $P$ denote the set of integers that appear as left endpoints of intervals in $\mathcal{I}_1 \cup \mathcal{I}_2 \cup \cdots$, ordered in increasing order. For each nonnegative integer $k$, let $P[k]$ denote the set consisting of the $k$ smallest elements of $P$, with the convention that $P[0] = \emptyset$.

We now define a countable collection of languages as follows:

For each tuple $(\ell, a, d, (h_1, h_2, \dots, h_d))$, where $\ell, a, d$ range over all nonnegative integers and $(h_1, \dots, h_d)$ is any non-increasing $d$-tuple of nonnegative integers (i.e., $h_1 \ge h_2 \ge \cdots \ge h_d$), we define
\begin{align*}L_{l, a, d, (h_1, h_2, \dots, h_d)} & = [l]  \cup (P   + [0,a])  \\
   &  \cup \left(P[h_1] + [1,a+1]\right) \cup (P[h_2] + [1,a+2]) \cup \dots \cup (P[h_d] + [0,a+d]).\end{align*} 
Here given sets $P$ and $Y$, we define their sumset by
\[
P + Y \;=\; \{\, s + y \mid s \in P,\; y \in Y \}.
\]
In particular, $\emptyset + Y = \emptyset$. We also include $\mathbb{N}$ as a language in the collection.

The key property we will repeatedly use is the structure of the Cantor--Bendixson derivatives of this countable family of languages:
\begin{enumerate}
\item If $a = 0$, then this language is an isolated point. 
\item If $a \ge 1$, then $L_{\ell, a, d, (h_1, h_2, \dots, h_d)}$ is a limit point of the sequence of languages $L_{\ell, a-1, d+1, (k, h_1, \dots, h_d)}$ as $k \to \infty$.

\item For any sequence of languages whose $\ell$-indices tend to infinity, the sequence has $\mathbb{N}$ as a limit point.
\end{enumerate}
Consequently, this collection of languages has infinite Cantor-Bendixson rank, with the ranking informally governed by the parameter $a$. Indeed, after removing the languages with $a=0$, those with $a=1$ become isolated points, and the process continues similarly for larger values of $a$.

We define a \emph{mega step} as follows. The adversary aims to generate $k_i^{(s)}$ consecutive integers in $\mathbb{N}$ that have not been occupied by the algorithm. Concretely, the goal is for the adversary to occupy all $k_i^{(s)}$ elements in an interval $[a_i^{(s)}, a_i^{(s)} + k_i^{(s)}] \in \mathcal{I}_s$, for some $i$ and $s$.

A mega step consists of multiple rounds across different levels, defined below. Roughly speaking, the key elements in level $i$ are the elements in $P + \{i-1\}$ (again this denotes sum set).

At any time $t$, we say that an element $a_i^{(s)} + q \in \mathcal{I}_s$ is \emph{edgy} at time $t$ if none of the elements
\[
a_i^{(s)} + q + 1, \;\dots,\; a_i^{(s)} + k_i^{(s)}
\]
have been used by either the adversary or the algorithm up to time $t$.

In an ideal scenario, suppose there exist time stamps
$t_1 < t_2 < \cdots < t_{k_i^{(s)}}$ such that, at each time $t_j$,
the adversary is able to input $a_i^{(s)} + j - 1$, and this element
$a_i^{(s)} + j - 1$ is \emph{edgy} at time $t_j$.
Then, by time $t_{k_i^{(s)}}$, all elements in the interval
$\mathcal{I}_s$ will be occupied exclusively by the adversary.
However, ensuring that every element in a given interval is edgy
is highly nontrivial. In fact, for any fixed interval, we cannot
guarantee this property, since the behavior of the algorithm is
beyond our control. The key idea is instead to exploit the
topological structure of the collection of languages, together with
the requirement that the algorithm must generate in the limit, in
order to guarantee the existence of such a favorable interval.

We define it recursively. 
Suppose we are about to start a new Mega Round, and all the previously used strings (either by the adversary or by the algorithm) is at most $l'$. From now on, if the adversary pretends that the true language is of the form $L_{l, a, d, (h_1, h_2, \dots, h_d)}$ for some $l \geq l'$, then any of these languages will still be consistent with the adversary input so far. This technique is crucial because, across successive mega-steps, we must gradually increase the interval length $k_i^{(s)}$. Consequently, the adversary is forced infinitely often to pretend that the true language has a sparser  density, all while remaining consistent with previous adversarial enumerations. The parameter $l$ facilitates this trick by ensuring that an even sparser language can still maintain  consistent.

\

\noindent{Round level 1}: 
The adversary first identify the smallest $N_i$ that is larger than $l$. Set $l = N_i$.

Initially, the adversary pretends that the true language is $L_{l, 0, 0, \emptyset}$. Because the algorithm generates the language in the limit, after a finite amount of time, it will exclusively output elements from $L_{l, 0, 0, \emptyset}$. Consequently, from that point onward, all elements generated by either the adversary or the algorithm will belong to $P$.

Suppose when this happens, all the elements used so far are smaller than $N_s$ for some $s$. 
Recall there are $m_s$ intervals in $\mathcal{I}_s$: $[a_1^{(s)}, a_1^{(s)} + k_1^{(s)}], [a_2^{(s)}, a_2^{(s)}+k_2^{(s)}], \dots, [a_{m_2}^{(s)}, a_{m_2}^{(s)} + k_{m_2}^{(s)}]$, and \[a_1^{(s)} \geq N_s.\] 
Recall $a_1^{(s)}, a_2^{(s)}, \dots, a_{m_s}^{(s)} \in P$ and have not been used. 
As the adversary and the algorithm reveals a string in turn with the same speed, at least half of the elements $a_1^{(s)}, a_2^{(s)}, \dots, a_{m_s}^{(s)} \in P$ can be generated by the adversary first before the algorithm touches them in the next $m_s$ time stamps. Call the end of Round level 1.

Let $S_{t_1}$ be the set of these elements above which are occupied by the adversary. So \[|S_{t_1}| \geq m_s/2 \geq 1.\] 

In addition, we have the property that for each $a_i^{(s)} \in S_{t_1}$, none of the next $k_i^{(s)}$ elements were used by the adversary or the algorithm up to now, because the algorithm can only generate subsets of  $L_{l, 0, 0, \emptyset}$. Notice that there is no conflict with $a_{i+1}^{(s)}$ as we have assumed $a_i^{(s)}+1 + k_i^{(s)} < a_{i+1}^{(s)}$. In other words, at the end of Round level 1, all the elements in $S_{t_1}$ are edgy. 

\

\noindent{Round level 2}: 

The adversary pretends that the true language is $L_{l, 0, 1, (t_1')}$ such that $P[t_1'] + \{1\}$ is larger than any element the adversary and algorithm have used so far. 
To do this, the adversary first outputs integer $x + 1$ for the smallest $x$ in $S_{t_1}$, and greedily pick the next smallest available $a_i^{(s)} +1$ that is available and edgy with  $a_i^{(s)} \in S_{t_1}$, i.e., that $a_i^{(s)}+2, \dots, a_i^{(s)}+k_i^{(s)}$ have not been used by the algorithm at the moment. The adversary  occupies this $a_i^{(s)}+1$.
This round ends when all the elements $x+1$ are used by either the adversary or the algorithm for all $x \in S_{t_1}$. 

If in this round, the algorithm never generates a string in $P + [2, \infty]$, then by the same argument as before, at least half of the integers $x$ in $S_{t_1}$ will satisfy that both $x, x+1$ are occupied by the adversary, and $x+1$ is edgy at the end of Round level 2. 
Let this set of $x \in S_{t_1}$ be $S_{t_1'}$. Clearly 
$|S_{t_1'}| \geq |S_{t_1}|/2.$
In fact, we could improve this bound to be 
\[ |S_{t_1'}| \geq 1 + (|S_{t_1}|-1)/2.\]
The reason is that the adversary will first output an element of the form $P + \{1\}$, which contributes to the $+1$ term, and then if the algorithm never outputs in $P + [2, \infty]$, we get that the adversary can occupy an additional $(|S_{t_1}|-1)/2$ strings that are desirable. 
Although this new bound may appear to be a minor improvement, it is highly significant because it guarantees that $|S_{t_1'}| \geq 1$. Consequently, we avoid the issue of an exponentially decaying set size, provided we only require the set to be non-empty.

If in this round, the algorithm  at some point generates an element in $P + [2, \infty]$, we call this situation {\it bad at level 1}. 

If the bad situation occurs, the adversary first acts as though the true language were
$L_{l,0,1,(t_2)},$
where $t_2$ is the smallest value for which this language is consistent. The adversary continues this strategy until the algorithm generates only a subset of
$L_{l,0,1,(t_2)}$
(at round level~1). This ensures that the set $P$ contains many edgy elements that are all larger than some sufficiently large threshold $N_i$, chosen so that $N_i$ exceeds all elements used previously by the end of round level~1. As in the previous argument, we denote by $S_{t_2}$ the set of edgy elements output by the adversary.

Similar to the previous stage, the adversary then repeats the argument at round level~2. It does so by greedily generating available elements of the form $x+1$, where $x \in S_{t_2}$, under the additional requirement that $x+1$ itself be edgy. In this way, the adversary pretends that the true language is
$L_{l,0,1,(t_2')},
$ for some sufficiently large $t_2'$.
Again, round level 2 is good if in this round the algorithm never generates an element in $P+ [2, \infty]$. If this round is good, then again at least half of the integers $x$ in $S_{t_2}$ will satisfy that both $x, x+1$ are occupied by the adversary, and $x+1$ is edgy. 
Let this set of $x$ be $S_{t_2'}$. Clearly by the same reasoning as before
\[ |S_{t_2'}| \geq 1+ (|S_{t_2}|-1)/2 \geq 1.\]

Round level 2 is bad if in this round the algorithm ever generates an element in $P+ [2, \infty]$.

If round level~2 is bad again, we repeat the process: we return to round level~1 by pretending that the true language is
$L_{l,0,1,(t_i)},$
and then move  to round level~2 by pretending that the true language is
$L_{l,0,1,(t_i')},$
where the parameters are chosen to form a sufficiently well-separated sequence
$t_1 < t_1' < t_2 < t_2' < \cdots.$

We now show that the bad situation at level~2 can occur only finitely many times. This is a crucial step in making the rest of the proof work.

Suppose, for contradiction, that the process continues to alternate between Round level~1 and Round level~2 indefinitely, and that the situation remains ``bad'' throughout. We will show that this contradicts the assumption that the algorithm converges in the limit.

To show this, we need to find a valid enumeration plan of {\it all the} strings in $L_{l, 1, 0, \emptyset}$ by the adversary. This is achieved by the adversary gradually pretending the true languages is $L_{l, 0, 0,\emptyset}$, then $L_{l,0,1,(t_1')}, L_{l, 0, 1, (t_2)}, L_{l, 0, 1, (t_2')}, \dots$ as above. Note that, between two rounds, some earlier strings may have been skipped. Therefore, the adversary must ensure that these strings are eventually enumerated. One way to do this is to reserve stages of the form $2^{2^n}$ and, at each such stage, go back and enumerate the first missing string in the language that the adversary is simulating. Since this refill process is sufficiently sparse, all of our bounds on $|S_{t_i}|$ and $|S_{t_i'}|$ continue to hold, up to a negligible error term that vanishes in the limit, i.e., 
\[ |S_{t_2'}| \geq 1+ (|S_{t_2}|-1 - o(|S_{t_2}|))/2 \geq 1.\]

We now use the property that the infinite sequence $L_{l,0,1,(t_1')}, L_{l, 0, 1, (t_2)}, L_{l, 0, 1, (t_2')}, \dots$ is a perfect tower with limit being $L_{l,1,0,\emptyset}$. What we argued above shows that the adversary indeed has an enumeration strategy for $L_{l,1,0,\emptyset}$. Therefore, since the algorithm needs to generate in the limit, it means that after some finite time, the adversary should never output any string in $P + [2, \infty]$, this contradicts with our assumption that the algorithm generates a string in $P + [2, \infty]$ infinitely often (which are the definition of the bad situation).

Therefore there must be some point where it is good at level 2. It means that there is some $N_s$ for some $s$, with the $m_s$ intervals in $\mathcal{I}_s$: $[a_1^{(s)}, a_1^{(s)} + k_1^{(s)}], [a_2^{(s)}, a_2^{(s)}+k_2^{(s)}], \dots, [a_{m_2}^{(s)}, a_{m_2}^{(s)} + k_{m_2}^{(s)}]$, and \[a_1^{(s)} \geq N_s.\] 
Recall $a_1^{(s)}, a_2^{(s)}, \dots, a_{m_s}^{(s)} \in P$ and have not been used. 
For at least one element $x$ in $P \cap \mathcal{I}_s$, 
the adversary occupied both $x, x+1$, and $x+1$ is edgy at the end of Round 2. 
Call this set of $x$ to be $S_{T_2}$.
When this happens, we say it is {\it good at level 2}.

We will now recursively apply the previous argument to show that the adversary is able to occupy a whole interval. 
 When it is good at Level 2, it means $x+1$ is edgy at the beginning of Level 3. The adversary then proceeds with Round level 3.

\

\noindent{Round level 3}

The adversary pretends that the true language is $L_{l, 1, 1, (\star)}$ for some sufficiently large $\star$ so that it is consistent.  

If in this round, the algorithm never outputs a string in $P + [3, \infty]$, then at least half of the integers $x$ in $S_{T_2}$ will satisfy that both $x,x+1,  x+2$ are occupied by the adversary, $x+2$ is edgy at the end of this Round.  
Let this set of $x$ be $S_{t_1^2}$. Clearly 
\[ |S_{t_1^2}| \geq 1.\]

If in this round, the algorithm ever produces a string in $P + [3, \infty]$, we again say it is {\it bad at Level 3}. Once it happens, similar to before, the algorithm will repeat alternating between Level 1 and Level 2 until at some point it is good at level 2 (we will specify it  soon). Once it is good at level 3, we will be able to three consecutive strings occupied by the adversary and satisfying the edgy condition. 

If it is never good at level 3 whenever it is good at level 2, we show it is impossible by a similar argument as before. 

We again prove by contradiction. The adversary first go through repeatedly round level 1 and round level 2, until it is good at level 2, and we have shown it will happen. In this part, the adversary is always pretending the true language is $L_{l,0,1, (\star)}$ with $\star$ grows. Once it becomes good at level 2, the algorithm starts to pretend the true language is $L_{l,1,1, (\star')}$ for some sufficiently large $\star'$ depending on the strings used so far. And then it goes back to alternate between level 1 and level 2 again, then go to level 3, and repeat. 
A crucial observation is that all these languages that the adversary pretends to be a subsets of $L_{l, 2, 0, \emptyset}$, and these pretending languages form an infinite perfect tower with limit being $L_{l, 2, 0, \emptyset}$ since those $\star, \star'$s grow to  $\infty$. By filling in the missing string infinitely often but arbitrarily slowly, the adversary has established a plan  to eventually enumerate all possible strings in $L_{l, 2, 0, \emptyset}$. If we are never good at level 3, since the algorithm will generate in the limit, so after some time it will never generate a string of the form $P + [3, \infty]$. This contradicts with the assumption that we are never good at Level 3. 

Note that in this part of the proof it is crucial that the limit language is 
$L_{l,2,0,\emptyset}$, since we aim to forbid elements of the form 
$P + [3,\infty)$. To achieve this, it is essential that the approximating 
sequence of languages does not have increasing values of the subscript index $l$. 
Indeed, if $l$ were allowed to increase along the sequence, the limit would 
collapse to $\mathbb{N}$, but we would need the limit to be  $L_{*,2,0,\emptyset}$.
This is the key point in the construction where the Cantor--Bendixson rank of 
the example must necessarily be large.

Therefore we have shown that eventually we should be good at all levels 1,2,3. 

By an easy extension of our argument, eventually we should be good at all levels 1,2,\dots, $k_i^{(s)}$ for some $i$. More precisely, there exists an $s$ such that, at the end of Round $1$, we have $\mathcal{I}_s$: $[a_1^{(s)}, a_1^{(s)} + k_1^{(s)}], [a_2^{(s)}, a_2^{(s)}+k_2^{(s)}], \dots, [a_{m_s}^{(s)}, a_{m_s}^{(s)} + k_{m_s}^{(s)}]$, and \[a_1^{(s)} \geq N_s,\] such that 
  at least half of the elements $a_1^{(s)}, a_2^{(s)}, \dots, a_{m_s}^{(s)} \in P$ can be generated by the adversary first before the algorithm touches them in the next $m_s$ time stamps, and they are edgy. By a slight abuse of notation call this set $S_1$. In Round level 2, again at least half of these $x$ are such that $x, x+1$ are both occupied by the adversary and $x+1$ is edgy at the end of Round level 2. Call this set $S_2$; Eventually, there is some $1 \leq i \leq k_{m_s}^{(s)}$ such that in Round Level $k_i^{(s)}$, there is a set $S_{k_i^{(s)}}$  with 
 \[ |S_{k_i^{(s)}}| \geq 1 + (|S_{k_i^{(s)}-1}|-1)/2 \geq 1\]
 for some $1 \leq i \leq k_{m_s}^{(s)}$ such that $a_i^{(s)} \in S_{k_i^{(s)}}$, and all the elements $a_i^{(s)}, a_i^{(s)}+1, \dots, a_i^{(s)} + k_i^{(s)}$ are occupied by the adversary. 
This is the end of a Mega Step. 

\

To conclude the proof, we will repeatedly perform Mega Steps.

First we run a mega step with parameter $l_1 = 0$. The previous argument guarantees that there is an interval in $\mathcal{I}_{s_1}$ for some $s_1$ that is fully occupied by the adversary.

We then run the second mega stop with parameter  and $l = l_2$ to be sufficiently large such that it is larger than any string that have been used by the adversary or the algorithm and is larger than $N_{s_1}$. Then the previous argument guarantees that there is an interval  in $\mathcal{I}_{s_2}$ for some $s_2 > s_1$ that is fully occupied by the adversary.

Repeat this process. Eventually for each $i$, we choose $l_i$  sufficiently large, and we are guaranteed that there is an interval  in $\mathcal{I}_{s_i}$ for some $s_i > s_{i-1}$ that is fully occupied by the adversary.

We can always continue this process, because as $l_i \to \infty$, all the languages that the adversary pretends to be along the way form an infinite perfect tower whose limit is $\mathbb{N}$. Therefore regardless of the strategy of the algorithm, as long as it generates in the limit, the adversary is able to create an infinite sequence of intervals, each in different $\mathcal{I}_{s_i}$, such that the whole interval is occupied by the adversary. 

Let the lengths of these intervals play the role of $k$, and their left endpoints the role of $m$, in  Definition~\ref{def:fdensity} for $\delta_f(A)$. This shows that the $f$-window density of the algorithm output is forced to be $0$.
\end{proof}

\subsection{Finite Rank-- always optimal lower Banach density}
\label{subsec:finite-cb-rank}

The construction forcing density 0 (Theorem~\ref{thm:lowerdensity0f} and Corollary~\ref{thm:lowerdensity0fBanach}) has infinite Cantor--Bendixson rank. We
now show that finite rank rules out this topology-driven zero-density
phenomenon: if the language class has finite Cantor--Bendixson rank,
then the optimal \(1/2\) lower Banach-density guarantee can always be
recovered.

This theorem should be viewed as the Banach-density analogue of the
universal lower-asymptotic-density theorem, but with a crucial
difference. Lower asymptotic density gives the same optimal guarantee
for every countable class. Lower Banach density asks for uniform control
over all long intervals, and in this setting finite Cantor--Bendixson
rank becomes the structural hypothesis that makes the guarantee possible.

We do not claim that every infinite-rank class forces zero density. The
point is that zero density cannot be forced at finite rank, and that
infinite rank is rich enough to support genuine zero-density examples.

\begin{thm}\label{thm:finiterank}
    Suppose the countable collection of languages $\mX$ has finite Cantor-Bendixson rank. There is an algorithm which generates in the limit and whose lower Banach density in the true language is always at least $1/2$. 
\end{thm}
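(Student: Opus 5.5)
We build the algorithm on top of the accurate-infinitely-often guesser $\Acc$ of \cite{kleinberg2025density} (Theorem~\ref{thm:acc}), which outputs indices $i_t$ with $L_{i_t}=K$ for infinitely many $t$. To this we add a static hierarchy extracted from the finite Cantor--Bendixson rank. Let $r<\omega$ be the rank, and assign each $L\in\mX$ the rank $\rho(L)=\max\{j: L\in\mX^{(j)}\}$.

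\textbf{Static tree.} For each $L$ of rank $\rho(L)$, the point $L$ is isolated in $\mX^{(\rho(L))}$. So there is a finite set $F_L\subseteq L$ such that $U_{L,F_L}\cap\mX^{(\rho(L))}=\{L\}$: every other language $L'$ with $F_L\subseteq L'\subseteq L$ has strictly smaller rank. We build a forest by enumerating $\mX$ and attaching each language as a child of the earliest listed language that strictly contains it with strictly larger rank and whose witness set it contains. Along any root-to-leaf path the ranks strictly decrease, so the depth is at most $r$.

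\textbf{Algorithm.} At time $t$, take the guess $L_{i_t}$ of $\Acc$. Whenever the guess moves laterally in the tree, to a language not in the current ancestor chain, we pull back to the lowest common ancestor consistent with $S_t$. Call the result the pullback language $\Lambda_t$. Each pullback strictly raises the rank, so within any window of time at most $r$ consecutive upward moves can occur before the guess settles. Whenever the adversary reveals $x$, we add to a priority list (a local pod) the next $g(t,x)$ unseen elements of $\Lambda_t$ nearest to $x$. The function $g$ grows slowly in both $t$ and $|x|$. The output $o_t$ is the nearest unused element of the highest-priority pod, lying in $\Lambda_t$ near the adversary's last input.

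\textbf{Validity.} We would show that eventually $\Lambda_t\subseteq K$ and all pods are subsets of $K$. Since $\Acc$ is correct infinitely often and $K$ has finite rank, the tree node of $K$ and its finitely many ancestors are eventually fixed. Any guess strictly above $K$ in the tree must eventually be refuted, by the finite witness sets $F_L$ together with the fact that $K$ is isolated in its rank level. Pods created while $\Lambda_t\not\subseteq K$ are finite and stop being used once they are abandoned. Only finitely many such pods are ever created before the guess stabilizes in the relevant sense, so the outputs are eventually valid.

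\textbf{Density.} Fix a long interval $I$ of $K$ (in the relative order of $K$) and set up the interval game. Adversarial inputs in $I$ are black stones and outputs in $I$ are white stones. Because outputs are nearest-to-input within $\Lambda_t\cap$ pod, a black stone in $I$ fails to yield a white stone in $I$ only through a \emph{leak}: the nearest available visible element lies outside $I$.

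We split time into phases separated by changes of the visible set $\Lambda_t\cap I$. The static tree of depth at most $r$ bounds the number of phases affecting $I$ by $O(r)$. Within a phase, consecutive leaks from $I$ require the nearest free visible element to be farther away. A doubling argument then shows the distances grow geometrically, giving $O(\log|I|)$ leaks per phase. Hence the white count satisfies $\ge(|K\cap I|-O(r\log|I|))/2$, which yields lower Banach density $\ge1/2$.

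\textbf{Main obstacle.} The hard step is making the phase bound robust. Many pullbacks and pod creations triggered by adversary moves outside $I$ can interleave between moves inside $I$. We must show they change visibility inside $I$ only via the boundedly many rank-increasing pullbacks. The first approach to try is to charge every visibility change in $I$ to an ancestor edge of the eventual tree node of $K$.
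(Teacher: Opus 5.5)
The key gap is the density step you list as the main obstacle. With pods grown only around the current input, the $O(r)$ phase bound you need does not follow from the tree depth.

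Under your rule, what is visible near $I$ when the adversary plays $w_{t_i}\in I$ is $\Lambda_{t_i}\cap I$, which is essentially the current guess of $\Acc$ near $w_{t_i}$. That guess never settles. Even after validity it can oscillate below $K$, for example $K, A_1, K, A_2,\dots$ with distinct children $A_j$ of $K$ that meet $I$ differently. The adversary can time the lateral moves, and hence the pullbacks, at inputs outside $I$. So $\Lambda_{t_i}\cap I$ can shrink and can take unboundedly many values. Depth $r$ only bounds how often the $\LCA$ of an \emph{accumulating} set of guesses climbs (Claim~\ref{claim:rtimes}). Charging changes to ancestor edges of $K$ cannot help, since all the fluctuation happens below $K$. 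The doubling argument needs monotone visibility: the later stone $a_{i+1}$ must have been visible and free at time $t_i$. So your $O(r\log|I|)$ leak bound does not follow.

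The paper supplies two coupled ingredients you are missing.
\begin{itemize}
\item Let $Z_i$ be the $\LCA$ in $\mT$ of \emph{all} guesses on $[t_1,t_i]$. It increases and changes at most $r$ times (Lemma~\ref{lem:rchange}).
\item At \emph{every} step, grow pods inside $PB(t)$ around \emph{every} previously revealed string, with radius increasing in $t$ and in position.
\end{itemize}
By Claim~\ref{claim:tree}, some consecutive pair of guesses has $\LCA$ exactly $Z_i$. At that step $PB=Z_i$, and all of $Z_i$ near $I$ (anchored by $w_{t_1}$) enters the pod permanently, wherever the adversary is playing. Outputs are chosen by relative distance in $L_{i_t}\cup P_t$. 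This gives $P_{t_1}\cup Z_i^{\le 4\ell}\subseteq P_{t_i}\cup L_{i_{t_i}}\subseteq P_{t_1}\cup Z_i$ (Lemma~\ref{lem:PtZt}). Hence visibility near $I$ is $Z_i^{\le\ell}\cup P_{t_1}^{\le\ell}$, a monotone filtration with at most $r$ values, which is exactly the hypothesis of Lemma~\ref{lem:comb}.

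Your validity sketch also aims at the wrong event. Guesses of $\Acc$ are eventually subsets of $K$ anyway. The danger is that the pullback of two such guesses is an ancestor $J\supsetneq K$, which stays consistent forever and is never refuted. What must be shown is that eventually every guess lies in the subtree of $K$ (Lemma~\ref{lem:PBsubsetK}). The paper derives this from Claim~\ref{claim:parent}: a late consistent $L\subsetneq K$ approximating $K$ has its parent inside $K$. It then iterates parents, using that the tail of any sequence converging to $K$ has rank below $\mathrm{CB}(K)$.

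Your ``earliest listed'' parent rule does not yield Claim~\ref{claim:parent}. Suppose $J=\mathbb N$ is listed first, has top rank, and $F_J\subseteq K$. Then the low-rank approximants $L\subsetneq K$ of $K$, which eventually contain $F_J$, all get $\mathbb N$ as parent and become siblings of $K$. If $\Acc$ is driven to alternate between such $L$ and $K$ forever, each alternation pulls back to $\mathbb N$, and the pod can emit strings outside $K$ infinitely often. The parent must be an inclusion-minimal candidate among earlier-indexed languages, so that $K$ is preferred over its strict supersets; this is what the proof of Claim~\ref{claim:parent} uses.
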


By the monotonicity Claim \ref{claim:monotone} of the lower $f$-window density, since lower Banach density is the smallest, we automatically have the following. 
\begin{cor}\label{thm:finiterankf}
 Suppose a countable collection of languages $\mX$ has finite Cantor-Bendixson rank.  
   There is an algorithm which generates in the limit whose lower $f$-window density in the true language is at least $1/2$, for any $f: \mathbb{N} \to \mathbb{N}^+$ that is not uniformly bounded above.
\end{cor}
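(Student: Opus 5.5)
The plan is to deduce Corollary~\ref{thm:finiterankf} directly from Theorem~\ref{thm:finiterank} and the monotonicity of $f$-window densities (Claim~\ref{claim:monotone}, Claim~\ref{claim:banachsmall}). Run the algorithm of Theorem~\ref{thm:finiterank} unchanged. By that theorem it generates in the limit, and its output set $O$ satisfies $\lbanach(O,K)\ge 1/2$ for every true language $K\in\mX$. It therefore suffices to show that $\delta_f(O,K)\ge \lbanach(O,K)$ for every admissible $f$. The hypothesis that $f$ is unbounded plays no role in this lower bound; it only marks the regime where the corollary says something new, since for bounded $f$ the result already follows from \cite{kleinberg2026density}.

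The comparison is made in the index coordinates of $K=\{w_1<w_2<\cdots\}$, where densities ``in $K$'' are defined through the indexing described in Section 3.1. Let $A=\{j: w_j\in O\}\subseteq\mathbb N$. Then $\lbanach(O,K)=\delta_B(A,\mathbb N)$ and $\delta_f(O,K)=\delta_f(A,\mathbb N)$.

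The inequality $\delta_f(A,\mathbb N)\ge\delta_B(A,\mathbb N)$ is checked straight from the definitions. For each $k$, the set of windows $[m,m+k-1]$ with $1\le m\le f(k)$ is a subset of all windows of length $k$. Hence
\[
\min_{1\le m\le f(k)}\frac{|A\cap[m,m+k-1]|}{k}\;\ge\;\inf_{m\ge1}\frac{|A\cap[m,m+k-1]|}{k},
\]
and taking $\liminf_{k\to\infty}$ of both sides gives the claim. This is exactly Claim~\ref{claim:monotone} applied with $g\equiv\infty$ majorizing $f$. The only detail to confirm is that the quantity in (\ref{def:banach}), the liminf over lengths $k$ of the infimum over positions, is the same as $\liminf_{|I|\to\infty}$; the paper states these as equivalent.

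No real obstacle arises. All of the difficulty lies in Theorem~\ref{thm:finiterank}. The one point to state explicitly is that a single algorithm works simultaneously for every $f$, because the algorithm does not depend on $f$. The optimality of $1/2$ is inherited from the benchmark discussed earlier, where the adversary and generator alternate on a singleton class, so the corollary is stated as a lower bound only.
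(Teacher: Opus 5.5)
Your proposal is correct and matches the paper's argument. The paper also gets this corollary immediately from Theorem~\ref{thm:finiterank} together with the monotonicity Claim~\ref{claim:monotone}, using the fact that lower Banach density is the smallest of the $f$-window densities, so the same $f$-independent algorithm works for every $f$.
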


Since the Cantor-Bendixson is a topological invariant which is preserved under reordering of elements (as the topology is defined regardless of the ordering of the elements). It turns out that under any reordering of the elements, there is always an algorithm that generates in the limit and achieves the optimal lower Banach density, which is $1/2$. This fact is not obvious from the algorithm description, and in particular, after reordering of the elements, the intervals in the language under the original ordering will be scattered under the new ordering. This is a bit surprising given that usually lower density is very sensitive to reordering of the underlying strings: for the same subset $A \subset \mathbb{N}$, reordering elements in $\mathbb{N}$ could make the Banach lower density (or even asymptotic lower density) of $A$ go from $0$ to non-zero and vice versa. 

\begin{cor}
    Let $\rho: \mathbb{N} \to \mathbb{N}$ be a bijection. Suppose the countable collection of languages $\mX$ has finite Cantor-Bendixson rank. Then there is an algorithm which generates in the limit whose lower Banach density in the true language is at least $1/2$ under the reordering of strings.  
\end{cor}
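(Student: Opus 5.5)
The plan is to reduce this corollary to Theorem~\ref{thm:finiterank} by transporting the whole problem along $\rho$. Given $\rho$, define the relabeled collection $\mX^\rho=\{\rho(L): L\in\mX\}$, where $\rho(L)=\{\rho(x): x\in L\}$. The new algorithm $\mathcal{A}^\rho$ proceeds as follows. It takes the enumeration of $K$, read in the new ordering, and pulls it back through $\rho^{-1}$ to an enumeration of $\rho^{-1}$-images. It feeds these to the algorithm of Theorem~\ref{thm:finiterank} run on the appropriate collection, and pushes each output forward through $\rho$. Here one must fix the convention. If ``reordering'' means that the density is measured in the order $\rho$ on strings while languages are unchanged, then we run Theorem~\ref{thm:finiterank} on $\mX^{\rho}$ (or $\mX^{\rho^{-1}}$, according to the direction), so that intervals in the new order correspond exactly to intervals in the index set on which the theorem is applied.

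The key step is that $\rho$ induces a homeomorphism $(\mX,\mathcal{T})\to(\mX^\rho,\mathcal{T})$, $L\mapsto\rho(L)$. The map is a bijection, since $\rho$ is injective on subsets. Basic open sets correspond exactly: $\rho(U_{L,F})=U_{\rho(L),\rho(F)}$, because $\rho$ preserves inclusion and finiteness of $F$. A homeomorphism commutes with the derived-set operator, so $\rho(\mX^{(i)})=(\mX^\rho)^{(i)}$ for all $i$ by induction, and the Cantor--Bendixson rank of $\mX^\rho$ equals that of $\mX$, which is finite. (Equivalently, via Claim~\ref{claim:iptequiv_rephrased}: infinite perfect towers map to infinite perfect towers, since the conditions of Definition~\ref{def:perfecttower-intro} involve only inclusion and membership.) Theorem~\ref{thm:finiterank} therefore applies to $\mX^\rho$.

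It remains to check that the simulation is legitimate. Every enumeration of the target $K$ corresponds bijectively to an enumeration of $\rho(K)\in\mX^\rho$, so the simulated run is a valid run of the Theorem~\ref{thm:finiterank} algorithm against a legal adversary. Validity transfers because $o_t\in\rho(K)\setminus\rho(S_t)$ if and only if $\rho^{-1}(o_t)\in K\setminus S_t$. Hence generation in the limit transfers. The output set in the simulated world has lower Banach density at least $1/2$ in $\rho(K)$ with respect to the standard order on $\mathbb{N}$. By construction this is exactly the lower Banach density of the actual output set in $K$ under the reordered ordering, because intervals of the reordered line are precisely the $\rho$-preimages (or images) of standard intervals.

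The main obstacle is bookkeeping rather than mathematics: fixing the direction of $\rho$ consistently so that the interval structure used in the density definition matches the standard intervals on which Theorem~\ref{thm:finiterank} is applied. The substantive point is that the theorem's hypothesis is topological, and the topology is defined without reference to the order on strings. Once this is made explicit, the rest is immediate.
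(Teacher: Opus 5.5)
Your proposal is correct and is essentially the paper's argument. The paper justifies this corollary only by remarking that the Cantor--Bendixson rank is a topological invariant and that the topology $\mathcal{T}$ is defined without reference to the order on strings, so Theorem~\ref{thm:finiterank} applies after relabeling; your homeomorphism $L\mapsto\rho(L)$, with $\rho(U_{L,F})=U_{\rho(L),\rho(F)}$, together with the transported simulation makes that remark precise (your first paragraph and your validity paragraph apply $\rho$ and $\rho^{-1}$ in opposite directions, but any one consistent choice works, as you note).
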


\begin{remark}
    Given that the lower Banach density could achieve zero (Theorem \ref{thm:lowerdensity0f}), it might somewhat sound surprising that one could actually prove that the optimal $1/2$ density could be achieved in this case. Even though the set of languages $\mX$ is countable, it could very well be that the completion of the space $\mX$ under the topology defined earlier is uncountable. For example, let the countable collection $L$ be some finite subsets of $2\mathbb{N}$ plus all the odd numbers. Then the completion of $\mX$ would be all subsets of $2\mathbb{N}$ plus all the odd numbers, which is clearly uncountable. In this example, $\mX$ has Cantor-Bendixson rank 1, and the completion $\bar{\mX}$ has Cantor-Bendixson rank infinity.  One might suspect that a counter-example to show that the Banach lower density could be tiny again, by taking advantage of the infinity Cantor-Bendixson rank of $\bar{\mX}$ (such as the previous construction) or the fact that $\bar{\mX}$ has uncountable cardinality (such as diagonalization method). However it turns out it is not the case. The reason is that even though $\bar\mX$ is uncountable, but those completed limit points become either  subsets (and thus absorbed) by the countably many sets, or is not a subset of any of the original language and thus could be safely ignored by the algorithm by proper design. 
\end{remark}

\subsubsection{Review of the algorithm that is accurate infinitely often}
\label{subsubsec:inf-acc}

We will start by reviewing our algorithm in \cite{kleinberg2025density, kleinberg2026density} which generates in the limit and at the same time, guaranteeing that the guess  is the correct language infinitely often. We say an algorithm is {\it index-based} if at every time $t$, it guesses the index of a language and generate a string from that language. We say  the algorithm is {\it accurate at time $t$} if the index guessed is that of the true language at  time $t$. 

\begin{thm}\label{thm:acc}\cite{kleinberg2025density}
There is an algorithm that can guarantee 
index-based generation in the limit and also be accurate
in an infinite sequence of time steps.
\label{stmt:inf-accuracy}
\end{thm}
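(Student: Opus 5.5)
The plan is to reprove the accurate-infinitely-often guarantee of \cite{kleinberg2025density} directly, combining the critical-language idea of \cite{kleinberg2024limit} with a dovetailing ``pull-back'' rule. Write $S_t$ for the strings revealed by time $t$, and call $L_i$ \emph{consistent} at time $t$ if $S_t\subseteq L_i$. The true language $K=L_z$ is consistent at every time. Following \cite{kleinberg2024limit}, $L_n$ is \emph{critical} at time $t$ if it is consistent and $L_n\subseteq L_i$ for every consistent $L_i$ with $i<n$. Two facts about this notion will be used. First, after some finite time $t_0$, every consistent $L_i$ with $i<z$ that does not contain $K$ has been eliminated: for each such $i$, some element of $K\setminus L_i$ is eventually revealed. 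Second, for $t\ge t_0$, the language $L_z$ is critical, and every critical $L_n$ with $n\ge z$ satisfies $L_n\subseteq K$. Generating from the largest-index critical language below a slowly growing bound therefore gives validity in the limit, but it is almost never accurate, because it keeps guessing proper sublanguages of $K$.

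To also be accurate infinitely often, I would interleave a second kind of step that deliberately guesses lower indices. Fix a bijection $\tau\colon\mathbb N\to\mathbb N\times\mathbb N$ in which every pair appears infinitely often, and at time $t$ read off a target index $j_t$ from $\tau(t)$. The rule at time $t$ is as follows. Let $n_t$ be the largest index $\le t$ whose language is critical. If $L_{j_t}$ is critical at time $t$, then guess $i_t=j_t$; otherwise guess $i_t=n_t$. In either case output the first unseen string of $L_{i_t}$. Since $j_t=z$ at infinitely many times $t\ge t_0$, and $L_z$ is critical at all such times, the guess equals $z$ infinitely often. This gives accuracy.

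Validity is where care is needed, and I expect it to be the main obstacle. Once $t\ge t_0$, any critical guess $L_{i_t}$ with $i_t\ge z$ is contained in $K$. A critical guess with $i_t<z$ contains $K$, by the first fact. Such a language may be a strict superset of $K$ and so could produce hallucinations. I would rule this out as follows. If $L_i$ with $i<z$ is critical at time $t$, then by the definition of criticality applied with $L_z$ as the consistent language at a larger index, we have $L_i\subseteq L_z$, with the roles taken from $L_z$'s own criticality condition. The precise requirement is that criticality of $L_z$ forces $L_z\subseteq L_i$, while criticality of $L_i$ gives nothing against larger indices. So a superset $L_i\supsetneq K$ with $i<z$ can remain critical, and guessing it is dangerous.

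The fix is to restrict the pull-back: guess $j_t$ only when $L_{j_t}$ is critical \emph{and} $L_{j_t}\subseteq L_{n_t}$. Since $n_t\ge z$ eventually, and $L_{n_t}\subseteq K$ whenever $n_t\ge z$, every pull-back guess then lies inside $K$. This condition still holds at $j_t=z$, because for $t\ge t_0$ the languages in the critical chain are nested decreasingly: $L_{n_t}\subseteq L_z$. That nesting alone does not give $L_z\subseteq L_{n_t}$, so accuracy needs the pull-back to target a language containing $L_{n_t}$. I would therefore use the condition ``$L_{j_t}$ is critical and $j_t\le n_t$.'' A critical language with index at most $n_t$ lies in the chain above $L_{n_t}$. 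By the chain property it is either some $L_i$ with $i\ge z$, hence contained in $K$, or some $L_i$ with $i<z$ containing $K$. In the latter case, $L_i$ is contained in $L_z$, because $L_z$ is critical and $L_i$ is a consistent language with smaller index, so criticality of $L_z$ gives $L_z\subseteq L_i$.

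The remaining case, a critical $L_i\supsetneq K$ with $i<z$, is the crux. I would handle it by redefining the enumeration so that $z$ is replaced by the least index $z'$ with $L_{z'}=K$, and then proving that no strict superset of $K$ with smaller index stays critical. Concretely, every critical language of index below $z'$ must itself be contained in every later consistent language, and in particular in $L_{z'}$. Combined with containing $K$, this means it equals $K$, contradicting the minimality of $z'$. Hence every eventual critical language with index at most $n_t$ is a subset of $K$. Validity follows because outputs are unseen strings of a subset of $K$ that is infinite. Accuracy follows because $L_{z'}=K$ is guessed at the infinitely many times when $j_t=z'$.
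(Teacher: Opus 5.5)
There is a genuine gap, and it is at the step you call the crux. Your definition of criticality constrains $L_n$ only against consistent languages of \emph{smaller} index ($L_n\subseteq L_i$ for $i<n$). So the assertion that every critical language of index below $z'$ ``must itself be contained in every later consistent language, and in particular in $L_{z'}$'' reverses the containment. The same reversal appears earlier, where you write that $L_i$ is contained in $L_z$ while deriving $L_z\subseteq L_i$.

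Strict supersets of $K$ that precede it in the enumeration can stay critical forever. Take $L_1=\mathbb{N}$ and $L_2=K=2\mathbb{N}$. Then $L_1$ is critical at every step (vacuously) and $n_t=2\ge 1$, so at each of the infinitely many times with $j_t=1$ your rule guesses $L_1$. If the adversary enumerates $K$ in increasing order, the first unseen string of $\mathbb{N}$ is then always odd. The outputs are therefore invalid infinitely often, and generation in the limit fails. Passing to the least index $z'$ of $K$ changes nothing. More generally, no test built only from criticality plus an index bound can work, because such an $L_1$ is consistent and critical at all times, just like $K$.

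The missing idea is a certificate that singles out $K$ among its critical supersets, and this is what the algorithm $\Acc$ reviewed in the paper provides. It keeps the chain $\mC_{t-1}$ of strictly critical languages from the \emph{previous} step and uses the freshly revealed string $w_t$. It guesses the deepest language of $\mC_{t-1}$ that contains $w_t$, or the deepest one of index at most $t-1$ if $w_t$ lies in all of them.

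\emph{Validity.} Eventually $K$ lies in the chain with only strict supersets above it (Claim~\ref{claim:Accproperty}). Since $w_t\in K$, the guess is then $K$ or a language below it.

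\emph{Accuracy.} Whenever $w_t$ refutes the language immediately below $K$ in $\mC_{t-1}$, the guess is exactly $K$. That successor is always a proper subset of $K$, so it is eventually refuted. Between refutations it can change only by its index decreasing, because a strictly critical language stays strictly critical while consistent. Hence refutations, and so accurate guesses, occur infinitely often.

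Dovetailing over indices has no analogue of this one-step-lag certificate. That is why your scheme cannot be repaired by adjusting the admissibility test alone.
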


Below is a review of this algorithm \cite{kleinberg2025density} and some of its key properties.

Let the countable collection of languages be $\coll = \{\lang{1}, \lang{2}, \dots, \}$. 
Define $\res{\coll}{n} = \{\lang{1}, \lang{2}, \dots, \lang{n}\}$, i.e., the first $n$ languages in $\coll$.

\begin{dfn}\label{dfn:sc}
A language $\lang{n}$ is {\em consistent} at step $t$ if it contains all the elements input by the adversary up to time $t$. 

We say a language $\lang{n}$ is {\em strictly critical} 
at step $t$ if
$\lang{n}$ is consistent at time $t$, and for every language
$\lang{i} \in \res{\coll}{n}$ that is consistent at time $t$,
we have $\lang{n} \subsetneq \lang{i}$.
\end{dfn}

We now describe the algorithm $\Acc$ that achieves Theorem
\ref{stmt:inf-accuracy}.

At time $t$, consider the sequence of all strictly critical languages, which is a strictly descending chain of languages (under set inclusion) 
\[ \mC_t = \lang{{c_t(1)}}, \lang{{c_t(2)}}, \lang{{c_t(3)}}, \ldots.\] 
Note that the sequence could be either finite or infinite. It satisfies
\[ \lang{{c_t(1)}} \supset \lang{{c_t(2)}} \supset  \lang{{c_t(3)}} \supset \ldots.\]

Let $h_t$ be maximum $j$ such that $c_t(j) \leq t$.

We define $i_1$ arbitrarily, and then for $t > 1$, we define $i_t$ as 
follows.
\begin{itemize}
\item[(a)] If $w_t$ belongs to all the strictly critical languages from 
time $t-1$ (that is, if $w_t \in \lang{{c_{t-1}(j)}}$ for all $j$)
then we define $i_t = c_{t-1}(h_{t-1})$.
\item[(b)] Otherwise, $w_t$ does not belong to all
the strictly critical languages from time $t-1$.
Because the strictly critical languages are nested by proper inclusion,
this means there is some index $k_{t-1}$ such that
$w_t \in \lang{{c_{t-1}(j)}}$ for all $j \leq k_{t-1}$ and 
$w_t \not\in \lang{{c_{t-1}(j)}}$ for all $j > k_{t-1}$.
We define $i_t = c_{t-1}(k_{t-1})$. If no such $k_{t-1}$ exists, then simply let $i_t = c_{t-1}(h_{t-1})$. This edge case that $k_{t-1}$ does not exist will never appear after some finite amount of time.
\end{itemize}
In each step, the algorithm guesses $i_t$ as its index for the language
$\lang{{i_t}}$. The crucial part of this algorithm is that at time $t$, the judgement is based on the sequence of strictly critical languages at time $t-1$ instead of time $t$.

There are some properties we will be using  for our new algorithm. 

\begin{claim}\cite{kleinberg2025density,kleinberg2026density}\label{claim:Accproperty}
    \begin{enumerate}
    \item The chain $\mC_t$ is always a strictly descending chain of languages (under set inclusion) $\lang{{c_t(1)}} \supset \lang{{c_t(2)}} \supset  \lang{{c_t(3)}} \supset \ldots$.
        \item There is always at least one strictly critical for every time step $t$;
        \item If a language $\lang{n}$ is strictly critical in step $t$, then $\lang{n}$
will continue be strictly critical in every step $t' > t$
for which it is consistent at step $t'$.
\item There exists a time step $\zt$ such that for all $t \geq \zt$,
the true language $\lang{z}$ is strictly critical at step $t$. (i.e., $L_z \in \mC_t$). 
\label{stmt:true-strictly-critical}
\item After some finite time, in $\mC_t$, all the languages that appear before $L_z$ will be strict supersets of $L_z$, and the chain up to $L_z$ will not change afterwards. 
    \end{enumerate}
\end{claim}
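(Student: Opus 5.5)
We prove the five items of Claim~\ref{claim:Accproperty} directly from Definition~\ref{dfn:sc}. Throughout, $S_t$ is the set of adversary inputs up to time $t$, and $L_z=K$ is the true language, with $z$ its least index in $\mX$.

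\textbf{Items 1 and 2.} These are immediate from the definition. If $L_{n}$ and $L_{m}$ are both strictly critical at time $t$ with $m<n$, then $L_m\in \res{\coll}{n}$ is consistent. Strict criticality of $L_n$ then gives $L_n\subsetneq L_m$. So the strictly critical languages, listed by increasing index, form a strictly descending chain. For item 2, let $n$ be the least index of a language consistent at time $t$. Such an $n$ exists because $L_z$ is always consistent. The only consistent language in $\res{\coll}{n}$ is then $L_n$ itself, so $L_n$ is vacuously strictly critical. (We read ``for every $L_i\in\res{\coll}{n}$'' as ranging over $i<n$, which is the intended meaning.)

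\textbf{Item 3.} Consistency is monotone decreasing in $t$, since $S_t\subseteq S_{t'}$ for $t<t'$. Suppose $L_n$ is strictly critical at time $t$ and still consistent at time $t'>t$. Any $L_i$ with $i<n$ that is consistent at time $t'$ was already consistent at time $t$. Hence $L_n\subsetneq L_i$, and $L_n$ is strictly critical at time $t'$.

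\textbf{Item 4.} This is the main point. For each $i<z$, either $L_i\supsetneq L_z$, or $L_i\not\supseteq L_z$ (note $L_i\neq L_z$ by minimality of $z$). In the second case some $w\in L_z\setminus L_i$ exists. The adversary enumerates all of $L_z$, so $w$ appears at some finite time $\tau_i$, and $L_i$ is inconsistent from then on. Let $\zt=\max_{i<z}\tau_i$, a finite maximum over finitely many indices. For $t\ge\zt$, every consistent $L_i$ with $i<z$ is a strict superset of $L_z$. Since $L_z$ is always consistent, it is strictly critical at every $t\ge \zt$.

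\textbf{Item 5.} The languages in $\mC_t$ preceding $L_z$ have indices $<z$. For $t\ge\zt$, each of them is consistent, and by item 4 each is a strict superset of $L_z$. Conversely, every $L_i$ with $i<z$ and $L_i\supsetneq L_z$ stays consistent forever, because $S_t\subseteq L_z\subseteq L_i$. Whether such an $L_i$ is strictly critical depends only on which $L_j$ with $j<i$ are consistent. For $j<i<z$, that set of consistent languages is constant from time $\zt$ on: those not containing $L_z$ are dead, and those containing it survive forever. Hence the initial segment of $\mC_t$ up to $L_z$ is identical for all $t\ge\zt$.

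The only delicate step is item 4, where finiteness of $\{i<z\}$ is essential. Everything else is bookkeeping from monotonicity of consistency.
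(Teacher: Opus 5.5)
Your proof is correct and is essentially the standard argument from the cited works; the paper gives no proof of this claim and imports it directly. The ingredients match: the least-index consistent language is vacuously strictly critical, persistence follows from monotonicity of consistency, and the finite maximum over $i<z$ of the times at which some string of $L_z\setminus L_i$ appears yields $t^+$ and the stabilization of the chain up to $L_z$. Your two interpretive choices are also exactly what the statements need to hold as written: reading the quantifier in Definition~\ref{dfn:sc} as ranging over $i<n$, and taking $z$ to be the least index of $K$ so that no $L_i$ with $i<z$ equals $K$.
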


\subsubsection{Some difficulties not present when working with asymptotic lower density. }

We now explain why our lower-asymptotic-density algorithms from
\cite{kleinberg2025density, kleinberg2026density} cannot simply be reused. The issue is not validity but
locality. Prefix-density arguments may charge a missing late string to
generated strings earlier in the order. Banach density forbids this
charge: a sparse interval must be repaired inside that interval. Consequently, the algorithm must be sensitive to the spatial location of the revealed elements, rather than blindly prefer smaller elements. This is
why the algorithm needs both a static Cantor--Bendixson tree and local
pods around adversarial inputs.

Another major convenience with asymptotic density is that it depends only on the poset structure of the language, where the order is given by set inclusion. In particular, all previous results concerning asymptotic density rely almost solely on working with the descending chain $\mathcal{C}_t$. However, when moving upward along the chain $\mathcal{C}$, the adversary can, in principle, stack a long interval that consists almost entirely of adversarial inputs. This will make the Banach lower density (or in general, $f$-window density when $f$ is unbounded) of the algorithm output very low.
This phenomenon, however does not pose a problem for asymptotic lower density. The reason is that every interval used to measure the density is anchored at $1$, which prevents the adversary from taking advantage of such stacked intervals to significantly reduce the measured density. 
For example, when the interval is anchored at $1$, one can employ the charging method \cite{kleinberg2025density} or the pod method \cite{kleinberg2026density}. The key idea is that whenever the algorithm appears to ``climb'' up the chain, this ascent must be preceded by a descent along the chain. During such a descent in the past, the algorithm has already charged a large number of algorithm-generated inputs with smaller elements.
Consequently, even if we observe a long interval consisting primarily of adversarial inputs, there must exist many algorithm-generated elements prior to this interval. As a result, when measuring the ratio of elements occupied by the adversary versus those generated by the algorithm in an interval starting from the origin, the presence of a long adversarial interval does not significantly affect the overall density bound.

These techniques no longer work in the case when working with $f$-window density for non-constant functions $f$, that is, when we move away from the asymptotic density setting. As shown in Theorem~\ref{thm:lowerdensity0f}, we can construct examples in which the $f$-window density can be made arbitrarily close to zero. The long intervals consisting predominantly of adversarial inputs arise precisely from climbing up a descending chain of languages, allowing the adversary to exploit windows that moving away from the origin.

\paragraph{Avoiding zero $f$-window density bounds.}
A natural question is how one might avoid the $f$-window density being close to zero, so that the $f$-window density could be strictly positive or even close to the optimal bound $1/2$?

\textbf{A first naive approach} would be to restrict attention to collections of languages that do not contain arbitrarily long inclusion chains. While such an assumption would indeed preclude the construction underlying Theorem~\ref{thm:lowerdensity0f}, it would also make the problem largely uninteresting. As observed already in the original paper~\cite{kleinberg2024limit}, the question of whether any nontrivial generation breadth can be guaranteed arises precisely because arbitrarily long descending chains exist, forcing the algorithm to continue descending in order to avoid overshooting.

On the other hand, whenever there exists an infinite inclusion chain of languages with a well-defined limit $K$, one almost inevitably encounters arbitrarily long chains of languages $\mathcal{C}_t$. Consequently, in all genuinely interesting settings, structural features akin to arbitrarily long ``chains'' of languages cannot be avoided. Any attempt to control the $f$-window density must therefore accommodate these long chain-like structures rather than seek to eliminate them.

We overcome the ``long chain'' issue while still allowing such chains to persist by exploiting the underlying topology. In particular, the algorithm must be designed to leverage not only the fact that $\mathcal{C}_t$ forms a descending chain—which, as we have seen, may be arbitrarily long in all interesting examples—but also additional structural features.

To illustrate the subtlety of this difficulty, we consider the following set of examples. We first introduced these examples  in~\cite{kleinberg2025density} to demonstrate the necessity of the charging argument and the ``pullback'' technique. They also serve, in a sense, as worst-case examples that provide a useful testbed for our new algorithm, highlighting the delicacy involved in designing methods suitable for studying lower Banach density, and in general lower $f$-window density.

\begin{example}\label{example:1}
All the strings are subsets of $\mathbb{N}$. We start with some markers, at integers $a_0 = 0, a_1 = 3, a_2 = 3^2, a_3 = 3^3, \dots$. The markers get sparser and sparser. Let $\lang{n}$ be the language $\bigcup_{i = 0}^\infty [a_i, a_i + n]$. So $L_n$ is attaching an interval of length $n$ to each of the marker $a_i = 3^i$. 
The pictures below illustrates $\lang{1}$ and $\lang{4}$. 

\vspace{0.2in}
\begin{tikzpicture}[scale=0.3]
    \draw[->] (0,0) -- (50,0) node[right] 
    {};
    
    \foreach \x in {0,3,9,27,81} {
        \draw[thick] (\x,0.2) -- ({\x+1},0.2);
        \node[below] at (\x,0) {\x};
    }
\end{tikzpicture}

\begin{tikzpicture}[scale=0.3]
    \draw[->] (0,0) -- (50,0) node[right] {};
    
    \foreach \x in {0,3,9,27,81} {
        \draw[thick] (\x,0.2) -- ({\x+4},0.2);
        \draw[thick] (\x,0.1) -- (\x,-0.1);
        \node[below] at (\x,0) {\x};
    }
\end{tikzpicture}
So we have an infinite sequence of nested languages $\lang{1} \subsetneq \lang{2} \subsetneq \lang{3} \subsetneq \cdots$. Let the true language be $K = \mathbb{N}$. 
Clearly $L_1 \subset L_2 \subset \dots$ and this is an increasing chain whose union is $K = \mathbb{N}$. 

We will create a family of languages from these building blocks. 

For each positive integer \( i \), let \(\lang{i}^{(0)}\) be the language $\lang{i}$ defined above. Define the strictly increasing function \( f_i^{(1)}(k): \mathbb{N}_+ \to \mathbb{N}_+ \) such that its image consists of the strings of \(\lang{i}^{(0)}\). Such an \( f_i^{(1)} \) is thus unique. For each positive integer \( j \), we construct the language \(\lang{i:j}^{(1)}\) so that its \( \ell \)-th string is given by \( f_i^{(1)}(\lang{j}^{(0)}(\ell)) \), where \(\lang{j}^{(0)}(\ell)\) denotes the \( \ell \)-th string of \(\lang{j}^{(0)}\), ordered from smallest to largest.
This construction ensures that
\[
\bigcup_{j=1}^\infty \lang{i:j}^{(1)} = \lang{i}^{(0)}.
\]

It is easy to see that the set of languages with the same superscript form an infinite perfect tower towards $K = \mathbb{N}$. Among this subset, we  can thus pick infinitely many subsequences which is an infinite perfect tower towards ${K}$. However, if in each step, the algorithm does not overshoot, then the adversary can create arbitrarily long intervals purely consisting of adversary elements. 
For example, if the true language is $K$ but the adversary made the algorithm to guess language in each step is $L_1^{(0)}, L_2^{(0)}, \dots$, then the adversary will make the lower Banach density zero as again we have an infinite  chain  
\[L_1^{(0)} \subset  L_2^{(0)} \subset L_3^{(0)}\subset  \dots \]
So we have to make sure that the algorithm will overshoot occasionally to block these long adversary intervals, but also be careful to still guarantee validity. 

\textbf{One seemingly natural fix} is to enforce that whenever the process transitions from $L_i^{(0)}$ to a different $L_j^{(0)}$, the algorithm first pulls back to $\mathbb{N}$; that is, it generates several elements from $\mathbb{N}$ before proceeding to $L_j^{(0)}$. Validity is preserved, since the algorithm overshoots only finitely many times. This modification resolves the issue for the specific sequence $L_1^{(0)}, L_2^{(0)}, \dots$.
Motivated by this observation, one might attempt a more general remedy: pull back to the \emph{smallest} language that is a limit point of the current sequence (possibly together with a tail). Requiring minimality is natural, as it ensures that validity is maintained.

However, this approach turns out to be insufficient. To see this, suppose the adversary forces the algorithm to produce the sequence
\[
\lang{1:1}^{(1)}, \lang{1:2}^{(1)}, \dots, \lang{2:1}^{(1)}, \lang{2:2}^{(1)}, \dots, \lang{3:1}^{(1)}, \lang{3:2}^{(1)}, \dots.
\]
(As $\Acc$ must be correct infinitely often, we may insert $K = \mathbb{N}$ sparsely into the sequence without affecting the argument.)
Following the above strategy, the algorithm would pull back to the sequence $L_1^{(0)}, L_2^{(0)}, \dots$. However, this pullback sequence exhibits the same difficulty as before: by enforcing this ordering at each time step, the adversary can still drive the Banach density arbitrarily close to zero.
Thus, the seemingly natural strategy of always pulling back to the minimal limit point \textbf{does not suffice}. \end{example}

It turns out that the appropriate way to tackle this problem is to leverage the Cantor--Bendixson rank on the topological space defined earlier in the paper, together with multiple layers of pullback. The key property of the Cantor--Bendixson derivatives is that the limit of any sequence of points of the same rank must have strictly smaller rank. In the example above, $L_{i:j}^{(a)}$ has Cantor--Bendixson rank $a+1$.

A natural way to reason about the number of pullback layers required is via a tree structure, where one pulls back to the least common ancestor. However, the Cantor--Bendixson topology does not canonically induce such a tree (see Figure~\ref{fig:ex1setinclusion}). One obstacle is that being a limit point is inherently a limiting notion: for instance, it is not meaningful to assert that $L_2^{(0)}$ is the limit of a single point such as $L_{2:1}^{(1)}$, and hence no edge should be drawn between them in a tree representation.
Nevertheless, in Example~\ref{example:1}, the languages admit a convenient indexing that allows us to impose a tree structure by connecting each $L_{i:j}^{(a)}$ to $L_i^{(a-1)}$ (see Figure~\ref{fig:ex1}). For fixed $i$ and $a$, the sequence $\{L_{i:j}^{(a)}\}_j$ then converges to $L_i^{(a-1)}$. This perspective resolves the earlier density-zero issue: Although the sequence of hypothesized languages could take the form
$\lang{1:1}^{(1)}, \lang{1:2}^{(1)}, \dots, \lang{2:1}^{(1)}, \lang{2:2}^{(1)}, \dots, \lang{3:1}^{(1)}, \lang{3:2}^{(1)}, \dots,$
once the sequence transitions from the subtree rooted at $L_1^{(0)}$ to that rooted at $L_2^{(0)}$, we are forced to pull back to their least common ancestor, namely $\mathbb{N}$.

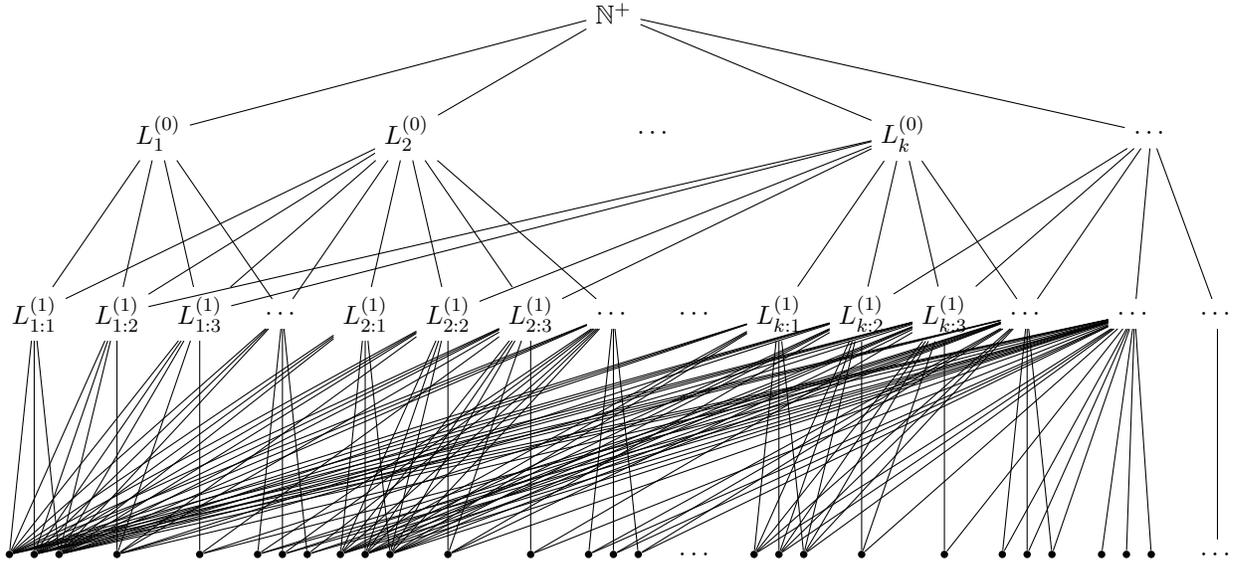
\begin{figure}[h]
\begin{tikzpicture}[x=1.1cm, y=1.6cm, >=stealth, every node/.style={scale=0.9}]

    \node (N) at (1.5, 0) {$\mathbb{N}^+$};

    \node (L10) at (-4, -1) {$L_1^{(0)}$};
    \node (L20) at (-1, -1) {$L_2^{(0)}$};
    \node (dots0) at (2, -1) {$\cdots$};
    \node (Lk0) at (5, -1) {$L_k^{(0)}$};
    \node (LdotsRight) at (8, -1) {$\cdots$};

    \draw (N) -- (L10);
    \draw (N) -- (L20);
    \draw (N) -- (Lk0);
    \draw (N) -- (LdotsRight);

    \node (L11) at (-5.5, -2.5) {$L_{1:1}^{(1)}$};
    \node (L12) at (-4.5, -2.5) {$L_{1:2}^{(1)}$};
    \node (L13) at (-3.5, -2.5) {$L_{1:3}^{(1)}$};
    \node (L1d) at (-2.5, -2.5) {$\cdots$};
    
    \node (L21) at (-1.5, -2.5) {$L_{2:1}^{(1)}$};
    \node (L22) at (-0.5, -2.5) {$L_{2:2}^{(1)}$};
    \node (L23) at (0.5, -2.5) {$L_{2:3}^{(1)}$};
    \node (L2d) at (1.5, -2.5) {$\cdots$};

    \node (Lcd) at (2.5, -2.5) {$\cdots$};

    \node (Lk1) at (3.5, -2.5) {$L_{k:1}^{(1)}$};
    \node (Lk2) at (4.5, -2.5) {$L_{k:2}^{(1)}$};
    \node (Lk3) at (5.5, -2.5) {$L_{k:3}^{(1)}$};
    \node (Lkd) at (6.5, -2.5) {$\cdots$};

    \node (Lend1) at (7.8, -2.5) {$\cdots$};
    \node (Lend2) at (8.8, -2.5) {$\cdots$};

    \foreach \n in {L11, L12, L13, L1d} \draw (L10) -- (\n);
    \foreach \n in {L11, L12, L13, L1d} \draw (L20) -- (\n);
    \foreach \n in {L21, L22, L23, L2d} \draw (L20) -- (\n);
    \draw (Lk0) -- (L12); \draw (Lk0) -- (L13);
    \draw (Lk0) -- (L22); \draw (Lk0) -- (L23);
    \foreach \n in {Lk1, Lk2, Lk3, Lkd} \draw (Lk0) -- (\n);
    \draw (LdotsRight) -- (Lk2); 
    \draw (LdotsRight) -- (Lk3);
    \draw (LdotsRight) -- (Lkd);
    \draw (LdotsRight) -- (Lend1);
    \draw (LdotsRight) -- (Lend2);

    \foreach \pos/\id in {
        -5.8/11a, -5.5/11b, -5.2/11c,
        -2.8/1da, -2.5/1db, -2.2/1dc,
        -1.8/21a, -1.5/21b, -1.2/21c,
         1.2/2da,  1.5/2db,  1.8/2dc,
         3.2/k1a,  3.5/k1b,  3.8/k1c,
         6.2/kda,  6.5/kdb,  6.8/kdc,
         7.4/enda, 7.7/endb, 8.0/endc} 
    {
        \node[fill, circle, inner sep=1.1pt] (Leaf\id) at (\pos, -4.5) {};
    }
    
    \node[fill, circle, inner sep=1.1pt] (Leaf12) at (-4.5, -4.5) {};
    \node[fill, circle, inner sep=1.1pt] (Leaf13) at (-3.5, -4.5) {};
    \node[fill, circle, inner sep=1.1pt] (Leaf22) at (-0.5, -4.5) {};
    \node[fill, circle, inner sep=1.1pt] (Leaf23) at (0.5, -4.5) {};
    \node[fill, circle, inner sep=1.1pt] (Leafk2) at (4.5, -4.5) {};
    \node[fill, circle, inner sep=1.1pt] (Leafk3) at (5.5, -4.5) {};

    \node (Leafcd) at (2.5, -4.5) {$\cdots$};
    \node (Leafendd) at (8.8, -4.5) {$\cdots$};

    \tikzset{
        solid/.style={black}
    }

    \foreach \src in {L11, L12, L13, L1d} {
        \foreach \tgt in {Leaf11a, Leaf11b, Leaf11c} \draw[solid] (\src) -- (\tgt);
    }
    \draw[solid] (L12) -- (Leaf12);
    \draw[solid] (L13) -- (Leaf12); \draw[solid] (L13) -- (Leaf13);
    \foreach \tgt in {Leaf1da, Leaf1db, Leaf1dc} \draw[solid] (L1d) -- (\tgt);

    \foreach \src in {L21, L22, L23, L2d} {
        \foreach \tgt in {Leaf11a, Leaf11b, Leaf11c} \draw[solid] (\src) -- (\tgt);
        \foreach \tgt in {Leaf21a, Leaf21b, Leaf21c} \draw[solid] (\src) -- (\tgt);
    }
    \draw[solid] (L22) -- (Leaf12);
    \draw[solid] (L23) -- (Leaf12); \draw[solid] (L23) -- (Leaf13);
    \foreach \tgt in {Leaf1da, Leaf1db, Leaf1dc} \draw[solid] (L2d) -- (\tgt);
    \draw[solid] (L22) -- (Leaf22);
    \draw[solid] (L23) -- (Leaf22); \draw[solid] (L23) -- (Leaf23);
    \foreach \tgt in {Leaf2da, Leaf2db, Leaf2dc} \draw[solid] (L2d) -- (\tgt);

    \foreach \src in {Lk1, Lk2, Lk3, Lkd} {
        \foreach \tgt in {Leaf11a, Leaf11b, Leaf11c} \draw[solid] (\src) -- (\tgt);
        \foreach \tgt in {Leaf21a, Leaf21b, Leaf21c} \draw[solid] (\src) -- (\tgt);
        \foreach \tgt in {Leafk1a, Leafk1b, Leafk1c} \draw[solid] (\src) -- (\tgt);
    }
    \draw[solid] (Lk2) -- (Leaf12);
    \draw[solid] (Lk3) -- (Leaf12); \draw[solid] (Lk3) -- (Leaf13);
    \foreach \tgt in {Leaf1da, Leaf1db, Leaf1dc} \draw[solid] (Lkd) -- (\tgt);
    \draw[solid] (Lk2) -- (Leaf22);
    \draw[solid] (Lk3) -- (Leaf22); \draw[solid] (Lk3) -- (Leaf23);
    \foreach \tgt in {Leaf2da, Leaf2db, Leaf2dc} \draw[solid] (Lkd) -- (\tgt);
    \draw[solid] (Lk2) -- (Leafk2);
    \draw[solid] (Lk3) -- (Leafk2); \draw[solid] (Lk3) -- (Leafk3);
    \foreach \tgt in {Leafkda, Leafkdb, Leafkdc} \draw[solid] (Lkd) -- (\tgt);

    \foreach \tgt in {Leafenda, Leafendb, Leafendc} \draw[solid] (Lend1) -- (\tgt);
    \draw[solid] (Lend2) -- (Leafendd);
    
    \foreach \tgt in {Leaf11a,Leaf11b,Leaf11c,Leaf12,Leaf13,Leaf1da,Leaf1db,Leaf1dc,Leaf21a,Leaf21b,Leaf21c,Leaf22,Leaf23,Leaf2da,Leaf2db,Leaf2dc,Leafk1a,Leafk1b,Leafk1c,Leafk2,Leafk3,Leafkda,Leafkdb,Leafkdc} \draw[solid] (Lend1) -- (\tgt);

\end{tikzpicture}
\caption{Graph representing Example \ref{example:1}, where edges between adjacent CB ranks indicate set inclusion}
\label{fig:ex1setinclusion}
\end{figure}

\begin{figure}[h]
\begin{tikzpicture}[
 level distance=1.5cm,
 level 1/.style={sibling distance=3cm},
 level 2/.style={sibling distance=0.7cm},
 level 3/.style={sibling distance=0.5cm} 
]
\node (N) {$\mathbb{N}_+$}
    child {node (K1) {$\lang{1}^{(0)}$}
        child {node (L1) {$\lang{1:1}^{(1)}$}
            child {node {}}
            child {node {}}
        }
        child {node (L2) {$\lang{1:2}^{(1)}$}
            child {node {}}
            child {node {}}
        }
        child {node (Ld) {$\lang{1:3}^{(1)}$}
            child {node {}}
            child {node {}}
        }
        child {node (La) {$\dots$}
            child {node {}}
            child {node {}}
        }
    }
    child {node (K2) {$\lang{2}^{(0)}$}
        child {node (Lb1) {$\lang{2:1}^{(1)}$}
            child {node {}}
            child {node {}}
        }
        child {node (Lb2) {$\lang{2:2}^{(1)}$}
            child {node {}}
            child {node {}}
        }
        child {node (Lbd) {$\lang{2:3}^{(1)}$}
            child {node {}}
            child {node {}}
        }
        child {node (Lb) {$\dots$}
            child {node {}}
            child {node {}}
        }
    }
    child {node (Kd) {$\dots$}}
    child {node (Ki) {$\lang{i}^{(0)}$}
        child {node (Li1) {$\lang{i:1}^{(1)}$}
            child {node {}}
            child {node {}}
        }
        child {node (Li2) {$\lang{i:2}^{(1)}$}
            child {node {}}
            child {node {}}
        }
        child {node (Lid) {$\lang{i:3}^{(1)}$}
            child {node {}}
            child {node {}}
        }
        child {node (Li) {$\dots$}
            child {node {}}
            child {node {}}
        }
    }
    child {node (Kd) {$\dots$}}
;
\end{tikzpicture}
\caption{One possibility of a spanning tree in the the Cantor-Bendixson directed graph of Example \ref{example:1}}
\label{fig:ex1}
\end{figure}
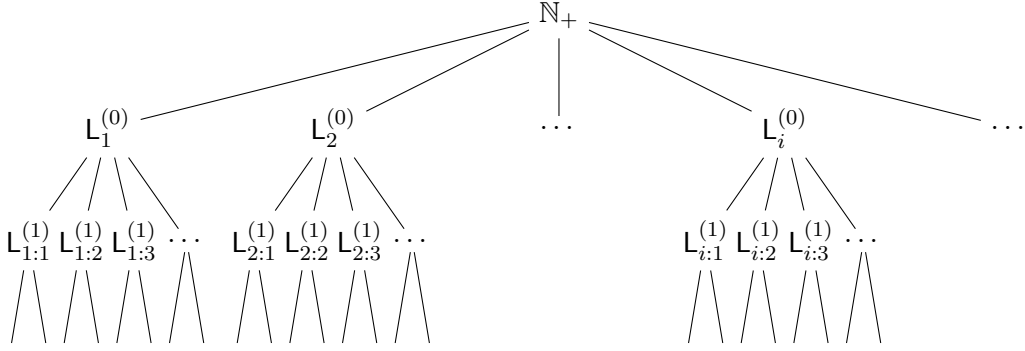

However, in more general settings, it is highly ambiguous how to extract an appropriate tree structure. Even in Example~\ref{example:1}, the choice is not canonical: for instance, $L_{1:1}^{(1)}$ may also lie in a sequence converging to $L_2^{(0)}$. Why, then, should we omit an edge from $L_{1:1}^{(1)}$ to $L_2^{(0)}$ while retaining only the edge to $L_1^{(0)}$?

More broadly, the collection of languages can be intricate and highly intertwined, making it infeasible to partition them into a clean tree structure. Consequently, determining an appropriate spanning tree is far from straightforward. The topology guarantees only a ranked (and potentially very dense) directed graph, where edges may connect nodes across different layers. If one were to include all edges corresponding to set inclusion (i.e., from a subset to a superset), the resulting graph would be overwhelmingly dense. This raises a fundamental question: which edges should be retained in order to obtain a meaningful spanning tree?
There are additional subtleties as well. For example, in some cases it may not even be possible to explicitly describe an infinite sequence, either because minimal or maximal elements fail to exist, or because the sequence itself contains an overabundance of limit points  in which case it is even unclear how we should index the languages in a ``sequence" (because the space $\mX$ might not be complete with respect to the topology).

A more fundamental difficulty lies in selecting a tree that ensures the resulting algorithm satisfies validity. Specifically, we must guarantee that, after some finite time, the algorithm only outputs languages contained in the subtree rooted at the true language $K$. At the same time, this requirement must be balanced with the density constraint: it is essential that any lateral move forces a pullback of at least one level.
Furthermore, we require the tree to be \emph{static}, rather than dynamically adjusted throughout the process, in order to obtain a uniform upper bound on the total number of upward traversals. This imposes additional constraints, as the tree must remain suitable regardless of the adversarial enumeration.
In the algorithm described below, we construct the spanning tree $\mT$ carefully so as to address all of these challenges. 

\subsubsection{Algorithm}
\label{subsubsec:define-alg}

The new algorithm builds on $\Acc$, with the key innovation being the incorporation of the Cantor–Bendixson rank of each element. The construction involves several delicate choices of parameters and setup; we highlight these decisions at the points where they arise.

Since the Cantor--Bendixson rank of $\mX$ is finite, say $r$, we assign a rank to each element $L \in \mX$ as follows.
By definition, $r$ is the smallest integer such that
$ \mX^{(r)} = \emptyset. $
We define elements of $\mX^{(r-1)}$ to have rank $r$. More generally, for each $1 \le i \le r$, elements in
$ \mX^{(i-1)} \setminus \mX^{(i)} $
are assigned rank $i$. In particular, the elements in
$ \mX^{(0)} \setminus \mX^{(1)} = \mX \setminus d(\mX) $
are exactly those of rank $0$.
By the definition of the derived sets, each rank $0,1,\dots,r$ is attained by at least one element of $\mX$. For $L \in \mX$, we write
\[ \mathrm{CB}(L) \]
to denote its Cantor--Bendixson rank.


We define a {\it CB Structural tree} $\mT$ as follows. 

\paragraph{CB Structural Tree $\mT$.}
For each language $L = L_n$, where $n$ denotes its index in the ordering of $\mX$, we define a directed edge from $L$ as follows. The edge points to a language $L'$ among the first $n-1$ languages of $\mX$ satisfying:
(i) $CB(L') > CB(L)$, and
(ii) there exists a sequence in $\mX$ that includes $L$ and has $L'$ as a limit point.
Among all such candidates, we choose $L'$ with the smallest index in $\mX$. If no such $L'$ exists, no edge is drawn from $L$.

In this way, we obtain a static directed tree that depends only on $\mX$ and the underlying topology. \footnote{This is in contrast to the dynamic tree considered in \cite{kleinberg2025density}, which depends on $\mC_t$. Such a dynamic construction would not suffice in our setting: if the tree evolves over time, the relationships between vertices may change, making it impossible to bound the number of times we climb up the chain in $\mC_t$. Furthermore, as discussed earlier, working with a general directed graph is insufficient; a tree structure is essential for our analysis.}

\begin{remark}
It turns out that if $L$ is the $n$-th language in the original ordering of $\mX$, then defining the parent of $L$ to be the minimal superset of $L$ with strictly larger Cantor--Bendixson rank among the first $n-1$ languages is the correct choice. A natural alternative is to select the ``minimum'' superset of $L$ among all languages of higher Cantor--Bendixson rank; while one can address the issue that such a minimum need not exist, this approach still fails. (Please see the discussion in the previous subsection). 
The restriction to the first $n-1$ elements is the crucial ingredient that makes Claim~\ref{claim:parent}  hold. In particular, it ensures that after some finite stage, the identified languages are always descendants of the true language $K$, which in turn guarantees the correctness of the algorithm.
\end{remark}

\vspace{-0.5in}
\begin{remark}
In our definition, the tree is ranked in the usual sense; however, we do not require this rank to coincide with the Cantor--Bendixson rank of the vertices in $\mX$. While it would be convenient to align these two notions of rank, this appears difficult to guarantee from the outset, and especially after removing inconsistent languages from the tree during the process of the algorithm. Instead, we only require that the parent of any vertex has strictly larger Cantor--Bendixson rank than the vertex itself, without imposing that their ranks differ by exactly one. This relaxation turns out to be sufficient for our subsequent analysis.
\end{remark}

\begin{defn}
Given a subset $A \subset \mathbb{N}$ and two elements $x,y \in A$, the \emph{relative distance} between $x$ and $y$, denoted \[d_A(x,y),\] is the difference between their positions in the increasing ordering of $A$.
\end{defn}

For example, if $A = \{1,4,5,7,8\}$ and $x=4$, $y=7$, then $x$ is the second smallest element of $A$ while $y$ is the fourth smallest element; hence $d_A(x,y)=2$.

It is essential to work with relative distance rather than the usual distance in $\mathbb{N}$. There is no way for the algorithm to know the true language, and therefore it cannot measure distances between elements relative to the ground truth, nor can it determine how dense the true language is in $\mathbb{N}$. Instead, all measurements must be defined with respect to the current hypothesis, which motivates the notion of relative distance.

After the setup, we describe the algorithm as follows.

For the notation, we say that an element is \emph{available} at time $t$ if it has not been used by either the adversary or the algorithm up to time $t$.
\paragraph{Algorithm.}
Let $(s_t)_{t \in \mathbb{N}}$ be a sequence of increasing positive integers and $(z_n)_{n \in \mathbb{N}}$ an increasing sequence of positive integers. These will be related to the ``pod diameter" in the algorithm. 

Initialize pod $P = \emptyset$. Pod will be the set of candidate set of strings that the algorithm will prioritize to generate. 

Suppose we have just completed time $t-1$ and are now at time $t$. The adversary presents the input $w_t$. Recall that $L_{i_{t-1}}$ denotes the language identified by $\Acc$ at time $t-1$.

\begin{enumerate}
\item If $L_{i_t} = L_{i_{t-1}}$, define $PB(t)= L_{i_t}$. 

\item If $L_{i_t} \neq L_{i_{t-1}}$, define the {\it pullback} $PB(t)$ as the common ancestor of $L_{i_{t-1}}$ and $L_{i_{t}}$ in $\mT$ after removing inconsistent languages if exists. If it does not exist, then $PB(t) = L_{i_{t}}$. 
\end{enumerate}
In either situation, 
add to $P$ the following elements: for each $w_i$, $i \leq t$, define the diameter parameter
\[
\gamma_t = \max(\gamma_{t-1}+1, \max(s_t, z_{w_i})).
\] (the first $\gamma_{t-1}+1$ is to make sure that $\gamma_t$ is increasing)
from $PB(t)$ the $\gamma_t$ available elements immediately preceding $w_i$ in $PB(t)$, and the $\gamma_t$ available elements in $PB(t)$ that are immediately following $w_i$, and then plus $w_i$. (For convenience of the presentation of the proof later, we also include in $P$ all the elements in $PB(t)$ that are not available).

Write the current pod $P = P_t$. 

Let the algorithm output $o_t$ be the smallest available integer in $L_{i_{t}} \cup P_t$ that minimizes $d_{L_{i_t} \cup P_t \cup \{w_t\}}(o_t, w_t)$. 
\begin{remark}
It is crucial for the later analysis that $\gamma_t$ is increasing, regardless of the elements used at time $t$. However, it is not sufficient to define $\gamma_t$ solely as a function of $t$, as is done when working with asymptotic lower density. In the case of lower Banach density, the relevant window size is sensitive to the location in $\mathbb{N}$. Consequently, if the algorithm or the adversary produces elements that are large in $\mathbb{N}$ too quickly, the growth rate of $\gamma_t$ must increase accordingly to keep pace with this progression. Therefore, $\gamma_t$ must depend not only on $t$, but also on the inputs provided by the adversary.
\end{remark}

\vspace{-0.5in}
\begin{remark}
It is crucial that the algorithm makes decisions based on the relative density 
$d_{L_{i_t} \cup P_t \cup \{w_t\}}(o_t, w_t)$, rather than relative to the true language $K$ or to $\mathbb{N}$. The former is impossible since the algorithm has no access to $K$, while the latter is inadequate because the distribution of $K$ within $\mathbb{N}$ may be highly non-uniform and unknown over the unrevealed elements. 
However, this reliance on relative density is a primary source of difficulty in the analysis, as it complicates the task of establishing guaranteed density bounds.
\end{remark}

\subsubsection{Validity guarantee}
\label{subsubsec:validity}

We first show that the algorithm converges in the limit.
To this end, we use the following simple but important property of the Cantor--Bendixson rank.
\begin{claim}
Suppose $K$ is the limit point of a sequence of languages $J_1, J_2, \dots$. Then there exists an index $n$ such that for all $j > n$, the Cantor--Bendixson rank of $J_j$ is strictly smaller than that of $K$.
\end{claim}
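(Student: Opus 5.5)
The plan is to argue directly from the definition of the derived sets $\mX^{(i)}$, using that the topology $\mathcal{T}$ is generated by the basic sets $U_{\lan,F}$. I read ``$K$ is the limit point of $J_1, J_2, \dots$'' as saying that the sequence converges to $K$ in $(\mX,\mathcal{T})$ with $J_j \neq K$. That is, for every basic neighborhood $U_{\lan,F}\ni K$ we have $J_j\in U_{\lan,F}$ for all sufficiently large $j$. This is exactly the situation of an infinite perfect tower with terminal language $K$, via Claim~\ref{claim:iptequiv_rephrased}. Indeed, given $U_{\lan,F}\ni K$, we have $J_j \subsetneq K \subseteq \lan$. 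Moreover, since every string of the finite set $F\subseteq K$ is eventually fixed, $F\subseteq J_j$ for all large $j$. Terms equal to $K$ are excluded; this is automatic for towers by condition (i).

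\textbf{Step 1: locate $K$ in the hierarchy.} Since the rank is finite, there is a unique $m$ with $K\in \mX^{(m)}\setminus \mX^{(m+1)}$. The rank $\mathrm{CB}(K)$ is determined by $m$ through the monotone indexing fixed above. Hence it suffices to prove that $J_j\notin \mX^{(m)}$ for all large $j$. Then each such $J_j$ lies in some $\mX^{(m')}\setminus\mX^{(m'+1)}$ with $m'<m$, and so has strictly smaller rank, because the sets $\mX^{(i)}$ are decreasing.

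\textbf{Step 2: isolate $K$ inside $\mX^{(m)}$.} Since $K\notin \mX^{(m+1)} = d(\mX^{(m)})$, the language $K$ is not a limit point of $\mX^{(m)}$. So there is an open set containing $K$, which we may shrink to a basic set $U_{\lan,F}\ni K$, such that
\[
U_{\lan,F}\cap \mX^{(m)} = \{K\}.
\]

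\textbf{Step 3: conclude.} By convergence, there is $n$ such that $J_j\in U_{\lan,F}$ for all $j>n$. Since $J_j\neq K$, Step 2 forces $J_j\notin\mX^{(m)}$. Therefore $\mathrm{CB}(J_j)<\mathrm{CB}(K)$ for all $j>n$, as claimed.

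\textbf{Main obstacle.} The only delicate point is interpretive. If ``limit point of a sequence'' meant merely an accumulation point, so that every neighborhood contains $J_j$ for infinitely many $j$, the same argument would only give the conclusion along a subsequence. The statement ``for all $j>n$'' therefore requires genuine convergence, which is what the perfect-tower characterization supplies. I would make this explicit and verify convergence for towers as in the first paragraph. A secondary check is that the rank indexing used in the paper is monotone in the derived-set level, which holds by construction since the $\mX^{(i)}$ are nested.
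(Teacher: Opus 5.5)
Your proof is correct and is essentially the paper's argument run directly rather than by contradiction. The paper extracts a subsequence of terms with rank at least $\mathrm{CB}(K)$, notes they lie in the corresponding derived set and still converge to $K$, and concludes that $K$ would be a limit point of that set; your isolating basic neighborhood $U_{\lan,F}$ with $U_{\lan,F}\cap\mX^{(m)}=\{K\}$ is the contrapositive of that same step. Working with the derived-set level $m$ instead of the paper's rank labels, and stating explicitly that the hypothesis means genuine convergence with $J_j\neq K$, are both appropriate, since the paper's proof relies on these points implicitly.
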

\begin{proof}
Suppose, for contradiction, that the claim does not hold. Then there exists a subsequence of $J_1, J_2, \dots$ that still converges to $K$ and such that each element in the subsequence has Cantor--Bendixson rank at least that of $K$. With a slight abuse of notation, denote this subsequence again by $J_1, J_2, \dots$.
Let $\alpha = CB(K)$. Since $CB(J_n) \ge \alpha$ for all $n$, it follows from the definition of the Cantor--Bendixson rank that $J_n \in \mX^{(\alpha)}$ for every $n$. Hence, $\{J_n \mid n \in \mathbb{N}\} \subseteq \mX^{(\alpha)}$.
Since the sequence $(J_n)$ converges to $K$, this implies that $K$ is a limit point of $\mX^{(\alpha)}$, and therefore $K \in \mX^{(\alpha+1)}$. This implies that $CB(K) \ge \alpha + 1$, contradicting the assumption that $CB(K) = \alpha$.
\end{proof}

By the property of $\Acc$ in Claim~\ref{claim:Accproperty}, after some finite time, $K$ is always strictly critical, and the identified language $L_{i_t}$ is a subset of the true language $K = L_z$. 
Moreover, we may assume that we are beyond a finite stage at which all remaining consistent languages $L_i$ with $i < z$ are supersets of $K$. Indeed, among these first $z-1$ languages, any language that is not a superset of $K$ will eventually be inconsistent when the adversary presents a string that belongs to $K$ but not to that language.

It suffices to show that $PB(t)$ is also a subset of $K$ after some finite time.

\begin{lem}\label{lem:PBsubsetK}
After some finite time, all identified languages $L_{i_t}$ lie in the subtree rooted at $K$. Consequently, $PB(t)$ also lies in the subtree rooted at $K$, which immediately implies that $PB(t) \subseteq K$.
\end{lem}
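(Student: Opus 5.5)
The plan is to reduce the lemma to a statement about parent pointers in $\mT$, and then combine the properties of $\Acc$ in Claim~\ref{claim:Accproperty} with the rank-drop claim just proved. First I will record a structural fact about $\mT$. I will use the parent rule in the form described in the first remark after the definition of $\mT$: the parent of $L_n$ is the least-index superset of $L_n$ among $L_1,\dots,L_{n-1}$ with strictly larger Cantor--Bendixson rank. So every edge of $\mT$ goes from a language to a strict superset of strictly smaller index and strictly larger rank. If the parent is instead only a limit point of some sequence containing $L_n$, I must first check that a parent in this sense is still a superset; that is a point to verify, not something already established. Granting this, by induction along parent pointers every descendant of $K$ is a subset of $K$. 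Also, any common ancestor in $\mT$ of two vertices of the subtree rooted at $K$ either lies in that subtree or is an ancestor of $K$. The pullback is taken as the lowest common ancestor after removing inconsistent languages, and $K$ itself is a common ancestor of any two descendants of $K$ and stays consistent forever. Hence $PB(t)$ is $K$ or a descendant of $K$, so $PB(t)\subseteq K$. This makes the second sentence of the lemma an immediate consequence of the first.

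For the first sentence, I fix a time $t_0 \ge \zt$ beyond which the following hold. The chain $\mC_t$ is frozen up to $K=L_z$ (item 5 of Claim~\ref{claim:Accproperty}). Every consistent $L_j$ with $j<z$ is a strict superset of $K$. Every guess satisfies $L_{i_t}\subseteq K$. By the construction of $\Acc$, every guess is then either $K$ or a strictly critical language lying below $K$ in $\mC_t$, so its index exceeds $z$. The remaining claim is that the ancestor chain of such a guess $L_n$ passes through $K$. Climbing the parent pointers strictly decreases index and strictly increases rank, so the chain must eventually reach an index $\le z$. I must show that the first vertex with index $\le z$ is $K$, and not some $L_j$ with $j<z$, $L_j \supsetneq K$, which the tie-breaking by least index could prefer. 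Such an $L_j$ stays consistent forever, so removing inconsistent languages does not discard it.

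This bypass is the main obstacle, and I would attack it by contradiction using the rank-drop claim. Suppose infinitely many guesses $L_{i_t}$, $t\ge t_0$, escape the subtree of $K$. Each escaping guess has some ancestor $A$, the last vertex of its ancestor chain with index $>z$, whose parent has index $<z$ and bypasses $K$. Since only finitely many languages have index $<z$, I pass to a subsequence of escaping guesses that all bypass $K$ through the same superset $J\supsetneq K$. I then use that $\Acc$ is accurate infinitely often and that each guess contains all adversarial inputs so far. From this I aim to show that these guesses, or their ancestors $A$, eventually fix every string of $K$ while being contained in $K$, so they form an infinite perfect tower converging to $K$. By Claim~\ref{claim:iptequiv_rephrased} and the rank-drop claim, all but finitely many of them then have rank $<\mathrm{CB}(K)$. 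Each such language $L_n$ has $n>z$, is a subset of $K$, and lies in a sequence converging to $K$ with $\mathrm{CB}(K)$ larger. So $K$ is an admissible parent candidate, and for the bypassing ancestors $A$ so is $J$. The delicate step is to show that a candidate $J$ with index $<z$ cannot persist along this tower. My first attempt would use that $J\supsetneq K$ is itself a limit point along the same sequences, and compare ranks to derive a contradiction. If that fails, I would pass to the ancestors $A$ and iterate the argument up the finite rank hierarchy, which terminates because the rank is bounded by $r$.
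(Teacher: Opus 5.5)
Your reduction of the second sentence to the first is fine. The first sentence, however, is not established. The step you call delicate, excluding a bypass of $K$ through a fixed $J\supsetneq K$ of index $<z$, is left open. Under the parent rule you adopt (least-index admissible superset) that step is false, and so is the lemma.

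Take $L_1=\mathbb{N}$, $L_2=K=2\mathbb{N}$, isolated languages $C_n=4\mathbb{N}\cup\{2,6,\dots,4n-2\}$ listed in order of $n$ (they converge to $K$), and isolated languages $B_n=K\cup\{1,3,\dots,2n-1\}$ (they make $\mathbb{N}$ a limit point). The rank is finite. $K$ is a root because $\mathbb{N}$ has the same rank, and every $C_n$ hangs directly under $\mathbb{N}$, so the subtree of $K$ is $\{K\}$. Yet suppose the adversary reveals $2,6,10,\dots$ one at a time, padding in between with fresh multiples of $4$. Then the smallest consistent $C_n$ is the deepest strictly critical language, and $\Acc$ guesses it infinitely often.

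No rank comparison can rule this out. The bypassing sets lie inside $K$, so they do not accumulate at $J=\mathbb{N}$, and $\mathrm{CB}(J)>\mathrm{CB}(C_n)$ contradicts nothing. Your doubt about the literal parent condition is also justified: every limit point $L'$ satisfies it (prepend $L$ to a tower converging to $L'$), so inclusion has to be written into the rule.

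The missing ingredient lies in the construction of $\mT$, not in the topology. The paper's Claim~\ref{claim:parent} proves a \emph{local} statement: past time $T$, a consistent $L\subsetneq K$ with $\mathrm{CB}(L)<\mathrm{CB}(K)$ is not a root (since $K$ is admissible), and its parent lies inside $K$. This rests on two preferences of the parent rule. A candidate of index $<z$ is consistent, hence a strict superset of $K$, and must lose to $K$ by minimality. A candidate of index $>z$ must lose to $K$ by index. Plain inclusion-minimality with index tie-breaking does not secure the second preference when $K$ is not itself minimal among the candidates. A rule that secures both: start from the least-index admissible candidate and repeatedly pass to the least-index admissible candidate strictly inside the current one. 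This descent must reach $K$ and then stays inside $K$.

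Your global framing has a second hole. The ancestor chain need not reach an index $\le z$. It can stop at a root of index $>z$, or leave $K$ above an ancestor of rank $\ge\mathrm{CB}(K)$, where $K$ is not admissible. The paper closes this by applying the rank-drop claim to ancestors instead of guesses. Walking up from an escaping guess, every ancestor of rank $<\mathrm{CB}(K)$ is a non-root whose parent lies in $K$. So the lowest ancestor of rank $\ge\mathrm{CB}(K)$ is a strict subset of $K$ containing $S_t$. These ancestors converge to $K$, contradicting the rank-drop claim.
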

\begin{proof}
It is clear that if a language $L$ becomes inconsistent, then all languages in the subtree rooted at $L$ also become inconsistent.

Moreover, if both $L_{i_t}$ and $L_{i_{t-1}}$ lie in the subtree rooted at $K$, then by the definition of $\mT$, it follows that $PB(t) \subseteq K$.

Otherwise, since $L_{i_t} \subseteq K$, if $L_{i_t}$ is not in the subtree rooted at $K$, then one of the following must hold: 
(I) $L_{i_t}$ lies in a tree disjoint from the one containing $K$, or 
(II) $L_{i_t}$ lies in the same tree as $K$ but is not a descendant of $K$.

We will prove the following claim.
\begin{claim}\label{claim:parent}
Suppose we are beyond some finite time $T$ at which all languages with index smaller than that of the true language are supersets of $K$ (this is guaranteed by Claim~\ref{claim:Accproperty}). Let $L \subset K$ be a language that remains consistent at some time $t > T$. 

If there exists a sequence containing $L$ that converges to $K$ and $CB(L) < CB(K)$, then $L$ is not a root, and its parent in $\mT$ is a subset of $K$.
\end{claim}
\begin{proof}
    Since  $L \subset K$ and $t > T$, the index of $L$ in the original language ordering will be larger than that of $K$. Since there  is a sequence including $L$ whose limit is $K$ and $CB(L) < CB(K)$, $L$ will not be a root of a tree since $K$ is one option to be its  parent. 

    Let the parent of $L$ be $J \neq K$ as otherwise we are done. If the index of $J$ is smaller that of $K$, by our definition of $T$, $J$ will be a strict superset of $K$, contradicts our construction of $\mT$ the minimality of the parent. Now suppose the index of $J$ is larger that of $K$. If $J$ and $K$ are incomparable in set-inclusion sense, then by our priority when building $\mT$, the parent of $L$ should have been $K$. If $J$ is a superset of $K$, again we should have picked $K$ as its parent instead of $J$. Thus $J$ is a subset of $K$. 
\end{proof}
If case (II) occurs infinitely often, then since the adversary eventually enumerates all strings in $K$ and $L_{i_t} \subseteq K$, these languages form a sequence whose limit is $K$. However, by the definition of the Cantor--Bendixson rank and derived sets, the tail of this sequence must consist of languages whose ranks are strictly smaller than that of $K$.
By Claim~\ref{claim:parent}, for each such language with rank strictly smaller than that of $K$, its lowest ancestor of rank at least that of $K$ is again a strict subset of $K$. Note that such an ancestor must exist: since the language is not a descendant of $K$, its least common ancestor with $K$ is a strict ancestor of $K$, and therefore has Cantor--Bendixson rank at least that of $K$.
These ancestors thus form a sequence whose limit is $K$. However, all of them have Cantor--Bendixson rank equal to that of $K$, contradicting the definition of the Cantor--Bendixson rank.

If case (I) occurs infinitely often, then the tail of the resulting sequence again has $K$ as its limit point, and hence consists of languages whose Cantor--Bendixson ranks are strictly smaller than that of $K$. Let $A_{t_1}, A_{t_2}, \dots$ denote the parents of these identified languages, listed in chronological order.
By Claim~\ref{claim:parent}, the tail of this parent sequence consists of strict subsets of $K$, and thus also forms a sequence whose limit is $K$. It follows that the tail must again have Cantor--Bendixson ranks strictly smaller than that of $K$.
Repeating this argument, we consider the sequence obtained by taking parents iteratively. Each time, the tail of the resulting sequence has limit $K$, and therefore consists of languages with strictly smaller Cantor--Bendixson rank than $K$. After at most $r$ iterations—since the Cantor--Bendixson rank of $\mX$ is at most $r$ and the depth of $\mT$ is at most $r$—we obtain a sequence consisting of roots whose limit is $K$, and hence whose ranks are strictly smaller than that of $K$.
However, each of these roots is also a strict subset of $K$, and by Claim~\ref{claim:parent}, such languages cannot be roots. This is a contradiction.
\end{proof}

\subsubsection{Properties of the Algorithm}
\label{subsubsec:alg-properties}

We now show that we can achieve the $f$-window lower density to the optimal $1/2$. By the monotonicity lemma (Claim~\ref{claim:monotone}), it suffices to establish this for the lower Banach density.

We first introduce the notion of a relative interval.

\begin{defn}[Relative interval]
Given a subset $A \subset \mathbb{N}$ and integers $1 \le a \le b$, the \emph{interval} $I = [a,b]_A$ is defined to be the set of elements of $A$ from the $a$-th smallest to the $b$-th smallest (inclusive) in the natural ordering of $A$.
\end{defn}

The following lemma is the main lemma of the proof.
\begin{lem}\label{lem:ell}
    At the end of the algorithm, for any $\ell > 0$, there exists some $n_0(\ell)$ such that for all $a > n_0$, the set of strings only occupied by the adversary in the interval $[a, a+\ell']_K$ is $1/2 + o_{\ell}(1)$ for all $\ell \leq \ell' \leq 2\ell$. 
\end{lem}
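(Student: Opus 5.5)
We read the lemma as saying that the fraction of $[a,a+\ell']_K$ occupied only by the adversary is at most $1/2+o_\ell(1)$. The plan is to reduce it to the finite interval game described in the proof overview, played on a fixed relative interval $I=[a,a+\ell']_K$. Adversarial inputs landing in $I$ are black stones; algorithmic outputs landing in $I$ are white stones. Each adversarial input $w_t\in I$ is followed by an output $o_t$. By the output rule, $o_t$ is an available element of $L_{i_t}\cup P_t$ minimizing the relative distance to $w_t$. Hence $o_t$ lands in $I$ unless every available element of $L_{i_t}\cup P_t$ inside (the $K$-span of) $I$ is exhausted, or the nearest available one lies outside $I$. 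Call a step of the second kind a \emph{leak}. Then
\[
\#\text{black}\le \#\text{white}+\#\text{leaks}+O(1),
\]
so it suffices to show that $\#\text{leaks}=o(\ell)$, uniformly in $a>n_0(\ell)$.

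First, fix the finite time $T$ after which, by Lemma~\ref{lem:PBsubsetK} and Claim~\ref{claim:Accproperty}, all of $L_{i_t}$, $PB(t)$ and the pods are subsets of $K$ and the chain up to $K$ is frozen. Then choose $n_0(\ell)$ so large that every element of $K$ beyond position $n_0$ is still available at time $T$; such elements only enter play after $T$. Because $\gamma_t\ge z_{w_i}$ grows with location, we can also take $n_0$ so large that any pod created around an adversarial input in $I$ has diameter far exceeding $2\ell$. Once the pullback language at that moment covers $I$, the pod therefore contains all of $I\cap PB(t)$.

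Second, define the visibility set at time $t$ as $V_t=(L_{i_t}\cup P_t)\cap I$. Pods only grow, so $V_t$ shrinks only through changes of $L_{i_t}$. The key structural claim is that, during the life of $I$, the ``visible part'' can be reset only a bounded number of times (at most $r+1$). Whenever $L_{i_t}$ moves laterally in $\mT$, the pullback goes to a common ancestor of strictly larger CB rank, and the pod around the current $w_t$ absorbs $I\cap PB(t)$. Later guesses inside the subtree of $PB(t)$ keep that pod content visible. A new loss of visibility requires leaving the subtree, which forces a further pullback up the static tree. Since $\mT$ has depth at most $r$ below the frozen portion containing $K$, there are at most $r$ such escalations. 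This splits the history of $I$ into $O(r)$ phases, within each of which $V_t$ is monotone nondecreasing.

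Third, bound the leaks inside one phase by $O(\log \ell)$. At a leak, the nearest available visible element to $w_t$ lies outside $I$. So $w_t$ sits inside a run of used elements of $V_t$ whose length exceeds its relative distance to the boundary of $I$. The next leak must then occur from a position whose used run is at least about twice as long, since the previous leak consumed the nearest free slot outside and $V_t$ does not shrink. This doubling in relative distance gives at most $\log_2 \ell+O(1)$ leaks per phase, and hence $O(r\log\ell)=o(\ell)$ leaks in total. That yields the $1/2+o_\ell(1)$ bound uniformly over $\ell\le\ell'\le 2\ell$.

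The main obstacle is step two: showing that intervening events (pod additions elsewhere, guesses returning to $K$ infinitely often, removal of inconsistent languages altering $\mT$) cannot reset visibility more than boundedly often. I would first try to prove a potential-function statement: the CB rank of the lowest common ancestor of all guesses made since the first adversarial input in $I$ is nondecreasing and bounded by $r$. I would derive it from the parent rule of $\mT$ together with Claim~\ref{claim:parent}.
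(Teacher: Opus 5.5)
Your overall architecture matches the paper's: an interval game on $I$, leaks, a bounded number of phases driven by the lowest common ancestor in $\mT$ of the guesses since the first input in $I$, and a doubling bound per phase. However, the structural fact that the doubling needs is stated incorrectly, and you never supply it.

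\textbf{Why monotonicity is not enough.} The recursion $x_{i+1}\ge 2x_i+1$ in your step three requires three things:
\begin{itemize}
\item the \emph{same} set measures relative distances at two consecutive leaks;
\item $a_{i+1}$ was already visible and available at the earlier leak;
\item $o_{i+1}$ was already visible and available at the earlier leak.
\end{itemize}
Knowing that ``$V_t$ does not shrink'' does not give this. Suppose visible points keep appearing between leaks, inside $I$ next to a used run and just outside $I$. Then the greedy comparison with $a_{i+1}$ is unavailable at the earlier leak, and nothing prevents leaks at bounded distance from recurring at every step. So the quantity to bound is the number of times the visible set \emph{changes}, in particular grows, not the number of ``resets.''

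At the relevant times $t_i$, visibility near $I$ never shrinks, because pods are permanent. Leaving a subtree triggers a pullback that enlarges what is visible rather than reducing it.

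Two further problems follow from the same gap:
\begin{itemize}
\item Your $V_t$ is restricted to $I$. But leaks land outside $I$, and the doubling compares relative distances in $L_{i_t}\cup P_t$, including its part outside $I$, which your argument never controls.
\item Your ``$+O(1)$'' for steps where the visible part of $I$ is exhausted is justified only once you know that each such step forces the visible set to change, and that such changes are boundedly many.
\end{itemize}

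\textbf{The missing ingredient} is the paper's sandwich (Lemmas \ref{lem:PtZt} and \ref{lem:important}). Let $Z_i$ be the LCA of all guesses on $[t_1,t_i]$.
\begin{itemize}
\item Claim \ref{claim:tree} gives a consecutive pair of guesses whose LCA is exactly $Z_i$, so some pullback equals $Z_i$.
\item The pod is regrown at every step around \emph{all} earlier adversarial inputs, in particular around an earlier one lying in $I$ (not around the current $w_t$, as you write). This gives $P_{t_1}\cup Z_i^{\le 4\ell}\subseteq P_{t_i}\cup L_{i_{t_i}}\subseteq P_{t_1}\cup Z_i$.
\item A truncation argument then shows that the algorithm's greedy choice at $t_i$ coincides with the greedy choice with respect to the set $G_i=Z_i^{\le\ell}\cup P_{t_1}^{\le\ell}$. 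This set includes an $\ell$-neighbourhood on each side of $I$.
\item By Lemma \ref{lem:rchange}, $G_i$ is nondecreasing and changes at most $r$ times. So the phases are intervals on which the visibility set is \emph{constant}.
\end{itemize}
Only then do the per-phase leak-doubling bound and the bound of $r$ on true skips from Lemma \ref{lem:comb} apply.

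Your proposed potential function, the LCA climbing at most $r$ times, is the right starting point. But without identifying the effective visibility set as $G_i$, your step three does not go through.
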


To see that this lemma implies the original theorem statement, we use a  averaging argument.
\begin{proof}[Proof of Lemma \ref{lem:ell} implying Theorem \ref{thm:finiterank}]
We argue by contradiction. Suppose there exists an increasing sequence of intervals $I_1, I_2, \dots$ in $K$, whose lengths tend to infinity, such that in each $I_i$, the fraction of elements occupied solely by the adversary is at least $\tfrac{1}{2} + \epsilon$ for some $\epsilon > 0$ (equivalently, the output density in $I_i$ is at most $\tfrac{1}{2} - \epsilon$).

Fix $\ell$ sufficiently large so that the bound $o_{\ell}(1)$ from the lemma is smaller than $\epsilon/2$. By discarding finitely many intervals, we may assume that each $I_i$ has length at least $4\ell$. 

For each $I_i$, select a subinterval $J_i \subseteq I_i$ whose length lies between $\ell$ and $2\ell$, and whose output density is at most $\tfrac{1}{2} - \epsilon$. Such a subinterval exists by a simple averaging argument over a partition of $I_i$ into subintervals of length between $\ell$ and $2\ell$. Among all such subintervals, choose one whose left endpoint is maximal. Among all such subintervals, choose one whose left endpoint is maximal.

We claim that the left endpoints of the intervals $J_i$ are unbounded. Suppose, for contradiction, that they are bounded above by some constant $C$; that is, all left endpoints lie within the first $C$ positions of $K$. 
Since the lengths of the intervals $I_i$ tend to infinity, for sufficiently large $i$ we may consider a subinterval of $I_i$ of length $m$ starting at position $C' = \max(C, n_0(\ell))$, where $m$ is arbitrarily large. The contribution from the first $C'$ positions is negligible, and thus the output density in $I_i$ is at least
\[
\frac{0 \cdot C' + (m - C' - \ell)\bigl(\tfrac{1}{2} - \epsilon/2\bigr)}{m + C'}
> \tfrac{1}{2} - \epsilon
\]
when $m$ is large (as $\ell$ is now fixed), which contradicts the assumption that each $I_i$ has output density at most $\tfrac{1}{2} - \epsilon$.

Now consider only the subsequence of intervals $J_1, J_2, \dots$ whose left endpoints are greater than $n_0(\ell)$. By construction, each such interval $J_j$ has length $\ell_j \in [\ell, 2\ell]$. 
By the claim, each interval $J_j$ has algorithmic output density at least 
\[
\frac{1}{2} - o_{\ell}(1) > \frac{1}{2} - \frac{\epsilon}{2},
\]
which contradicts our construction that the algorithm output density in $J_j$ is at most $\frac{1}{2} - \epsilon$.
\end{proof}

For the remainder of the proof, we focus on establishing Lemma~\ref{lem:ell}.

We henceforth assume that we are beyond some finite time $T$ after which $K$ remains strictly consistent and $\mC_t$ has stabilized up to $K$.

\vspace{0.1in}
Suppose we are given an interval $I = [a, a+\ell]_K$ relative to $K$, where the left endpoint satisfies $a \ge a_0$ for some sufficiently large $a_0$ such that the pod parameter satisfies $z_{a_0 - 2\ell} \gg 4\ell$.

The pod diameter parameter $z$ and the threshold $a_0$ are chosen to ensure the following property holds. Recall that $P_t$ denotes the pod updated at time $t$.
\begin{claim} 
    If $z_{a_0 - 2\ell} \gg 4\ell$, the the following holds. 
    
Define the extended interval $I_t$ as the union of $I = [a, a + \ell]_K$ with the $\ell$ elements of $P_t$ immediately preceding the smallest element in $I$ and the $\ell$ elements of $P_t$ immediately following the largest element in $I$. 
    
    Whenever an adversary inputs a string $w_t \in I$ at time $t$, if $P_t \cap I$ contains available integers, then the algorithm generates an integer $o_t$ satisfying  $d_{P_t  \cup L_{i_t}}(w_t, o_t) \leq \ell$.  
\end{claim}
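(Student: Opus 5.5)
The plan is to combine three ingredients: the pod-update rule, the largeness of the pod diameter near $I$, and the output rule. The output rule picks the available element of $L_{i_t}\cup P_t$ minimizing the relative distance to $w_t$ in $L_{i_t}\cup P_t\cup\{w_t\}$.

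\textbf{Step 1: the pod is saturated around $w_t$.} At time $t$ the adversary presents $w_t\in I$, and the algorithm adds to $P_t$ the $\gamma_t$ available elements of $PB(t)$ immediately before $w_t$ and the $\gamma_t$ immediately after it. It also adds every non-available element of $PB(t)$. Here
\[
\gamma_t\ge z_{w_t}\ge z_{a_0-2\ell}\gg 4\ell,
\]
using $w_t\ge a\ge a_0$, that $(z_n)$ is increasing, and that the position of $w_t$ in $\mathbb{N}$ is at least its index in $K$. The consequence is that, locally around $w_t$ and within relative distance about $4\ell$, $P_t$ contains every element of $PB(t)$, available or not, as long as $PB(t)$ has that many elements there. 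In particular, the part of $PB(t)$ lying in the window of the extended interval $I_t$ around $I$ is contained in $P_t$.

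\textbf{Step 2: an available pod element inside $I$ is close to $w_t$.} Suppose $x\in P_t\cap I$ is available. Both $x$ and $w_t$ lie in $I=[a,a+\ell]_K$, so there are at most $\ell+1$ elements of $K$ between them. The difficulty is that $d$ is measured in $M:=L_{i_t}\cup P_t\cup\{w_t\}$, not in $K$, and $M$ can contain elements outside $K$ lying between $x$ and $w_t$. I would handle this as follows. After the stabilization time $T$, Lemma~\ref{lem:PBsubsetK} gives $L_{i_t}\subseteq K$ and $PB(t)\subseteq K$, and the pod additions at all times (after a finite time) come from subsets of $K$. Hence, between $x$ and $w_t$, every element of $M$ lies in $K\cap[x,w_t]$. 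Earlier, pre-$T$ pod elements are finitely many, so they lie below $a_0$ once $a_0$ is chosen large. It follows that $d_M(x,w_t)\le \ell$.

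\textbf{Step 3: conclude from minimality.} The algorithm's output $o_t$ minimizes $d_M(\cdot,w_t)$ over available elements of $L_{i_t}\cup P_t$. Since $x$ is a candidate, $d_M(o_t,w_t)\le d_M(x,w_t)\le\ell$. The claim measures distance in $P_t\cup L_{i_t}$, which differs from $M$ only by possibly omitting $w_t$, and that omission can only decrease relative distance. So $d_{P_t\cup L_{i_t}}(w_t,o_t)\le\ell$.

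\textbf{Main obstacle.} The hard part is Step 2: making sure no foreign elements of $L_{i_t}\cup P_t$ sit between $x$ and $w_t$. That requires that, beyond the finite time $T$, every contribution to the pod and to $L_{i_t}$ is a subset of $K$, together with choosing $a_0$ past every finitely many pre-$T$ contaminations. I would make this explicit by enlarging $a_0$ so that all elements used or pooled before $T$ are below $a_0-2\ell$. The hypothesis $z_{a_0-2\ell}\gg4\ell$ is then used only in Step 1, to guarantee the pod covers the whole $\ell$-neighborhood in $I_t$.
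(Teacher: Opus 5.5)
Your proposal is correct and follows the paper's own argument. The pod diameter $\gamma_t\ge z_{w_t}\ge z_{a_0-2\ell}$ makes the pod cover the relevant part of $PB(t)$ around $I$, an available $x\in P_t\cap I$ lies within relative distance $\ell$ of $w_t$, and the greedy minimality of $o_t$ finishes the argument. Your Step 2 also makes explicit what the paper compresses into ``by the length of $I$'': after the stabilization time, and with $a_0$ chosen past all earlier pod elements, no element of $P_t\cup L_{i_t}$ outside $K$ lies between $x$ and $w_t$.
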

\begin{proof}
Since $z_{a_0 - 2\ell} \gg 4\ell$ and $a \ge a_0$, whenever the adversary inputs $w_t \in I$ at time $t$, its pod in $PB(t)$ has diameter at least $z_{a_0 - 2\ell} \ge 4\ell$. Recall $P_t$ denote the pod at time $t$. Then $P_t$ covers all available elements of $PB(t)$ within $I$, since $PB(t) \subseteq K$ by Lemma~\ref{lem:PBsubsetK}.
As the algorithm outputs a string closest to $w_t$ within $P_t \cup L_{i_t}$, if there exists an unused element $x \in P_t \cap I$, then by the length of $I$ we have
$d_{P_t \cup L_{i_t}}(x, w_t) \le \ell - 1.$
Consequently, the algorithm cannot output a string $o_t$ such that
$d_{P_t \cup L_{i_t}}(w_t, o_t) > \ell.$
\end{proof}

Under the same assumption that $I = [a, a+\ell]_K$ where $z_{a - 2\ell} \gg 4\ell$, the following holds. 
\begin{lem}\label{lem:rchange}
Let $t_1 < t_2 < \cdots < t_M$ be the time steps at which the adversary reveals an element in $I$.

Let $Z_1$ be the identified language $L_{i_{t_1}}$ at time $t_1$. For each $1 \le i \le M$, define $Z_i$ to be the lowest common ancestor in $\mT$ of all identified languages over the closed time interval $[t_1, t_i]$.

Then:
\begin{enumerate}
    \item $Z_1 \subseteq Z_2 \subseteq \cdots \subseteq Z_M$.
    \item The sequence $(Z_i)$ changes value at most $r$ times, where $r$ is the Cantor--Bendixson rank of $\mX$.
\end{enumerate}
\end{lem}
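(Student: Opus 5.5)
Both parts follow from the structure of the static tree $\mT$. The plan is to treat part 1 as a statement about ancestors in a forest, and part 2 as a statement about strictly increasing Cantor--Bendixson ranks along a root path.

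For part 1, the key observation is that $Z_{i+1}$ is the lowest common ancestor of a set that contains the set defining $Z_i$. Hence $Z_{i+1}$ is a common ancestor of every identified language in $[t_1,t_i]$, and so it is an ancestor of (or equal to) $Z_i$. Every edge of $\mT$ points from a language $L$ to a language $L'$ with $L\subseteq L'$. The edge condition requires a sequence containing $L$ with limit point $L'$, and in our topology a limit of such a sequence contains every member of a tail of it. I will check that this gives inclusion along edges, in particular in the form the construction uses (``minimal superset'', per the Remark following the tree definition). If that justification turns out to be delicate, I will instead adopt the convention, consistent with the Remark, that parents are chosen among supersets. Given inclusion along edges, ancestry implies inclusion, so $Z_i\subseteq Z_{i+1}$.

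Existence of the LCA needs care when identified languages fall in different trees of the forest. After time $T$, Lemma~\ref{lem:PBsubsetK} puts every identified language in the subtree rooted at $K$, so all of them have the common ancestor $K$. The lowest common ancestor is therefore well defined and is itself a descendant of $K$, or $K$ itself. I will state explicitly that we work after $T$, which the standing assumption preceding the lemma already grants.

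For part 2, the sequence $Z_1, Z_2,\dots,Z_M$ is a monotone walk upward along a single root-to-node path of $\mT$, because each $Z_{i+1}$ is an ancestor of or equal to $Z_i$. Each time the value changes, we move to a strict ancestor. By the definition of $\mT$, a parent has strictly larger Cantor--Bendixson rank than its child, so ranks along the walk strictly increase at every change. Ranks take values in $\{0,1,\dots,r\}$ (or $r-1$ levels, depending on the indexing convention), so the number of strict increases is at most $r$. This is the pigeonhole bound; I expect it to be routine.

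The main obstacle is part 1's reliance on ancestry implying set inclusion, since the tree definition literally only asks for a limit-point relation together with a rank increase. I will first try to derive inclusion from the topology. If $L$ lies in a sequence having $L'$ as a limit point, the basic neighborhoods $U_{L',F}$ force infinitely many terms of the sequence to be subsets of $L'$. This alone does not force $L$ itself to be a subset of $L'$, unless one takes the sequence to be eventually constant at $L$ or reads the construction as choosing parents among supersets. I will therefore make the superset requirement explicit, as the Remark intends. I will also note that the subsequent density analysis only needs the nesting $Z_i\subseteq Z_{i+1}$ for languages below $K$, where Claim~\ref{claim:parent} already supplies inclusion in $K$.
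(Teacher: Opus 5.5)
Your proposal is correct and follows essentially the same route as the paper. The paper invokes a general tree fact (Claim~\ref{claim:rtimes}): lowest common ancestors of growing prefixes form a monotone ancestral chain whose number of strict changes is at most the depth, and the depth is at most $r$ because parents have strictly larger Cantor--Bendixson rank. It then uses that ancestors in $\mT$ are supersets of their descendants.

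You are more careful than the paper on two points it leaves implicit. First, the literal edge condition (ii) does not by itself force $L \subseteq L'$, so the superset convention of the Remark has to be adopted. Second, the lowest common ancestor exists because, past the time given by Lemma~\ref{lem:PBsubsetK}, all identified languages lie in the subtree rooted at $K$.
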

\begin{proof}

    We first prove the following simple property of a tree. In a tree, we say for any two vertices $x, y \in V$: $x \preceq y$ if and only if  $x$ lies on the unique simple path from the root to $y$. 

\begin{claim}\label{claim:rtimes}
Let $T$ be a tree with depth $r$. 
Let $v_1, v_2, \dots, v_m$ be an arbitrary sequence of a subset of $m$ vertices in $T$ (they might not be distinct). For each $i \in \{1, 2, \dots, m\}$, let $V_i = \{v_1, v_2, \dots, v_i\}$, and define $u_i$ be the least common ancestor of vertices in $V_i$, denoted as $\LCA(V_i)$. Then, the sequence of vertices $(u_i)_{i=1}^m$ satisfies the monotonic ancestral chain:
$$u_m \preceq u_{m-1} \preceq \dots \preceq u_2 \preceq u_1$$
Furthermore, the number of strict changes in this sequence, defined as the cardinality of the set $C = \{ i \in \{2, 3, \dots, m\} \mid u_i \neq u_{i-1} \}$, is strictly bounded above such that $|C| \leq r$.
\end{claim}
\begin{proof}
For any $i \geq2$, since $V_{i-1} \subseteq V_i$, any common ancestor of $V_i$ must also be a common ancestor of $V_{i-1}$. In particular, $u_i = \LCA(V_i)$ is a common ancestor of $V_{i-1}$, which implies it must be an ancestor of $u_{i-1} = \LCA(V_{i-1})$. Thus, $u_i \preceq u_{i-1}$, establishing the monotonic chain $u_m \preceq u_{m-1} \preceq \dots \preceq u_1$.

To bound $|C|$, note that if $u_i \neq u_{i-1}$, the ancestral relation is strict ($u_i \prec u_{i-1}$). This implies that their depths must satisfy $d(u_i) < d(u_{i-1})$. Because the maximum depth of any vertex in $T$ is $r$, and depths are bounded below by $0$, the depth can strictly decrease at most $r$ times. Therefore, there can be at most $r$ strict changes in the sequence, so $|C| \leq r$.
\end{proof}

In our case, the tree is the static tree $\mT$ we have built, and we know its height is at most $r$ by our construction of $\mT$ and the fact $\mX$ has Cantor-Bendixson rank $r$. The vertex $v_t$ in the claim will be the identified language $L_{t}$ for $t_1 \leq t \leq t_M$. The claim implies 
\begin{align*}
    \LCA_{\mT}(L_{t_1}, L_{t_1+1}, \dots, L_{t_M}) \preceq \LCA_{\mT}(L_{t_1}, L_{t_1+1}, \dots, L_{t_M - 1}) \preceq \\ 
    \dots \preceq \LCA_{\mT}(L_{t_1}, L_{t_1+1}) \preceq \LCA_{\mT}(L_{t_1}).
\end{align*}
Since in $\mT$, an ancestor is always a superset of its descendant, we have that $Z_1 \subset Z_2 \subset \dots \subset Z_M$. 

Furthermore, the claim implies that the whole chain with $t_M - t_1+1$ sets \[\LCA_{\mT}(L_{t_1}, L_{t_1+1}, \dots, L_{t_M}), \LCA_{\mT}(L_{t_1}, L_{t_1+1}, \dots, L_{t_M - 1}),  \dots, \LCA_{\mT}(L_{t_1}, L_{t_1+1}), \LCA_{\mT}(L_{t_1})\] has at most $r$ changes, and thus also holds for $Z_1, Z_2,\dots, Z_M$.  
\end{proof}

With the same notation of $I = [a, a+\ell]_K$ and the $t_i$'s as above. 
Let \(K\{a\}\) denote the \(i\)-th element of \(K\) (ordered by its value in \(\mathbb{N}\)). For now $a$ and $\ell$ are fixed. 

For any infinite set $A \subset \mathbb{N}$, define a set $A^{\leq b}$ in the following way. 
Let \(x\) be the minimum among the $b$ largest elements of $A$ that are strictly smaller than \(K\{a\}\). If there are less than $r$ such elements write $x= 0$ (Note that \(K\{a\}\) need not belong to $A$.) Analogously let $y$ be the maximum among the $b$ smallest elements of $A$ that are strictly larger than \(K\{a+\ell\}\). (Note that  again \(K\{a+\ell\}\) need not belong to $A$). 
Define the truncated set with respect to $a$ and $b$ as:
\[
A^{\leq b} := A \cap [x,y]_{\mathbb{N}} \subset A.
\]
Implicitly from now on, we are always fixing $I = [a, a+\ell]_K$. The definition $A^{\leq \ell}$ is also anchored with this $a$.

\begin{lem}\label{lem:PtZt}
    For any time $t_i$ for $1 \leq i \leq M$, 
    \[P_{t_i}\supset Z_i^{\leq 4\ell}. \]
    Furthermore, 
      \[ \left(P_{t_1} \cup  Z_i^{\leq 4\ell}\right)  \subset P_{t_i} \subset (P_{t_1} \cup Z_i),  \ \ \ \left(P_{t_1} \cup  Z_i^{\leq 4\ell}\right)  \subset \left(P_{t_i} \cup L_{t_i}\right)\subset (P_{t_1} \cup Z_i).\]
\end{lem}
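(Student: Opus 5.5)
We will argue directly from the pod-update rule of the algorithm, combined with the monotonicity of the $Z_i$ from Lemma~\ref{lem:rchange} and the choice $a\ge a_0$ with $z_{a_0-2\ell}\gg 4\ell$.

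\textbf{Step 1: the upper inclusions.} The pod only ever grows, and every element added at a time $s$ lies in $PB(s)$ or equals an adversarial input $w_s\in K$. For $t_1\le s\le t_i$, the language $PB(s)$ is either $L_{i_s}$ or the common ancestor of $L_{i_{s-1}}$ and $L_{i_s}$ in $\mT$. Both are descendants of (or equal to) the lowest common ancestor of the languages identified during the relevant window, and ancestors in $\mT$ are supersets of descendants. Hence every element added during $(t_1,t_i]$ lies in $Z_i$.

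There is one bookkeeping point: $PB(t_1+1)$ may involve $L_{i_{t_1}}$, which is fine. The adversarial inputs themselves lie in $K$, so we will show the window analysis only needs those that are in $Z_i$ or in $P_{t_1}$. Inputs before $t_1$ are covered by $P_{t_1}$, since $w_s\in P_s$ by construction. Inputs in $(t_1,t_i]$ belong to the consistent language $L_{i_s}\subseteq Z_i$.

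This gives $P_{t_i}\subset P_{t_1}\cup Z_i$. Since $L_{t_i}$ is a descendant of $Z_i$, we also get $P_{t_i}\cup L_{t_i}\subset P_{t_1}\cup Z_i$.

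\textbf{Step 2: the key lower inclusion $P_{t_i}\supset Z_i^{\le 4\ell}$.} Let $s\le t_i$ be the last time at which the LCA became $Z_i$. At that time a pullback occurred with $PB(s)\supseteq$ the relevant ancestor. More carefully, we will show $PB(s')=Z_i$ for some $s'\in[t_1,t_i]$: when the LCA jumps, the two consecutive guesses straddle different branches, and their common ancestor is exactly the new LCA. If instead $Z_i=L_{t_1}$ with no change, we use $PB(t_1)\supseteq L_{i_{t_1}}$.

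At that time the pod step adds, around each earlier input $w_j$, the $\gamma$ available elements of $PB(s')$ on each side, with $\gamma\ge z_{w_j}$. We take $w_j=w_{t_1}\in I$, an input with index at least $a_0-2\ell$ in $K$, so $\gamma\ge z_{a_0-2\ell}\gg 4\ell$. Every element of $Z_i$ within $4\ell$ positions of $I$ is then either unavailable (and included by the convention that unavailable elements of $PB$ are put in $P$) or among the $\gamma$ nearest available ones. This yields $Z_i^{\le 4\ell}\subset P_{s'}\subset P_{t_i}$, because pods are monotone. Combined with $P_{t_1}\subset P_{t_i}$, both displayed lower inclusions follow.

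\textbf{Main obstacle.} The delicate part is Step 2's identification of a time $s'$ with $PB(s')=Z_i$, and not merely a descendant of it. This needs that the tree pullback rule, after removal of inconsistent languages, produces exactly the LCA. It also needs care with the relative-position truncation $A^{\le b}$, which is anchored at $K\{a\}$ rather than at $w_{t_1}$. We will handle the latter by noting that $w_{t_1}\in I$, so the $4\ell$-neighborhood of $I$ in $Z_i$ sits within a $5\ell\ll\gamma$ neighborhood of $w_{t_1}$ in $Z_i$. This uses $Z_i\subseteq K$ from Lemma~\ref{lem:PBsubsetK}, so relative distances in $Z_i$ are at most those in $K$.
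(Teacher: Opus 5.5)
This is essentially the paper's proof. The upper inclusions come from pod monotonicity together with the fact that each $PB(s)$ with $t_1<s\le t_i$ is a descendant of $Z_i$. The lower inclusion comes from a time $s'$ with $PB(s')=Z_i$, at which the pod around an input of $I$ covers a $5\ell$-neighborhood in the relative order of $Z_i$. The paper finds $s'$ via Claim~\ref{claim:tree}; your ``last jump of the running LCA'' works equally well. Indeed, if $L_{i_s}$ is not a descendant of $u_{s-1}=\LCA(L_{i_{t_1}},\dots,L_{i_{s-1}})$, then $\LCA(L_{i_{s-1}},L_{i_s})=\LCA(u_{s-1},L_{i_s})=u_s$. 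This also covers the case where $L_{i_s}$ is an ancestor of $u_{s-1}$, so the two guesses need not literally straddle two branches, but your conclusion holds.

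The one step that fails as written is the no-jump case, where you appeal to $PB(t_1)\supseteq L_{i_{t_1}}=Z_i$. Enlarging the ambient set only increases relative distances. So the $\gamma_{t_1}$ available elements of $PB(t_1)$ on each side of $w_{t_1}$ may all lie in $PB(t_1)\setminus Z_i$ (e.g.\ after a pullback at time $t_1$) and miss elements of $Z_i^{\le 4\ell}$.

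For $i\ge 2$, use time $t_1+1$ instead. With no jump, $L_{i_{t_1+1}}$ is a descendant of $L_{i_{t_1}}=Z_i$, so $PB(t_1+1)=Z_i$ under either branch of the pullback rule. The pod built then around $w_{t_1}$ therefore contains $Z_i^{\le 4\ell}$.

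For $i=1$ there is no such time. Neither your argument nor the paper's (which needs $t_1\le t'<t_i$) yields $Z_1^{\le 4\ell}\subseteq P_{t_1}$. There the second chain holds trivially because $L_{t_1}=Z_1$, and that chain is what Lemma~\ref{lem:important} relies on.
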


We will need the following fact about tree. 
\begin{claim}\label{claim:tree}
Let $T = (V, E)$ be a rooted tree. Let $S = \{Q_1, Q_2, \dots, Q_s\} \subseteq V$ be a finite sequence of vertices with $s \geq2$, and let $Q$ be their least common ancestor denoted as $\LCA(S)$. For any arbitrary linear ordering of the elements of $S$, there exists at least one index $i \in \{1, \dots, s-1\}$ such that $\LCA(Q_i, Q_{i+1}) = Q$.
\end{claim}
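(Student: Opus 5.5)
We prove Claim~\ref{claim:tree} by considering how the vertices of $S$ distribute among the subtrees hanging below $Q=\LCA(S)$. Every $Q_j$ is a descendant of $Q$ (possibly $Q$ itself). For each $j$, define its \emph{branch} $\beta(Q_j)$ as follows. If $Q_j=Q$, set $\beta(Q_j)=Q$. Otherwise, let $\beta(Q_j)$ be the unique child $c$ of $Q$ with $c\preceq Q_j$.

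The first key step is the characterization
\[
\LCA(Q_i,Q_{i+1})=Q \iff \beta(Q_i)=Q,\ \text{or}\ \beta(Q_{i+1})=Q,\ \text{or}\ \beta(Q_i)\neq\beta(Q_{i+1}).
\]
To see this, note that $Q$ is always a common ancestor of $Q_i$ and $Q_{i+1}$. It fails to be the lowest one exactly when both vertices lie strictly below the same child $c$ of $Q$, because then $c$ is a lower common ancestor. Conversely, if one of them equals $Q$, or they lie under different children, then no proper descendant of $Q$ is an ancestor of both.

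The second step is to show that the branches cannot all coincide at a single child. Suppose, for contradiction, that $\beta(Q_j)=c$ for every $j$, where $c$ is a single child of $Q$. Then $c$ is a common ancestor of all of $S$ and lies strictly below $Q$. This contradicts $Q=\LCA(S)$. Hence at least one of two things holds: some $Q_j$ equals $Q$, or two indices have different children as branches.

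The final step finds a consecutive pair in the given linear ordering $Q_1,\dots,Q_s$, with $s\ge 2$.

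\emph{Case 1: some $Q_j=Q$.} Take the pair $(Q_j,Q_{j+1})$ if $j<s$, and $(Q_{j-1},Q_j)$ if $j=s$. This pair exists because $s\ge2$, and its LCA is $Q$ by the characterization.

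\emph{Case 2: no $Q_j$ equals $Q$.} Then $\beta$ takes values among the children of $Q$, and by the second step it is not constant along the sequence. A non-constant sequence must change value at some position, so there is an $i$ with $\beta(Q_i)\neq\beta(Q_{i+1})$. Again the characterization gives $\LCA(Q_i,Q_{i+1})=Q$.

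The only real care needed is in the characterization: the case where one endpoint of the pair equals $Q$ itself has to be handled explicitly, and repeated vertices in the sequence must be allowed. Beyond that the argument is routine, so no step is a serious obstacle.
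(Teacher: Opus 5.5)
Your proposal is correct and takes essentially the same approach as the paper. Both split on whether $Q$ itself occurs in the sequence, in which case you pair it with a neighbor. Otherwise both use that the $Q_j$ lie in the subtrees of at least two distinct children of $Q$, so some consecutive pair switches subtrees and has $\LCA$ equal to $Q$.

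One wording point: $Q$ fails to be the pair's $\LCA$ whenever both vertices lie in the subtree of the same child $c$, including $c$ itself, not only ``strictly below'' $c$. Your formal definition of $\beta$ already handles this correctly.
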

\begin{proof}
If $Q \in S$, say $Q = Q_k$, then trivially $\LCA(Q_k, Q_{k+1}) = Q$ (or $\LCA(Q_{k-1}, Q_k) = Q$ if $k=s$).
Assume $Q \notin S$. Since $Q = \LCA(S)$, the elements of $S$ must reside in the subtrees of at least two distinct children of $Q$. Because the sequence $Q_1, \dots, Q_s$ contains elements from at least two such subtrees, there must be some index $i$ where $Q_i$ and $Q_{i+1}$ belong to subtrees rooted at different children of $Q$. For this adjacent pair, their paths to the root first intersect at $Q$, so $\LCA(Q_i, Q_{i+1}) = Q$.\end{proof}

\begin{proof}[Proof of Lemma \ref{lem:PtZt}]
Note that by definition, $Z_i$ is the lowest common ancestor of all $L_{i_t}$ for $t_1 \le t \le t_i$. By Claim~\ref{claim:tree}, there exists some time $t'$ with $t_1 \le t' < t_i$ such that the lowest common ancestor of $L_{i_{t'}}$ and $L_{i_{t'+1}}$ is $Z_i$. This implies that
\begin{equation}
    PB(t'+1) = Z_i. \label{eq:pbZi}
\end{equation}
Let $j$ be the largest index such that $t_j \le t'$. Then $w_j$ lies in $I$ and is revealed before time $t'+1$, and hence before $t_i$. 

By the update rule of the algorithm, the choice of the pod diameter parameter $\gamma_t$, and in particular the facts that $\gamma_t$ is monotone increasing in $t$ and $\gamma_t \ge 9\ell$ for all $t \ge t_1$, it follows that all elements of $Z_i$ within relative distance $9\ell$ from $w_j$ (with respect to $Z_i$) are added to the pod.

Moreover, $P_t$ is monotone in $t$, in the sense that elements are only added and never removed; that is, $P_t \subseteq P_{t'}$ for all $t \le t'$. Since $w_j \in I$, we conclude that $P_{t_i}$ contains all elements of $Z_i$ within relative distance $4\ell$ of $I$, i.e., $Z_i^{\le 4\ell}$.

We now prove the second claim. By a similar argument, using the update rule of the algorithm and the choice of the pod parameter $\gamma_t$, we obtain that
\begin{equation}
      \left( P_{t_1} \cup \bigcup_{t_1 \leq t \leq t_i} PB(t)^{\leq 4\ell}\right) \subset P_{t_i} \subset  \left(P_{t_1} \cup \bigcup_{t_1 \leq t \leq t_i} PB(t)\right). \label{eq:2}
   \end{equation}

  The first inclusion together with (\ref{eq:pbZi}) implies 
  \[  P_{t_i} \supset \left( P_{t_1} \cup \bigcup_{t_1 \leq t \leq t_i} PB(t)^{\leq 4\ell} \right) \supset \left(P_{t_1} \cup  PB(t'+1)^{\leq 4\ell} \right) = (P_{t_1} \cup  Z_i^{\leq 4\ell}).\]
   
   Note (\ref{eq:pbZi}) also implies \[\bigcup_{t_1 \leq t \leq t_i} PB(t) \supset Z_i.\] 
On the other hand, $Z_i$ is the lowest common ancestor of all identified languages $L_{i_t}$ for $t_1 \le t \le t_i$, and hence also the lowest common ancestor of all pullbacks $PB(t_1+1), \dots, PB(t_i)$. Therefore,
\[
\bigcup_{t_1 \le t \le t_i} PB(t) \subseteq Z_i.
\]
Combining this with the previous inclusion, we obtain
\[
\bigcup_{t_1 \le t \le t_i} PB(t) = Z_i.
\]
Thus, the second inclusion in~(\ref{eq:2}) implies that
$P_{t_i} \subseteq P_{t_1} \cup Z_i,$
as desired. Furthermore, since $L_{i_t} \subseteq PB(t_i) \subseteq Z_i$, we also have
\[
(P_{t_i} \cup L_{i_t}) \subseteq (P_{t_1} \cup Z_i).
\]
\end{proof}

\begin{lem}\label{lem:important}
    At time $t_i$, if $\left(I \cap (Z_i^{\leq \ell} \cup P_{t_1}^{\leq \ell}) \right) \setminus \{w_{t_i}\}$ has available elements, then $o_{t_i} \in (Z_i^{\leq \ell} \cup P_{t_1}^{\leq \ell})$. 
And 
    $I \cap (Z_i^{\leq \ell} \cup P_{t_1}^{\leq \ell}) = I \cap (P_{t_i} \cup L_{t_i})$.

Furthermore, suppose that the set $\left(I \cap (Z_i^{\leq \ell} \cup P_{t_1}^{\leq \ell}) \right) \setminus \{w_{t_i}\}$ contains at least one element that is available at time $t_i$. Then $o_{t_i}$ is defined to be 
\[
o_{t_i} \in \arg\min \left\{ d_{(Z_i^{\leq \ell} \cup P_{t_1}^{\leq \ell})}(x, w_{t_i}) \;\middle|\; x \in (Z_i^{\leq \ell} \cup P_{t_1}^{\leq \ell} )\setminus \{w_{t_i}\},\ x \text{ available at time } t_i \right\}.
\]
\end{lem}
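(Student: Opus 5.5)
The plan is to derive all three assertions of Lemma~\ref{lem:important} from the sandwich in Lemma~\ref{lem:PtZt} together with the output rule: $o_t$ is the smallest available element of $L_{i_t}\cup P_t$ minimizing $d_{L_{i_t}\cup P_t\cup\{w_t\}}(\cdot,w_t)$. We argue in the order: set identity, then localization, then the argmin description.

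\textbf{Set identity.} We first prove
\[
I\cap (Z_i^{\leq \ell}\cup P_{t_1}^{\leq \ell}) = I\cap (P_{t_i}\cup L_{t_i}).
\]
For the inclusion ``$\supseteq$'', intersect the upper bound $P_{t_i}\cup L_{t_i}\subseteq P_{t_1}\cup Z_i$ of Lemma~\ref{lem:PtZt} with $I$. Truncation to $\leq \ell$ does not cut inside $I$, since the truncation window $[x,y]_{\mathbb N}$ always contains $[K\{a\},K\{a+\ell\}]_{\mathbb N}\supseteq I$. Hence $I\cap Z_i = I\cap Z_i^{\le \ell}$, and similarly for $P_{t_1}$. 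For ``$\subseteq$'', use the lower bound $P_{t_1}\cup Z_i^{\le 4\ell}\subseteq P_{t_i}\cup L_{t_i}$, and note that $Z_i^{\le\ell}\subseteq Z_i^{\le 4\ell}$.

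\textbf{Localization.} Next we show that the output lands in $Z_i^{\le\ell}\cup P_{t_1}^{\le\ell}$. Put $B=P_{t_i}\cup L_{t_i}$ and $B'=Z_i^{\le\ell}\cup P_{t_1}^{\le\ell}$. By the identity just proved, $B$ and $B'$ agree on $I$. Lemma~\ref{lem:PtZt} also gives $Z_i^{\le 4\ell}\subseteq B\subseteq P_{t_1}\cup Z_i$. So within the $4\ell$-neighbourhood of $I$ (measured in $Z_i$), the elements of $B$ are exactly those of $Z_i$ together with those of $P_{t_1}$ lying there.

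Now suppose some $x\in I\cap B'$, $x\neq w_{t_i}$, is available. Then $x\in B$, and since $w_{t_i}\in I$ has at most $\ell$ elements of $I$ on either side, $d_{B\cup\{w_{t_i}\}}(x,w_{t_i})\le \ell$. This is the same counting as in the pod-covering claim preceding Lemma~\ref{lem:rchange}. Consequently the minimizer $o_{t_i}$ also satisfies $d_{B\cup\{w_{t_i}\}}(o_{t_i},w_{t_i})\le\ell$.

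Between $w_{t_i}$ and $o_{t_i}$ there are at most $\ell$ elements of $B$, and all of $Z_i^{\le 4\ell}$ lies in $B$. Therefore $o_{t_i}$ cannot leave the band of $\ell$ $Z_i$-elements on either side of $I$. The reason is that reaching beyond $\ell$ positions past the endpoint of $I$ would require passing more than $\ell$ elements of $Z_i^{\le4\ell}\subseteq B$. So $o_{t_i}\in B\cap(\text{that band})$, and by the sandwich $B\cap(\text{band})\subseteq B'$. The truncation parameters must be checked here: the definition of $A^{\le b}$ counts elements of $A$ itself, so $P_{t_1}^{\le \ell}$ is cut at $\ell$ elements of $P_{t_1}$ beyond $I$, not $\ell$ elements of $Z_i$. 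This mismatch is the \textbf{main obstacle}. The first thing to try is to compare the band of $\ell$ elements of $B$ beyond $I$ with the union of the two separate truncations, using $B\cap\text{band}\subseteq (P_{t_1}\cup Z_i)\cap\text{band}$. Any element of $B$ within $\ell$ positions of $I$ is within $\ell$ positions of $I$ in whichever of $P_{t_1}$ or $Z_i$ contains it, because passing through a subset of $B$ only decreases counts. Hence it lies in $P_{t_1}^{\le\ell}$ or $Z_i^{\le\ell}$.

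\textbf{Argmin description.} Finally we rewrite the argmin. On the region where $o_{t_i}$ and all competitors within distance $\ell$ live, relative distances in $B\cup\{w_{t_i}\}$ and in $B'$ are the same. This holds because the two sets coincide there: $B'\subseteq B$ by the lower sandwich, and the reverse inclusion on the band was just shown. Note that $w_{t_i}\in I\cap K$ and $I\subseteq B$ up to availability, so adding $\{w_{t_i}\}$ changes nothing. Hence minimizing $d_{B\cup\{w_{t_i}\}}(\cdot,w_{t_i})$ over available elements is equivalent to minimizing $d_{B'}(\cdot,w_{t_i})$ over available $x\in B'\setminus\{w_{t_i}\}$, which gives the displayed formula.
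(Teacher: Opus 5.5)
Your proposal is correct and follows the paper's proof essentially step for step: the set identity comes from the sandwich in Lemma~\ref{lem:PtZt}, the output is then at relative distance at most $\ell$ from $w_{t_i}$, it is localized by counting the elements of $Z_i^{\le 4\ell}$ and of $P_{t_1}\subseteq P_{t_i}$ that would have to lie between $o_{t_i}$ and $w_{t_i}$ (the paper's ``same argument with $Z_i$ replaced by $P_{t_1}$''), and finally relative distances in $B$ and $B'$ are matched on that window. Two points need tightening. First, $w_{t_i}\in B$ holds because the pod rule adds $w_{t_i}$ to $P_{t_i}$, not because ``$I\subseteq B$ up to availability''. Second, in the argmin step you should state explicitly that any $x\in B'$ with $d_B(x,w_{t_i})>\ell$ still has $d_{B'}(x,w_{t_i})\ge \ell+1$, because the $\ell$ nearest $B$-elements on that side all lie in $B'$; so such an $x$ cannot become a minimizer after passing from $B$ to $B'$.
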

\begin{proof}

By the second claim in Lemma \ref{lem:PtZt} that  \begin{equation}
    \left(P_{t_1} \cup  Z_i^{\leq 4\ell}\right)  \subset \left(P_{t_i} \cup L_{t_i}\right)\subset (P_{t_1} \cup Z_i), \label{eq:sandwich}
\end{equation} 
we have
\begin{equation}
   I \cap  \left(P_{t_1} \cup  Z_i^{\leq 4\ell}\right)  \subset  I \cap \left(P_{t_i} \cup L_{t_i}\right)\subset  I \cap (P_{t_1} \cup Z_i).\label{eq:sandwich2}
\end{equation} 
Notice that $I \cap P_{t_1} \subset P_{t_1}^{\leq 0}$ and similarly for $Z_i$, so  $I \cap (Z_i^{\leq \ell} \cup P_{t_1}^{\leq \ell}) = I \cap (P_{t_i} \cup L_{t_i})$. 

If 
$\left(I \cap (Z_i^{\leq \ell} \cup P_{t_1}^{\leq \ell}) \right) \setminus \{w_{t_i}\}$ contains at least one element $x$ that is available at time $t_i$, since both $x, w_{t_i} \in I$, then 
\[d_{P_{t_i} \cup L_{t_i}}(o_{t_i}, w_{t_i}) \leq d_{P_{t_i} \cup L_{t_i}}(x, w_{t_i}) \leq d_{I}(x, w_{t_i}) \leq  \ell. \]

Therefore, by our distance-preference algorithm, 
\begin{equation}
    o_{t_i} \in (P_{t_i}\cup L_{t_i})^{\leq \ell} \subset (P_{t_1}^{\leq \ell} \cup L_{t_i}^{\leq \ell}). \label{eq:4}
\end{equation}

If $o_{t_i} \in I$. Then by (\ref{eq:sandwich}), we have $o_{t_i} \in (P_{t_1} \cup Z_i)^{\leq 0} \subset (P_{t_1} \cup Z_i)^{\leq \ell}$ and the claim holds. 

From now on, we assume $o_{t_i} \notin I$.
As $P_{t_i} \cup L_{t_i} \subset P_{t_1} \cup Z_i$, suppose 
$o_{t_i} = y \in Z_i$. Suppose $\ell'$ is the smallest integer such that $y \in Z_i^{\leq \ell'}$. If $\ell' > \ell$, assume $y$ is to the left of $I$, then let $y'$ be the smallest integer in $Z_i^{\leq \ell}$ that is to the left of $I$. Clearly $y < y'$ as integers in $\mathbb{N}$. As $y < y'$ and both are to the left of $I$ and $w_{t_i} \in I \cap Z_i^{\leq \ell}$, for any subset $A$ that contains both $y, y', w_{t_i}$,  $d_A(y, w_{t_i}) > d_A(y', w_{t_i})$. So 
\[ d_{P_{t_i} \cup L_{t_i}}(y, w_{t_i}) > d_{P_{t_i} \cup L_{t_i}}(y', w_{t_i})
\geq d_{P_{t_1} \cup Z_i^{\leq \ell}}(y', w_{t_i}) \geq d_{ Z_i^{\leq \ell}}(y', w_{t_i})  = \ell.\] The second inequality is by (\ref{eq:sandwich}). This contradicts (\ref{eq:4}). The case when $y$ is to the right of $I$ is similar. Therefore we have shown that if $o_{t_i} \in Z_i$, then $o_{t_i} \in Z_i^{\leq \ell}$. 
The same argument also implies that if $o_{t_i} \notin Z_i$, $o_{t_i}$ should be larger than the smallest integer in $Z_i^{\leq \ell}$ and smaller than the largest integer in $Z_i^{\leq \ell}$. 

Now assume $o_{t_i} \notin I$ and $o_{t_i} \in P_{t_1}$. The we can show $o_{t_i} \in P_{t_1}^{\leq \ell}$ by the same argument as above with $Z_i$ replaced with $P_{t_1}$. 

Therefore we have shown that if $\left(I \cap (Z_i \cup P_{t_1})^{\leq \ell} \right) \setminus \{w_{t_i}\}$ contains at least one element $x$ that is available at time $t_i$, then 
\[ o_{t_i} \in Z_i^{\leq \ell} \cup P_{t_1}^{\leq \ell}.\] 

Furthermore, $o_{t_i}$ must be larger than both the smallest element in $Z_i^{\le \ell}$ and the smallest element in $P_{t_1}^{\le \ell}$, and smaller than both the largest element in $Z_i^{\le \ell}$ and the largest element in $P_{t_1}^{\le \ell}$. Any valid choice of $o_{t_i}$ must satisfy all of these conditions; we refer to this collection of requirements as Property~(*).

Recall that, by the definition of the algorithm, $o_{t_i}$ must be chosen to be
\[\arg\min \left\{ d_{(P_{t_i} \cup L_{t_i})}(x, w_{t_i}) \;\middle|\; x \in (P_{t_i} \cup L_{t_i}) \setminus \{w_{t_i}\},\ x \text{ available at time } t_i \right\}.\]
So far, we have shown that under the conditions,
\[
o_{t_i} \in \arg\min \left\{ d_{(P_{t_i} \cup L_{t_i})}(x, w_{t_i}) \;\middle|\; x \in (P_{t_1}^{\leq\ell} \cup L_{t_i}^{\leq \ell}) \setminus \{w_{t_i}\},\ x \text{ available at time } t_i \right\}.
\]
It remains to show that whenever $x$ satisfies Property~(*), then
$d_{P_{t_i} \cup L_{t_i}}(x, w_{t_i}) = d_{Z_i^{\leq \ell} \cup P_{t_1}^{\leq \ell}}(x, w_{t_i})$. As $x, w_{t_i}\in  (P_{t_i}\cup L_{t_i})^{\leq \ell}$ by (\ref{eq:4}),
for these $x$, 
\begin{equation}d_{P_{t_i}\cup L_{t_i}}(x, w_{t_i})  = d_{(P_{t_i}\cup L_{t_i})^{\leq \ell}}(x, w_{t_i}). \label{eq:temp1}
\end{equation} 

By Lemma \ref{lem:PtZt} the  statement $(P_{t_i}\cup L_{t_i}) \supset \left( P_{t_1} \cup  Z_i^{\leq \ell} \right) \supset \left( P_{t_1}^{\leq \ell} \cup  Z_i^{\leq \ell} \right) $, we have 
    \begin{equation}
    \ell \geq    d_{P_{t_i}\cup L_{t_i}}(x, w_{t_i}) \geq d_{P_{t_1}^{\leq \ell} \cup  Z_i^{\leq \ell}}(x, w_{t_i}). \label{eq:temp2}
    \end{equation}

By Lemma \ref{lem:PtZt} the  statement  
    $P_{t_i}\cup L_{t_i} \subset \left( P_{t_1} \cup Z_i \right),$
we have  
    \begin{equation}
      d_{P_{t_i}\cup L_{t_i}}(x, w_{t_i})  \leq d_{P_{t_1}\cup Z_i}(x, w_{t_i}) = d_{P_{t_1}^{\leq \ell} \cup  Z_i^{\leq \ell}}(x, w_{t_i}). \label{eq:temp3}
    \end{equation}
The last equality follows from Property~(*). Indeed, $d_{P_{t_1} \cup Z_i}(x, w_{t_i})$ counts the number of elements in $P_{t_1} \cup Z_i$ lying between $x$ and $w_{t_i}$. We already know this quantity is at most $\ell$.
By Property~(*), any path from $x$ to $w_{t_i}$ passes through at most $\ell$ elements of $P_{t_1}$ outside $I$, and at most $\ell$ elements of $Z_i$ outside $I$. Since these are  the only contributions to $d_{P_{t_1} \cup Z_i}(x, w_{t_i})$, the last claimed equality follows.

Combining (\ref{eq:temp1}), 
 (\ref{eq:temp2}) 
and  (\ref{eq:temp3}), we have 
$d_{P_{t_i}\cup L_{t_i}}(x, w_{t_i}) = d_{Z_i^{\leq \ell} \cup P_{t_1}^{\leq \ell}}(x, w_{t_i})$. Therefore the conclusion is reached. 

\end{proof}

After this setup, we abstract the analysis into a combinatorial game and prove a key lemma, which we later relate back to complete the proof of the theorem.

\subsubsection{A combinatorial game lemma}
\label{subsubsec:comb-game}

A key ingredient of our approach is the following standalone combinatorial lemma.

Throughout this subsubsection, we use notation that is independent of the rest of the paper.

\paragraph{Formal Game Description:}

We analyze a discrete adversarial game played on a finite interior board $I = [1, \ell]$ embedded within the extended board $\mathbb{Z}$. 
This game is a discrete, two-player sequential contest. Visibility of available spots is governed by a filtration of finite sets \[Y_1 \subset Y_2 \subset \dots \subset Y_r \subset \mathbb{Z}\]
with \[I \subset Y_r.\]

The Turns: The adversary moves first. In each turn $t$:
\begin{enumerate} 
\item Nature Help: At the beginning of each turn, Nature may place zero or more white stones on positions in $I$ that have not been occupied by the Algorithm, the Adversary, or Nature in previous rounds.

\item Adversary Move: The Adversary chooses an  integer $y_t \in I$ that has not been used by the adversary and put down a black stone. The adversary can choose to put down a stone at a spot where there is already a white stone. 

Let \[B_t = \{y_1, \dots, y_t\}\] be the set of Adversary stones up to turn $t$. (Note that the adversary could put down a stone at a spot already occupied by the algorithm, but the adversary cannot repeat itself). 

\item The Algorithm observes the set $Y^{(t)} = Y_{i_t}$, where $i_t = \min \{ i : B_t \subseteq Y_i \}$. 
\item Algorithm move: (Distance-based Greedy algorithm)

Whenever there exists an available position in $I \cap Y^{(t)}$, the Algorithm must choose $w_t$ from elements of $Y^{(t)}$ that are not reserved (see the next item) and have not been occupied by either a white or a black stone, so as to minimize the absolute relative distance $d_{Y^{(t)}}(w_t, y_t)$. If multiple such points exist, a fixed tie-breaking rule is applied. In other words,
 \[ y_t = \arg \min \{ d_{Y^{(t)}}(x, y_t) \mid x \in Y^{(t)} \setminus \{y_t\}, \ x \text{ available at time } t   \}. \]

If there is no available position in $I \cap Y^{(t)}$, the Algorithm skips its move. In this case, Nature may either place a white stone in $Y^{(t)} \setminus I$ or choose not to place any stone.

\item Nature Reserve: After the Algorithm’s move in round $t$, Nature may mark some unused elements in $Y^{(t)} \setminus I$ as \emph{reserved}; denote this set by $R_t$. Once an element is marked as reserved, it cannot be selected by the Algorithm in any future round. (Note that the Adversary never places stones outside $I$, so $R_t$ does not interfere with the Adversary.) Nature may also choose not to reserve any elements between rounds $t$ and $t+1$.

Let $B_t$ be the set of black stones and $W_t$ the set of white stones at the end of turn $t$. 
\end{enumerate}

The game terminates at turn $T$ when $B_T = I$.
Let  $W_T$ be the white stones in $I$ when the game terminates. The density we care about is: $\frac{|W_T|}{|I|}.$

It turns out that, for the purposes of the analysis below, Nature’s actions do not affect the outcome. Thus, Step~4 (the algorithm’s greedy decision rule) is the central component of the game. In particular, the reader may, for now, ignore these aspects and focus on the simpler version of the game.

\begin{lem}\label{lem:comb}
Regardless of the adversary’s moves, the filtration, or Nature’s actions, the adversarial ratio always satisfies:\[\frac{|W_T|}{|I|} \geq 1/2 - O_r\left(\frac{\log \ell}{\ell}\right).\] 
\end{lem}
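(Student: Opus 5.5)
We will prove Lemma~\ref{lem:comb} by a counting (charging) argument on the black stones of $I$. The game ends when $B_T=I$, so exactly $\ell$ adversary turns occur, one per position of $I$. We sort each turn $t$ into one of three types:
\begin{enumerate}
\item \emph{Good:} the algorithm places a white stone inside $I$.
\item \emph{Blocked:} $I\cap Y^{(t)}$ has no available position, so the algorithm skips its move.
\item \emph{Leak:} an available position exists in $I\cap Y^{(t)}$, yet the greedy rule selects a point of $Y^{(t)}\setminus I$.
\end{enumerate}
Each white stone in $I$ is occupied at most once and later absorbs at most one black stone, and Nature's white stones only help. So it suffices to show two things: the blocked turns number at most roughly $|W_T|+O_r(1)$, and the leaks number only $O_r(\log \ell)$. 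Together these give $\ell\le 2|W_T|+O_r(\log\ell)$, which is the claimed bound.

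For the blocked turns we argue as follows. At a blocked turn every point of $I\cap Y^{(t)}$ is already covered by a black or white stone. The adversary's new black stone lands either on a white stone, and we charge it to that white stone (each white stone is charged at most once, since the adversary cannot repeat a position), or on a point outside the current visible set. In the second case $B_t\not\subseteq Y^{(t-1)}$, so the filtration index $i_t$ strictly increases. This can happen at most $r$ times. Note that $i_t$ is monotone because $B_t$ only grows.

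The main obstacle is bounding the leaks. We split time into at most $r$ phases on which the index $i_t$, and hence $Y^{(t)}=Y$, is constant; within a phase all distances are measured in one fixed ordered set $Y$. Suppose at a leak turn $t$ the nearest available point in $Y$ lies outside $I$, at relative distance $D_t$ from $y_t$. Some available point $x\in I\cap Y$ exists, and it has $d_Y(x,y_t)\ge D_t$. So every point of $Y$ within distance $D_t-1$ of $y_t$ is unavailable, and this includes a stretch reaching outside $I$. We plan to show that consecutive leaks toward the same side of $I$ satisfy a doubling relation. The leaked point becomes unavailable, and reserved points are likewise unavailable. The next leak on the same side, from a black stone whose blocked neighbourhood must cover everything the previous leak saw plus the newly used outside point, must therefore reach at least about twice as far in relative terms. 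We would try to formalize this by tracking, on each side, the length $g$ of the maximal unavailable run of $Y$ adjacent to the boundary of $I$ and extending outward. A leak requires the corresponding distance to exceed the distance to the nearest available point inside $I$. Since such a point exists, any leak from a stone at depth $h$ inside $I$ needs $g\ge$ roughly $h$ and extends $g$ by one. The intended doubling relation is $D_{t'}\ge 2D_t$ (or $\ge D_t+$ an increasing quantity). Because $D\le |Y\cap[\text{span}]|$ and a leak only occurs while the nearest point inside $I$ is at distance $\le\ell$, this yields $O(\log\ell)$ leaks per side per phase. If exact doubling fails, the fallback is a potential-function argument showing that the free positions inside $I$ that are closer than the leak halve between leaks.

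Summing gives $O(r\log\ell)$ leaks and at most $r$ uncharged blocked turns. Hence $|W_T|\ge(\ell-O_r(\log\ell))/2$, which is the bound $\frac{|W_T|}{|I|}\ge 1/2-O_r\left(\frac{\log \ell}{\ell}\right)$ asserted in Lemma~\ref{lem:comb}.
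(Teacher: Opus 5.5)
Your architecture matches the paper's: good, blocked and leaking turns; at most $r$ phases with a fixed visibility set $Y$; black stones landing on white stones charged to those stones; and $O(\log\ell)$ leaks per side per phase. However, the step you flag as the main obstacle is left as a plan, and with your definition of a leak the claimed bound is false. You define a leak as a greedy move that exits $I$ while $I\cap Y^{(t)}$ still has a free point, and such leaks need not be $O(\log\ell)$. Take $Y\supseteq[-\ell,2\ell]$, let Nature place white stones on $1,\dots,m$, and let the adversary then play $1,2,3,\dots$ on top of them. At the $j$-th move the nearest free point on the left is at distance $2j-1$, while the free point $m+1\in I$ is at distance $m+1-j$. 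So roughly $m/3$ consecutive moves leak. Such turns must be charged to the white stone they land on. The leak bound can only be claimed for turns whose black stone lands on a previously free point, which is exactly why the paper builds $a_t\notin W_T$ into its definition of a leaking move. Your count of $\ell$ as good plus blocked plus leaking turns must be reorganized accordingly. Black-on-white turns of any type number at most $|W_T|$, black-only good turns number at most $|W_T|$, and only black-only blocked and black-only leaking turns need separate bounds. Separately, your blocked-turn dichotomy omits the case where $y_t$ fills the last free point of $I\cap Y^{(t-1)}$. This is harmless: once $I\cap Y$ is full, every later stone in that phase lands on a white stone, so there is at most one uncharged blocked turn per phase.

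For black-only leaks, the doubling does not come from your outward run $g$, which grows by one per leak and only gives linear control. The missing idea is the paper's distance-monotonicity claim. In a fixed phase, list the black-only left leaks chronologically as $(a_i,o_i)$ with $o_i\le 0$, and set $x_i=d_Y(a_i,o_i)$. By greediness, every point of $Y$ strictly between $o_i$ and $a_i$ is unavailable at time $t_i$. A later leaking stone $a_j$ was free at $t_i$, since it is black-only and lies in $I$, where nothing is reserved. Hence $a_j>a_i$, and by greediness again $d_Y(a_j,a_i)\ge x_i$. A later leaked output $o_j$ was also available at $t_i$, because occupation and reservation are permanent, so $o_j<o_i$. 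It follows that $x_{i+1}=d_Y(a_{i+1},a_i)+x_i+d_Y(o_i,o_{i+1})\ge 2x_i+1$, so $x_i\ge 2^i-1$. Since $x_{k-1}\le d_Y(a_k,a_{k-1})\le\ell$, there are $k=O(\log\ell)$ leaks per side per phase. Without this argument or an equivalent one, the proposal proves only the easy parts of the lemma.
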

\begin{proof}
It suffices to upper bound the total number of ``leaking moves" and the total number of skipped rounds of the algorithm. A leaking move is a specific instance where the Algorithm's greedy choice is forced to occupy an integer outside the Interior Board $I$, despite the Adversary's input being inside $I$. More formally, an algorithm's move is defined as {\it left-leaking} if the adversary input $a_t$ at time $t$ satisfies $a_t \notin W_T$ and the Algorithm's output $o_t$ is such that $o_t \leq 0$. Conversely, it is {\it right-leaking} if $o_t > \ell$ and $a_t \notin W_T$.  Leaking moves represent ``lost capacity" for the Algorithm. Since the game only ends when $I$ is full, every stone placed outside $I$ is a turn where the Algorithm failed to claim a spot in $I$, potentially allowing the Adversary to claim more than half of the interior board and thus increase $S$.

Note that $i_t = \min\{i: H_t \subset Y_i\}$ is monotone increasing in $t$. So we can split the whole game into at most $r$ phases, corresponding to when $i_t = 1, 2, \dots, r$. 
    \begin{claim}[Distance Monotonicity]\label{claim:1}
Fix a phase $P$ where $Y^{(t)} = Y$ is constant. Let $\{a_1, \dots, a_k\} \subseteq H_T \setminus W_T$ be the chronological subsequence of Adversary moves that triggered Algorithm moves $o_i \in Y \setminus I$ (leaking moves), and the time where the adversary did not place a stone at a spot originally occupied by the algorithm. For left-leaking moves ($o_i \leq 0$), the sequences satisfy $a_1 < a_2 < \dots < a_k$ and $o_1 > o_2 > \dots > o_k$.
\end{claim}

\begin{proof}[Proof of Claim]
Consider turns $t_i < t_j$ within phase $P$. At turn $t_i$, the Algorithm chose $o_i$ while $a_j$ was unoccupied (since $a_j$ is not played until $t_j$ as $a_j \in H_T \setminus W_T$ and $t_j > t_i$). Because $a_j \in I$ and $a_j \in Y$, it was available at $t_i$. The greedy choice implies $d_Y(o_i, a_i) \leq d_Y(a_j , a_i)$. Since $o_i \leq 0$ and $a_i, a_j \geq1$, if $ a_j < a_i$, it implies $o_i < a_j < a_i$. Thus  the relative distance $d_Y(a_i , a_j)< d_Y (a_i, o_i)$, a contradiction. Thus $a_j > a_i$. 

Similarly, when $j > i$, the string $o_j$ was unoccupied at $t_i$. As $a_i > 0, o_i, o_j \leq 0$, greedy choice implies $d_Y(a_i , o_i) < d_Y (o_j , a_i)$ which implies $o_j < o_i$. 
\end{proof}

\begin{claim}[Bound on Leaking Moves]\label{claim:leaking}
The number of left-leaking moves $k$ in any single phase $P$ is strictly bounded by $\log_2 \ell$. Specifically, $k \leq \log_2 \ell$.
\end{claim}

\begin{proof}
Let phase $P$ have a visibility set $Y$. Let $\{(a_i, o_i)\}_{i=1}^k$ be the sequence (in chronological order) of pairs where $a_i \in (I \cap Y)$ is the Adversary's move and $o_i \in (Y \setminus I)$ is the Algorithm's left-leaking response ($o_i \leq 0$). We retrospectively analyze the realized distances.

Consider any index $i < k$. At turn $t_i$, the Algorithm chose $o_i$. However, the integer $a_{i+1}$ (the Adversary's move at a future turn $t_{i+1}$) was necessarily unoccupied at turn $t_i$ as $a_{i+1} \notin W_T$. Since $a_{i+1} \in I \cap Y$, it was a valid candidate for the Algorithm. The greedy rule dictates that the distance to the chosen point $o_i$ must be less than or equal to the distance to any other available point in $Y$:
\begin{equation} d_Y(o_i, a_i) \leq d_Y(a_{i+1}, a_i). \label{eq:greedy_constraint} \end{equation}

Define $x_i = d_Y(a_i, o_i)$ as the relative distance in $Y$ between the $i$-th pair. We observe the growth of $x_{i+1}$ relative to $x_i$:
\[ x_{i+1} = d_Y(a_{i+1} , o_{i+1}). \]
Expanding this term using the previous coordinates and the monotonicity claim we just proved:
\[ x_{i+1} = d_Y(a_{i+1} , a_i) + d_Y(a_i , o_i) + d_Y(o_i, o_{i+1}). \]
From \eqref{eq:greedy_constraint}, we have $d_Y(a_{i+1}, a_i) \geq x_i$. Additionally, by the spatial monotonicity established in Claim \ref{claim:1}, $o_i > o_{i+1}$. Since these are distinct integers, $o_i - o_{i+1} \geq1$. Substituting these into the expansion:
\begin{equation} x_{i+1} \geq x_i + x_i + 1 = 2x_i + 1  \label{eq:1} \end{equation}

The base case $x_1 = d_Y( a_1 , o_1)$. Since $a_1 \in [1, \ell]$ and $o_1 \in [-\ell, 0]$, the smallest possible value for $x_1$ is $1 - 0 = 1$. Solving the recurrence (\ref{eq:1}), we obtain \[x_i \geq2^i - 1 \implies x_k \geq2^k - 1.\] 
Combining with (\ref{eq:greedy_constraint}), we have 
\[ d_Y(a_{i+1},  a_i) \geq x_i \geq 2^{i}-1. \]

Together with Claim \ref{claim:1}, we have 
\[ d_Y(a_k, a_1) = d_Y(a_k, a_{k-1}) + d_Y(a_{k-1}, a_{k-2})+ \dots + d_Y(a_2, a_1) \geq \sum_{i=1}^{k-1} (2^i-1) = 2^k - k-1. \]

However, the relative distance $d_Y(a_k,a_1)$ is physically bounded by the dimensions of the board $I = [a, a+\ell]$. Specifically, $d_Y(a_k, a_1) \leq \ell$. 
Equating the recurrence lower bound with the board's upper bound:
\[ 2^k - k-1 \leq \ell \implies  k \leq \log_2 \ell. \]
\end{proof}

Since there are at most $r$ phases, and in each phase there could be either left or right leaks, the total leaking moves \[W_{out} \leq 2r \log_2 \ell.\] 

We now bound the total number of skipped moves of the algorithm. An  {\it true skip} at turn $t$ is when $a_t \in B_T \notin W_T$ and the algorithm skips at turn $t$. 
\begin{claim}
The total number of true skips $K$ is at most $r$.
\end{claim}

\begin{proof}
The visibility index $i_t = \min \{ i : H_t \subseteq Y_i \}$ is monotonically non-decreasing because $H_t$ is a cumulative set. A phase $P_k$ is defined as a maximal interval of turns where $i_t = k$. 

A phase transition ($i_{t+1} > i_t$) is triggered if the Adversary plays $y_{t+1} \notin Y_{i_t}$. 
Now consider a skip at turn $t$. By definition, a skip occurs only if $I \cap Y_{i_t}$ is completely occupied. 
There are two scenarios. 

The first case is when the adversary does not occupy a spot that was already taken by a white stone. 
Since the game has not terminated, there must exist at least one unoccupied spot in $I$. Because $I \cap Y_{i_t}$ is full the unoccupied spots must be in $I \setminus Y_{i_t}$. As the adversary cannot repeat itself, the subsequent play by the adversary, which is $y_{t+1}$, must be in some unoccupied spot, and thus must be in $I \setminus Y_{i_t}$. 
When the Adversary plays this $y_{t+1}$ on the subsequent turn, the history $H_{t+1}$ is no longer contained in $Y_{i_t}$, forcing $i_{t+1} \geq i_t + 1$. 
Thus, in this case, the skip (and each full occupation of a visibility set) forces an immediate phase transition. Since there are only $r$ sets in the filtration, the total number of skips in this scenario is at most $r$. 

The second scenario is that the adversary occupies a spot that was already taken by a white stone. However in such a case the adversary input $a_{i_t} \in W_T$, and thus will not be counted towards a true skip. 

Therefore the total number of true skips is at most $r$. 
\end{proof}

Since $|B_T\setminus W_T| - |W_T|$ is at most the total number of leaks plus the total number of true skips, we have that 
\[ |B_T\setminus W_T| - |W_T| \leq 2r \log_2 \ell + r.  \]
Since $|B_T\setminus W_T| + |W_T|  = \ell$, we have $|W_T| \geq \frac{\ell - (2r \log_2 \ell + r)}{2}$.
Then:
\[ \frac{|W_T|}{|I|} \geq  \frac{\ell - (2r \log_2 \ell + r)}{2\ell}. \]
\end{proof}
It can also be seen from the proof that Nature Reserve and Nature’s interventions outside $I$ do not affect the bound on $\frac{|W_T|}{|I|}$, as long as the reservation process is monotone (i.e., once a position is reserved, it cannot be unreserved) and reservations occur only outside $I$. On the other hand, Nature Help can only improve the bound on $\frac{|W_T|}{|I|}$.

\subsubsection{Finish up proof of Lemma \ref{lem:ell}}
\label{subsubsec:finish-analysis-one-dim}

\begin{proof}[Completion of the Proof of Lemma \ref{lem:ell}]
Recall $I = [a, a+\ell]_K$. 

Recall the conclusion of Lemma \ref{lem:important}: 
    At time $t_i$, if $Z_i^{\leq \ell} \cup P_{t_1}^{\leq \ell} \setminus \{w_{t_i}\}$ has available elements, then $o_{t_i} \in Z_i^{\leq \ell} \cup P_{t_1}^{\leq \ell}$. 
Furthermore, suppose that the set $Z_i^{\leq \ell} \cup P_{t_1}^{\leq \ell} \setminus \{w_{t_i}\}$ contains at least one element that is available at time $t_i$. Then $o_{t_i}$ is defined to be an element
\[
o_{t_i} \in \arg\min \left\{ d_{Z_i^{\leq \ell} \cup P_{t_1}^{\leq \ell}}(x, w_{t_i}) \;\middle|\; x \in Z_i^{\leq \ell} \cup P_{t_1}^{\leq \ell} \setminus \{w_{t_i}\},\ x \text{ available at time } t_i \right\}.
\]

For each $1 \leq i \leq M$, 
define \[G_i = Z_i^{\leq \ell} \cup P_{t_1}^{\leq \ell}.\]
By Lemma \ref{lem:important} that $I \cap (Z_i^{\leq \ell} \cup P_{t_1}^{\leq \ell}) = I \cap (P_{t_i} \cup L_{t_i})$, and the fact that $w_{t_i} \in P_{t_i}\cap I$ implies that $I \subset G_M$. 

By Lemma~\ref{lem:rchange}, we have
\[
G_1 \subseteq G_2 \subseteq \cdots \subseteq G_M,
\]
and the sequence changes value at most $r$ times. Let $r' \le r$ be the number of distinct values in this chain, and denote them by
\[
\mG_1 \subsetneq \mG_2 \subsetneq \cdots \subsetneq \mG_{r'}.
\]
As shown above, we have $I \subseteq \mG_{r'}$.

Define a monotone non-decreasing function $f : [M] \to [r']$ such that
\[
G_i = \mG_{f(i)} \quad \text{for all } i \in [M].
\]

We apply Lemma~\ref{lem:comb} with $\mG_i$ playing the role of $Y_i$. The interval $I = [1,\ell]$ in Lemma~\ref{lem:comb} corresponds to our interval $I = [a, a+\ell]_K$. 

A black stone placed by the adversary in a turn of Lemma~\ref{lem:comb} corresponds to a string revealed by the adversary at some time step, while a white stone placed by the algorithm corresponds to a string generated by our algorithm at that time.

Note that $t_1 < t_2 < \cdots < t_M$ are precisely the time steps at which the adversary reveals a string in $I = [a, a+\ell]_K$. Prior to time $t_1$, no element in $I$ has been occupied by the adversary.

When the adversary inputs an element $w_{t_i} \in I$, by definition we have $w_{t_i} \in PB(t_i)$ and hence $w_{t_i} \in Z_i$. Since $w_{t_i} \in I$, it follows that $w_{t_i} \in G_i = \mG_{f(i)}$. Let $o_{t_i}$ denote the output of the algorithm at time $t_i$. By Lemma~\ref{lem:important}, if $I \cap G_i$ contains an available element, then
\[
o_{t_i} = \arg\min \left\{ d_{\mG_{f(i)}}(x, w_{t_i}) \;\middle|\; x \in \mG_{f(i)} \setminus \{w_{t_i}\},\ x \text{ available at time } t_i \right\}.
\]

We are ultimately counting the number of strings placed only by the adversary (black stones) and those generated by the algorithm (white stones) within $I = [a, a+\ell]_K$. If there is no available element in $I \cap G_i$ for the algorithm, then $o_{t_i}$ is placed outside $I$. This corresponds to Nature intervening in the game.

So far, there is a clear correspondence between our algorithm and the setup in Lemma~\ref{lem:comb}. However, there is one subtlety. In our analysis, we only control the adversary’s inputs within $I$, namely at times $t_1, t_2, \dots, t_M$, which are precisely the time steps at which the adversary reveals a string in $I = [a, a+\ell]_K$. We do not control the adversary’s behavior outside $I$, and more importantly, we do not control the behavior of the algorithm itself.
In particular, the algorithm may generate elements inside $I$ before time $t_1$ or between times $t_i$ and $t_{i+1}$. These actions are modeled in Lemma~\ref{lem:comb} via Nature Help and Nature Reserve. Specifically, the elements generated by the algorithm $\mA$ strictly between times $t_{i-1}$ and $t_i$ (with $t_0 := -\infty$) that lie in $I$ correspond to Nature Help in turn $i$. Similarly, the elements generated by $\mA$ and those used by the adversary strictly between times $t_i$ and $t_{i+1}$ that lie outside $I$ correspond to Nature Reserve in turn $i$.

Therefore, by Lemma~\ref{lem:comb}, the output density within $I = [a, a+\ell]_K$ is at least $1/2 - o_\ell(1)$, as desired. This completes the proof of Lemma~\ref{lem:ell}.
\end{proof}

Recall from the discussion following Lemma~\ref{lem:ell} that Lemma~\ref{lem:ell} implies Theorem~\ref{thm:finiterank}. This completes the proof of Theorem~\ref{thm:finiterank}.

\vspace{0.1in}
We say that an algorithm has speed $C$ if, at each time $t$, when the adversary inputs one string, the algorithm may generate up to $C$ strings.

An essentially identical argument yields the following corollary.

\begin{cor}
      Suppose the countable collection of languages $\mX$ has Cantor-Bendixson rank at most $r < \infty$. Then for any $f: \mathbb{N} \to \mathbb{N}$ that is increasing and not uniformly bounded above, there is an algorithm  with speed $C$ which generates in the limit whose lower $f$-window density in the true language is at least $C/(C+1)$.  
\end{cor}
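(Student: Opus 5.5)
The plan is to reuse the algorithm of Theorem~\ref{thm:finiterank} verbatim: the static CB structural tree $\mT$, the pullbacks $PB(t)$, and the local pods $P_t$. The only change is that at each time $t$ the algorithm now outputs up to $C$ strings instead of one. Concretely, it outputs the $C$ available strings of $L_{i_t}\cup P_t$ that are closest to $w_t$ in relative distance $d_{L_{i_t}\cup P_t\cup\{w_t\}}$, with a fixed tie-breaking rule; equivalently, it performs the single-output greedy step $C$ times in succession. Validity is unaffected. Lemma~\ref{lem:PBsubsetK} and the argument of Section~\ref{subsubsec:validity} only use that every output lies in $L_{i_t}\cup P_t$ and that these sets eventually lie inside $K$, and this holds for each of the $C$ outputs. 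By Claim~\ref{claim:monotone} it then suffices to prove lower Banach density at least $C/(C+1)$, and to do so via the analogue of Lemma~\ref{lem:ell} with $1/2$ replaced by $1/(C+1)$ for the adversary-only fraction.

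The setup Lemmas~\ref{lem:rchange}, \ref{lem:PtZt} and \ref{lem:important} carry over unchanged, since they concern the pods and visibility sets, not the number of outputs. For Lemma~\ref{lem:important} we apply the argument to each of the $C$ sequential greedy picks. This gives, at each time $t_i$, a nested visibility set $G_i=Z_i^{\leq \ell}\cup P_{t_1}^{\leq \ell}$ that changes at most $r$ times, together with the fact that every in-window pick minimizes relative distance within $G_i$. The padding $\ell$ in the truncations must be enlarged to $C\ell$ (and $z_{a_0-2\ell}\gg 4C\ell$) so that all $C$ picks are controlled inside the truncated sets.

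The new ingredient is a speed-$C$ version of Lemma~\ref{lem:comb}, in which the Algorithm places $C$ white stones per turn by repeated greedy choice. I will aim to show $|W_T|\geq \frac{C}{C+1}|I| - O_{r,C}(\log \ell)$. The accounting goes as follows. Each black stone not on a white spot either triggers $C$ white stones inside $I$, or triggers some picks outside $I$ (leaks), or triggers a skip. This gives $C\,|B_T\setminus W_T| - |W_T| \leq C(\#\text{leaks}+\#\text{skips})$, and with $|B_T\setminus W_T|+|W_T|=\ell$ this yields the desired bound.

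The skip bound (at most $r$ true skips) goes through as before. The main obstacle is the leak bound, namely an analogue of Claims~\ref{claim:1} and \ref{claim:leaking} when several picks occur per turn. To handle it, within a fixed phase I will consider the leftmost leaking pick of each left-leaking turn. Whenever such a leak occurs, every pick that turn is at relative distance at least the distance to any available in-window spot, in particular at least $d_Y(a_{i+1},a_i)$. This still forces $a_1<a_2<\cdots$, the leaked positions to move monotonically outward, and the doubling recurrence $x_{i+1}\ge 2x_i+1$. Hence there are $O(\log \ell)$ leaking turns per phase and at most $C$ leaked stones per turn, for a total of $O(rC\log\ell)$ leaks. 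The care needed is in checking that picks made earlier in the same turn do not break the monotonicity argument; I would handle this by treating the $C$ picks as $C$ consecutive sub-turns with a null adversary move, which is allowed by Nature's actions in the game. Finally, the averaging argument after Lemma~\ref{lem:ell} transfers verbatim with $1/2$ replaced by $C/(C+1)$.
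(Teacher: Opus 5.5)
Your proposal is correct and is exactly the ``essentially identical argument'' the paper appeals to. You rerun the finite-rank algorithm with the $C$ closest available strings as outputs, carry Lemmas~\ref{lem:rchange}--\ref{lem:important} over pick by pick, and redo the game accounting as $C|B_T\setminus W_T|-|W_T|\le O_{r,C}(\log\ell)$, which gives $C/(C+1)-o_\ell(1)$ on every far window. One slip to fix: the greedy rule says each pick of turn $t_i$ is at relative distance \emph{at most} (not at least) $d_Y(a_{i+1},a_i)$, which is the direction your doubling recurrence actually uses; and since the $C$ picks are simply the $C$ closest available points at the start of the turn, the ``null adversary sub-turn'' device is unnecessary (it is also not a legal move in the game).
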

Therefore, as $C$ increases, the guaranteed lower Banach density can approach $1$.

\section{Higher Dimensional Model: topology plus geometric discrepancy}

We now turn from intervals to boxes. The one-dimensional results show
that Banach density exposes a topological layer of the breadth problem.
In dimensions \(d\ge2\), a second layer appears: even when the target
language is known in advance, rectangular windows cannot always be
uniformly balanced. \textbf{Thus higher dimensions introduce a genuinely
geometric barrier, separate from the learning/topological one.}

We now formalize the higher-dimensional setting. Suppose the strings
under consideration are embedded as a countable set of points in
\(\mathbb R^d\). We can think of this $d$-dimensional space as 
corresponding to an embedding of the strings
in an underlying space, in the way that
language models standardly start from 
a $d$-dimensional embedding of the set of strings \cite{mikolov2013word2vec,pennington2014glove,pereira1993distributional}. 

Since our density notions count strings inside windows,
rather than measuring ambient volume, as discussed in Subsection \ref{subsec:introbanach}, the relevant structure of this
embedding is the coordinate-window structure: for each axis-parallel
rectangle, which strings lie inside it?

We work with the discrete version of this structure. In each coordinate,
assume that these coordinate value sets are discrete, this maps the
embedded strings to a subset of \(\mathbb Z^d\), and it preserves exactly
the incidence relation with axis-parallel windows. Hence the lower
Banach-density question over coordinate windows in the original embedding
is exactly the lower Banach-density question over axis-parallel
rectangles in \(\mathbb Z^d\).

\

We can again study the same 
 model of Kleinberg and Mullainathan \cite{kleinberg2024limit}.
As before, the adversary chooses a target language $K$ known only
to come from a countable set of candidates 
$\coll = \{\lang{1}, \lang{2}, \dots\}$, and
it enumerates the strings of $K$.
At time $t$, having seen the strings $\seen_t$ generated so far, 
the algorithm's goal is to generate a string $\out_t$ that comes 
from $\trueL - \seen_t$: that is, from 
true language $\trueL$ but not among the set $\seen_t$ already
enumerated. 
We say the algorithm generates in the
limit from $\trueL$ if after some finite time $\fint$, all of its
output strings $\out_t$ come from $K - \seen_t$.
As Theorem \ref{thm:acc} does not care about the underlying set of strings as long it is countable, it is also clear that it is always possible to generate in the limit. 

The question is how to measure the density (breadth) of the generation? 

When $d = 1$, the choice is constrained in the way we've seen so far in the paper: we are ordering strings  as $1,2,\dots = \mathbb{N}$, and computing the output density in some interval in the true language $K$. In other words, for any interval $I_\mathbb{N}$ in $\mathbb{N}$, we are looking at the density of output in $I_\mathbb{N} \cap K$. In other words, assume $O$ is the output at the end of the algorithm; we are then measuring
\[  \frac{|O \cap I_\mathbb{N} \cap K|}{|I_\mathbb{N} \cap K|}
\]
and the condition is always under $|I_\mathbb{N} \cap K| \to \infty$ (plus maybe some other conditions of the positions of $I$). 
In particular, the Banach lower density (defined in (\ref{def:banach})) precisely can be rewritten  as 
\[\delta_B(O, K) = \liminf_{|I_\mathbb{N} \cap K| \to \infty}   \frac{|O \cap I_\mathbb{N} \cap K|}{|I_\mathbb{N} \cap K|}.
\] 
and the asymptotic lower density can be 
(defined in (\ref{def:asymptotic})) precisely can be rewritten  as 
\[\underline{d}(O, K) = \liminf_{|I_\mathbb{N} \cap K| \to \infty, I \text{ starts at }1}   \frac{|O \cap I_\mathbb{N} \cap K|}{|I_\mathbb{N} \cap K|}.
\] 

Note that intervals in $\mathbb{N}$ are the one-dimensional analogue of rectangles. Thus, the densities above can equivalently be interpreted in terms of axis-parallel rectangles $R$, by considering
\[
\liminf_{|R \cap K| \to \infty} \frac{|O \cap K \cap R|}{|R \cap K|}.
\]
where $\liminf$ is among all axis-parallel rectangles $R$. 
Indeed, in 1-dimensional space, every rectangle is an interval $I_{\mathbb{N}}$.
This explains why windows are the right higher-dimensional analogue of
intervals. A window is obtained by choosing two coordinate cutoffs in
each dimension. A local failure of breadth is witnessed by a large
window \(R\) for which \(|R\cap K|\) is large but \(|R\cap K\cap O|\) is
small. After the order-index reduction, these witnesses are exactly
axis-parallel rectangles in \(\mathbb Z^d\).

\subsection{Ordinary Banach density detects a geometric barrier -- density 0}
This interpretation would allow us to generalize the density to higher dimensions. It is tempting to define the lower Banach density for higher dimensions as follows.
Consider the family of all axis-parallel rectangles $R$ in $\mathbb{Z}^d$, and compute the density as 
\[d(A,B) = \liminf_{|R \cap B| \to \infty}   \frac{|A \cap B \cap R|}{|B \cap R|}.
\] 

However, for our particular purposes, this definition turns out to be \textbf{overly restrictive}. In particular, we will show next that even when there is only a single target language, the quantity
$\liminf_{|R \cap (O \cap K)| \to \infty}   \frac{|O \cap K \cap R|}{|K \cap R|}$
may vanish regardless of the algorithm strategy. Here again, $O$ denotes the output of the algorithm, $K$ is the true language, and $R$ ranges over all axis-parallel rectangles in $\mathbb{Z}^d$ for $d\geq 2$.

\begin{thm}\label{prop:strictbanachHD}
  Suppose languages are infinite subset of $\mathbb{Z}^d$ where $d\geq 2$.  There is a language $K$ such that the adversary always has a way to enumerate the elements in the language such that 
   \[ \liminf_{|R \cap (O \cap K)| \to \infty, R \text{ axis-parallel rectangle}}   \frac{|O \cap K \cap R|}{|K \cap R|} = 0.\]
\end{thm}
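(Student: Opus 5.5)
The plan is to bootstrap from a block-wise version of the Ramsey/discrepancy construction. Rather than a single adversary-only rectangle, we aim for a rectangle that contains a huge adversary-only core plus all the (comparatively few) points of earlier blocks. The earlier blocks contribute an unbounded number of generator outputs, yet only a vanishing fraction of the box.

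\textbf{Step 1 (block construction).} Build $K=\bigcup_{n\ge1}B_n$ as a disjoint union of finite blocks $B_n\subseteq\mathbb Z^2$; the case $d>2$ is handled by padding the remaining coordinates with $0$. Each $B_n$ is a copy of a point configuration from the Chen--Pach--Szegedy--Tardos construction \cite{Chen2008DelaunayGO}. Its defining property is that for any $2$-colouring produced by an alternating adversary/generator process on $B_n$, there is an axis-parallel rectangle $R_n$ that is monochromatic in the adversary's colour and satisfies $|R_n\cap B_n|\ge N_n$. Here $N_n$ is chosen inductively with
\[
N_n\ \ge\ n\cdot\sum_{j<n}|B_j| .
\]
The blocks are placed far apart along the diagonal, with all coordinates of $B_n$ exceeding those of $B_{n-1}$.

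\textbf{Step 2 (adversary schedule).} The adversary enumerates $K$ block by block. While working on $B_n$, it plays the Ramsey strategy, greedily revealing points of $B_n$ so as to grow a large adversary-only homogeneous rectangle. It does not move on until $B_n$ is exhausted, and it uses sparse refill steps, as in the proof of Theorem~\ref{thm:lowerdensity0f}, to ensure every string of $K$ is eventually listed. Generator outputs landing outside the current block only help the adversary inside $B_n$, since they do not occupy points of $B_n$. Because $K$ is a singleton class, validity is automatic, so no topological argument is needed.

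\textbf{Step 3 (the witnessing rectangles).} Let $R'_n$ be the smallest axis-parallel rectangle containing $R_n$ together with $B_1,\dots,B_{n-1}$. Then
\[
|R'_n\cap K\cap O|\ \ge\ |O\cap(B_1\cup\dots\cup B_{n-1})| .
\]
Every output eventually lies in $K$ and each block is finite, while the generator outputs infinitely many strings of $K$. Hence the right-hand side tends to $\infty$, which gives exactly the condition $|R\cap O\cap K|\to\infty$ in the statement. For the ratio, we want $R'_n\cap B_n$ to be essentially $R_n$. In that case
\[
\frac{|O\cap K\cap R'_n|}{|K\cap R'_n|}\ \le\ \frac{\sum_{j<n}|B_j|}{N_n}\ \le\ \frac1n\ \to 0 .
\]

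\textbf{Main obstacle.} The hard step is guaranteeing that $R'_n\cap B_n$ is not much larger than $R_n$. Expanding $R_n$ toward the lower-left, where the earlier blocks sit, may sweep in many other points of $B_n$, possibly including generator outputs. My first attempt is to strengthen Step~1: choose each $B_n$ (for instance, points on a suitable monotone-decreasing ``staircase'' in the block, or a rescaled CPST set arranged so that its lower-left-anchored boxes are the ranges) such that the homogeneous rectangle can be taken \emph{anchored} at the lower-left corner of $B_n$'s bounding box. Then the anchored extension toward the origin picks up no further points of $B_n$. If anchoring cannot be obtained directly from \cite{Chen2008DelaunayGO}, the fallback is to reserve a small sacrificial sub-block. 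We would prove a modified Ramsey lemma for ``quadrant'' ranges $\{x\le a,\ y\le b\}$ intersected with $B_n$, and bound the extra points of $B_n$ in $R'_n\setminus R_n$ by $o(N_n)$. Either way, this anchored/size-controlled homogeneity lemma is where I expect the real work to lie; the remaining steps are bookkeeping.
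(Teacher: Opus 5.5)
\textbf{Gap 1: nothing makes the large homogeneous rectangle adversary-coloured.} You build this into Step~1 as a \emph{defining property}, but it is the crux of the proof and your proposal never justifies it. The theorem of \cite{Chen2008DelaunayGO} only gives a large rectangle that is monochromatic in \emph{some} colour. The generator is perfectly content if every such rectangle consists of its own outputs, since those windows have output ratio $1$. ``Greedily growing an adversary-only rectangle'' is not an argument.

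Your fixed block-by-block schedule also removes the clean alternation any game argument would need. The generator knows $K$, so while you enumerate $B_1,\dots,B_{n-1}$ (and during refills) it can already claim points of $B_n$.

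The paper's missing ingredient is the copy trick:
\begin{itemize}
\item Each level consists of two translated, coordinatewise separated copies $P_k,P_k'$ of a CPST set.
\item The adversary goes to whichever level the generator enters first. There it answers each output in one copy by revealing its twin in the other, with stalling and refilling arranged so that this holds on infinitely many levels.
\item The final colourings of $P_k$ and $P_k'$ are then complementary. Whichever colour the CPST rectangle in $P_k$ has, either it or its translate in $P_k'$ lies entirely in $K\setminus O$.
\item The separation ensures the translate meets no other points of $K$.
\end{itemize}
If you want to keep one copy per level, a strategy-stealing argument would also work. CPST says the hypergraph of rectangle ranges of a block with at least $N_n$ points is not $2$-colourable, which rules out draws in the strong game. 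By strategy stealing the first player wins, and the same strategy wins the Maker--Breaker game for the adversary. This only holds if the adversary starts on a block the generator has not touched and postpones refills until the rectangle is complete, which your schedule does not ensure.

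\textbf{Gap 2: Step~3 is not part of the paper's route, and the anchoring fix fails.} The paper's argument only exhibits rectangles lying entirely in $K\setminus O$. So what it really proves is the $\liminf$ over $|R\cap K|\to\infty$, the form of Theorem~\ref{intro:strictbanachHD}. For that version, the repaired Steps~1--2 already suffice, and your `main obstacle' never arises.

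If you insist on the literal condition $|R\cap O\cap K|\to\infty$, your proposed anchoring fix cannot work.
\begin{itemize}
\item A rectangle anchored at the lower-left corner of the bounding box of $B_n$ meets $B_n$ in a quadrant range $\{p\in B_n: p_1\le a,\ p_2\le b\}$.
\item Every finite planar point set has a $2$-colouring in which every quadrant range with at least two points is bichromatic. Scan the points by increasing first coordinate, and colour each point opposite to the current lowest point whenever it becomes one of the two lowest points seen so far.
\item Hence no Ramsey lemma for quadrant ranges exists, whether for a rescaled CPST set or via a sacrificial sub-block.
\item The staircase variant also fails. Rectangles then cut out runs of consecutive staircase points, and the generator prevents any adversary-only run of three or more points by pairing consecutive points.
\end{itemize}
The literal version would require a genuinely different idea.
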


This contrast highlights a fundamental limitation of the above definition of density. In one dimension, when there is a single true language, the algorithm can trivially guarantee the optimal fraction of correct output, i.e., $\liminf |K\cap O \cap R|/ |K \cap R| \geq 1/2$. However, this behavior does not persist in higher dimensions. For $d \geq 2$, even when the true language is fixed and known (so the algorithm faces no uncertainty whatsoever) there inevitably exist arbitrarily large axis-parallel rectangles on which the algorithm produces no output. Consequently, the density defined via \[\liminf_{|R \cap (O \cap K)| \to \infty, R \text{ axis-parallel rectangle}}   \frac{|O \cap K \cap R|}{|K \cap R|}\]
is equal to zero. This pathology shows that the definition is overly sensitive to sparse but structured regions of the domain,  and thus fails to capture the algorithm’s behavior in higher dimensions, motivating a revised notion of density that reflects global performance, which we will develop in Subsection \ref{subsec:otherBanach}.

\begin{proof}[Proof of Theorem \ref{prop:strictbanachHD}]
    We utilize a result by Chen, Pach, Szegedy, Tardos \cite{Chen2008DelaunayGO}.
    \begin{thm}[Chen, Pach, Szegedy, Tardos, \cite{Chen2008DelaunayGO}]\label{thm:CPST}
      Fix a positive integer $d\geq 2$. For any $c \in \mathbb{N}$, there is a finite set $P$ in $\mathbb{R}^d$  such that for any two coloring of the points in $P$, there always exist an axis-parallel rectangle with at least $c$ points from $P$ and these $c$ points are monochromatic.   
    \end{thm}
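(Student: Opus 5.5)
Since Theorem~\ref{thm:CPST} is quoted from \cite{Chen2008DelaunayGO}, the paper would normally just cite it. If I had to prove it, I would first reduce to $d=2$ and then prove the planar case by a Ramsey-type induction. For the reduction, put the planar set $P\subset\mathbb{R}^2$ into $\mathbb{R}^d$ via $(x,y)\mapsto(x,y,0,\dots,0)$. An axis-parallel box $[\alpha_1,\beta_1]\times\cdots\times[\alpha_d,\beta_d]$ meets the embedded set exactly in the image of $P\cap([\alpha_1,\beta_1]\times[\alpha_2,\beta_2])$, provided $0\in[\alpha_j,\beta_j]$ for $j\ge3$. Conversely, every planar rectangle arises this way by choosing $[\alpha_j,\beta_j]=[-1,1]$ for $j\ge 3$. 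So the families of point sets cut out by boxes coincide, and a two-coloring with no monochromatic $c$-point box in $\mathbb{R}^d$ would give one with no monochromatic $c$-point rectangle in the plane.

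For $d=2$ I would prove a two-parameter statement by double induction. Let $F(a,b)$ denote a finite planar set in general position with the following property: for every red/blue coloring there is an axis-parallel rectangle whose points are exactly $a$ points, all red, or exactly $b$ points, all blue. Then $F(c,c)$ gives the theorem. The base cases $F(1,b)$ and $F(a,1)$ are trivial, since any single point is cut out by a tiny rectangle. For the inductive step I would build $F(a,b)$ from many copies of $F(a-1,b)$ and $F(a,b-1)$, together with "pivot" points $p$ placed at a corner of each copy.

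The placement is the key design choice. Each copy is squeezed into a very thin, very flat region lying, say, northeast of its pivot, and the copies are arranged along a monotone staircase so that they do not interfere with each other. The intended invariant is that any rectangle cut inside a copy can be enlarged toward the southwest to absorb the pivot while picking up no other point. Granting this invariant, the step is routine. If $p$ is red, apply induction inside the attached copy of $F(a-1,b)$. That either yields $b$ blue points, which is immediately a valid rectangle, or $a-1$ red points, which together with $p$ give $a$ red points. The blue case is symmetric, using the copy of $F(a,b-1)$. So I would attach to every pivot one copy of each type.

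The main obstacle is exactly this enlargement invariant. Rectangles inside a copy have arbitrary shapes, and extending one to reach the pivot generally sweeps in other points of the same copy. My first attempt would be to strengthen the induction hypothesis so that the monochromatic rectangle is guaranteed to be \emph{anchored}, i.e.\ its lower-left region is empty within its copy's bounding box. I would then apply a recursive affine compression, as in Pach--Tardos style product constructions, so that the anchor direction of each sub-copy points at its pivot. If this strengthening cannot be pushed through, I would fall back on the probabilistic route of \cite{Chen2008DelaunayGO}. There one takes $n$ random points in the unit square and shows that, for each fixed coloring, the $c$-point rectangle hypergraph contains many nearly disjoint small ``grid'' configurations, each of which is non-monochromatic with probability bounded away from $1$. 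A union bound over the $2^n$ colorings then succeeds once $n$ is large compared with $c$.
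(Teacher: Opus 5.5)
The paper gives no proof of this statement: it is quoted from \cite{Chen2008DelaunayGO} and used only after perturbing into $\mathbb{Z}^d$, in Theorem~\ref{prop:strictbanachHD}. Your reduction from $d$ to $2$ via $(x,y)\mapsto(x,y,0,\dots,0)$ is correct. The planar argument, however, has a genuine gap on both of your routes.

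In the pivot induction, the enlargement invariant that you flag as the crux actually fails. A copy lies northeast of its pivot $p$, so any rectangle containing $p$ extends to the left of and below the whole copy. Its trace on the copy is therefore always a southwest-quadrant trace, and $S\cup\{p\}$ can be cut out only when $S$ is such a trace; this is exactly your ``anchored'' condition. But the anchored hypothesis is false for every point set once $a,b\ge 2$. List the points by increasing $x$-coordinate. Whenever a new point becomes one of the two lowest points seen so far, give it the color opposite to that of the lowest earlier point. By induction, the two lowest points of every $x$-prefix get different colors. Any southwest-quadrant trace with at least two points contains the two lowest points of the prefix it cuts, so every anchored set of size at least $2$ is bichromatic. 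No arrangement of copies rescues this. Coordinatewise increasing compressions preserve both coordinate orders, hence the whole trace hypergraph, and reflections only change the orientation of the quadrant. Indeed, for a pivot in any position, the traces of rectangles through $p$ are unions of at most four quadrant traces. Coloring each of the four parts by the analogous rule then makes every such trace with at least $5$ points bichromatic. So the inductive step has no valid hypothesis to rest on.

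The probabilistic fallback, as you describe it, cannot close either. Fix a balanced coloring of $n$ random points. A given $c$-point configuration is monochromatic with probability at most about $2^{1-c}$, and a family of nearly disjoint configurations has only $O(n)$ members. Your argument therefore bounds the probability that none of them is monochromatic only by roughly $\exp(-O(n2^{-c}))$. For $c$ beyond a constant this is far larger than the $2^{-n}$ needed for the union bound over colorings. Taking $n$ large does not help, since both quantities are exponential in $n$. What is missing is a source of superlinearly many nearly independent trials, together with control of their dependencies. One example is letting every point take part in candidate rectangles of $\Theta(\log n)$ different aspect ratios, which is the kind of device used in \cite{Chen2008DelaunayGO}. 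Without that ingredient, or simply the citation, the planar case is not established.
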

Notice that by slightly perturbing these points in $\mathbb{R}^d$
  without changing their relative order along any coordinate axis, we may assume that all points lie in $\mathbb{Z}^d$. We call such a set $P$ having a {\it monochromatic property with parameter $c$}.

A natural way to apply the above construction is to color elements of $O$ with color~1 and elements of $K \setminus O$ with color~2. At first glance, this appears plausible: due to the speed requirement, the set $K \setminus O$ is at least ``larger'' than $O$, suggesting that any monochromatic rectangle should consist entirely of points from $K \setminus O$. However, this intuition is not technically sound. It is also quite possible that a large rectangle contains only points from $O$, since the discrepancy guarantees apply only to a fixed, prescribed rectangle and provide no control over the relative distribution of points from $O$ and $K \setminus O$ across arbitrary rectangles. Consequently, an additional argument is required to ensure the existence of a rectangle consisting entirely of points from $K \setminus O$, while not ruling out the presence of large monochromatic rectangles composed solely of points from $O$.

 The trick to use is the following copying trick.

    We consider the following construction. Fix a strictly increasing sequence of natural numbers $c_1, c_2, \dots$. First placing two identical copies (after translation) of sets $P_1, P_1'$ with the monochromatic property with parameter $c_1$ and these two  copies are placed so far apart that any point in $P_1$ is smaller to any point in $P_1'$ in all the coordinates. Then placing two identical copies of sets $P_2, P_2'$ with Monochromatic Property with parameter $c_2$ and these two  copies are placed far apart between themselves as well as far apart from $P_1, P_1'$. So every point in $P_1 \cup P_1'$ is smaller than any point in $P_2 \cup P_2'$, and every point in $P_2$ is smaller than any point in $P_2'$. Continue this process. 

   We claim that the adversary has a strategy so that at the end,  in $P_k, P_k'$, there is always a rectangle with at least $c_k$ points in $P_k$ or $P_k'$ that is completely occupied by the adversary and not used by the algorithm. 

   The adversary strategy is as follows. We call it ``Copy strategy". We pair up elements in $P_k$ and $P_k'$ by the respective copies. The adversary tries to achieve the goal that no $x \in P_k$ and its copy $x' \in P_k'$ are both occupied by the algorithm. Suppose the algorithm outputs $o_t$ at time $t$ in $P_k$ for some $k$, then the adversary will reveal the copy of $o_t$ in $P_k'$ at time $t+1$; and similarly, if the algorithm outputs $o_t$ at time $t$ in $P_k'$ for some $k$, then the algorithm will reveal the copy of $o_t$ in $P_k$ at time $t+1$. 

There is a minor technical issue: the above algorithm appears to assume that the algorithm moves first. This assumption can be removed with a  modification.

The adversary may begin by repeatedly selecting elements from $P_1$, attempting to form a monochromatic rectangle, until the algorithm first outputs an element in $P_{k_1} \cup P_{k_1}'$ for some $k_1 > 1$. At that point, the adversary switches to a copy strategy within $P_{k_1} \cup P_{k_1}'$.

If the algorithm never reaches a new set $P_{k_1} \cup P_{k_1}'$ before the adversary does, then the adversary applies the following modified copy strategy within $P_1 \cup P_1'$.
The adversary first selects an element $w_1 \in P_1 \cup P_1'$. If the algorithm does not immediately output the corresponding copy of $w_1$ within $P_1 \cup P_1'$, the adversary proceeds by simulating a copy strategy against the algorithm. In particular, whenever the algorithm outputs an element, the adversary responds by outputting its paired copy, except that if the algorithm eventually outputs the copy of $w_1$, the adversary treats this as a reset and continues the same procedure.

At termination, since the algorithm never reaches a new set $P_{k_1} \cup P_{k_1}'$ before the adversary, both players will have exhausted all elements in $P_1 \cup P_1'$. By considering the first placement at each spot (since the adversary could use a string the algorithm already output), we obtain a configuration in which each element is assigned to exactly one of the two players. This induces a coloring of $P_1$ that is the opposite of the coloring of $P_1'$.
Consequently, for any monochromatic rectangle in $P_1$ of size $c_1$, its corresponding copy in $P_1'$ has the opposite color. Therefore, at least one of the two rectangles must be entirely occupied by the adversary.

At this point, the adversary may begin selecting elements in $P_1 \cup P_1'$ that have already been used by the algorithm. However, this forces the algorithm to move to a new set $P_k \cup P_k'$, when the adversary can in turn employ the copy strategy.

If the algorithm reaches a new set $P_k \cup P_k'$ before all elements in $P_1 \cup P_1'$ have been used by either player, then the adversary immediately switches to the copy strategy in $P_k \cup P_k'$, where the algorithm effectively moves first.

By repeatedly selecting strings that have already been used by the algorithm, the adversary can delay progress and wait until the algorithm initiates a new set $P_k \cup P_k'$.

If the adversary can consistently employ the copy strategy, then—given that it must eventually enumerate all strings—it may occasionally forgo applying the copy strategy on certain sets $P_k \cup P_k'$. During these steps, the adversary instead selects strings that were previously used only by the algorithm, thereby replenishing such elements.

In the end, there exists an infinite sequence $\{k_i\}$ such that, within each $P_{k_i} \cup P_{k_i}'$, every element in $O \cap (P_{k_i} \cup P_{k_i}')$ has its paired copy in $(P_{k_i} \cup P_{k_i}')$ lying in $K \setminus O$, and vice versa. This follows from the fact that the algorithm is not allowed to output any string previously revealed by the adversary.
Thus, $K$ is partitioned into $O$ and $K \setminus O$, which we view as colors $1$ and $2$, respectively. By Property $c_{k_i}$, it follows that in either $P_{k_i}$ or $P_{k_i}'$ there exists an axis-parallel rectangle containing at least $c_{k_i}$ points that lies entirely in $K \setminus O$.

   Therefore we have shown that there is a language $K$ such that the adversary always has a strategy such that 
   \[ \liminf_{|R \cap (O \cap K)| \to \infty, R \text{ axis-parallel rectangle}}   \frac{|O \cap K \cap R|}{|K \cap R|} = 0.\]
\end{proof}

Theorem~\ref{prop:strictbanachHD} should be interpreted as a statement
about the \emph{worst-case singleton benchmark} for ordinary Banach
density. By this benchmark, we mean the density guarantee that can be
promised uniformly over all singleton classes \(\mathcal X=\{K\}\), where
the target language is known from the start and there is no learning or
identification problem.

In one dimension, this benchmark is \(1/2\). If \(K\subseteq\mathbb Z\)
is known, the generator can alternate with the adversary along the order
of \(K\), and the adversary can prevent any universal guarantee above
\(1/2\). Thus \(1/2\) is the natural learning-free baseline in one
dimension.

In dimensions \(d\ge2\), Theorem~\ref{prop:strictbanachHD} shows  the corresponding worst-case singleton
benchmark collapses to zero. This does not mean that every known language
\(K\subseteq\mathbb Z^d\) has benchmark zero. Rather, it means that there
exist known languages \(K\) for which every generator can be forced to
have ordinary lower Banach density zero. The obstruction is therefore not
topological or learning-theoretic: it is a purely geometric obstruction
coming from rectangular Ramsey/discrepancy.

Thus ordinary Banach density in higher dimensions mixes two phenomena:
the learning/topological difficulty of identifying where to generate,
and a geometric difficulty that already appears when the target is known.
Filtered Banach density is introduced to factor out this worst-case
singleton geometric obstruction. After filtering out geometrically
degenerate boxes, the meaningful singleton benchmark returns to 
\(1/2\), and the main question becomes whether the
learning/topological obstruction can be controlled. Our finite-rank
theorem answers this positively, giving \(1/2-\varepsilon\) filtered
Banach density.

\subsection{Filtered Banach lower density}\label{subsec:otherBanach}

What is the optimal benchmark to compare the output density in higher dimension? 
In one dimension ($d=1$), this problem is trivial: the elements of $K$ possess a strict linear ordering, allowing for a perfectly alternating pattern between $O$ and $K \setminus O$ sequence. Under such an alternation, any interval that contains at least two points will inherently contain both points from $O$ and $K\setminus O$ (in fact, $O$ consists of almost a half of the interval). In particular, trivially in one-dimension, it completely precludes the existence of monochromatic intervals of size $c \geq2$. 

However, extending this logic to higher dimensions introduces profound structural rigidities. Because point set in $\mathbb{R}^d$ for $d \geq2$ exhibit distinct orderings along each coordinate axis, an arrangement that alternates between $K\setminus O$ and $O$ along one axis could  appear highly clustered and unstructured along another. This is why Theorem \ref{prop:strictbanachHD} shows that simply considering rectangles with more and more points in $K$ is too restrictive and inevitably leads to collapses of output density, even when there is only one language in the collection. 

This strong negative result demonstrates the impossibility of guaranteeing  mixing when the threshold depends solely on the discrete cardinality $|K \cap R|$ for axis-parallel rectangles $R$. It implies that purely combinatorial thresholds are fundamentally too restrictive, as they fail to account for how points are spatially distributed within the ambient space.

To bridge this gap, we must recognize that ignoring the spatial embedding of the points is precisely what renders the problem intractable. Once we accept that bounding the absolute point count $|P \cap R|$ is insufficient to enforce the mixing of $O$ versus $K \setminus O$ (or coloring in the discrepancy language), it becomes necessary to incorporate geometric information about $R$. The most basic such measure is its Lebesgue measure (volume), denoted $\Vol(R)$. Indeed, the colorability of a point set is governed not merely by cardinality, but by its spatial density within the enclosing rectangle. A rectangle that captures $c$ points by spanning a vast but sparse region of $\mathbb{R}^d$ exhibits fundamentally different behavior from one that captures $c$ points within a small, dense region. Consequently, replacing some density threshold relating $\Vol(R)$, or equivalently, with the side lengths of $R$, is unavoidable if we are to obtain meaningful positive results. By relating $|P \cap R|$ to some quantities incorporating $\Vol(R)$, we arrive at conditions that more accurately reflect both the combinatorial structure and the underlying geometry of the point set.

Similar to the spirit of the impossibility result Theorem \ref{prop:strictbanachHD} which utilizes Theorem \ref{thm:CPST}, which has connection to discrepancy theory. When there is only one language $K$ in the collection of languages, let $\chi: K \to \{+1, -1\}$, (considering $+1$ for the elements in $K \setminus O$, and $-1$ for $O$). For any set (such as a rectangle) $S$ in a set system $\mS$, let $\disc(\chi, S) = \sum_{x \in S} \chi(x)$ and $\disc(\chi, \mS) = \max_{S \in \mS} \disc(\chi, S)$.

The quantity that is most relevant for our density application is for any member $S$ in the set system (such as the set of axis-parallel rectangles), 
\[\disc(\chi, S)/|S \cap K|.
\] 
Such quantity is called size-sensitive discrepancy. We would like for the sets $S$ where $|S \cap K|$ is growing in some metric, the quantity $\disc(\chi, S)$ is $o(|S \cap K|)$.  Recall in the 1-dimensional case it is clear that the alternating coloring $\chi$ satisfies $\disc(\chi, I)= O(1)$ for all rectangles $I$ (as rectangles in $\mathbb{Z}$ are simply intervals), so $\disc(\chi, S)/|S \cap K| = o(1)$. 

When  dimension $d \geq 2$, because of the geometric tension between different dimensions, even the best coloring could have large discrepancy that does not simply depend on $|S \cap K|$. Most of the classical results in discrepancy theory has the discrepancy depending on $\text{poly}\log n$, where $n$ is the size of the full point set, and potentially be tremendously larger than $|S \cap K|$, but we would like $\disc(\chi, S)  \leq \epsilon |S \cap K|$.  The existence of the $\text{poly}\log (n)$ term in the best known discrepancy bounds is problematic for us, because our application is more similar to apply to an infinite set $K$ (or a larger and larger sets in $K$), and thus the extra ``$\log |K|$" term is detrimental to bound $\disc(\chi, S)/|S \cap K| = o(1)$. 

The result of Chen et al (Theorem \ref{thm:CPST}) can be strengthened to $c = \log^*(n)$, and the upper bound would be $\text{poly}\log n$ by the result of \cite{Ezra2016} or \cite{Matouek2014FactorizationNA}. Therefore a good guess for a better generalization of the lower Banach density in higher dimension would be requiring the rectangle $R$ to be non-degenerate with respect to $K$, i.e., $|R \cap K|$ to grow faster than $\text{poly}\log(\Vol(R))$. (Since $\Vol(R)$ is at most the $d$-th power of its largest side length and $d$ is considered as fixed  constant). 
\begin{defn}\label{def:filBanach}
Let $g:[0,\infty) \to [0, \infty)$ be nondecreasing with $g(x) \to \infty$. For $A \subset A' \subset \mathbb{Z}^d$, 
 define the {\it $g$-filtered lower Banach density}  by
    \begin{equation}
    \delta^g(A, A') = \liminf_{|S \cap A'|/ (g(\Vol(S))  \log \Vol(S) \to \infty}   \frac{|S \cap A \cap A'|}{|S \cap A'|}.\label{eq:banachdimd}
\end{equation}
where $S$ runs through all axis-parallel rectangles in $\mathbb{Z}^d$. 
\end{defn}

Again the requirement $|S \cap A'| \gg  \text{poly}\log \Vol(S)$ is to exclude the unavoidable irregular rectangle. The nondegeneracy condition excludes rectangles whose point count is too small relative to their geometric scale; Theorem \ref{prop:strictbanachHD} shows that such rectangles cannot be uniformly controlled even for singleton language classes.

Even with this density-filtered definition of lower Banach density, the collection of languages in Theorem \ref{thm:lowerdensity0f} can be easily generalized to the higher dimensional case such that all algorithms will still have lower Banach density zero.  
\begin{cor}
   Let $d \geq 2$.  There is a countable collection of languages $\mX$ with underlying set of strings being $\mU_1 \times \dots \times \mU_d$ where each $\mU_i$ is again a countable set of strings, such that for any algorithm that generates in the limit, there is a way for the adversary so that the filtered lower density of the output in $K$ with respect to the metric (\ref{eq:banachdimd}) is always zero. 
\end{cor}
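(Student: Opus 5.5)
The plan is to lift the one-dimensional construction of Theorem~\ref{thm:lowerdensity0f} to $\mathbb{Z}^d$ by placing it on a line. Concretely, I would embed every language of the family $\mathcal{C}$ from that proof along the first coordinate axis. Each string $n \in \mathbb{N}$ is identified with the point $(n,0,\dots,0)$, which gives the countable collection $\mX = \{ L \times \{0\}^{d-1} : L \in \mathcal{C}\}$. Since the topology, the infinite perfect towers, and the Cantor--Bendixson structure depend only on set inclusion and finite evidence, $\mX$ inherits all the properties used in that proof. In particular it keeps the towers converging to $L_{l,a,0,\emptyset}$ and to the full line as $l \to \infty$. So the adversary strategy (mega steps, level-by-level rounds, and the contradiction with generation in the limit) runs verbatim. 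It produces, against any algorithm that generates in the limit, infinitely many intervals $[a_i^{(s_i)}, a_i^{(s_i)}+k_i^{(s_i)}]$ on the axis. These intervals have lengths tending to infinity, and each is occupied entirely by the adversary.

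The new step is to check that these witnessing windows survive the nondegeneracy filter in Definition~\ref{def:filBanach}. For a witnessing interval $J$ of length $k$, I would take the box $S = J \times \{0\}^{d-1}$, a degenerate box with side lengths $(k+1,1,\dots,1)$. Then $\Vol(S) = k+1$ in the discrete sense, and $|S \cap K| = k+1$ because $K$ contains the whole interval. We therefore need $(k+1)/(g(k+1)\log(k+1)) \to \infty$. This holds as long as $g$ grows more slowly than $x/\log x$. For the intended regime (e.g.\ $g = \log^\epsilon$) it is automatic, and the statement is in any case only meaningful when the filter admits such thin boxes. On $S$ we get $|S \cap O \cap K| = 0$, so the filtered liminf is $0$.

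If degenerate boxes are to be disallowed, or if the product universe $\mU_1 \times \dots \times \mU_d$ must be used genuinely, I would instead thicken the line. Replace each point $n$ by the fiber $\{n\} \times [0,1]^{d-1}$, and make every language a union of full fibers. The adversary then runs the same strategy on fibers: it enumerates a fiber's $2^{d-1}$ points consecutively, and treats the algorithm as having claimed a fiber whenever it outputs any point of it. Because the pattern is a constant-factor blowup, $|S\cap K|$ and $\Vol(S)$ change only by the constant factor $2^{d-1}$, and the ratio condition is unaffected.

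The main obstacle is this fiber version. The one-dimensional argument relies on one output occupying one position, and on the algorithm being unable to profit from partial fibers. The adversary must prevent the algorithm from planting stray outputs inside a fiber it is filling, since these would give the box positive density. I would first try the simplest fix: define the languages so that consistency forces whole fibers. The algorithm's outputs then land on fibers in $P+\{j\}$ exactly as before, and "edgy" is redefined fiberwise. The halving bound $|S_{t'}| \ge 1 + (|S_t|-1)/2$ would then need to absorb the factor $2^{d-1}$ loss, and it should, since only nonemptiness is needed. If this becomes messy, I would fall back on the degenerate-box argument of the previous paragraph, which already proves the statement as written.
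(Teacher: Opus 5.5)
Your argument is correct and is essentially the paper's: lift the one-dimensional family by appending zero coordinates and rerun the adversary of Theorem~\ref{thm:lowerdensity0f}. The only difference is the witnessing window. You use the thin rectangle $J\times\{0\}^{d-1}$, which passes the filter whenever $g(x)\log x=o(x)$ under the discrete-volume convention. The paper uses the square $J\times[0,k]^{d-1}$, which contains the same points of $K$, has volume of order $k^{d}$, passes the filter for polylogarithmic $g$, and shows the result survives even when one restricts to squares; it also makes your fiber-thickening detour unnecessary, since full-dimensional windows are obtained without changing the language family.
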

\begin{proof}
In fact, we could even restrict our attention to squares (every side lengths are the same) and the same negative result still holds.

    Consider the construction in one-dimensional from Theorem \ref{thm:lowerdensity0f}. For each $L$ in that construction, extend that language to $\mathbb{N}^d$ by appending zeros to the second to the $d$-th dimension. Therefore for any axis-parallel square $B$ that intersects $K$, $|B \cap K| = \ell$ where $\ell$ is the side length of $B$. Therefore $|B \cap K| = \Vol(B)^{1/d} \gg \text{poly} \log \Vol(B)$, satisfying the condition in (\ref{eq:banachdimd}). Therefore the same conclusion in Theorem \ref{thm:lowerdensity0f} still applies here. 
\end{proof}

Again the construction above is about a countable  collection of languages with infinite Cantor-Bendixson rank. 

When the Cantor-Bendixson rank of the collection is finite,  we can prove a dichotomy using the adapted notion of lower density.

\begin{thm}\label{thm:banachhighdimRecs2}
 Let $f: [0, \infty) \to [0, \infty)$ be a non-decreasing function such that $\lim_{x \to \infty} f(x) = \infty$.
   Let $\mathcal{X}$ be a countable collection of subsets of $\mathcal{U}_1 \times \cdots \times \mathcal{U}_d$, where each $\mathcal{U}_i$ is a countable set of strings. Suppose that $\mathcal{X}$ has finite Cantor--Bendixson rank.
Then for every $\epsilon > 0$, there exists an algorithm that generates in the limit and the algorithm output $O$ in the true language $K$ has filtered Banach lower density at least $1/2 - \epsilon$, where the filtered Banach lower density is defined as
    \[\liminf_{|R \cap K| / \left( f(\Vol(R)) \log \Vol(R) \right)\to \infty}   \frac{|R \cap K \cap O|}{|R \cap K|}\]
where $R$ ranges over all axis-parallel rectangles in $\mathbb{Z}^d$.
\end{thm}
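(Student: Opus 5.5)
The plan is to reuse the entire validity machinery from the one-dimensional proof and change only the local output rule and the density analysis. First, via the order-index normalization, identify $\mathcal U_1\times\cdots\times\mathcal U_d$ with a subset of $\mathbb Z^d$. Then run the same index-guessing algorithm $\Acc$ (Theorem~\ref{thm:acc}), the same static CB structural tree $\mT$, and the same pullback rule $PB(t)$. Neither Lemma~\ref{lem:PBsubsetK} nor Claim~\ref{claim:parent} uses the ambient order, so after a finite time every pullback is a subset of $K$. Validity then holds as long as each output is drawn from $L_{i_t}\cup P_t$, where the pods $P_t$ are built inside $PB(t)$ and are eventually contained in $K$.

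The new ingredient is a \emph{size-sensitive discrepancy lemma for infinite point sets}, to be proved first as a standalone result. For every infinite $A\subseteq\mathbb Z^d$ and every $g\to\infty$, there is a $\pm1$ coloring $\chi$ of $A$ with
\[
|\disc(\chi,R\cap A)|\le \epsilon\,|R\cap A| \quad \text{whenever } |R\cap A|\ge g(\Vol(R))\log\Vol(R)
\]
and $R$ is large. I would prove it dyadically: partition $A$ into finite blocks by dyadic scale of the coordinates. Inside each block of bounding volume $V$, use a classical finite rectangle-discrepancy bound of order $\mathrm{polylog}$ in the number of points (Matou\v{s}ek et al.; Ezra), after compressing coordinates so that the point count and $V$ are comparable in logarithm. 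A rectangle $R$ meets $O(\log\Vol(R))^{d}$ blocks in a boundary fashion, and each contributes at most $\mathrm{polylog}$ error, so the total error is $o(g(\Vol R)\log\Vol R)=o(|R\cap A|)$. I expect the bookkeeping here, making the block error scale with $\log\Vol(R)$ rather than with a global $\log n$, to be the main technical step, and I would first try the simplest hierarchical partition by the largest coordinate's dyadic level.

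Next, I would modify the algorithm. The one-dimensional greedy rule ``output the nearest available point in relative order'' is replaced by a coloring-guided rule. For each candidate visible set $Y$ arising as $PB(t)\cup P_t$, fix the coloring $\chi_Y$ from the lemma; since $\mT$ is static and countable, these colorings can be fixed in advance. When the adversary reveals $w_t$ in a pod, the algorithm outputs an available point $x$ of the current visible set that is paired with $w_t$ under a fixed perfect matching of $Y$. The matching is chosen between opposite colors along a space-filling or lexicographic order, so that it is geometrically local. The analog of Lemma~\ref{lem:rchange} holds verbatim: inside a fixed box $R$, the LCA sequence $Z_i$ changes at most $r$ times, giving at most $r$ phases.

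Finally, I would redo the counting of Lemma~\ref{lem:comb} for a nondegenerate box $R$. Within a phase the visible set is fixed, and adversary points in $R$ are answered by their partners. The losses are partners lying outside $R$ and skips. Partners outside $R$ come only from matched pairs crossing $\partial R$, and these are bounded by the discrepancy of $R$ plus boundary terms of size $\mathrm{polylog}\,\Vol(R)$. Each of the at most $r$ phase changes costs at most another $\mathrm{polylog}\,\Vol(R)$ term, because when the visible set grows the old matching must be reconciled with the new one. Nondegeneracy makes all of these errors $o(|R\cap K|)$, while the pod radius $\gamma_t$ growing with location keeps $R$ inside the pods. This yields density $\ge 1/2-\epsilon$. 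The reconciliation across phase changes is, besides the discrepancy lemma, the main obstacle. I would handle it by having nested visible sets use nested dyadic colorings, so that the matchings are compatible on common blocks.
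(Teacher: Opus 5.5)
Your validity argument is the same as the paper's. The density part, however, has two genuine gaps.

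\textbf{First gap: the discrepancy lemma.} The dyadic-block proof cannot work. Classical finite bounds for rectangles are polylogarithmic in the number of points of the block, and compressing coordinates does not change that number. A block at dyadic level $j$ may contain $\asymp 2^{jd}$ points, so its error term is $\mathrm{poly}(j)$ however small $R$ is. The filter is translation invariant. So a rectangle with $\Vol(R)=V$ and $|R\cap A|\approx V$, placed at distance $2^{2^V}$ from the origin, passes the filter but sits in a block whose error exceeds $V$.

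The statement you want is true; what is missing is a bound that never refers to a global $n$. The paper obtains it (Theorem~\ref{thm:multi-discrepancy}) from the asymmetric Lov\'asz Local Lemma, with one bad event per rectangle. The weights are $x_{k,j}=\tfrac12 k^{-p}2^{-qj}$, indexed by the point count $k$ and a sparsity tier $j$ that compares $k$ with $c_0L(R)^\alpha$. A tier-$j$, size-$k$ rectangle through a fixed point has side at most $(k2^j/c_0)^{1/\alpha}$. Hence at most $\eta\,2^{2dj/\alpha}k^{2d/\alpha}$ such rectangles contain that point, which is a purely local count. Compactness then passes to infinite $P$. The resulting error $\epsilon m+\beta\sqrt{mj}+\gamma(m)\sqrt m$ is $O(\epsilon m)$ once $m\gtrsim \log L(R)/\epsilon^2$. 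The factor $f$ in the filter is spent elsewhere in the paper. It pays for the at most $f(\ell)+2$ $\ell_\infty$-shells $\mR_k$ that a box meets (Lemma~\ref{lem:small-number-parts}), because colours are built shell by shell.

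\textbf{Second gap: the fixed-matching loss accounting.} This accounting is false, and no choice of matching repairs it. The discrepancy of $R$ does not control how many matched pairs straddle $\partial R$, and the filter admits thin rectangles.

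\begin{itemize}
\item Take the singleton class $\{\mathbb{Z}^2\}$ and any perfect matching $M$. Since $M(p)$ differs from $p$ in some coordinate, it lies outside either the unit-height row segment or the unit-width column segment through $p$.
\item Averaging over all length-$\ell$ row and column segments in a far window yields a segment with roughly half of its points matched outside it.
\item The adversary reveals those points first. The algorithm then gets at most about a quarter of the segment.
\item Yet $|R\cap K|=\Vol(R)=\ell$ passes the filter when $f$ grows slowly.
\end{itemize}

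Separately, fixing $\chi_Y$ for each global visible set $Y$ is incoherent. $Y$ changes whenever any pod grows, not only at the $\le r$ LCA changes. The paper instead does the following:
\begin{itemize}
\item It absorbs whole shells of $PB(t)$ into the pod.
\item It freezes colours once they are assigned.
\item It colours each increment $\mR_i\cap(P_t\setminus P_{t-1})$ separately.
\item Lemma~\ref{lem:changer} gives at most $r$ increments per shell, and Claim~\ref{claim:recpartition} splits each intersection with a shell into at most $2d$ rectangles.
\end{itemize}

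Be aware that the paper's greedy rule (closest available opposite-coloured point in the same shell) does not settle the boundary issue either. Its bound on the imbalance in $B\cap Z_1$ by the discrepancy in (\ref{eq:discBox}) ignores responses that land outside $B$. For example, take the checkerboard colouring of $\mathbb{Z}^2$, lexicographic tie-breaking, and a row revealed left to right: every response lands in the row below. Some higher-dimensional counterpart of the leak bound in Claim~\ref{claim:leaking} is needed in either version, and neither your proposal nor the paper supplies one.
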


Since the set of boxes is a subset of axis-parallel rectangles, 
a corollary of the result is the following. 
\begin{cor}\label{thm:banachhighdim}
Let $f: [0, \infty) \to [0, \infty)$ be a non-decreasing function such that $\lim_{x \to \infty} f(x) = \infty$. 
   Let $\mathcal{X}$ be a countable collection of subsets of $\mathcal{U}_1 \times \cdots \times \mathcal{U}_d$, where each $\mathcal{U}_i$ is a countable set of strings. Suppose that $\mathcal{X}$ has finite Cantor--Bendixson rank. Then for every $\epsilon > 0$, there exists an algorithm that generates in the limit and the algorithm output $O$  in the true language $K$ has filtered lower Banach density least $1/2 - \epsilon$. Here the (filtered) Banach lower box density with respect to axis-parallel boxes is defined as
\[
\liminf_{\substack{x \in \mathbb{Z}^d,\; r \to \infty \\
\frac{|\mathbb{B}(x,r)\cap K|}{f(r^d)\,\log r^d} \to \infty}}
\frac{|\mathbb{B}(x,r)\cap K \cap O|}{|\mathbb{B}(x,r)\cap K|},
\]
where $\mathbb{B}(x,r) \subset \mathbb{Z}^d$ denotes the axis-parallel box centered at $x$ with side length $r$. 
\end{cor}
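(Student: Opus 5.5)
The plan is to derive Corollary~\ref{thm:banachhighdim} directly from Theorem~\ref{thm:banachhighdimRecs2}, since every axis-parallel box $\mathbb{B}(x,r)$ is in particular an axis-parallel rectangle. Fix $f$ and $\epsilon>0$, and run the algorithm that Theorem~\ref{thm:banachhighdimRecs2} supplies for the same $f$ and $\epsilon$. That algorithm already generates in the limit, so the only remaining task is to transfer its density guarantee from rectangles to boxes.

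For the transfer, I will check that every sequence of boxes admissible in the box liminf is also admissible in the rectangle liminf. A box of side length $r$ in $\mathbb{Z}^d$ has discrete volume $\Vol(\mathbb{B}(x,r))$ comparable to $r^d$: it equals $r^d$, or $(r+1)^d$ under a closed-endpoint convention, and the two differ by a factor that tends to $1$.

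Suppose $r_n\to\infty$ and $|\mathbb{B}(x_n,r_n)\cap K|/(f(r_n^d)\log r_n^d)\to\infty$. I want to conclude that
\[
|R_n\cap K|/\bigl(f(\Vol(R_n))\log\Vol(R_n)\bigr)\to\infty \quad \text{for } R_n=\mathbb{B}(x_n,r_n).
\]
The logarithms are harmless, since $\log\Vol(R_n)=(1+o(1))\log r_n^d$. The factor $f(\Vol(R_n))$ is the one point requiring care, because $f$ is only nondecreasing and not continuous, so $f((r+1)^d)$ could exceed $f(r^d)$ by a lot. I would handle this by fixing the convention that side length $r$ means exactly $r^d$ lattice points, so that $\Vol(R_n)=r_n^d$ exactly. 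If instead the other convention is forced, I would apply Theorem~\ref{thm:banachhighdimRecs2} with a shifted function $\tilde f(y)=f(2^d y)$, which is still nondecreasing and tends to infinity; then $\tilde f(\Vol(R_n))\ge f((r_n+1)^d)$ is dominated accordingly.

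Once admissibility is established, every admissible box sequence is an admissible rectangle sequence. The liminf over the smaller family is therefore at least the liminf over the larger family, and the larger one is at least $1/2-\epsilon$ by Theorem~\ref{thm:banachhighdimRecs2}. This gives the claim.

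The substantive content lies entirely in Theorem~\ref{thm:banachhighdimRecs2}, and the only obstacle here is the bookkeeping of volume conventions against a possibly discontinuous $f$.
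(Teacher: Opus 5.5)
Your argument is correct and is the same as the paper's, which states the corollary as an immediate specialization of Theorem~\ref{thm:banachhighdimRecs2}: with $\Vol(\mathbb{B}(x,r))=r^d$, every admissible box sequence is an admissible rectangle sequence, so the liminf over boxes is at least the rectangle liminf, which is at least $1/2-\epsilon$. One slip in your fallback, which is not needed under this convention: to pass from box-admissibility to rectangle-admissibility when $\Vol=(r+1)^d$ you need a \emph{smaller} function, e.g.\ $\tilde f(y)=f(2^{-d}y)$, so that $\tilde f((r+1)^d)\le f(r^d)$ for $r\ge 1$, whereas your choice $\tilde f(y)=f(2^d y)$ makes the rectangle condition stricter and so goes the wrong way.
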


Theorem~\ref{thm:banachhighdim} does not imply the one-dimensional Theorem~\ref{thm:finiterank}. Indeed, condition~\eqref{eq:banachdimd} only considers boxes $\mathbb{B}(x,r)$ satisfying
$\frac{|\mathbb{B}(x,r)\cap K|}{\mathrm{poly}\log r} \to \infty,$
whereas Theorem~\ref{thm:finiterank} merely requires that $|\mathbb{B}(x,r)\cap K| \to \infty$, allowing $K$ to be arbitrarily sparse within $\mathbb{B}(x,r)$. In this sense, Theorem~\ref{thm:finiterank} appears formally stronger.
On the other hand, due to inherent limitations in higher dimensions, it seems necessary to impose a growth condition of the form
\[
\frac{|\mathbb{B}(x,r)\cap K|}{\mathrm{poly}\log r} \to \infty.
\]
Moreover, the techniques underlying the proofs of the two theorems are substantially different, and neither proof appears to extend readily to the other setting.
We discuss in Subsection~\ref{subsec:firstattempt} that a direct generalization of the one-dimensional techniques yields a significantly weaker version of Theorem~\ref{thm:banachhighdim}.

In fact, the result extends to more general axis-parallel rectangles whose side lengths have fixed ratios. 
Fix an axis-parallel rectangle $R \subset \mathbb{Z}^d$. For $x \in \mathbb{Z}^d$ and $r > 0$, let $R(x,r)$ denote the translate of $R$ by $x$ and scaled by a factor of $r$. The proof is identical to that for boxes.

We define
\begin{equation}
    \delta_{R}(A,B)
    =
    \liminf_{\substack{x \in \mathbb{Z}^d,\; r \to \infty \\
    \frac{|R(x,r)\cap B|}{\log^2 r} \to \infty}}
    \frac{|R(x,r)\cap B \cap A|}{|R(x,r)\cap B|}.
    \label{eq:banachdimd2}
\end{equation}

\begin{cor}\label{cor:banachhighdim}
Let $R \subset \mathbb{Z}^d$ be a fixed axis-parallel rectangle. Let $\mathcal{X}$ be a countable collection of subsets of $\mathcal{U}_1 \times \cdots \times \mathcal{U}_d$, where each $\mathcal{U}_i$ is a countable set of strings, and suppose that $\mathcal{X}$ has finite Cantor--Bendixson rank. Then for every $\epsilon > 0$, there exists an algorithm that identifies any target set $K \in \mathcal{X}$ in the limit and outputs a set $O$ such that the filtered lower Banach density with respect to $R$ satisfies
$\delta_R(O,K) \;\ge\; \frac{1}{2} - \epsilon.$

\end{cor}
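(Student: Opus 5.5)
The plan is to read off the corollary from Theorem~\ref{thm:banachhighdimRecs2} by specializing the filtering function, and to justify the word ``identifies'' using the index-guessing structure of that algorithm. That structure is where the finite Cantor--Bendixson rank hypothesis is actually used.

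\textbf{Step 1: which algorithm, and what ``identifies'' can mean here.} I take the algorithm behind Theorem~\ref{thm:banachhighdimRecs2}. It is the index-based algorithm built on $\Acc$ (Theorem~\ref{thm:acc}), together with the static CB structural tree $\mT$, the pullbacks $PB(t)$, and the pods. Two identification-type properties follow from earlier results.
\begin{itemize}
\item By Claim~\ref{claim:Accproperty}, the guessed index $i_t$ satisfies $L_{i_t}=K$ for infinitely many $t$.
\item By Lemma~\ref{lem:PBsubsetK}, after some finite time every guessed language $L_{i_t}$, and every pullback $PB(t)$, lies in the subtree of $\mT$ rooted at $K$.
\end{itemize}
The second property requires the finite-rank hypothesis, because its proof iterates parents at most $r$ times. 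I will spell this out explicitly as the identification component of the statement. The guesses are eventually confined to descendants of $K$ in $\mT$, and they hit $K$ exactly infinitely often. Because every output eventually comes from $PB(t)\cup P_t\subseteq K$, this also gives generation in the limit.

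I would add a short remark that Gold-style exact convergence $L_{i_t}=K$ for all large $t$ cannot be promised for every finite-rank class. The tower $L_n=[n]\cup(2\mathbb N+1)$ with terminal language $\mathbb N$ already has finite rank and is not identifiable. So the confinement-plus-accuracy property above is the sense of identification in which the statement holds.

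\textbf{Step 2: the density bound by specialization.} Fix $R$ with $\Vol(R)=v_0$. The scaled translate $R(x,r)$ is an axis-parallel rectangle with $\Vol(R(x,r))=v_0r^d$, so
\[
\log\Vol(R(x,r)) = d\log r+\log v_0 \le (d+1)\log r
\]
for $r$ large. Choose $f(v)=\log v$, which is nondecreasing and tends to infinity. Then
\[
f(\Vol)\log\Vol\le (d+1)^2\log^2 r .
\]
Hence $|R(x,r)\cap K|/\log^2 r\to\infty$ implies $|R(x,r)\cap K|/(f(\Vol)\log\Vol)\to\infty$. Every family of rectangles admissible in \eqref{eq:banachdimd2} is therefore admissible in the filtered density of Theorem~\ref{thm:banachhighdimRecs2}. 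A liminf taken over a subfamily is at least the liminf over the whole family, so $\delta_R(O,K)\ge 1/2-\epsilon$. This holds for the same algorithm run with this $f$ and the given $\epsilon$.

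\textbf{Main obstacle.} The density part is routine. The delicate point is Step~1: making sure the identification claim is exactly what Lemma~\ref{lem:PBsubsetK} and Claim~\ref{claim:Accproperty} deliver. In particular, I need to check that adding the pods and the discrepancy-based coloring of the higher-dimensional algorithm does not alter the index sequence $i_t$. I expect this to hold because $i_t$ is computed from the adversary's inputs alone, independently of the algorithm's outputs.
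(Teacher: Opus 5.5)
Your proposal is correct and follows the paper's route. The paper proves this corollary by saying it is identical to the box case, i.e.\ by restricting Theorem~\ref{thm:banachhighdimRecs2} to a subfamily of admissible rectangles; that is exactly your Step~2 with $f=\log$ (take $f(v)=\log(2+v)$ instead, so that $f$ maps $[0,\infty)$ to $[0,\infty)$).

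Your reading of ``identifies'' as generation in the limit, as in Theorem~\ref{thm:banachhighdimRecs2}, is the intended one, and your counterexample to literal Gold identification at finite rank is right. One small correction: the pod $P_t$ can permanently keep finitely many non-$K$ points added before the pullbacks settle inside $K$. So validity follows from there being only finitely many such points, each output at most once, not from $P_t\subseteq K$.
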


\subsection{Filtered Banach density in higher dimensions}
The proof requires some new ingredients as well as some heavy lifting from the one-dimensional case. 

\subsubsection{Some results and first attempts}\label{subsec:firstattempt}
We begin with a simple observation. Suppose we define the asymptotic lower density by restricting to boxes (i.e., axis-parallel rectangles with equal side lengths) anchored at the origin $(0,\dots,0)$. Then, analogous to our optimal one-dimensional asymptotic lower density bound from \cite{kleinberg2026density}, we again obtain the optimal density in this setting.
\begin{thm}\label{thm:asympHD}
    For any countable collection of languages $\mX$ with underlying set of strings being $\mathbb{Z}^d$. Then there exists an algorithm that generates in the limit so that the asymptotic lower density  of the algorithm output in the true language is always $1/2$.
\end{thm}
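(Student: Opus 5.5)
We reduce Theorem~\ref{thm:asympHD} to the one-dimensional universal lower-asymptotic-density theorem of \cite{kleinberg2026density}. That theorem applies to any countable collection of languages over a countable universe of strings enumerated as $\mathbb{N}$. It yields an algorithm that generates in the limit with output $O$ satisfying $\underline d(O,K)\ge 1/2$, where the density is taken along the increasing enumeration of $K$ induced by the fixed ordering of the universe.

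The plan is to fix an enumeration $\phi:\mathbb{N}\to\mathbb{Z}^d$ that lists points by nondecreasing $\ell_\infty$-norm. We list all points of the box $[-n,n]^d$ before any point of $[-(n+1),n+1]^d\setminus[-n,n]^d$, breaking ties inside each shell arbitrarily but in a fixed way. Pulling every language back through $\phi$ gives a countable collection over $\mathbb{N}$. Running the algorithm of \cite{kleinberg2026density} on it, and translating outputs back through $\phi$, gives generation in the limit. Its guarantee is that for every prefix $K_m$ consisting of the first $m$ elements of $K$ in the $\phi$-order,
\[
|O\cap K_m|\ge (1/2-o(1))\,m .
\]

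The second step converts prefix control into control over centered boxes $B_n=[-n,n]^d$. Because $\phi$ exhausts $B_n$ before moving outward, $K\cap B_n$ is exactly a $\phi$-prefix of $K$, namely $K_{m_n}$ with $m_n=|K\cap B_n|$. Hence
\[
\frac{|O\cap K\cap B_n|}{|K\cap B_n|}\ge \frac12-o(1)
\quad\text{as } |K\cap B_n|\to\infty .
\]
The liminf over origin-anchored boxes is therefore at least $1/2$.

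If instead boxes are anchored at the origin as $[0,n]^d$, or one uses corner-anchored boxes in $\mathbb{N}^d$, we choose the enumeration by the corresponding shells instead, with the same argument. The only care needed is that each admissible box's intersection with $K$ be exactly a prefix; a shell ordering compatible with the nested family of boxes ensures this.

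For optimality, we use the one-dimensional adversary that forces the ratio down to $1/2$. Since the algorithm emits one string per adversarial string, the adversary can always claim at least half of each nested box, so no guarantee above $1/2$ is possible.

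The main (mild) obstacle is making sure the theorem of \cite{kleinberg2026density} is invoked in the form that applies to an arbitrary fixed ordering of the universe, with density measured along $K$'s induced order. It must not rely on any structure of $\mathbb{N}$ beyond that ordering. Since the paper states that lower asymptotic density depends only on the inclusion poset and prefixes, this should hold.
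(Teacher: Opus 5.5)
Your proposal is correct and matches the paper's proof. The paper likewise fixes a shell-by-shell bijection $\rho:\mathbb{Z}^d\to\mathbb{N}$ (ordering by $\|x\|_\infty$, lexicographically within each shell) so that every origin-centered box becomes an initial segment, runs the one-dimensional universal $1/2$ algorithm of \cite{kleinberg2026density}, and gets optimality from the usual singleton speed-one benchmark; your check that $K\cap B_n$ is a prefix of $K$ in the induced order just spells out a detail the paper leaves implicit.
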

For this asymptotic lower density notion, we do not require the boxes to be degenerate, i.e, we do not need to require 
$\frac{|\mathbb{B}(0,r)\cap B|}{\mathrm{poly}\log r} \to \infty,$
since the discrepancy over boxes $\mathbb{B}(0,r)$ anchored at the origin is $o\big(|\mathbb{B}(0,r)\cap B|\big)$ as $r \to \infty$ (for example similar to chessboard coloring). Consequently, the upper bound of $1/2$ is always achievable.

The matching lower bound follows from our one-dimensional lower asymptotic density theorem of \cite{kleinberg2026density}. Optimality already follows from the usual singleton speed-one benchmark. We also explain why this approach does not extend to higher dimensions in the Banach density setting.
\begin{proof}[Proof of Theorem \ref{thm:asympHD}]
We first create a bijective mapping $\rho: \mathbb{Z}^d \to \mathbb{N}$. Let $S_k = \{x \in \mathbb{Z}^d \mid \|x\|_\infty = k\}$.  Thus 
\[| S_0 \cup S_1 \cup \dots \cup S_{k-1}| = (2k-1)^d.
\] 
Let $N(x)$ be the number of points $z$ on the same shell $S_k$ that come strictly before $x$ in lexicographical order:\[N(x) = |\{ z \in \mathbb{Z}^d \mid \|z\|_\infty = k \text{ and } z \prec x \}|.\]
Then define \[\rho(x) = (2k-1)^d + N(x).\]
Clearly $\rho(x)$ is bijective, and most importantly, any box $\mathbb{B}(0,r)$ is mapped by $\rho$ to a consecutive interval  in $\mathbb{N}$ starting at $0$. Therefore, the algorithm strategy should be first transforming $\mathbb{Z}^d$ to $\mathbb{N}$ by $\rho$, and follow our one-dimensional algorithm from \cite{kleinberg2026density}. Then the lower asymptotic density in the $d$-dimensional case is the same as the lower asymptotic density in the one-dimensional case, which is $1/2$.    
\end{proof}

When working with Banach lower density, we restrict attention, for simplicity, to axis-parallel boxes $\mathbb{B}(x,r)$, i.e., rectangles whose side lengths are all equal to $r$. A natural first attempt is to mimic the proof of Theorem~\ref{thm:finiterank}, the one-dimensional case.
The topological component carries over unchanged, as the dimension does not affect the underlying topology. For the algorithmic part, an attempt would be to  similarly follow Theorem~\ref{thm:finiterank} by outputting, at each step $t$, an element $o_t \in P_t$ that is closest (in Euclidean distance) to the adversarial input $w_t$.
However, this approach encounters a fundamental obstacle: it leads to too many leaking elements, as characterized in Claim~\ref{claim:leaking}.

\begin{example}\label{example:leaking}
To illustrate the issue, consider the case $d=2$. Suppose $\mathcal{X}$ contains a single language $K$, and that $K$ contains the points $(a, a+2\mathbb{N})$ and $(a-1, a+2\mathbb{N})$, but no points with $x$-coordinate in $\{a+3, a+4, \dots, a+\ell+1\}$.

Suppose the adversary first reveals $(a,0)$. The closest available element is then $(a-1,0)$, which the algorithm outputs. Next, the adversary reveals $(a,2)$, and the algorithm will output $(a-1,2)$. Continuing in this manner, the adversary occupies all points in $(a,2\mathbb{N})$ before the algorithm does, while the algorithm outputs all points in $(a-1,2\mathbb{N})$.

Consequently, if a box of side length $\ell$ has its left boundary aligned with the line $x=a$, then all points of $K$ inside this box lie in $K \setminus O$. Therefore, the algorithm's output is extremely sparse within this box.
\end{example} 

Therefore, already in dimension $2$, handling such boxes requires that a box of side length $\ell$ contains significantly more than $\ell$ points from $K$. More generally, if one follows the same algorithm as in the one-dimensional case, one is led to require that boxes $\mathbb{B}(x,\ell)$ satisfy
\[
|\mathbb{B}(x,\ell)\cap K| \gg \ell^{d-1}\log \ell,
\]
which is a substantially stronger condition than what is required in Theorem~\ref{thm:banachhighdimRecs2}. However, such a requirement appears necessary in order to extend the proof of Theorem~\ref{thm:finiterank} to higher dimensions using the same approach.

The failure of the proof of Theorem~\ref{thm:finiterank} under weaker conditions appears to stem from the fact that Euclidean distance is not the appropriate metric for prioritizing outputs. In the example above, in order to avoid excessive leakage, the algorithm would do better if it prioritizes elements lying in the same row or column as the adversarial input $w_t$, rather than relying solely on Euclidean proximity.
However, the construction in Example~\ref{example:leaking} is not rigid: a slight perturbation of the points ensures that the adversarial input no longer lies exactly on the same row or column. In such cases, it is unclear how to prioritize nearby elements—for instance, whether to favor points to the left or to the right—since either choice may lead to poor behavior depending on the placement of the box.

A natural approach is to design a more suitable metric. This idea resonates with the construction in Theorem~\ref{thm:asympHD}, where one defines a distance of the form $d(x,y) = |\rho(x)-\rho(y)|$ for a bijection $\rho : \mathbb{Z}^d \to \mathbb{N}$. However, in the present setting, where all translations of boxes are allowed, it appears that no choice of $\rho$ can circumvent the need for a density condition of the form
\[
|\mathbb{B}(x,\ell)\cap K| \gg (\log \ell)\,\ell^{d-1}
\]
in order to guarantee an asymptotic output density of $1/2$.

Thus, overcoming this polynomial threshold requires fundamentally new ideas.

\

How can we handle Example~\ref{example:leaking} without explicitly designing a metric? One approach is to use a pairing technique. Specifically, we pair the elements $(a,6k)$ with $(a-1,6k)$ for $k \in \mathbb{Z}$, pair $(a,6k+2)$ with $(a,6k+4)$, and pair $(a-1,6k+2)$ with $(a-1,6k+4)$ for all $k \in \mathbb{Z}$.
The algorithm then proceeds as follows: whenever the adversary reveals an input $w_t$ that has not been previously selected by the algorithm, the algorithm outputs the other element in the pair containing $w_t$.
With this strategy, for any sufficiently tall and thin axis-parallel rectangle, the output density of the algorithm is at least $1/2$ in the limit in Example~\ref{example:leaking}. Intuitively, this pairing mechanism avoids the pathological behavior in Example~\ref{example:leaking} without relying on a global metric.

To extend this idea, suppose first that there is a single language $L \subset \mathbb{Z}^d$. We aim to generalize the pairing approach by assigning to each element of $L$ a value in $\{-1,+1\}$ so that every axis-parallel rectangle contains nearly balanced assignments of $+1$ and $-1$.

More precisely, we seek a coloring $\chi : L \to \{-1,+1\}$ such that for every axis-parallel rectangle $S \subset \mathbb{Z}^d$,
\[
\disc(\chi,S) = \left| \sum_{x \in S} \chi(x) \right|  \;\le\; \epsilon\,|S \cap L|.
\]
If such a coloring exists, then one can achieve high output density by applying a generalized pairing strategy as described above.

We then turn to the more general setting where there are infinitely many languages in $\mX$ and  the identified language varies over time. This requires combining the above coloring ideas with techniques from the one-dimensional case, as the learning process must accommodate a continually changing identified language. This necessitates a different algorithmic construction, and carrying out this step requires additional care.

In the next subsection, we develop the missing ingredient: a ``size-sensitive discrepancy lemma". We also discuss several barriers to establishing such results.
 
\subsubsection{A size-sensitive discrepancy lemma for infinite point sets}

The main new ingredient for the higher-dimensional positive theorem is a
size-sensitive discrepancy lemma for arbitrary, possibly infinite, point
sets \(P\subseteq\mathbb Z^d\). Classical discrepancy bounds for finite
point sets usually depend on a global ambient parameter \(n\). Such
bounds are not suitable here, since the target language is infinite and
the density guarantee must hold locally, over boxes of many sizes and
locations. We need a coloring whose error in any not-so-sparse box $R$ is negligible compared
to \(|P\cap R|\) whenever the box is nondegenerate.

Classical discrepancy theory for a finite ground set of size $n$ is primarily concerned with minimizing the absolute global discrepancy, typically yielding bounds of the form $O(\log^k n)$. However, such bounds are mostly negligible at the largest scale and can dominate the signal on smaller subsets. For our purposes, bounds expressed in terms of $n$ are insufficient, as we require the local discrepancy to be independent of $n$, since in our case $n$ can be considered as $|K|$, which is infinite. In contrast, our objective is relative discrepancy: for every sufficiently large rectangle $S$, we seek colorings satisfying \[|\chi(S \cap P)| \leq \varepsilon |S \cap P|,\]
so that the error is always small compared to the main term $|S \cap P|$, uniformly across scales and locations. Existing discrepancy techniques, based on VC dimension, Beck–Fiala–type arguments, or $\gamma_2$ factorizations, are inherently global: they optimize worst-case error over all ranges simultaneously and therefore accumulate discrepancy proportional to an ambient parameter $n$, making them unsuitable for enforcing relative guarantees at every scale (as $n$ in our case will essentially be $|L|$, which is infinite). 
In the infinite setting, this limitation becomes fundamental, since translation invariance (as we work with Banach lower density) rules out any uniform polylogarithmic bound that applies simultaneously to all rectangles.

To address this, we introduce a localized framework based on a density filtration, restricting attention to rectangles whose point count is not too sparse relative to their geometric scale. Within this regime, we obtain desired discrepancy bounds that depend only on local density and geometry, not on any global parameter.

\begin{thm}[Multi-level Size-Sensitive Discrepancy]\label{thm:multi-discrepancy}
    Let $P \subseteq \mathbb{Z}^d$ be an arbitrary set of points (possibly infinite). Fix $\epsilon \in (0, 1)$ and $\alpha > 0$. Let the baseline density constant $c_0$ be \[c_0 = \left( \frac{2\pi^2}{3 \epsilon^2} \right)^{\frac{\alpha}{2d}}.\]
    For any axis-parallel rectangle $R \subset \mathbb{R}^d$ such that $|R \cap P| = m \geq1$, we define its sparsity tier $j \in \mathbb{N}_{\geq 0}$ as the unique non-negative integer satisfying:
    \[\frac{c_0}{2^j} L(R)^\alpha \leq m < \frac{c_0}{2^{j-1}} L(R)^\alpha\]
    where $L(R)$ denotes the maximum geometric side length of the rectangle $R$.
    Then there exists a global 2-coloring $\chi: P \to \{-1, +1\}$ such that for every axis-parallel rectangle $R$, the discrepancy satisfies:
    \[|\chi(R \cap P)| \leq \epsilon m + \beta \sqrt{m j} + \gamma(m) \sqrt{m}\]
    where the constants depend only on the dimension $d$ and the filtration exponent $\alpha$, explictly: $\beta = \sqrt{ 2 \left( \frac{2d}{\alpha} + 1 \right) \ln 2 }$ and $\gamma(m) = \sqrt{ 2 \left( \frac{2d}{\alpha} + 2 \right) \ln m + 4 \ln 2 }$.
\end{thm}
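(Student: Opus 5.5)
The plan is a random coloring with a union bound stratified by sparsity tier. The error terms have the signature of Hoeffding plus a union bound: $\beta\sqrt{mj}$ and $\gamma(m)\sqrt m$ are what one gets from $\Pr[|\chi(R\cap P)|>\lambda]\le 2e^{-\lambda^2/(2m)}$ with $\lambda^2/(2m)$ equal to the logarithm of the number of "combinatorially distinct" rectangles in a class. The $\epsilon m$ term should come from a deterministic local pairing or alternation that handles the densest regime.

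The first step is to reduce rectangles to combinatorial types. Two rectangles $R,R'$ with $R\cap P=R'\cap P$ have the same discrepancy, so it suffices to control each distinct set $R\cap P$. Fix $m$, a tier $j$, and a scale $L=L(R)$. The tier condition gives $L^\alpha\in[2^{j-1}m/c_0,\,2^{j}m/c_0)$, so $L$ is confined to a dyadic-type window determined by $(m,j)$. I would then partition $\mathbb Z^d$ into a grid of cells of side comparable to $L$. A rectangle of max side $L$ meets $O(1)$ cells. Inside a bounded region, the number of distinct sets $R\cap P$ with $|R\cap P|=m$ is at most polynomial in the number $M$ of points in that region ($O(M^{2d})$, by choosing the extreme points on each face). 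The tier bound gives $M\lesssim m\,2^{j}$ up to constants, which is where $c_0$ and $\alpha$ enter. So the count of relevant sets per anchor is about $(m2^j)^{2d/\alpha}$ times a factor for $m$ and $j$, and that exponent $2d/\alpha$ is exactly what appears in $\beta$ and $\gamma$.

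The second step handles infinitely many anchors. A global union bound over infinitely many cells fails with positive probability, so I would instead use a local-lemma or compactness argument. Specifically, I would prove that for every finite $P'\subset P$ a coloring exists satisfying all constraints for rectangles supported in $P'$ (via the union bound per cell plus Lovász Local Lemma on the grid, since bad events for far-apart rectangles are independent). I would then pass to an infinite coloring by König's lemma / Tychonoff, since each constraint involves finitely many points.

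The third step sets the failure probabilities. For a rectangle in class $(m,j)$ I take $\lambda=\epsilon m+\beta\sqrt{mj}+\gamma(m)\sqrt m$. The resulting bad-event probability is at most $2\exp(-\lambda^2/2m)\le 2\,2^{-(2d/\alpha+1)j}m^{-(2d/\alpha+2)}\,2^{-2}e^{-\epsilon^2 m/2}$, and summing over the polynomially many types, over $j$, and over $m$ gives a convergent series. Its sum is made $\le$ the LLL threshold by the choice $c_0=(2\pi^2/3\epsilon^2)^{\alpha/2d}$; the $\pi^2/6$ comes from $\sum m^{-2}$, and the $\epsilon^2$ from the $\epsilon m$ slack in the smallest-$m$ tiers.

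The main obstacle is making the dependency structure finite for LLL. Rectangles of tier $j$ can be huge yet sparse, so a single point participates in rectangles of unboundedly many scales, and the dependency degree is infinite. I would try first to assign events to scale-$(m,j)$ grids and use the weighted (asymmetric) LLL with weights $x_E$ decaying in $m$ and $j$, verifying $\sum_{E'\sim E}$ converges by the same counting as in the second step. If that fails, the fallback is to consider only the finite-$P'$ union bound. The constants in the statement suggest the authors' union bound sums to less than $1$ per bounded region, with compactness finishing the argument.
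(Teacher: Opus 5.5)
Your architecture matches the paper's: independent uniform signs, a Chernoff bound with $\lambda$ split into three square-root pieces, the asymmetric Lov\'asz Local Lemma with weights polynomial in size and exponential in tier, and then Tychonoff. In fact your bound $2e^{-\lambda^2/2m}\le \tfrac12 m^{-p}2^{-qj}e^{-\epsilon^2 m/2}$, with $p=\tfrac{2d}{\alpha}+2$ and $q=\tfrac{2d}{\alpha}+1$, is exactly $x_{m,j}e^{-\epsilon^2m/2}$ for the paper's weights $x_{k,j}=\tfrac12 k^{-p}2^{-qj}$.

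The gap is in the step that actually verifies the LLL hypothesis, namely the neighbour count. The tier condition $c_02^{-j}L(R)^\alpha\le m$ only bounds the rectangle's own side length, $L(R)\le(2^jm/c_0)^{1/\alpha}$. It says nothing about the number $M$ of points of $P$ in a surrounding cell: a thin rectangle inside a region where $P$ is dense has $M\gg m2^j$. So ``$M\lesssim m2^j$'' is false, and in any case $O(M^{2d})$ would give exponent $2d$ rather than $2d/\alpha$.

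The correct count is purely geometric and needs no grid. A lattice rectangle through a fixed point $x$ with maximal side at most $L$ is determined by $d$ intervals of length at most $L$ containing the coordinates $x_i$. Hence at most $L^{2d}\le\eta\,2^{2dj/\alpha}k^{2d/\alpha}$ tier-$j$, size-$k$ rectangles contain $x$, where $\eta=c_0^{-2d/\alpha}$. Taking the union over the $m$ points of $R$, at most $N(m\to k,j)\le m\,\eta\,2^{2dj/\alpha}k^{2d/\alpha}$ of them share a point with $R$.

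This count is linear in $m$, and that is precisely what the $\epsilon m$ term pays for. We get $\prod_{k,j}(1-x_{k,j})^{N(m\to k,j)}\ge\exp\bigl(-2\sum_{k,j}x_{k,j}N(m\to k,j)\bigr)\ge\exp(-m\eta\pi^2/3)$. Here the factor $\pi^2/6=\sum_k k^{-2}$ is summed over \emph{neighbour} sizes $k$, not over $m$. The choice of $c_0$ makes $\eta\pi^2/3=\epsilon^2/2$, so this matches your Chernoff factor $e^{-\epsilon^2m/2}$ exactly.

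So $\epsilon m$ is neither a deterministic pairing for a dense regime nor slack in the smallest-$m$ tiers. It is needed for every $m$: without it, $\lambda^2/2m=O(j+\log m)$, which cannot beat a product decaying like $e^{-\Theta(\eta m)}$. For the same reason your fallback, a plain union bound per bounded region, cannot work. It would make the threshold depend on the number of points in the region, which is exactly the global parameter the theorem is designed to avoid.
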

\begin{proof}
    Let $V_N = [-N, N]^d \cap \mathbb{Z}^d$ be a finite bounding window, and define the finite point set $P_N = P \cap V_N$. Let $\mathcal{R}_N$ denote the family of all axis-parallel rectangles $R$ such that $R \cap P_N \neq \emptyset$.
    
    Consider a rectangle $R' \in \mathcal{R}_N$ containing exactly $k$ points from $P_N$ and belonging to tier $j$. By the strict lower bound of its sparsity tier definition, its maximum geometric side length is bounded by:
    \begin{equation}
        L(R') \leq \left( \frac{k \cdot 2^j}{c_0} \right)^{1/\alpha}. \label{eq:L}
    \end{equation}

Fix $x$ a target point $x = (x_1, \dots, x_d) \in P_N$. Fix non-negative integers $k, j$. We now upper bound $N_x(k,j)$, defined as the maximum number of distinct tier-$j$, size-$k$ rectangles in $\mathcal{R}_N$ that contain the point $x$. Let $R'$ be such a rectangle. By (\ref{eq:L}), its maximum geometric side length $L(R')$ is bounded above by $L := \left\lfloor \left( \frac{k \cdot 2^j}{c_0} \right)^{1/\alpha} \right\rfloor$.  

Consider the geometric projection of $R'$ onto the $i$-th coordinate axis. This projection forms a 1-dimensional integer interval $[a, b]$ of some length $l \in \{1, 2, \dots, L\}$ that strictly contains the coordinate $x_i$. For any fixed length $l$, the left endpoint $a$ must satisfy $x_i - l + 1 \leq a \leq x_i$. Consequently, there are exactly $l$ valid choices for the interval's position along this axis.
To find the total number of valid 1-dimensional intervals containing $x_i$, we sum these choices over all possible interval lengths up to $L$: \[\sum_{l=1}^{L} l = \frac{L(L+1)}{2} \leq L^2.\]
Because an axis-parallel rectangle in $\mathbb{R}^d$ is uniquely determined by the Cartesian product of its $d$ one-dimensional orthogonal projections, the total number of distinct $d$-dimensional rectangles containing $x$ is bounded by the product of the 1-dimensional choices, which is 
\[N_x(k, j) \leq \left( L^2 \right)^d = L^{2d}.\]
Substituting our upper bound for $L$ yields:
\[N_x(k, j) \leq \left( \left( \frac{k \cdot 2^j}{c_0} \right)^{1/\alpha} \right)^{2d} = c_0^{-2d/\alpha} 2^{2dj/\alpha} k^{2d/\alpha}.\]
Letting $\eta = c_0^{-2d/\alpha}$, we obtain the final point-wise capacity bound:
\[N_x(k, j) \leq \eta \cdot 2^{2dj/\alpha} k^{2d/\alpha}.\]

Now, fix a target rectangle $R \in \mathcal{R}_N$ containing exactly $m$ points from $P_N$, belonging to an arbitrary tier $i$. A rectangle $R'$ depends on $R$ if and only if they share at least one point in $P_N$. By taking the union bound over the exactly $m$ points in $R$ (each acting as $x$ earlier), the total number of tier-$j$, size-$k$ rectangles intersecting $R$ satisfies:
\begin{equation}
    N(m \to k, j) \leq \sum_{x \in R \cap P_N} N_x(k, j) \leq m \eta \cdot 2^{2dj/\alpha} k^{2d/\alpha}. \label{eq:Nmkj}
\end{equation}
    
   We define a probability space by assigning independent, uniformly random  variables $\chi_N(x) \in \{-1, +1\}$ to each point in $P_N$, with equal probability to be either $1$ or $-1$. Let $A_{m,i}$ denote the ``bad event" that a rectangle of size $m$ and tier $i$ violates the target discrepancy threshold $\lambda_{m,i}$:
   \[A_{m,i} = \left\{ \left| \sum_{x \in R \cap P_N} \chi_N(x) \right| > \lambda_{m,i} \right\}.\]
   By Chernoff bound, the probability of this event is bounded by:
   \begin{equation}
       \Pr(A_{m,i}) \leq 2 \exp\left( - \frac{\lambda_{m,i}^2}{2m} \right). \label{eq:badprob}
   \end{equation}

We apply the asymmetric version of Lov\'asz Local Lemma (Theorem \ref{thm:LLL}), whose full statement is in appendix. We assign a deterministic weight $x_{k,j} \in (0, 1/2]$ to every possible rectangle of size $k \geq1$ and tier $j \geq0$. We define the double-indexed polynomial-exponential weight function:
\begin{equation}
  x_{k,j} = \frac{1}{2} k^{-p} 2^{-qj},  
\end{equation}
where $p = \frac{2d}{\alpha} + 2$ and $q = \frac{2d}{\alpha} + 1$.
The Lov\'asz Local Lemma guarantees the existence of an assignment avoiding all bad events $A_{m,i}$ if for all target sizes $m \geq1$ and all target tiers $i \geq0$: 
\begin{equation}
    \Pr(A_{m,i}) \leq x_{m,i} \prod_{k=1}^\infty \prod_{j=0}^\infty (1 - x_{k,j})^{N(m \to k, j)}. \label{eq:LLLcondition2}
\end{equation}

Because $x_{k,j} \leq 1/2$ for all $k \geq1$ and $j \geq0$, we have $1 - z \geq\exp(-2z)$. We have
\[
\text{RHS of } (\ref{eq:LLLcondition2}) \geq  x_{m,j}\exp\left( -2  \sum_{k=1}^\infty \sum_{j=0}^\infty x_{k,j} N(m \to k, j)  \right).
\]  
We explicitly evaluate the infinite double sum in the exponent, by (\ref{eq:Nmkj}) we have 
\[2\sum_{k=1}^\infty \sum_{j=0}^\infty x_{k,j} N(m \to k, j) \leq 2 \sum_{k=1}^\infty \sum_{j=0}^\infty \left( \frac{1}{2} k^{-p} 2^{-qj} \right) \left( m \eta 2^{2dj/\alpha} k^{2d/\alpha} \right).\]
Factoring out the terms independent of the summation indices, we have 
\begin{align} \text{RHS of }(\ref{eq:LLLcondition2}) \geq & x_{m,j}\exp\left(-m \eta \left( \sum_{k=1}^\infty k^{\frac{2d}{\alpha} - p} \right) \left( \sum_{j=0}^\infty 2^{\frac{2d}{\alpha} j - qj} \right)\right)\\
\geq & x_{m,j}\exp\left( -m \eta \left( \sum_{k=1}^\infty k^{-2} \right) \left( \sum_{j=0}^\infty 2^{-j} \right)    \right) = x_{m,j}\exp\left( -m \eta \left( \frac{\pi^2}{3}   \right) \right).
\end{align}
Let $S = \eta \frac{\pi^2}{6}$. The Lov\'asz Local Lemma condition (\ref{eq:LLLcondition2}) is satisfied if 
\[\Pr(A_{m,i}) \leq 2 \exp\left( - \frac{\lambda_{m,i}^2}{2m} \right) \leq \frac{1}{2} m^{-p} 2^{-qi} \exp( - 2m S ).\]

Next is routine computation to make sure the desired inequalities hold. Taking the natural logarithm of both sides, we need $\ln 2 - \frac{\lambda_{m,i}^2}{2m} \leq -\ln 2 - p \ln m - qi \ln 2 - 2m S.$  This is equivalent to
\[\lambda_{m,i}^2 \geq4S m^2 + (2 q \ln 2) m i + (2 p \ln m + 4 \ln 2) m.\] 
We use the inequality $(\sqrt{a} + \sqrt{b} + \sqrt{c})^2  \geq a+b+c$ for $a,b,c \geq 0$. Thus, to make the lower bound of $\lambda_{m,i}^2$ hold, it is  sufficient to set $\lambda_{m,i}$ as the sum of the square roots of the three linear coefficients, which is \[\lambda_{m,i} = \sqrt{4S} m + \sqrt{2 q \ln 2} \sqrt{m i} + \sqrt{2 p \ln m + 4 \ln 2} \sqrt{m}.\]
Recalling $\eta = c_0^{-2d/\alpha}$, we have $\sqrt{4S} = \sqrt{ 4 \left( c_0^{-2d/\alpha} \frac{\pi^2}{6} \right) } = \sqrt{ \frac{2\pi^2}{3} c_0^{-2d/\alpha} } = \epsilon$ by our choice of $c_0$. Substituting $q = \frac{2d}{\alpha} + 1$, the coefficient $\sqrt{2 q \ln 2}$ exactly matches the definition of $\beta$. Substituting $p = \frac{2d}{\alpha} + 2$, the coefficient $\sqrt{2 p \ln m + 4 \ln 2}$ exactly matches the definition of $\gamma(m)$. 
Because this choice of $\lambda_{m,i}$ guarantees that (\ref{eq:LLLcondition2}) is satisfied for every rectangle $R \in \mathcal{R}_N$. Thus by Lov\'asz Local Lemma there exists at least one valid coloring $\chi_N : P_N \to \{-1, +1\}$ that avoids all the bad events simultaneously on the finite grid.

We extend this result to the case when $P$ is  infinite in $\mathbb{Z}^d$. Let $\mathcal{X} = \{-1, +1\}^P$ be the space of all possible global colorings of the point set $P$. We equip $\{-1, +1\}$ with the discrete topology, making it a compact Hausdorff space. By Tychonoff's Theorem, the product space $\mathcal{X}$ is compact under the product topology. For each integer $N \geq1$, let $\chi_N$ be the valid coloring on $V_N \cap P$ established earlier in the finite case. We extend $\chi_N$ to a global coloring $\tilde{\chi}_N \in \mathcal{X}$ by assigning $+1$ to all points in $P \setminus P_N$. Because $\mathcal{X}$ is compact (and indeed sequentially compact, as it is a countable product of metrizable spaces since $P$ is countable), the infinite sequence of colorings $(\tilde{\chi}_N)_{N=1}^\infty$ contains a convergent subsequence, converging to some limit coloring $\chi^* \in \mathcal{X}$. 

Finally, fix an arbitrary axis-parallel rectangle $R$ in the infinite grid. 
Its discrepancy depends only on the finite set of points $R \cap P$. 
Hence, the set of colorings in $\mathcal{X} = \{-1,+1\}^P$ for which $R$ satisfies its target discrepancy $\lambda_{m,i}$ is a closed (indeed clopen) cylinder set in the product topology. 
Since the limiting coloring $\chi^*$ is the limit of a sequence (or subnet) of colorings $\tilde{\chi}_N$ that satisfy this condition for all sufficiently large $N$ (as $R \cap P \subseteq V_N$ for sufficiently large $N$), it follows that $\chi^*$ itself satisfies the discrepancy bound for $R$. 
As $R$ was arbitrary, $\chi^*$ preserves the finite discrepancy bounds simultaneously for all axis-parallel rectangles.
\end{proof}

\subsubsection{The algorithm}
By the order-index reduction before, we work directly
in the \(\mathbb Z^d\) normalization of the discrete embedded model. Thus
strings are points of \(\mathbb Z^d\), and coordinate windows are
axis-parallel rectangles.

We begin by partitioning the space $\mathbb{Z}^d$ as follows.

Let $f$ be the function in Theorem \ref{thm:banachhighdimRecs2}.
\begin{lem}\label{lem:small-number-parts}
Let $f: \mathbb{R}_{\geq0} \to \mathbb{R}_{\geq0}$ be a non-decreasing function such that $\lim_{x \to \infty} f(x) = \infty$. There exists a strictly increasing sequence of real numbers $\{r_k\}_{k=0}^\infty$ with $r_0 = 0$ and $\lim_{k \to \infty} r_k = \infty$, such that if we define a partition of the lattice points $\mathbb{Z}^d$ as follows:
For each integer $k \geq 0$, the regions $\mR_k \subset \mathbb{Z}^d$ is defined as 
\begin{equation}
    \mR_k = \{x \in \mathbb{Z}^d \mid r_k \leq \|x\|_\infty < r_{k+1}\}. \label{eq:Rk}
\end{equation} 
Then for any integer $\ell \geq1$, any axis-parallel box $B \subset \mathbb{Z}^d$ of side length $\ell$ intersects at most $f(\ell) + 2$ regions.
\end{lem}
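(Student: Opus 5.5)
We will choose the radii $r_k$ to grow so fast that a box of side $\ell$ can cross only boundedly many shells once $k$ is large, with $f$ taking care of the small values of $\ell$. Since $B$ is an axis-parallel box of side length $\ell$, the $\|\cdot\|_\infty$-norm takes values in some interval $[u,u+\ell]$ on $B$. To see this, note that $x\mapsto\|x\|_\infty$ is $1$-Lipschitz in $\ell_\infty$, and any two points of $B$ are within $\ell_\infty$-distance $\ell$. Hence the number of regions $\mR_k$ met by $B$ is at most $1+\#\{k\ge 1: r_k\in (u,u+\ell]\}$. So the whole task reduces to a one-dimensional statement: we want every interval of length $\ell$ in $[0,\infty)$ to contain at most $f(\ell)+1$ of the thresholds $r_k$.

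We construct the thresholds greedily. Set $r_0=0$, and choose $r_1<r_2<\cdots\to\infty$ with rapidly growing gaps $g_k=r_{k+1}-r_k$. Concretely, let $\ell_k=\min\{\ell\in\mathbb N: f(\ell)\ge k\}$, which is finite because $f\to\infty$. We then require the gaps to be nondecreasing and to satisfy $g_k\ge \ell_{k+1}$.

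Now fix an interval $(u,u+\ell]$ and suppose it contains $n\ge 2$ consecutive thresholds $r_{j},\dots,r_{j+n-1}$. Then $\ell\ge r_{j+n-1}-r_j\ge g_{n-2}$, because the gaps are nondecreasing and $j\ge 0$ (with $j\ge1$ if $r_0$ is excluded). This gives $\ell\ge g_{n-2}\ge \ell_{n-1}$, so $f(\ell)\ge n-1$, i.e.\ $n\le f(\ell)+1$. Combining this with the reduction above, $B$ meets at most $f(\ell)+2$ regions. The case $n\le 1$ is trivial, since then $B$ meets at most $2\le f(\ell)+2$ regions because $f\ge 0$.

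The main thing to check carefully is the index bookkeeping. First, the shift between ``number of thresholds inside the norm range'' and ``number of regions'' accounts for the $+1$. Second, the monotone-gap argument gives the other $+1$. Third, the case of real-valued (noninteger) $\ell$ and $f$ needs care: $\ell_k$ should be defined via $\inf\{x: f(x)\ge k\}$, taking a slightly larger value to be safe. This is mild, so no real obstacle is expected. The norm-range claim for boxes centered anywhere, including those straddling the origin, also holds by the Lipschitz argument.
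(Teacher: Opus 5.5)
Your proof is correct. Its first half is exactly the paper's reduction: the $\|\cdot\|_\infty$-values on $B$ lie in an interval of length $\ell$, so the number of regions met is one plus the number of thresholds in that window.

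Where you differ is in how the thresholds are built and controlled. The paper first constructs, in an appendix claim, a strictly increasing, strictly concave $g\to\infty$ with $g\le f$ for large arguments. It then sets $r_k=G(k)$ with $G=g^{-1}$ convex, and uses superadditivity, $G(k_{\min}+m-1)-G(k_{\min})\ge G(m-1)$, to get $m\le g(\ell)+1$.

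You instead work directly with the discrete inverse $\ell_k=\min\{\ell\in\mathbb N: f(\ell)\ge k\}$ and take nondecreasing gaps with $g_k\ge \ell_{k+1}$. Nondecreasing gaps are the discrete form of convexity. Your bound $r_{j+n-1}-r_j\ge g_{n-2}$ then uses a single gap rather than a cumulative sum. So your thresholds grow faster than necessary, since each gap (not just each partial sum) must dominate the inverse of $f$. This costs nothing for the lemma or its later use in the algorithm.

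In exchange you avoid the concave-minorant construction entirely. You also get the bound uniformly for every $\ell\ge 1$. The paper's inequality $g\le f$ is only guaranteed for $\ell\ge x^*$, and its small-$\ell$ case is left implicit; it follows from the specific construction, where $x^*=G(1)$ and $g<1$ below it.

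Two small remarks:
\begin{itemize}
\item Your worry about real arguments is unnecessary. $\ell_k$ is an integer with $f(\ell_k)\ge k$, so monotonicity of $f$ gives $f(\ell)\ge k$ for every real $\ell\ge \ell_k$. Passing to an infimum would actually create the problem you want to avoid, since $f$ need not attain $k$ at $\inf\{x: f(x)\ge k\}$.
\item Take the gaps strictly positive, e.g.\ $g_k=\max(\ell_{k+1},1)$, because if $0\in\mathbb N$ then $\ell_{k+1}$ can be $0$. This choice is automatically nondecreasing, because $\ell_k$ is nondecreasing in $k$.
\end{itemize}
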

\begin{proof}
We first define a well-behaved lower bound for $f$ as below, whose proof is in Appendix. 
\begin{claim}\label{claim:g}
    Let $f: [0, \infty) \to [0, \infty)$ be a non-decreasing function such that $\lim_{x \to \infty} f(x) = \infty$. There exists a strictly increasing, strictly concave function $g: [0, \infty) \to [0, \infty)$ and a constant $x^* \geq0$ such that $\lim_{x \to \infty} g(x) = \infty$, and for all $x \geq x^*$, $g(x) \leq f(x)$.
\end{claim}

Let $G: \mathbb{R}_{\geq0} \to \mathbb{R}_{\geq0}$ be the inverse function of $g$. Because $g$ is strictly increasing and strictly concave, $G$ is strictly increasing and strictly convex. We define our region boundaries by $r_k = G(k)$ for all integers $k \geq0$. Note that $r_0 = G(0) = 0$. 

We may assume $B$ has the same side lengths along each axis, as otherwise we can enlarge $B$ by its maximum side length $\ell$. 
Let $B = \prod_{i=1}^d [a_i, a_i + \ell]$ with $a_i \in \mathbb{Z}$ be an arbitrary axis-parallel box of side length $\ell$ in $\mathbb{Z}^d$. Let $I$ be the real interval representing the range of the $\ell_\infty$ norm evaluated on points within $B$. Since the maximum variation of $\|x\|_\infty$ over the box $B$ is bounded by its side length $\ell$, $I$ is contained within some interval $[A, A + \ell]$ for a non-negative real number $A \geq0$.
The box $B$ intersects a region $R_k$ if and only if the interval $[A, A + \ell]$ contains some norm value $v$ such that $r_k \leq v < r_{k+1}$. Therefore, the number of distinct regions $B$ intersects is bounded above by $m + 1$, where $m$ is the number of boundaries $r_k$ (for $k \geq1$) that fall within $[A, A + \ell]$.

Let $\mathcal{K} = \{k \in \mathbb{Z}_{\geq 1} \mid r_k \in [A, A+\ell]\}$. If $\mathcal{K}$ is empty, $B$ intersects at most 1 region, which satisfies the claim since $f(\ell) \geq0$. If $\mathcal{K}$ is non-empty, let $k_{\min}$ and $k_{\max}$ be the minimum and maximum integers in $\mathcal{K}$, respectively. The number of boundaries in the interval is precisely $m = k_{\max} - k_{\min} + 1$. Because both $r_{k_{\max}}$ and $r_{k_{\min}}$ lie within an interval of length $\ell$, it must hold that: $r_{k_{\max}} - r_{k_{\min}} \leq \ell$. Substituting the definition $r_k = G(k)$ and expressing $k_{\max}$ as $k_{\min} + m - 1$, we obtain: $G(k_{\min} + m - 1) - G(k_{\min}) \leq \ell$.

Because $G$ is strictly convex on $\mathbb{R}_{\geq0}$ and $G(0) = 0$, $G$ is super-additive. Thus, for any $x, y \geq0$, $G(x+y) \geq G(x) + G(y)$. Setting $x = k_{\min}$ and $y = m - 1$, we have: $G(k_{\min} + m - 1) - G(k_{\min}) \geq G(m - 1)$. Combining these inequalities yields: $G(m - 1) \leq \ell.$
Because $g$ is strictly increasing, applying $g$ to both sides preserves the inequality: $m \leq g(\ell) + 1$.
The total number of intersected regions is at most $m + 1$. Therefore, the box $B$ intersects at most $g(\ell) + 2$ regions. Since $g(\ell) \leq f(\ell)$ by construction, $B$ intersects at most $f(\ell) + 2$ regions.
\end{proof}

Let $\mathcal{R} = \{\mR_k\}_{k=0}^\infty$ be the partition of $\mathbb{Z}^d$ where $\mR_0 = \{0\}$ and $\mR_k$ is defined in (\ref{eq:Rk}).

The algorithm is as follows, and can be viewed as an adaptation of the one in Theorem~\ref{thm:finiterank}. The structural tree $\mT$ remains unchanged, as the change in dimension does not affect the underlying topology.

Fix $\epsilon > 0$. Given a finite set of elements $A \subset \mathbb{Z}^d$, we define \[\chi(A) = \chi_{\epsilon}(A)\] to be the coloring $\chi: A \to \{-1, 1\}$ provided by Theorem \ref{thm:multi-discrepancy}.

\paragraph{Algorithm.}
Let $(s_t)_{t \in \mathbb{N}}$ be a sequence of increasing positive integers and $(z_n)_{n \in \mathbb{N}}$ an increasing sequence of positive integers. These will be related to the ``pod diameter" in the algorithm. 

Initialize the pod $P \gets \emptyset$. The pod $P$ represents the set of candidate strings that the algorithm prioritizes for output.

Suppose we have just completed time $t-1$ and are now at time $t$. The adversary reveals the input $a_t$. Recall that $L_{i_{t-1}}$ denotes the language identified by $\Acc$ at time $t-1$.

\begin{enumerate}
\item If $L_{i_t} = L_{i_{t-1}}$, define $PB(t)= L_{i_t}$. 

\item If $L_{i_t} \neq L_{i_{t-1}}$, define the pullback $PB(t)$ as  the common ancestor of $L_{i_{t-1}}$ and $L_{i_{t}}$ in $\mT$ after removing inconsistent languages if exists. If it does not exists, then $PB(t) = L_{i_{t}}$. 
\end{enumerate}
In either case, we update $P$ by adding the following elements. To this end, define the region parameter $k_t \in \mathbb{N}$ to be the smallest integer such that:
\begin{enumerate}
    \item $k_t > k_{t-1}+2 $
    \item $\log_2{k_t}$ is larger than the infinity norm of any input and output up to time $t$
    \item There are at least two regions $\mR_i, \mR_j$ with $i, j \leq k_t$ that contain no elements  used by either the adversary or the algorithm. 
\end{enumerate}
Add to $P$ all elements of $PB(t)$ that lie in the regions $\mathcal{R}_0, \dots, \mathcal{R}_{k_t}$.

Let the current pod be denoted by $P_t$.

Define $\chi_t : P_t \to \{+1,-1\}$ as follows. For $x \in P_{t-1}$, set $\chi_t(x) = \chi_{t-1}(x)$. For $x \in P_t \setminus P_{t-1}$ with $x \in \mathcal{R}_i$, define $\chi_t(x)$ according to the coloring $\chi$ restricted to $\mathcal{R}_i \cap (P_t \setminus P_{t-1})$.

Note that $w_t \in PB(t)$ and hence $w_t \in P_t$, so $\chi_t$ is well-defined at $w_t$. Suppose $w_t \in \mathcal{R}_k$ (in which case $k < k_t$).

If $w_t$ has not been previously selected by the algorithm, then the algorithm outputs an available element $o_t = x \in \mathcal{R}_k$ that is closest to $w_t$ in the $\ell_\infty$ norm and satisfies $\chi_t(x) = -\chi_t(w_t)$, if such an element exists. If no such $x$ exists, the algorithm outputs an arbitrary element in $PB(t) \cup L_{i_t}$.

If $w_t$ has already been selected by the algorithm, then the algorithm outputs an available element $o_t = x \in \mathcal{R}_k$ that is closest to $w_t$ in the $\ell_\infty$ norm. If no such element exists in $\mathcal{R}_k$, the algorithm outputs an arbitrary element in $PB(t) \cup L_{i_t}$.

 \subsection{Density guarantee}   

\paragraph{Validity guarantee.}
Lemma~\ref{lem:PBsubsetK} continues to hold, as its proof relies only on the pullbacks $PB(t)$ of $\Acc$. In particular, it implies that after some finite time, $PB(t) \subseteq K$. Consequently, the algorithm guarantees validity.

We now turn to proving the density guarantee, which builds on several lemmas and claims from the one-dimensional setting.

We first prove the following key lemma.

\begin{lem}\label{lem:changer}
There exists a finite time $T$ such that the following holds. For any region $\mathcal{R}_k$ that is first included in $P_t$ for some $t > T$, consider the sequence
\[
\mathcal{R}_k \cap P_t \;\subseteq\; \mathcal{R}_k \cap P_{t+1} \;\subseteq\; \cdots.
\]
Then this chain contains at most $r$ distinct sets, where $r$ is the Cantor--Bendixson rank of $\mathcal{X}$. Moreover, the sequence stabilizes at $\mathcal{R}_k \cap K$, where $K$ is the target language.
\end{lem}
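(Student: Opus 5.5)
We want a finite $T$ after which, for every region $\mathcal{R}_k$ first added to the pod after $T$, the chain $\mathcal{R}_k\cap P_t \subseteq \mathcal{R}_k\cap P_{t+1}\subseteq\cdots$ takes at most $r$ distinct values and stabilizes at $\mathcal{R}_k\cap K$. The plan is to reuse the one-dimensional machinery almost verbatim. The pod update in dimension $d$ adds to $P$ all of $PB(t)\cap(\mathcal{R}_0\cup\dots\cup\mathcal{R}_{k_t})$. Hence, for a fixed region $\mathcal{R}_k$ first touched at time $t_0$,
\[
\mathcal{R}_k\cap P_t=\mathcal{R}_k\cap \bigcup_{t_0\le s\le t} PB(s) \qquad (t\ge t_0).
\]
Elements already in $P$ before $t_0$ can be ignored: by condition 3 on $k_t$, a region first included at $t_0$ lies beyond every previously included region. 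This holds because $k_t$ is strictly increasing and regions are added as initial segments $\mathcal{R}_0,\dots,\mathcal{R}_{k_t}$.

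First, fix $T$ as in Lemma~\ref{lem:PBsubsetK}: after $T$, every identified language $L_{i_s}$ and every pullback $PB(s)$ lies in the subtree of $\mathcal{T}$ rooted at $K$, so $PB(s)\subseteq K$. Also require that $\mathcal{C}_t$ has stabilized up to $K$ (Claim~\ref{claim:Accproperty}). Then every $PB(s)$ with $s\ge t_0>T$ is a descendant of $K$, and ancestors in $\mathcal{T}$ are supersets of descendants. Exactly as in the proof of Lemma~\ref{lem:PtZt}, the union $\bigcup_{t_0\le s\le t}PB(s)$ equals $Z(t):=\LCA_{\mathcal{T}}(L_{i_{t_0}},\dots,L_{i_t})$. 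Two facts give this. Claim~\ref{claim:tree} produces some consecutive pair $s,s+1$ with $\LCA(L_{i_s},L_{i_{s+1}})=Z(t)$, so $PB(s+1)=Z(t)$. Conversely, every $PB(s)$ is a descendant of $Z(t)$ and hence a subset of it. One subtlety to check is the case where $t_0$ itself is the time at which the pullback pair straddles; I would handle it by taking the LCA over $[t_0-1,t]$. This changes nothing in the counting.

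Second, by Claim~\ref{claim:rtimes} applied to the static tree $\mathcal{T}$, whose depth is at most $r$ because parents have strictly larger CB rank, the sequence $Z(t)$ is monotone up the tree and changes at most $r$ times. Therefore $\mathcal{R}_k\cap P_t=\mathcal{R}_k\cap Z(t)$ takes at most $r+1$ values. Getting exactly $r$ requires noting that a chain of length $r+1$ in $\mathcal{T}$ would force ranks $0,\dots,r$, while CB ranks are assigned from $1$ to $r$ for nonempty derived levels. If this indexing does not line up, I would simply state the bound as depth-many changes. This off-by-one bookkeeping is a minor obstacle.

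Third, for stabilization at $\mathcal{R}_k\cap K$: $Z(t)\subseteq K$ always holds since all languages involved lie in $K$'s subtree. By Theorem~\ref{thm:acc}, $L_{i_t}=K$ for infinitely many $t$, so eventually $Z(t)=K$ and $\mathcal{R}_k\cap P_t=\mathcal{R}_k\cap K$ from then on. The main obstacle is justifying the first step cleanly: that nothing from $P_{t_0-1}$ meets $\mathcal{R}_k$. This needs the region-growth rule, namely condition (3) together with $k_t>k_{t-1}+2$, to guarantee that a newly included region was untouched both by earlier pod additions and by earlier inputs and outputs.
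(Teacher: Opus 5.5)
Your proposal is correct and follows essentially the paper's argument. You identify $\mathcal{R}_k\cap P_t$ with $\mathcal{R}_k$ intersected with the lowest common ancestor in $\mT$ of the identified languages via Claim~\ref{claim:tree}, extending the window to $t_0-1$ where the paper instead puts $PB(t_0)$ itself into the common ancestor, and you bound the changes with Claim~\ref{claim:rtimes}. The differences are minor. You get stabilization from accuracy infinitely often (Theorem~\ref{thm:acc}), whereas the paper uses the adversary eventually enumerating all of $K\cap\mathcal{R}_k$. You also make explicit two points the paper passes over. First, $\mathcal{R}_k\cap P_{t_0-1}=\emptyset$; this follows from $k_t$ increasing with initial-segment additions, not from condition 3. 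Second, the count of $r$ versus $r+1$ distinct sets; the paper's own proof also only establishes ``at most $r$ changes''.
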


\begin{proof}
    By the algorithm, $\mR_k \cap P_t = PB(t)\cap \mR_k$. For any $t' > t$, it satisfies
    \[ \mR_k \cap P_{t'} = (\mR_k \cap P_{t'-1}) \cup (PB(t') \cap \mR_k). \]

Therefore, $\mathcal{R}_k \cap P_{t'}$ is always a superset of $\mathcal{R}_k \cap P_{t'-1}$, and the two sets differ only if $P_{t'} \nsubseteq P_{t'-1}$. By the definition of the algorithm, if $L_{i_{t'-1}}$ remains consistent at time $t'$, then
\[
PB(t') \subseteq L_{i_{t'-1}} \subseteq PB(t'-1),
\]
which implies that $P_{t'} \subseteq P_{t'-1}$. Hence, $P_{t'} \nsubseteq P_{t'-1}$ can occur only when $L_{i_{t'-1}}$ becomes inconsistent at time $t'$.
As in the one-dimensional case, by Claim~\ref{claim:tree}, $\mathcal{R}_k \cap P_{t'}$ corresponds to the common ancestor (in $\mT$) of $PB(t), PB(t+1), \dots, PB(t')$, restricted to $\mathcal{R}_k$. It is also the common ancestor of $PB(t), L_{i_{t+1}}, \dots, L_{i_{t'}}$, restricted to $\mathcal{R}_k$.

By Claim~\ref{claim:rtimes}, the increasing chain
\[
\mathcal{R}_k \cap P_t \;\subseteq\; \mathcal{R}_k \cap P_{t+1} \;\subseteq\; \cdots
\]
can change at most $r$ times, where $r$ is the Cantor--Bendixson rank of $\mathcal{X}$.

The final claim follows from the fact that the adversary eventually enumerates all elements of $K$, and we are considering times beyond the point at which all pullbacks are contained in $K$.
\end{proof}

Let $T$ be the time given by Lemma~\ref{lem:changer}. Let $k_0$ be the smallest integer such that all elements revealed by either the algorithm or the adversary up to time $T$ lie in $\mathcal{R}_0 \cup \cdots \cup \mathcal{R}_{k_0-1}$.

As in the previous argument, when analyzing the lower Banach density, an averaging argument allows us to restrict attention to boxes that are sufficiently far from the origin. Accordingly, we consider only axis-parallel rectangles $B \subset \mathbb{Z}^d$ with side length $L(B) = \ell$ such that $B$ is disjoint from $\mathcal{R}_0 \cup \cdots \cup \mathcal{R}_{k_0-1}$, and $\ell$ is sufficiently large. This restriction is without loss of generality, since
\[
\ell \;\le\; \mathrm{Vol}(B) \;\le\; \ell^d.
\]

Suppose $B$ intersects $\mathcal{R}_{k'}$ but does not intersect $\mathcal{R}_{k'+1}$.

Fix any $k_0 \leq k \leq k'$, and consider the sequence $w_{t_1}, w_{t_2}, \dots \notin O$, where $t_1 < t_2 < \cdots$ are the time steps at which these elements are generated inside $\mathcal{R}_k$. If no such elements exist, we skip this value of $k$.

By Lemma~\ref{lem:changer}, let
\[
Z_1 \subsetneq Z_2 \subsetneq \cdots \subsetneq Z_{r'}
\]
denote the distinct nonempty sets attained by $\mathcal{R}_k \cap P_t$ over time, where $r' \leq r$.

Suppose $Z_1 = \mathcal{R}_k \cap P_t$ for all $t$ in some time interval $I$. Consider any $t_i \in I$. At time $t_i$, after the adversary reveals $w_{t_i}$, if there exists an available element in $Z_1$ whose coloring under $\chi_t$ is opposite to that of $w_{t_i}$, then the algorithm outputs such an element in $Z_1$.
By construction, whenever the algorithm outputs at time $t_i$ (i.e., it does not skip), the output lies in the same region as the adversarial input $w_{t_i}$. Moreover, since $\chi_t$ restricted to $Z_1$ remains fixed throughout the interval $I$, it follows that within $B \cap Z_1$, the number of adversarial inputs exceeds the number of algorithm outputs by at most
\begin{align}
   &  \left||B \cap Z_1 \cap \chi^{-1}(1)| - |B \cap Z_1 \cap\chi^{-1}(-1)|\right| \label{eq:discBox}, 
\end{align} 
as the algorithm pairs each adversarial input with an available element of opposite coloring (under $\chi$) whenever such an element exists.

\begin{claim}\label{claim:recpartition}
For any axis-parallel rectangle $B \subset \mathbb{Z}^d$, the intersection $B \cap \mR_k$ can be partitioned into at most $2d$ mutually disjoint axis-parallel rectangles.
\end{claim}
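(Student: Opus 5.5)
The plan is to write $\mR_k$ as a difference of two concentric $\ell_\infty$-boxes and then intersect with $B$ coordinate by coordinate. By (\ref{eq:Rk}), $\mR_k = Q_{k+1}\setminus Q_k$, where $Q_j = \{x\in\mathbb{Z}^d : \|x\|_\infty < r_j\}$. Each $Q_j$ is an axis-parallel cube, namely $\prod_{i=1}^d [-\lceil r_j\rceil+1,\lceil r_j\rceil-1]$ intersected with $\mathbb{Z}^d$, and we read it as empty when $r_j\le 0$. Hence
\[
B\cap\mR_k = (B\cap Q_{k+1})\setminus Q_k = B'\setminus Q_k,
\]
where $B' = B\cap Q_{k+1}$. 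Since an intersection of two axis-parallel rectangles is again an axis-parallel rectangle (or empty), $B'$ is a rectangle. This reduces the claim to a general fact: for axis-parallel rectangles $B'$ and $Q$, the set $B'\setminus Q$ is a disjoint union of at most $2d$ axis-parallel rectangles.

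To prove this fact, write $Q=\prod_{i=1}^d [u_i,v_i]$ and peel off one coordinate at a time. For $i=1,\dots,d$, let $C_{i-1}$ be the part of $B'$ whose first $i-1$ coordinates already lie in $[u_1,v_1]\times\cdots\times[u_{i-1},v_{i-1}]$; we start with $C_0=B'$. Inside $C_{i-1}$, take the slab $\{x\in C_{i-1}: x_i<u_i\}$ and the slab $\{x\in C_{i-1}: x_i>v_i\}$. Each is a rectangle, because it is obtained from $C_{i-1}$ by cutting one coordinate interval. Then set $C_i=\{x\in C_{i-1}: u_i\le x_i\le v_i\}$, which is again a rectangle.

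A point of $B'$ lies outside $Q$ exactly when there is a first coordinate $i$ with $x_i\notin[u_i,v_i]$. Such a point falls into exactly one of the two slabs created at step $i$. So the at most $2d$ slabs are pairwise disjoint, and their union is $B'\setminus Q$. Empty slabs are simply dropped, which gives the bound "at most $2d$". Applying this with $Q=Q_k$ completes the argument. In the degenerate case $k=0$, where $Q_0=\emptyset$, the set $B\cap\mR_0=B'$ is a single rectangle.

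There is no real obstacle here; the only care needed is bookkeeping. First, the strict and non-strict inequalities in $r_k\le\|x\|_\infty<r_{k+1}$ must become closed integer intervals correctly, which is why the ceilings are used. Second, disjointness must come from the "first violating coordinate" rule rather than from cutting independently in each coordinate, because independent cutting would overlap at the corners. The lattice setting means all rectangles here are sets of the form (product of integer intervals) $\cap\,\mathbb{Z}^d$, which is exactly the notion of rectangle used for the discrepancy bound in Theorem~\ref{thm:multi-discrepancy}.
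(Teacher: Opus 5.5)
Your proposal is correct and follows essentially the same route as the paper: write $\mR_k$ as the difference of two concentric $\ell_\infty$-cubes, intersect $B$ with the outer cube, and partition what remains into at most $2d$ disjoint rectangles using the first violating coordinate and its sign. Your use of $\lceil r_j\rceil-1$ to turn the strict inequality into a closed integer interval is slightly more careful than the paper's offset $r_k-1$, which is exact only when the real boundaries $r_k$ are integers.
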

The proof of this claim is in Appendix.

As $B \cap \mR_k$ can be partitioned into $2d$ rectangles $B_1, B_2, \dots$, and each $B_i \cap Z_1$ satisfies the size-sensitive discrepancy bound (Theorem \ref{thm:multi-discrepancy}). Thus summand in (\ref{eq:discBox}) is at most 
Thus 
\begin{align}
 & \left||B \cap  Z_1 \cap \chi^{-1}(1)| - |B\cap  Z_1 \cap\chi^{-1}(-1)|\right| \nonumber \\
 \leq  &  \sum_{i}\left||B_i \cap Z_1 \cap \chi^{-1}(1)| - |B_i \cap Z_1 \cap\chi^{-1}(-1)|\right|. \label{eq:discBox2}
\end{align} 

For any of the rectangles $B_i$, by Theorem \ref{thm:multi-discrepancy}, if 
$\frac{c_0}{2^j} L(B_i)^\alpha \leq |B_i \cap Z_1| < \frac{c_0}{2^{j-1}} L(B_i)^\alpha$ and $j \geq \epsilon^2 |B_i \cap Z_1|$ or when $|B_i \cap Z_i| \leq C$, 
 we say these rectangles {\it sparse}. 
We choose $C = C(\epsilon,\alpha)$ such that, by Theorem~\ref{thm:multi-discrepancy}, any box $B_i$ that is not sparse has discrepancy at most $3\epsilon\,|B_i \cap Z_1|$. Consequently, $B_i$ is classified as sparse if either $|B_i \cap Z_1| \le C$ or
\[ |B_i \cap Z_1| < \frac{c_0}{2^{\epsilon^2 |B_i \cap Z_1|-1}} L(B_i)^\alpha,
\]
implying 
\begin{equation}
    |B_i \cap Z_1| \leq \max \left(C, \left(\alpha \log \ell + \log c_0 \right)/\epsilon^2 +1  \right) \label{eq:sparse}
\end{equation}
by the fact $|B_i \cap  Z_i| \geq1$ and $L(B_i) \leq \ell$. 
The total number of points in $Z_i$ in sparse rectangles is thus at most 
\[  2d  \max \left(C, \left(\alpha \log \ell + \log c_0 \right)/\epsilon^2 +1  \right)\]
where the first part $2d$ is by Claim \ref{claim:recpartition}, and the last part is by (\ref{eq:sparse}). For the rectangles which are not sparse, we have that the discrepancy is at most $3\epsilon |B_i \cap Z_1|$. Therefore (\ref{eq:discBox2}) is at most 
\begin{align*}
  & \sum_{i: B_i \text{ sparse}}\left||B_i \cap Z_1 \cap \chi^{-1}(1)| - |B_i \cap Z_1 \cap\chi^{-1}(-1)|\right|  \\ + & \sum_{i: B_i \text{ not sparse}}\left||B_i \cap Z_1 \cap \chi^{-1}(1)| - |B_i \cap Z_1 \cap\chi^{-1}(-1)|\right| \\
   \leq & \sum_{i: B_i \text{ sparse}} |B_i \cap Z_1|  + \sum_{i: B_i \text{ not sparse}}\disc_{\chi_{t_1}}(B_i \cap Z_1) \\
   \leq &  2d \max \left(C, \left(\alpha \log \ell + \log c_0 \right)/\epsilon^2 +1  \right) + \sum_{i: B_i \text{ not sparse}} 3\epsilon |B_i \cap Z_1|  \\
   \leq & 2d \max \left(C, \left(\alpha \log \ell + \log c_0 \right)/\epsilon^2 +1  \right) + 3\epsilon|B \cap Z_1|.
\end{align*}

Similar argument applied to the rest of the $Z_i$'s. Therefore we have that within $B \cap \mR_k$, the adversary input only elements is more than the algorithm generated elements by at most (\ref{eq:discBox}) adding up all $Z_i$ is at most 
\begin{align}
&  \sum_{1 \leq i \leq r'}  2d \max \left(C, \left(\alpha \log \ell + \log c_0 \right)/\epsilon^2 +1  \right) + \sum_{1 \leq i \leq r'}3\epsilon|B \cap Z_i| \nonumber \\
   = & 2d r \max \left(C, \left(\alpha \log \ell + \log c_0 \right)/\epsilon^2 +1  \right) +  3\epsilon|B \cap \mR_k|. \label{eq:key}
\end{align}

Summing over all $\mathcal{R}_k$ that intersect $B$, of which there are at most $f(\ell)+2$ by Lemma~\ref{lem:small-number-parts}, we obtain that within $B$, the number of adversarial-input-only elements exceeds the number of algorithm-generated elements by at most
\begin{align}
  & (f(\ell)+2) 2d \max \left(C, \left(\alpha \log \ell + \log c_0 \right)/\epsilon^2 +1  \right) + \sum_{k: \mR_k \cap B \neq \emptyset}3\epsilon|B \cap \mR_k | \nonumber \\
   \leq & (f(\ell)+2) 2d \max \left(C, \left(\alpha \log \ell + \log c_0 \right)/\epsilon^2 +1  \right) + 3\epsilon|B \cap K|. \label{eq:Z1}
\end{align} 

Since we have required $|B \cap K|  \gg \log (\ell) f(\ell)$ and when $\ell$ is sufficiently large, \[(f(\ell)+2) 2d r \max \left(C(\alpha, \epsilon), \left(\alpha \log \ell + \log c_0 \right)/\epsilon^2 +1  \right) = O( \log (\ell) f(\ell)),\]
so (\ref{eq:Z1}) is $(1+ o_\ell(1))3\epsilon|B \cap K|$. 
Therefore, for sufficiently large $\ell$, whenever
\[
|B \cap K| \;\gg\; \log(\mathrm{Vol}(B))\, f(\mathrm{Vol}(B)) \;\ge\; (\log \ell)\, f(\ell),
\]
where we use that $\mathrm{Vol}(B) \ge \ell$, the number of adversarial-input-only elements in $B$ exceeds the number of algorithm-generated elements by at most
$(1 + o_\ell(1)) \cdot 3\epsilon |B \cap K|.$
Consequently, the lower Banach density is at least $\tfrac{1}{2} - 3\epsilon$. This completes the proof of Theorem \ref{thm:banachhighdim} by starting with $\epsilon/3$ as parameter.

\section{Concluding Remarks}
This paper studies the tension between validity and breadth in language
generation in the limit. For averaged, prefix-based breadth, prior work gives
a universal positive answer: every countable language class admits the optimal
\(1/2\) lower-asymptotic-density guarantee. We show that the answer changes
when breadth is local. For some countable language classes, every eventually
valid generator must leave arbitrarily large sparse holes in the true language.

Furthermore, this paper shows that breadth in language generation has a hidden
structural layer. 
In one dimension, finite Cantor--Bendixson rank restores the optimal
\(1/2\) lower Banach-density guarantee, while our infinite-rank
construction shows that genuine zero-density examples can occur beyond
finite rank. In dimensions \(d\ge2\), ordinary Banach density reveals a
separate geometric barrier: zero density can be forced even for a
singleton language class. Filtered Banach density removes this purely
geometric barrier, and for finite-rank classes we obtain a
\(1/2-\varepsilon\) guarantee.

Thus Banach density is not merely a stronger density requirement. First, it is motivated by spatial embedding. Second, mathematically,  this is
the density notion under which the hidden topology and geometry of broad
valid generation become visible and the nature of the problem.

There are a number of interesting questions left open by this work, and as one line of such questions, we ask whether the insights arising from the analysis here might be able to help shed light on some of the related phenomena observed in practice with large language models, in which they fail to fully represent large parts of the space in which the language is embedded \cite{holtzman2020degeneration,shypula2025diversity,zhang2025verbalized}.
While it will require further formalization to begin fully exploring the possible connections here, the issues of breadth both in the language generation model and in the way language models are deployed in practice point to a promising direction of new questions as we try to bring these ideas together more directly.

\paragraph{Declaration of Generative AI and AI-assisted technologies in the writing process.} During the preparation of this work, the authors used Chat GPT in order to improve language and readability. After using this tool/service, the authors reviewed and edited the content as needed and take full responsibility for the content of the publication.

\bibliographystyle{plain}
\bibliography{AIbib}

\newpage
\section{Appendix}

\begin{claim}\label{claim:nicef}
An increasing function $f: \mathbb{N} \to \mathbb{N}$ is nice if and only if it is not universally bounded above.
\end{claim}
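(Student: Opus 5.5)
We prove the two directions separately. The ``only if'' direction is immediate. Suppose $f$ is nice but $f(k)\le B$ for all $k$. Apply the definition with $T=B$. We obtain $a_1>T=B$ together with $a_1\le f(k_1)\le B$, which is a contradiction. So a nice function cannot be universally bounded.

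For the ``if'' direction, assume $f$ is increasing (non-decreasing suffices) and unbounded. Fix $T>0$. We take $m=2$ and construct $k_1\le k_2$ and $a_1<a_2$ greedily.

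First choose any $k_1\ge 1$ with $f(k_1)>T$; such $k_1$ exists by unboundedness. Set $a_1:=f(k_1)$, so that $T<a_1\le f(k_1)$. Next set $a_2:=a_1+k_1+1$, which gives condition 3 with equality and also $a_2>a_1$. Finally choose $k_2\ge k_1$ with $f(k_2)\ge a_2$, again possible because $f$ is unbounded. Then $a_2\le f(k_2)$ and $k_1\le k_2$.

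All four conditions now hold, so $f$ is nice. The same greedy step iterates to give any prescribed $m\ge 2$. At each stage we set $a_{i+1}=a_i+k_i+1$ and pick $k_{i+1}\ge k_i$ with $f(k_{i+1})\ge a_{i+1}$.

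There is no real obstacle here. The only points needing care are these. First, monotonicity of $f$ is used only to keep the $k_i$ non-decreasing, which we do explicitly by searching among $k\ge k_i$; unboundedness guarantees such $k$ exist. Second, if $f$ is allowed to take the value $\infty$ (as in Theorem~\ref{thm:lowerdensity0f}), then an attained value $\infty$ still makes the inequalities $a_i\le f(k_i)$ trivially satisfiable. In that case we simply choose $a_1=T+1$ in place of $a_1=f(k_1)$.
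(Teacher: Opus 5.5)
Your proof is correct and follows the paper's argument essentially verbatim. For the ``only if'' direction you use the contrapositive with $T$ equal to the bound, and for the converse you use the $m=2$ construction $a_1=f(k_1)$, $a_2=a_1+k_1+1$, with $k_2$ chosen so that $f(k_2)\ge a_2$. Your extra remarks (iterating to any $m\ge 2$, and taking $a_1=T+1$ when $f$ may take the value $\infty$) are harmless refinements that the claim as stated does not need.
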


\begin{proof}

We show if $f$ is nice, then $f$ is unbounded: 
We argue by contrapositive. Assume $f$ is bounded above by some integer $M \in \mathbb{N}$, meaning $f(x) \leq M$ for all $x \in \mathbb{N}$. Let $T = M$. If $f$ were nice, there would exist sequences satisfying the stated conditions. By Condition 2, we must have $a_1 > M$. However, Condition 4 dictates that $a_1 \leq f(k_1) \leq M$. This yields the contradiction $M < a_1 \leq M$. Thus, $f$ cannot be nice.

Now suppose $f$ is unbounded, we show $f$ is nice. 
Because $f$ is unbounded, there exists an integer $k_1 \in \mathbb{N}$ such that
$f(k_1) > T.$
We define $a_1 = f(k_1)$. This choice immediately satisfies $a_1 > T$ (Condition 2) and $a_1 \leq f(k_1)$ (Condition 4).
Next, we define the second term of the $a$-sequence as
$a_2 = a_1 + k_1 + 1.$
This explicit construction satisfies Condition 3. Because $k_1 \geq1$, it also  satisfies Condition 2.
Since $f$ is unbounded, there exists an integer $k_2$ large enough such that $k_2 > k_1$ and
$f(k_2) \geq a_2$.
This choice satisfies $k_1 < k_2$ (Condition 1) and $a_2 \leq f(k_2)$ (Condition 4). 
 Therefore, $f$ is nice.
\end{proof}

\begin{claim}Let $f: [0, \infty) \to [0, \infty)$ be a non-decreasing function such that $\lim_{x \to \infty} f(x) = \infty$. There exists a strictly increasing, strictly concave function $g: [0, \infty) \to [0, \infty)$ and a constant $x^* \geq0$ such that $\lim_{x \to \infty} g(x) = \infty$, and for all $x \geq x^*$, $g(x) \leq f(x)$.
\end{claim}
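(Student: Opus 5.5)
The plan is to build $g$ as a piecewise linear concave minorant of $f$ that goes to infinity, and then make it strictly increasing and strictly concave by a small correction. Since $f$ is non-decreasing with $f(x)\to\infty$, we can choose an increasing sequence of breakpoints $0=x_0<x_1<x_2<\cdots$ with $x_n\to\infty$ such that $f(x)\ge n+1$ for all $x\ge x_n$ (for $n\ge 1$). Passing to a subsequence if needed, we also require the gaps $x_{n+1}-x_n$ to be nondecreasing; for instance, $x_{n+1}-x_n\ge 2(x_n-x_{n-1})$. Define $h$ to be the piecewise linear function with $h(x_n)=n$, linear on each $[x_n,x_{n+1}]$. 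Its slope on $[x_n,x_{n+1}]$ is $1/(x_{n+1}-x_n)$. This slope is positive and nonincreasing in $n$, so $h$ is increasing and concave with $h\to\infty$. For $x\in[x_n,x_{n+1}]$ with $n\ge1$, we get $h(x)\le n+1\le f(x)$, so $h\le f$ on $[x_1,\infty)$.

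Next we upgrade to strict concavity. Set $g(x)=h(x)-1+\phi(x)$, where $\phi$ is bounded, strictly increasing, and strictly concave, with $0\le\phi<1$. A natural choice is $\phi(x)=1-e^{-x}$, or $\phi(x)=\tfrac12(1-e^{-x})$ for extra room. Then $g$ is strictly increasing, because it is the sum of an increasing function and a strictly increasing one. It is strictly concave, because it is the sum of a concave function and a strictly concave one. It also satisfies $g\to\infty$ and $g\le h\le f$ on $[x_1,\infty)$, so we may take $x^*=x_1$.

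Nonnegativity of $g$ on $[0,\infty)$ needs a check, since $g(0)=-1$ fails. To fix this, we instead shift up and choose the lower bound more carefully. Pick the breakpoints with $f(x)\ge n+2$ on $[x_n,\infty)$ (for $n\ge1$), and set $g=h+\phi$. Then $g(0)=0$, $g\ge0$, and for $x\in[x_n,x_{n+1}]$ with $n\ge1$ we get $g(x)\le n+1+1\le f(x)$.

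The only mildly delicate step is ensuring the slopes are nonincreasing, which is exactly concavity at the breakpoints. This is handled by thinning the sequence so that the gaps grow. Thinning is allowed because the defining condition $f\ge n+2$ on $[x_n,\infty)$ is preserved when any $x_n$ is replaced by a larger point. Concretely, we choose $x_n$ inductively as the maximum of the first point beyond which $f\ge n+2$ and the value $x_{n-1}+2(x_{n-1}-x_{n-2})$.
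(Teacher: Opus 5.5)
Your argument is correct, and it reaches the claim by a somewhat different route than the paper.

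The paper works on the inverse side. It defines the generalized inverse $F(y)=\sup\{x\ge 0: f(x)<y\}$ and builds strictly increasing slopes $d_n$ so that the nodes $G_n=G_{n-1}+\tfrac12(d_{n-1}+d_n)$ satisfy $G_n\ge F(n+1)$. It then sets $G(y)=\int_0^y D$, where $D$ is the piecewise-linear interpolation of the $d_n$; this makes $G$ strictly convex with $G(0)=0$, and it takes $g=G^{-1}$.

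You build $g$ directly. Your level points $x_n$ play the role of the paper's $G(n)$. Both proofs run on the same mechanism: level points are pushed past the level sets of $f$, with growing gaps, and growing gaps are exactly what make the slopes of $h$ nonincreasing (equivalently, the derivative of $G$ increasing). Where you differ is in how strictness is obtained. The paper gets it intrinsically from the strictly increasing $d_n$. You get strict concavity and strict monotonicity for free by adding the bounded, strictly concave $\phi(x)=1-e^{-x}$, at the price of shifting one level ($n+2$ instead of $n+1$) to absorb $\phi<1$.

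What each route buys:
\begin{itemize}
\item Yours avoids integration and inversion, so every required property is a one-line check.
\item The paper's produces the strictly convex $G$ with $G(0)=0$ directly, and that is the object used afterwards: the boundaries $r_k=G(k)$ and the super-additivity argument in Lemma~\ref{lem:small-number-parts}.
\item Nothing is lost with your route, since your $g$ is continuous, strictly increasing and strictly concave with $g(0)=0$, so its inverse has exactly those properties.
\end{itemize}

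One piece of bookkeeping needs fixing. ``The first point beyond which $f\ge n+2$'' is naturally $\inf\{x: f(x)\ge n+2\}$, and $f$ need not reach $n+2$ at that point itself, for instance if $f$ jumps there and is left-continuous. For $n=1$ that infimum could also be $0=x_0$. So take $x_1=\inf\{x:f(x)\ge 3\}+1$ and $x_n=\max\bigl(\inf\{x:f(x)\ge n+2\}+1,\ x_{n-1}+2(x_{n-1}-x_{n-2})\bigr)$ for $n\ge 2$. This also supplies the seed your recursion needs at $n=1$. The paper's assertion that $x\ge F(y)$ forces $f(x)\ge y$ has the same harmless endpoint issue.
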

\begin{proof} 
The proof follows a standard discretization and integration argument. We will construct $g$ by first explicitly constructing its inverse, a strictly increasing, strictly convex function $G: [0, \infty) \to [0, \infty)$. 
First, we define the generalized inverse of $f$. In other words, for any $y \geq 0$, let:
\[F(y) = \sup\{x \geq 0 \mid f(x) < y\}.\]
Because $f$ is non-decreasing and diverges to infinity, $F(y)$ is well-defined and non-decreasing. Furthermore by the definition of $F(y)$, clearly for any  $x \geq F(y)$, it must hold that $f(x) \geq y$.

We construct a sequence of values $G_n$ that will serve as the nodes of our convex function at integer points. We ensure strict convexity by defining a strictly increasing sequence of discrete derivatives $d_n$.
Let $G_0 = 0$ and $d_0 = 1$. For all integers $n \geq1$, we define $d_n$ recursively:
\[d_n = \max\{d_{n-1} + 1, \; 2(F(n+1) - G_{n-1}) - d_{n-1}\}.\]
We then define the nodes as $G_n = G_{n-1} + \frac{1}{2}(d_{n-1} + d_n)$.
By the first term in the maximum, $d_n \geq d_{n-1} + 1$,  guaranteeing the sequence $\{d_n\}$ is strictly increasing.
The second term guarantees $d_n \geq 2(F(n+1) - G_{n-1}) - d_{n-1}$, which rearranges to $\frac{1}{2}(d_{n-1} + d_n) +  G_{n-1} \geq F(n+1)$. Therefore, 
$G_n \geq F(n+1)$ for all $n \geq 1.$

Now, we extend these discrete nodes into a continuous, strictly convex function $G(y)$. Let $D(y)$ be the piecewise linear function interpolating the points $(n, d_n)$ for all integers $n \geq0$. Because $\{d_n\}$ is strictly increasing and $d_0 > 0$, $D(y)$ is continuous, strictly positive, and strictly increasing everywhere on $[0, \infty)$.
We define \[G(y) = \int_0^y D(t) dt.\] By the Fundamental Theorem of Calculus, $G'(y) = D(y)$. Since its derivative is strictly increasing, $G(y)$ is strictly convex. Furthermore, $\int_{n-1}^n D(t)  dt = \frac{1}{2}(d_{n-1} + d_n)$. Thus, $G(n) = G_n$, meaning our continuous function preserves the bound $G(n) \geq F(n+1)$.
Since $D(y) \geq d_0 = 1$, we have $\lim_{y \to \infty} G(y) = \infty$. We define $g: [0, \infty) \to [0, \infty)$ as the inverse function $g(x) = G^{-1}(x)$. Because $G$ is strictly increasing and strictly convex, its inverse $g$ is strictly increasing and strictly concave. Because $G \to \infty$, $g \to \infty$. It remains to prove the bound $g(x) \leq f(x)$ for sufficiently large $x$. For any $x \geq G(1)$, since $G(y)$ is continuous and strictly increasing to infinity, there exists a unique integer $n \geq1$ such that: $ G(n) \leq x < G(n+1).$
Because $g$ is strictly increasing, applying $g$ to the strict upper bound yields $g(x) < g(G(n+1)) = n+1$.
Applying the lower bound, we have $x \geq G(n)$. From our recursive construction, $G(n) \geq F(n+1)$. Therefore, $x \geq F(n+1)$. By the property of the generalized inverse $F$, this implies $f(x) \geq n+1$. Combining these inequalities, for all $x \geq G(1)$, we have $g(x) < n+1 \leq f(x)$. This completes the proof. 
\end{proof}

\begin{claim}\label{claim:recpartition}
For any axis-parallel rectangle $B \subset \mathbb{Z}^d$, the intersection $B \cap \mR_k$ can be partitioned into at most $2d$ mutually disjoint axis-parallel rectangles.
\end{claim}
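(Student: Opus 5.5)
Since $\mR_k = \{x : r_k \le \|x\|_\infty < r_{k+1}\}$, it is the difference of two nested cubes centered at the origin. Write $Q_{\rm out} = \{x : \|x\|_\infty < r_{k+1}\}$ and $Q_{\rm in} = \{x : \|x\|_\infty < r_k\}$. Both are axis-parallel boxes in $\mathbb{Z}^d$, namely products of integer intervals, so $\mR_k = Q_{\rm out}\setminus Q_{\rm in}$.

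The plan has three steps.

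\textbf{Step 1.} Set $B' = B \cap Q_{\rm out}$. The intersection of two axis-parallel boxes is again an axis-parallel box (coordinatewise intersection of intervals), possibly empty. Then
\[
B\cap \mR_k = B' \setminus Q_{\rm in} = B'\setminus (B'\cap Q_{\rm in}).
\]
So it suffices to show that the difference $A\setminus C$ of an axis-parallel box $A=\prod_{i=1}^d [a_i,b_i]$ and a box $C=\prod_{i=1}^d [c_i,e_i]$ can be split into at most $2d$ disjoint boxes. Here we may assume $C\subseteq A$ by replacing $C$ with $C\cap A$.

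\textbf{Step 2.} Peel off the complement one coordinate at a time. For $i=1,\dots,d$, define the two slabs
\[
S_i^- = \prod_{j<i}[c_j,e_j]\times [a_i, c_i-1]\times \prod_{j>i}[a_j,b_j],
\qquad
S_i^+ = \prod_{j<i}[c_j,e_j]\times [e_i+1, b_i]\times \prod_{j>i}[a_j,b_j].
\]
Each $S_i^\pm$ is an axis-parallel box, possibly empty.

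To check that these slabs partition $A\setminus C$, take $x\in A\setminus C$ and let $i$ be the least coordinate with $x_i\notin[c_i,e_i]$. Then $x$ lies in exactly one of $S_i^-$ or $S_i^+$, according to whether $x_i<c_i$ or $x_i>e_i$. Conversely, every point of any slab lies in $A$, and it fails to lie in $C$ because of its $i$-th coordinate. Disjointness follows because the index $i$ and the side ($-$ or $+$) are determined by $x$. Discarding the empty slabs leaves at most $2d$ boxes.

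\textbf{Step 3.} Handle the degenerate cases. If $B'$ is empty, the statement is trivial. If $B'\cap Q_{\rm in}$ is empty, then $B\cap\mR_k = B'$ is a single box. Since $\mR_0 = \{0\}$ is a box, that case is trivial too.

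The only mildly delicate point is that the thresholds $r_k$ are real numbers. I would convert the strict and non-strict inequalities $\|x\|_\infty<r$ into integer intervals $[-\lceil r\rceil+1,\lceil r\rceil-1]$ in each coordinate. With that conversion, $Q_{\rm in}$ and $Q_{\rm out}$ are genuine lattice boxes and the argument above applies directly. I do not expect any real obstacle here.
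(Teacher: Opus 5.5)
Your proof is correct and follows essentially the same route as the paper. Both write $\mR_k$ as the difference of two nested $\ell_\infty$-cubes, intersect $B$ with the outer cube to get a single box, and split what remains after removing the inner cube into at most $2d$ slabs, indexed by the first coordinate that leaves the inner cube and by its sign. Your conversion of the real thresholds via $\lceil r\rceil - 1$ is slightly more careful than the paper's offset $r_k - 1$, which is exact only when $r_k$ is an integer.
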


\begin{proof}
In this proof because the inclusion of boundary is not important, so we will not pay attention to whether a set is closed or not.
Define the discrete $L_\infty$ balls $C_{in} = \{x \in \mathbb{Z}^d \mid \|x\|_\infty \leq r_k - 1\}$ and $C_{out} = \{x \in \mathbb{Z}^d \mid \|x\|_\infty \leq r_{k+1} - 1\}$. The offset by one accounts for the strict inequality in the definition of $R_k$ and the discrete nature of $\mathbb{Z}^d$. (When $r_k = 0$, $C_{in}$ is empty, and the argument simplifies accordingly.) Both $C_{in}$ and $C_{out}$ are axis-parallel hypercubes, and thus axis-parallel rectangles, in $\mathbb{Z}^d$. By definition, $\mR_k = C_{out} \setminus C_{in}$.

We rewrite the intersection as:
\[
B \cap \mR_k = B \cap (C_{out} \setminus C_{in}) = (B \cap C_{out}) \setminus C_{in}.
\]
Let $A = B \cap C_{out}$. Because the intersection of two axis-parallel rectangles is itself an axis-parallel rectangle, $A$ is an axis-parallel rectangle. The problem reduces to partitioning the set difference $A \setminus C_{in}$ into at most $2d$ mutually disjoint rectangles.

Let $\gamma = r_k - 1$, so the inner hypercube can be written as $C_{in} = \prod_{j=1}^d [-\gamma, \gamma] \cap \mathbb{Z}^d$. For any point $x \in A \setminus C_{in}$, since $x \notin C_{in}$, there exists at least one coordinate where $|x_j| > \gamma$. Let $i \in \{1, \dots, d\}$ be the smallest such index. We assign $x$ to a specific sub-rectangle based on this first violating coordinate and its sign.

Formally, for each dimension $i \in \{1, \dots, d\}$, define the sets $L_i$ and $U_i$ as:
\begin{align*}
    L_i &= A \cap \left( \prod_{j=1}^{i-1} [-\gamma, \gamma] \times (-\infty, -\gamma - 1] \times \prod_{j=i+1}^d \mathbb{Z} \right), \\
    U_i &= A \cap \left( \prod_{j=1}^{i-1} [-\gamma, \gamma] \times [\gamma + 1, \infty) \times \prod_{j=i+1}^d \mathbb{Z} \right).
\end{align*}
All intervals are understood to be in $\mathbb{Z}$. Because each $L_i$ and $U_i$ is defined as the intersection of $A$ with a Cartesian product of intervals in $\mathbb{Z}$, they are axis-parallel rectangles.

By construction, every point $x \in A \setminus C_{in}$ is assigned to exactly one $L_i$ or $U_i$ corresponding to its minimal violating coordinate, guaranteeing that the union covers $A \setminus C_{in}$. Disjointness follows because points in $L_i \cup U_i$ violate the $i$-th coordinate constraint, while points in $L_{i'} \cup U_{i'}$ for $i' > i$ satisfy all earlier coordinate constraints; moreover, $L_i$ and $U_i$ are disjoint by sign.

Thus, we have partitioned the set difference:
\[
A \setminus C_{in} = \bigcup_{i=1}^d (L_i \cup U_i).
\]
This union consists of exactly $2d$ mutually disjoint axis-parallel rectangles (some of which may be empty depending on the boundaries of $A$). Therefore, $B \cap R_k$ can be partitioned into at most $2d$ axis-parallel rectangles.
\end{proof}

Below is the statement of the Lov\'asz Local Lemma (see e.g. \cite{probbook}). 

\begin{thm}[Lov\'asz Local Lemma]\label{thm:LLL}
    Let $\mathcal{A} = \{A_1, A_2, \dots, A_n\}$ be a finite set of events in a probability space. Let $G = (V, E)$ be a dependency graph for the events, where $V = \{1, 2, \dots, n\}$. This means that each event $A_i$ is mutually independent of all the events $A_j$ such that $(i, j) \notin E$. Let $\Gamma(i)$ denote the neighborhood of $i$ in $G$ (the set of all $j$ such that $(i, j) \in E$). If there exist real numbers $x_1, x_2, \dots, x_n \in (0, 1)$ such that for all $i \in \{1, \dots, n\}$:\[\Pr(A_i) \leq x_i \prod_{j \in \Gamma(i)} (1 - x_j).\] Then the probability that none of the events in $\mathcal{A}$ occur is bounded below by:$$\Pr\left(\bigcap_{i=1}^n \overline{A_i}\right) \geq\prod_{i=1}^n (1 - x_i) > 0$$
\end{thm}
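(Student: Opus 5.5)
We would follow the standard inductive proof of the asymmetric Lov\'asz Local Lemma. The whole argument rests on one conditional estimate: for every $i$ and every subset $S \subseteq \{1,\dots,n\}\setminus\{i\}$ with $\Pr\bigl(\bigcap_{j\in S}\overline{A_j}\bigr)>0$,
\[
\Pr\Bigl(A_i \,\Big|\, \bigcap_{j\in S}\overline{A_j}\Bigr) \le x_i .
\]
Once this is proved, the conclusion follows from the chain rule:
\[
\Pr\Bigl(\bigcap_{i=1}^n \overline{A_i}\Bigr)=\prod_{i=1}^n \Pr\Bigl(\overline{A_i}\,\Big|\,\bigcap_{j<i}\overline{A_j}\Bigr)\ge \prod_{i=1}^n (1-x_i)>0 .
\]
Positivity of every conditioning event is obtained along the way, inductively in $i$, from the same lower bound.

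We would prove the conditional estimate by induction on $|S|$. The base case $S=\emptyset$ is just the hypothesis $\Pr(A_i)\le x_i\prod_{j\in\Gamma(i)}(1-x_j)\le x_i$.

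For the inductive step, split $S$ into $S_1=S\cap\Gamma(i)$ and $S_2=S\setminus\Gamma(i)$. If $S_1=\emptyset$, then $A_i$ is mutually independent of the events indexed by $S_2$, and the conditional probability equals $\Pr(A_i)\le x_i$. Otherwise, write
\[
\Pr\Bigl(A_i\,\Big|\,\bigcap_{S}\overline{A_j}\Bigr)=\frac{\Pr\bigl(A_i\cap\bigcap_{S_1}\overline{A_j}\,\big|\,\bigcap_{S_2}\overline{A_j}\bigr)}{\Pr\bigl(\bigcap_{S_1}\overline{A_j}\,\big|\,\bigcap_{S_2}\overline{A_j}\bigr)} .
\]
The numerator is at most $\Pr\bigl(A_i\mid\bigcap_{S_2}\overline{A_j}\bigr)=\Pr(A_i)$, again by mutual independence of $A_i$ from the events in $S_2$.

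For the denominator, enumerate $S_1=\{j_1,\dots,j_k\}$ and expand by the chain rule:
\[
\Pr\Bigl(\bigcap_{S_1}\overline{A_j}\,\Big|\,\bigcap_{S_2}\overline{A_j}\Bigr)=\prod_{s=1}^k\Bigl(1-\Pr\bigl(A_{j_s}\mid \overline{A_{j_1}}\cap\cdots\cap\overline{A_{j_{s-1}}}\cap\textstyle\bigcap_{S_2}\overline{A_j}\bigr)\Bigr).
\]
Each conditioning set in this product has size strictly less than $|S|$ and does not contain the index being bounded. The induction hypothesis therefore bounds each factor below by $1-x_{j_s}$, so the denominator is at least $\prod_{j\in S_1}(1-x_j)\ge\prod_{j\in\Gamma(i)}(1-x_j)$. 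Combining this with the hypothesis on $\Pr(A_i)$ gives the ratio $\le x_i$.

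The only delicate point is the bookkeeping in the inductive step. We must check that every conditional probability appearing in the chain-rule expansion of the denominator has a conditioning set that is strictly smaller than $S$ and excludes its own index, so that the induction hypothesis applies. We must also check that all conditioning events have positive probability. We would handle positivity by proving it jointly with the main estimate: the same chain-rule lower bound shows $\Pr\bigl(\bigcap_{S}\overline{A_j}\bigr)\ge\prod_{j\in S}(1-x_j)>0$ for every $S$. Everything else is routine.
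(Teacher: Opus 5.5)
Your argument is correct: it is the standard induction on $|S|$ establishing $\Pr\bigl(A_i \mid \bigcap_{j\in S}\overline{A_j}\bigr)\le x_i$, with positivity of the conditioning events carried along, followed by the chain rule. For the joint induction, derive positivity for all sets of size at most $m$ from the estimate at sizes at most $m-1$, and only then prove the estimate at size $m$. The paper gives no proof of this lemma; it simply quotes it from \cite{probbook}, where essentially this argument is the standard one, so there is nothing further to compare.
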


\end{document}